\definecolor{MyDarkBlue}{rgb}{0,0.08,0.45}
\numberwithin{equation}{section}
\newtheorem{alg}{{\bf\sc Algorithm}}
\newtheorem{thm}{{\bf \sc Theorem}}[section]
\newtheorem{cor}{{\bf\sc Corollary}}[section]
\newtheorem{prop}{{\bf\sc Proposition}}[section]
\providecommand{\E}{\mathrm{E}}
\providecommand{\cov}{\mathrm{cov}}
\providecommand{\Prob}{\mathrm{P}}
\renewcommand{\Pr}{\Prob}
\providecommand{\1}{{\bf 1}}
\renewenvironment{proof}[1][Proof]{\noindent\text{#1.} }{\ \rule{0.5em}{0.5em}}
\title[Estimating networks with ARD]{Using Aggregated Relational Data to feasibly identify network structure without network data}
\author{Emily Breza$^{\dagger}$ }
\author{Arun G. Chandrasekhar$^{\ddagger}$}
\author{Tyler H. McCormick$^{\S}$ }
\author{Mengjie Pan$^{\star}$ }
\date{This version \today.}
\thanks{We thank Liran Einav,  Paul Goldsmith-Pinkham, Abhijit Banerjee, Esther Duflo, Axel Gandy, Ben Golub, Rema Hanna, Guy Harling, Jeff Eaton, Matthew Jackson, Michael Kremer, Rachael Meager, Betsy Ogburn, Elie Tamer, Tian Zheng and participants at various seminars/conferences
who provided helpful comments. We also thank Shobha Dundi, Varun Kapoor, Devika Lakhote, Ambika Sharma, Sneha Stephen, 
Tithee Mukhopadhyay, and Gowri Nagraj for outstanding research assistance.  This work is partially supported by grant SES-1559778 from the National Science foundation and grant number K01 HD078452 from the National Institute of Child Health and Human Development (NICHD).  This material is based upon work supported by, or in part by, the U.S. Army Research
Laboratory and the U.S. Army Research Office under grant number W911NF-12-1-0379.
}
\thanks{$^{\dagger}$Department of Economics, Harvard University; J-PAL; NBER. {ebreza@fas.harvard.edu}}
\thanks{$^{\ddagger}$Department of Economics, Stanford University; J-PAL; NBER. {arungc@stanford.edu}}
\thanks{$^{\S}$Departments of Statistics and Sociology, University of Washington. {tylermc@uw.edu}}
\thanks{$^{\star}$Department of Statistics, University of Washington. {mpan1@uw.edu}}
\begin{document}

\begin{titlepage}

\begin{abstract}

Social network data is often prohibitively expensive to collect, limiting empirical network research. We propose an inexpensive and feasible strategy for network elicitation using Aggregated Relational Data (ARD) -- responses to questions of the form ``how many of your links have trait $k$?'' Our method uses ARD to recover parameters of a network formation model, which permits the estimation of any arbitrary node- or graph-level statistic. We characterize both theoretically and empirically for which network features the procedure works.
In simulated and real-world graphs, the method performs well at matching a range of network characteristics. We replicate the results of two field experiments that used network data, and draw similar conclusions with ARD alone.

\textsc{JEL Classification Codes:} D85, C83, L14

\textsc{Keywords:} Social Networks, Bayesian methods, Partially observed networks
\end{abstract}

\maketitle
\thispagestyle{empty}
\end{titlepage}

\section{Introduction}

There has been a groundswell of empirical research on social and economic networks.\footnote{See, e.g.,  \citet*{karlanmrs2009,centola2010spread,tontarawongsa2011public,ligon2012motives,caijs2013,carrell2013natural,beaman2016can,blumenstock2016airtime,alatas2016network}. Also see \citet{chuang2015social,aralnetworked2016,boucher2016some,breza2016field} for overviews of empirical work using network data.} Nonetheless, a major barrier to entry into this space is access to network data, which is often extremely costly to collect.  A typical network elicitation exercise requires, (1) enumerating every member of the network in a census, (2)  asking each subject to name those individuals with whom they have a relationship and in what capacity, and (3) matching each individual's list of social connections back to the census. In field work, this can be difficult and expensive. Further, in other contexts, such as measuring networks of financial intermediaries or high-risk populations, proprietary data and privacy concerns may render steps (2) and (3) impossible. Moreover, this process needs to be repeated across many networks to conduct convincing inference. These barriers place significant limitations on conducting high-quality work in this space and discourage research, especially by those without access to considerable resources. 
 
The contribution of this paper is to present a technique that makes network research scalable and accessible on a budget. We propose that researchers collect aggregated relational data (ARD). ARD are responses to questions of the form \begin{quote}
\emph{``Think of all of the households in your village with whom you <<INSERT ACTIVITY>>. How many of these have trait $k$?''}
\end{quote}
ARD is considerably cheaper to obtain than full or even partial-network data. We show, using J-PAL South Asia cost estimates, that collecting ARD leads to a 70-80\% cost reduction.\footnote{While we present empirical evidence from village and neighborhood  networks in India, the method can also be extended to other settings.  See Section \ref{sec:conclusion} for a discussion of applications to firm and banking networks.}

Our proposed method is intuitive and comes down to the following three simple observations. First, ARD is considerably cheaper and easier to collect than network data. Second, ARD provides the researcher with enough information to identify parameters of an oft-used and standard network formation model in the statistics literature (see e.g. \cite{hoffrh2002}). The  argument builds on prior work by~\cite{mccormick2015latent}, which shows how the network formation model is related to a likelihood that depends only on ARD. We describe this and present an identification argument. Third, this parametric model of network formation is sufficiently rich to capture a number of features of real-world network structures, as we demonstrate through myriad simulations and empirical exercises. We characterize both theoretically and empirically for which network features the procedure works well.

We examine the performance of our method for estimating functions of the graph in several ways. First, develop a straightforward theoretical taxonomy, confirmed by empirical evaluation, that gives intuition about when the method will work under correct specification. Using a battery of simulations we show that we are able to guess what the underlying network structure looks like from the ARD, even as we vary the sparsity/density of the network, the size of the network, and the sampling share to a reasonable degree.   

Of course, real-world network data need not have been generated by the data generating process of our network-formation model. So we next consider an example where we have complete network data in nearly 16,500 households across 75 villages in Karnataka, India \citep*{banerjeegossip}. We show that had we collected ARD in these villages, even on a sample of 30\%, we would have been able to estimate reasonably-well a variety of features of the network that economists care about. 

We then provide two examples of recent research where either full or partial network data had been collected. \cite{SavingsMonitors} study how the observation of one's savings behavior by more central individuals in the network leads to greater savings in order to maintain a reputation for being responsible. We show with constructed ARD, we can replicate the paper's findings.  \cite*{banerjeebdk2016} use partial network data to study how exposure to microcredit erodes social capital by reducing support. The authors in part collected survey ARD in this sample, and we show we can replicate the findings. Further, the ARD enables conclusions about how microcredit exposure affected the neighborhood-level informal financial network structure. 
These examples show the effectiveness of our approach across different contexts and how ARD would have helped in policy-relevant empirical work. Researchers could have reached their conclusions without collecting full network data, which also means that the financial barrier to entry for such research would be considerably lower, thereby democratizing in part this research frontier.

We present a sample budget for survey data collection of full network data in 120 villages. Collecting ARD reduces the costs by approximately 70-80\%, depending on the sampling rate, using budgets prepared by J-PAL South Asia. While direct measurements of the network are always preferable to any estimation protocol, our calculations demonstrate that our proposed method can substantially expand the scope for and access to empirical networks research.

\subsection*{Overview of method}
For the bulk of the paper, we consider settings where we have ARD for a randomly-selected subset of nodes in the network and a basic vector of covariates for the full set of nodes.  ARD counts the number of links an agent has to members of different subgroups in the population. The core insight of our approach is that by combining ARD with a network formation model, we can derive the posterior distribution for the graph.  To do this, we assume a network formation model, which we refer to as the latent distance model, where the probability of a connection depends on individual heterogeneity and the positions of nodes in a latent social space~\citep{hoffrh2002}.  The distance between nodes in the space is a pair-specific latent variable that is inversely related to the probability of a tie: nodes that are closer together in the latent space are more likely to form ties.  The propensity to form ties across pairs is assumed conditionally independent given the latent variables.  ARD gives us information on where different subgroups lie relative to one another in this latent space. That is, ARD allows us to triangulate the relative locations of nodes. In prior work, \cite{mccormick2015latent} show how to relate the network formation model to a likelihood that depends only on ARD.  We extend that result and show how we can recover the parameters of the network formation model.  In our case, this consists of both individual-level effects for every node in the sample as well as the location of all nodes in the latent-space. Using a Bayesian framework for inference, we show that the choice of prior distribution has minimal impact on our ability to accurately recover moments for a variety of network configurations.  We note that, equipped with estimates of the degree distribution as well as the latent space locations in the ARD sample, we can use the demographic covariates for the entire sample to estimate the degree, fixed-effects, and latent locations for the entire population. We can then draw from the posterior distribution over graphs given the ARD response vector and compute network statistics based on these posterior samples.

\begin{figure}[!h]
\includegraphics[scale = 0.7, trim = 0in 0.67in 0in 0.82in, clip = true]{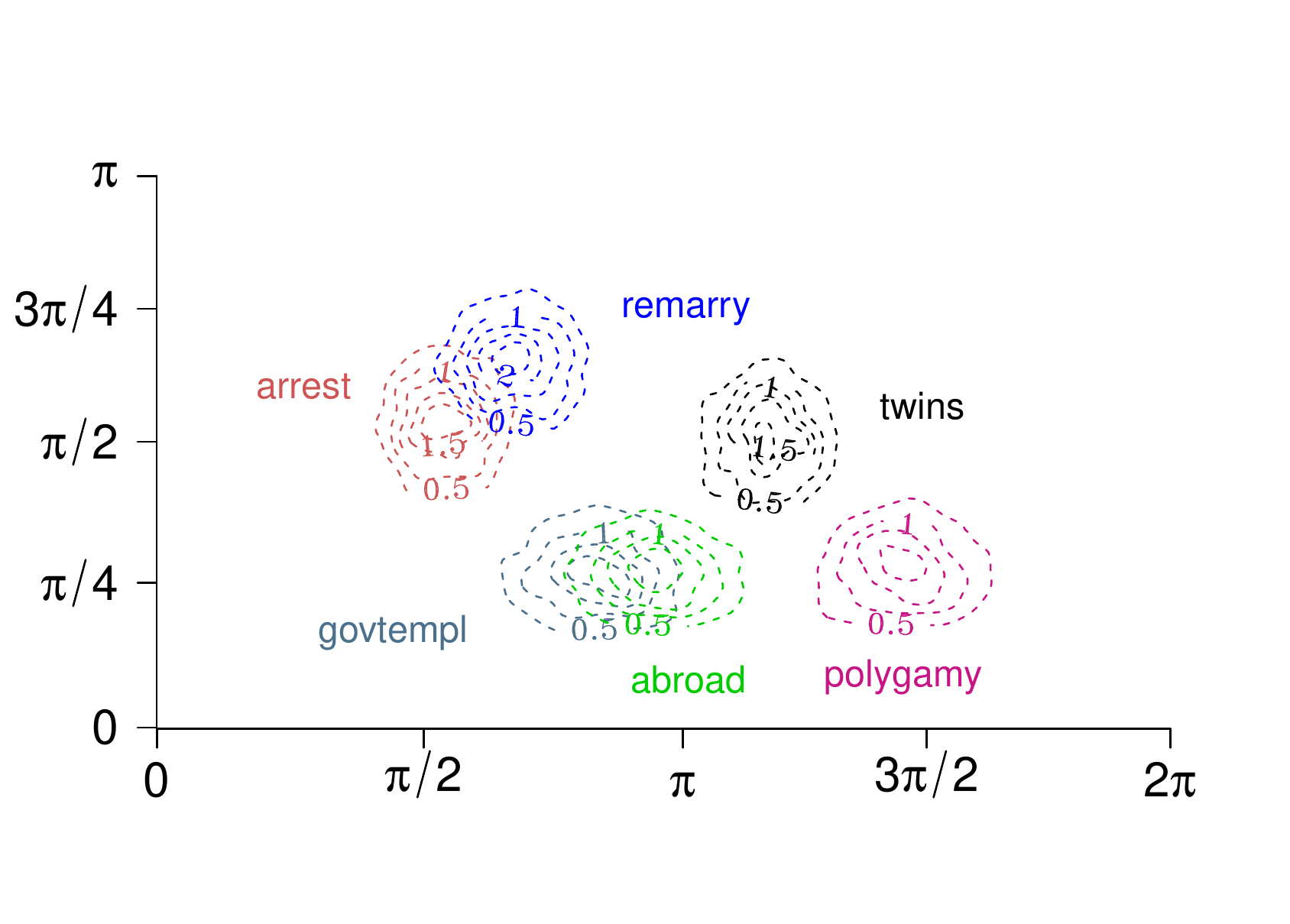}
\caption{Plot of the posterior densities for six ARD characteristic groups from Hyderabad.  The latent surface, a sphere, is represented by a cylindrical projection, with the vertical and horizontal axes representing latitude and longitude.  Positions of the groups indicate similarity in the networks of respondents that report connections with the group.  Concentration of the posterior density represents heterogeneity in the number known by respondents. \label{fig:intro_schematic}}
\end{figure}
Figure \ref{fig:intro_schematic} provides a simple illustration from one neighborhood in Hyderabad, India, where we collected ARD. The figure plots the positions on the latent surface, here a sphere, of six characteristic groups: households with histories of arrests, remarriages, members working abroad (likely in the Middle East), government employees, and twins. Several patterns emerge in this example. First, people tend to have joint knowledge of households with arrests and remarriages, consistent with both characteristics carrying negative social stigma.  Second, the arrested population is tightly correlated in space in comparison to other groups, indicating more extreme heterogeneity in the number of arrested individuals respondents know.  Third, people who know individuals with government employment also often know people who have household members abroad, again consistent with the local context where both government jobs and foreign migration require connections and lead to higher incomes.

The attractive features of our approach are not without costs. Our approach is  parametric, relying on guessing the network structure through the pseudo-true parameters of the latent distance formation model estimated from ARD. It can do no better than the best latent distance model at capturing the likely distribution that generated the network. It cannot, for example, represent clustering in a way that violates the triangle inequality.\footnote{For an example of a network formation model which can do this, see \cite{chandrasekharj2012}.}  To see this, consider a two-dimensional Euclidean space with four groups that have equal probability of cross group interaction. If the data generating process has this feature, we will not capture it well.

\subsection*{Relation to the literature}
Our work contributes to and builds on several literatures. First, there is a nascent literature that seeks to apply the lessons from the economics of networks without having access to network data (e.g.,  \cite{beaman2016can},  \cite{banerjeegossip}, and \cite{chassangetal2017}). These methods are limited because they only speak to identifying central individuals  or focus on proxies. Prior work shows that proxies such as geography or ethnic divisions do not capture the network well and augmenting sampled network data, which works, can still be expensive \citep{chandrasekharl2016}. Our approach does not restrict the researcher to inferences about  one specific aspect of the data, instead providing  a blueprint to recover a distribution over the entire graph at minimal cost.

 Second, our work builds on a sizable literature on ARD, but expands both the context and inferential quantities of interest.  In contrast to our work, most previous work on ARD focused on estimating the size of ``hard-to-reach'' populations (see e.g.  \citet{killworth1998estimation} or \citet{bernard2010counting}).  These groups consist of individuals who are outside the sampling frame of most surveys. Rather than needing to reach these individuals directly, using ARD allows researchers to study individuals through their interactions with others who are captured by more traditional sampling strategies. ~\cite{bernard2010counting} use ARD to estimate the number of individuals impacted by an earthquake whereas~\cite{kadushin2006scale} use ARD to estimate the number of individuals using heroine.\footnote{Perhaps the most common use of ARD is to estimate the number of individuals who are considered  high risk for HIV/AIDS (e.g., ~\cite{maghsoudi2014network},~\cite{guo2013estimating},~\cite{ezoe2012population},~\cite{salganik2011assessing}).}

The primary tool for estimating population size with ARD is the Network Scale-up Method (N-Sum) and variations thereof. Say the goal is to estimate the number of injection drug users in the population.  If a respondent reports knowing two injection drug users out of one-hundred total contacts, then approximately two percent of the respondent's network consists of individuals who are injection drug users.  If the respondent's network is characteristic, then in a population of 300,000,000 individuals, this would mean there are about 6,000,000 injection drug users.  Recent work has paid attention to estimating other features of the network\footnote{\cite{zheng2006many} estimate heterogeneity in the propensity to know members of groups, or overdispersion.}, but the majority of work on ARD still focuses on estimating population sizes.  As we do not focus on populations that are hard-to-reach, we can ask directly about whether a respondent is a member of a group  to estimate population sizes.  This distinction is essential for ``scaling'' a respondent's degree.  If the size of each ARD group and the total population are known, we can use the N-Sum logic to estimate individuals' degrees.

The closest related work from the ARD literature is~\cite{mccormick2015latent} -- here, we use the same network formation model and build on derivations that are the key contribution of that work.  Specifically,~\cite{mccormick2015latent} show that, for a specific formation model, it is possible to arrive at a likelihood that is informed by information in ARD.  That is,they interpret and do inference on a likelihood for ARD.  While we also have this likelihood, in our work it is merely an intermediate step.  In our paper, we perform inferences about the parameters of the formation model itself.  By explicitly making the link to the formation model, we can generate graphs and compute both graph and individual level statistics.  

Third, our latent surface model\footnote{In the context where the goal is inference about a regression coefficient that varies based on network connections, \cite{auerbach2016identification} presents a more general framework that links network formation to a function of distance between unobservable social characteristics that drive formation.} is closely related to the $\beta$-model~\citep{holland1981exponential,hunter2004mm,park2004statistical, blitzstein2011sequential} and the properties examined in \citet{chatterjeeds2010} and \citet{graham2014econometric}. Every node has a fixed-effect. Links form conditionally independently given the fixed effects of the nodes involved, modulated by a function of distance between the nodes in a latent space. Relative to the~\citet{graham2014econometric}  and ~\citet{chatterjeeds2010} models, our model places nodes in a latent space (as in \citet{hoffrh2002}), which we are trying to estimate, whereas the former only allows for observable covariates, and the latter has none.  Whereas previous approaches consider an asymptotic frame based on a growing graph, we consider an explicitly sampling-based framework. We empirically compare our proposed model to the beta model in Appendix~\ref{sec:betacompare}.

\subsection*{Organization}
 We begin with an overview of our method for an applied researcher in Section \ref{sec:overview}. Section \ref{sec:framework} presents the full framework, model, and estimation algorithm. In Section \ref{sec:sims} evaluates when, and how well, the method performs using simple theory and a variety of simulated graphs. Section \ref{sec:results} shows how our method works when we apply it to 75 village where we have complete network data. In Section \ref{sec:application}, we apply our results to two empirical examples. Section \ref{sec:cost} demonstrates the 70-80\% cost-savings of ARD versus full network elicitation. Section \ref{sec:conclusion} concludes.

\section{Overview of Method}\label{sec:overview}

We begin with a simple overview of the proposed method. Suppose that a researcher is interested in studying networks in a set of rural villages. A village network with $n$ households is given by  ${\bf g}$, which  is a collection of links $ij$ where $g_{ij}=1$ if and only if households $i$ and $j$ are linked and $g_{ij}=0$ otherwise.  To fix ideas, suppose that the researcher wants to learn how some outcome variable $W$ is related to a network statistic (or a vector of statistics) of interest $S({\bf g})$. Or, perhaps the researcher is interested in how a treatment (such as exposure to microcredit) affects features of network structure, $S({\bf g})$.

Our procedure takes five steps.
\begin{enumerate}
    \item[I.] {\bf Conduct ARD survey:} Sample a share $\psi$ (e.g., 30\%) of households. Have each enumerate a list of their network links.\footnote{\textcolor{purple}{}Note that this gives a direct estimate of the respondent's degree. The method laid out in Section \ref{sec:framework} does not require this and can also produce estimates for expected degree based on the ARD responses alone.} 
    Ask 5-8 ARD questions, such as
    \begin{quote}
         \emph{``How many households among your network list do you know where any adult has had typhoid, malaria, or cholera in the past six months?"}
    \end{quote}
    The ARD response for a household $i$ is
    \[
    y_{ik} = \sum_{j}g_{ij}\cdot {\bf 1}\{j \text{ has had one of those diseases in past 6 mo.}\}
    \]
    where trait $k$ denotes the disease question. This just adds up all friends that have had the diseases over the last six months. We include a sample ARD questionnaire in Online Appendix \ref{sec:ARDBihar}.
    \item[II.] {\bf Conduct census exercise:} Obtain basic information about the full set of households in the village in a very rapid survey (denoted $X_{i}$ for all $i=1,...,n$).
    \begin{itemize}
        \item Minimal demographics: e.g., GPS coordinates, caste/subcaste.
        \item ARD traits: e.g., whether the household has had typhoid, malaria, or cholera in the past six months.
    \end{itemize}
    A sample census questionnaire is in Online Appendix \ref{sec:CensusBihar}.
    \item[III.] {\bf Estimate network formation model with ARD:} Use the information from the ARD survey and the population counts from the census to estimate the parameters of a network formation model. In this model, the probability that two households $i$ and $j$ are linked depends on household fixed effects ($\nu_i$) and distance in some latent space (latent locations $z_i$) with 
    \[
    \Pr(g_{ij} = 1 \vert \nu_i, \nu_j, \zeta, z_i, z_j) \propto \exp(\nu_i + \nu_j + \zeta \cdot \text{distance}(z_i,z_j)).
    \]
    \begin{itemize}
        \item Fit a model to predict $\nu_i,z_i$ using $X_i$ in the ARD sample.
        \item Predict $\nu_i,z_i$ using $X_i$ for all households in the census but not in the ARD sample.
    \end{itemize}
    Equipped with estimated fixed effects and latent locations for all $n$ households in the network, the probability of any network ${\bf g}$ being drawn is fully computed. The code is freely available and discussed in Section \ref{sec:ARD_Codes}.
    \item[IV.] {\bf Compute network statistics of interest:} Use the estimated probability model (using $\zeta$, fixed effects $\nu_i$ and latent locations $z_i$) to compute $\E[S({\bf g})|{\bf Y}]$. The code is freely available and discussed in Section \ref{sec:ARD_Codes}.\footnote{Note that here, the method produces estimates of the latent locations of each node, which may themselves be useful for some research questions.}
    \item[V.] {\bf Estimate economic parameter of interest:} E.g., run regressions such as
    \[
    W_v = \alpha + \beta'\E[S({\bf g}_{v}) \vert {\bf Y}_v] + \epsilon_v
    \text{ or }
    \E[S({\bf g}_{v}) \vert {\bf Y}_v] = \alpha + \beta \text{Treatment}_v + \epsilon_v,
    \]
    though clearly one can do more complex exercises once one has estimated the above network formation model. 
    
\end{enumerate}

\section{Model and Estimation}\label{sec:framework}

In this section, we present formally the procedure outlined above. This includes defining ARD, introducing the network formation model, linking explicitly the formation model to the ARD, and finally, outlining how to generate graphs from that network formation model.   

\subsection{Setup}

We begin by describing the underlying graph and the ARD. Let ${\bf g} = (V,E)$ be an undirected, unweighted graph with vertex set $V$ and edge set $E$, with $|V|=n$ nodes. We let $g_{ij}=\1\{ij\in E\}$. We also assume that researchers have a vector of demographic characteristics, $X_i$ for every $i \in V$.

Finally, we assume that the researcher has an ARD sample of $m \leq n$ nodes which are selected uniformly at random (where we define $\psi=\frac{m}{n}$). These could be the whole sample, with $\psi=1$, or a smaller share, and will depend on the context. It is useful to define  $V_{ard}$ to be the ARD sample set and $V_{non} = V \setminus V_{ard}$.

Formally, an ARD response is a count $y_{ik}$ to a question ``How many households with trait $k$ do you know?'' which we can write as 
\[
y_{ik} = \sum_{j\in G_k} g_{ij}
\]
where $G_k \subset V$ is the set of nodes with trait $k$.   That is, $y_{ik}$ is a count of the number of households in group $k$ that person $i$ knows.  Note that throughout we assume that we observe $y_{ik}$ and, in some cases, additional information about the group of people with trait $k$ (e.g., the number of households with this trait in the population), but we do not observe any links in the network. 

It is easy to see how this could be applied to firm or banking network data. In the firm case, ${\bf g}$ is the directed, weighted supply-chain network, which is of course not observed by the researcher. $G_k$ would be set of firms in sector $k$ and $g_{ij}$ would be the volume of transactions between firms $i$ and $j$. Here $y^{out}_{ik}=\sum_{j\in G_k}g_{ij}$ and $y^{in}_{ik}=\sum_{j\in G_k}g_{ji}$ are  the total volume of directed transactions (inputs/outputs) between firm $i$ and firms in sector $k$. For the remainder of the paper, we proceed with the example of a social network survey, however.

\subsection{Latent surface model}

The setup and model we use is from \cite{mccormick2015latent}, motivated by, among others, \cite{hoffrh2002}. We model the underlying network as
\begin{align}
\Pr(g_{ij} = 1 \vert \nu_i, \nu_j, \zeta, z_i, z_j) \propto \exp(\nu_i + \nu_j + \zeta z_i'z_j), \label{eqn:network-model}
\end{align}
where $\nu_i$ are person-specific random effects that capture heterogeneity in linking propensity.\footnote{While we develop our methodology for this specific network formation model, we should note that it is likely possible to use ARD and other components of our method alongside a range of other formation models.  While generalizing the method is outside the scope of this paper, we do view it as an avenue for future work, especially in real-world settings where researchers have a strong preference for alternative models.}  The set $V$ of nodes occupy positions on the surface of a latent geometry.  As in previous latent geometry models in the statistics and machine learning literatures, the distance between nodes on the latent surface is inversely proportional to their propensity for interaction, parsimoniously encoding homophily.  Using a distance measure preserves the triangle inequality, thereby generating likely triadic closure.  That is, if the position of node $i$ is close to that of node $j$ and node $j$ is close to node $k$, then the triangle inequality limits the distance between $i$ and $k$.  As we show below, equipped with the latent space terms, the model has features akin to random geometric graphs where clusters of nodes that are nearby are more likely to link, capturing realistic clustering patterns \citep{penrose2003}.  For further discussion of the properties of this class of model see~\cite{hoff2008modeling}.  In our case, we use latent space positions on the surface of $p+1$ dimensional hypersphere, $\mathcal{Z} = \mathcal{S}^{p+1}$.  As described below, the hypersphere has both conceptual and computational advantages when working with ARD. Finally, $\zeta > 0$ modulates the intensity of the latent component.

We use a Bayesian framework and, therefore, complete the model by specifying priors on the model components.  We begin with the latent space. As in  \cite{mccormick2015latent}, we model priors for latent positions on $\mathcal{S}^{p+1}$ as
\begin{align*}
z_i |\mathbf{\upsilon}_z, \eta_z\sim\mathcal{M}(\mathbf{\upsilon}_z,0) 
\text{ and }
z_{j} | {j\in G_{k}}, \mathbf{\upsilon}_{k}, \eta_{k}\sim\mathcal{M}(\mathbf{\upsilon}_{k},\eta_{k}) 
\end{align*}
where $\mathcal{M}$ denotes the von Mises-Fisher distribution across $\mathcal{S}^{p+1}$.\footnote{Informally, the von Mises-Fisher distribution can be thought of as follows. If the concentration parameter is large.  It is similar to a normal distribution on the sphere in that it is unimodal and symmetrically dissipating in distance from the center (though it should not be confused with the wrapped normal distribution or other projection of the normal to a sphere). If the concentration parameter is small, it is essentially uniform over the sphere's surface.} 
Here $\upsilon_k$ denotes the location on the sphere and $\eta_k$ is the intensity: $\eta = 0$ means that the location is uniform at random, which makes sense since the ARD respondents are assumed to be drawn uniformly at random.  The $z_{j} | {j\in G_{k}}$ terms describe the latent positions of individuals who have a particular trait $k$.  For these groups, we estimate the center and spread of the distribution.  The positions of these groups then triangulate the positions of individuals who have ARD.  For individuals in the population without ARD data, we assign their positions based on the positions of individuals with ARD that have similar covariates.

Equipped with this,~\cite{mccormick2015latent} show that the expected ARD response by $i$ for category $k$ can be expressed as
\begin{align}\label{eq:lambda}
\lambda_{ik} = \E[y_{ik}] =  d_i b_k \left( \frac{C_{p+1}(\zeta) C_{p+1}(\eta_k)}{C_{p+1}(0)C_{p+1} \sqrt{\zeta^2 + \eta_k^2 + 2\zeta \eta_k \cos (\theta_{(z_i,\upsilon_k)})}} \right),
\end{align}
 where $d_i$ is the respondent degree and $b_k$ is the share of ties made with members of group $k$, $C_{p+1}(\cdot)$ is the normalizing constant of the von Mises-Fisher distribution (which is a ratio depending on modified Bessel functions that is easy to compute with standard statistical software), $\theta_{(z_i,\upsilon_i)}$ is the angle between the two vectors \citep{mccormick2015latent}. The expected number of nodes of type $k$ known by $i$ is roughly its expected degree scaled by the population share of the group, adjusted by a factor that captures the relative proximity of the node to the type in question in latent-space.  Note that, in the above expression, both the distance between an individual and the center of the latent trait distributions and the concentration of the latent trait distribution influence the (expected) number of individuals know.  Recall that our formation model only relies on the distance between individuals in the latent space.  The positions of individuals, however, are estimated using the likelihood above, meaning that both the position and concentration are relevant for our formation model.

A key assumption in our formation model is that the propensities for individuals to form ties are conditionally independent given the latent variables.  The likelihood for the formation model, conditional on the latent variables, is a Bernoulli trial for each pair.  ARD, then, is the sum of (conditionally) independent Bernoulli trials, which we can approximate with a Poisson distribution.  This allows us to compute the distribution of the ARD response, which will be distributed Poisson,
\[
y_{ik} \vert d_i,b_k,\zeta,\eta_k,\theta_{(z_i,\upsilon_k)} \sim \text{Poisson}\left(\lambda_{ik}\right).
\]

Though the likelihood above relies only on ARD, it does not uniquely identify the formation model since $\lambda_{ik}$ estimates on the degree, $d_i$, rather than the individual heterogeneity parameter $\nu_i$.  We can compute the expected degree as in \citep{mccormick2015latent},
\begin{equation}
d_i = n \exp(\nu_i) \E[\exp(\nu_j)] \left(\frac{C_{p+1}(0)}{C_{p+1}(\zeta)}\right).
\label{eq:deg}
\end{equation}
The virtue here is that this allows us to estimate $\nu_i$ for $i\in V_{ard}$.\footnote{Note that if in our ARD elicitation, we also collect information on each node's degree, which we recommend, then we can use that information here, without needing to first estimate $d_i$ above.} The logic is similar to that in \citet{chatterjeeds2010} or  \citet{graham2014econometric}: in a model like the $\beta$-model, having a vector of degrees essentially provides the researcher with enough information to recover the vector of fixed-effects.  If we take the above expression for each individual, then we have a system of $n$ equations with $n+1$ unknown terms ($n$ $\nu_i$ terms and one $\E [\exp(\nu_j)]$).  Assuming that $\E [\exp(\nu_j)]$ is well-approximated by the average of the $\exp(\nu_i)$'s, we have a system with $n$ equations and $n$ unknowns and can, therefore recover individual $\nu_i$ terms using degree and the latent scaling term, $\zeta$. 

To complete the model, we need priors for the remaining parameters. We propose Gamma priors for $\zeta$ and $\eta_k$ with conjugate priors on the hyperparameters.  Then if $\boldsymbol{\theta}$ is the shorthand for all parameters, the posterior is
\begin{align*}
\boldsymbol{\theta} \vert y_{ik} & \propto  \prod_{k=1}^K \prod_{i=1}^n \exp(-\lambda_{ik}) \lambda_{ik}^{y_{ik}}  \prod_{i=1}^n \text{Normal}(\log (d_i) \vert \mu_d, \sigma^2_d)\\ 
& \times \prod_{k=1}^K \text{Normal}(\log(b_k)\vert \mu_b, \sigma^2_b) \prod_{k=1}^K \text{Normal}(\log(\eta_k)\vert \mu_{\eta_k}, \sigma^2_{\eta_k}) \text{Gamma}(\zeta \vert \gamma_\zeta, \psi_\zeta).
\end{align*}

Given the data, we can compute posteriors over degrees of nodes, their unobserved heterogeneity, population shares of categories, intensity of the latent space component in the network formation model, relative locations of categories on the sphere, and how intensely they are concentrated at these locations. So with any draw of $(z_1,...,z_n)'$, $(\nu_1,...,\nu_n)'$, and $\eta$, we can generate a graph from the distribution in \eqref{eqn:network-model}.

\subsection{Identification}\label{sec:ID_intuition}

Before explaining how we go from the ARD sample to the full sample, we explain identification of the parameters in the model.\footnote{Also see \cite{mccormick2015latent} for a discussion of identification as well as recommendations for the number of populations to fix based on the dimension of the hypersphere.} Here we provide a simple intuition, followed by a formal statement with proof in the Appendix.

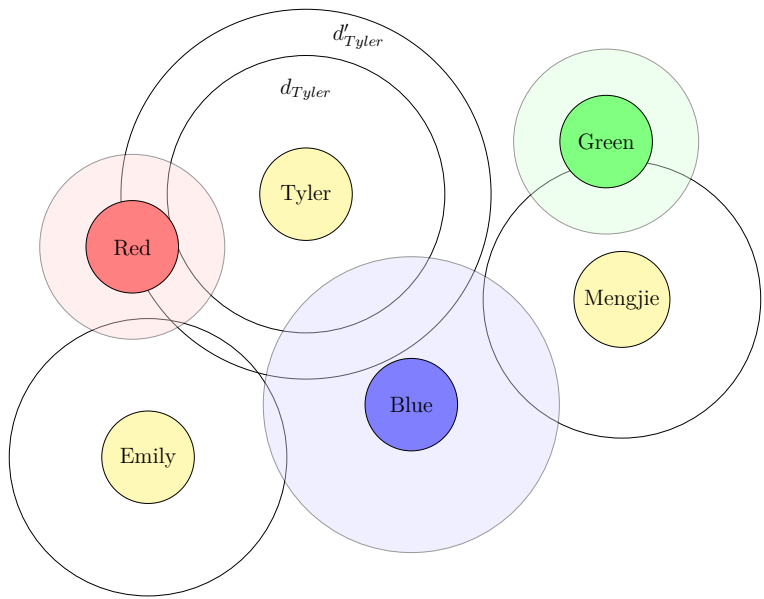
\begin{figure}[!h]
\scalebox{0.7}{\begin{tikzpicture}
\node[draw, circle, fill=yellow!35,  minimum size=50pt] at (0,0) (v1){$\text{Tyler}$};
\node[draw, circle,  minimum size=150pt] at (0,0) (v2){};
\node[draw, circle,  minimum size=200pt] at (0,0) (v2){};
\node[draw=none,  minimum size=20pt] at (0,2) (v2){$d_{Tyler}$};
\node[draw=none,  minimum size=20pt] at (1,3) (v2){$d_{Tyler}'$};
\node[draw, circle, fill=yellow!35, minimum size=50pt] at (-3,-5) (v3){$\text{Emily}$};
\node[draw, circle,  minimum size=150pt] at (-3,-5) (v4){};
\node[draw, circle, fill=yellow!35,  minimum size=50pt] at (6,-2) (v5){$\text{Mengjie}$};
\node[draw, circle,  minimum size=150pt] at (6,-2) (v6){};

\node[draw, circle, fill = red!15, opacity = 0.4, minimum size=100pt] at (-3.3,-1) (v6){};
\node[draw, circle, fill=red!50,  minimum size=50pt] at (-3.3,-1) (v5){$\text{Red}$};

\node[draw, circle, fill = blue!15, opacity = 0.4,  minimum size=160pt] at (2,-4) (v8){};
\node[draw, circle, fill=blue!50,  minimum size=50pt] at (2,-4) (v7){$\text{Blue}$};

\node[draw, circle, fill = green!15, opacity = 0.4,  minimum size=100pt] at (5.7,1) (v8){};
\node[draw, circle, fill=green!50,  minimum size=50pt] at (5.7,1) (v7){$\text{Green}$};

\end{tikzpicture}}
\caption{Identification of $\upsilon_k$ and $\eta_k$ for $k \in \{\text{Red}, \text{Blue}, \text{Green}\}$ holding fixed locations and degrees of nodes in the ARD sample. Identification of $\E[d_i]$ holding fixed locations and concentration parameters. \label{fig:ident}}
\end{figure}

Figure \ref{fig:ident} shows how the location $\upsilon_k$ and the concentration $\eta_k$ for category $k$ is intuitively identified assuming the latent geometry is a plane. Holding the location of three nodes fixed (here Tyler, Emily and Mengjie), and holding fixed their degree, the relative locations of categories (here Red, Green, and Blue) can be identified by placing their centers and controlling the concentration to match the Poisson rates observed in the ARD. To see that the concentrations of the Red, Green, and Blue trait groups are identified, consider what would happen if we changed the concentration of one of the groups.  If we increased the concentration of the Blue group (i.e. decreased the variance), then we would need to move Mengjie (and Tyler and Emily) closer to the Blue group to preserve the overlap between Emily's disc and the Blue group.  Moving Emily closer to the Blue group, though, necessitates moving her away from the Red group, reducing her overlap with the Red group.  We could try to compensate by decreasing the concentration (increasing the variance) of the Red group.  We can't do this, though, because doing so would change the overlap between Tyler's disc and the Red group.  Similarly the figure shows how the $\E[d_{Tyler}]$ can be identified holding fixed the location and concentration of the various categories, since this affects $\lambda_{Tyler,k}$.  Because the likelihood only depends on the latent space through the distances between individuals and groups, we fix the location of the center a small number of groups to address the invariance to distance-preserving rotations. 

The formal statement is as follows.
\begin{thm}\label{thm:identification}
For any $n$ by $K$ matrix of ARD responses ${\bf Y}$, we have that $\mathcal{L}(d_i,b_k,\zeta,\eta_k,\theta_{(z_i,\upsilon_k)};{\bf Y})=\mathcal{L}(d_i,b_k,\zeta,\eta'_k,\theta'_{(z_i,\upsilon_k)};{\bf Y})$ only if $\eta_k=\eta'_k$,  $\theta_{(z_i,\upsilon_k)}=\theta'_{(z_i,\upsilon_k)}$, $\zeta=\zeta'$, $\nu_i =\nu_i'$ and $z_i = z_i'$.
\end{thm}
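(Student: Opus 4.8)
The plan is to work backwards from the likelihood. First I would note that the ARD likelihood factorizes as $\prod_{i,k}\exp(-\lambda_{ik})\lambda_{ik}^{y_{ik}}$, a product of independent $\mathrm{Poisson}(\lambda_{ik})$ terms with means given by \eqref{eq:lambda}. Evaluating the claimed identity of likelihoods at the all-zero matrix ${\bf Y}=0$ gives $\sum_{i,k}\lambda_{ik}=\sum_{i,k}\lambda'_{ik}$, and evaluating it at the matrix with a single unit entry in position $(i,k)$ and dividing by the ${\bf Y}=0$ case isolates $\lambda_{ik}=\lambda'_{ik}$ for every $(i,k)$ — this is just identifiability of the Poisson family. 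Since $d_i$ and $b_k$ enter both sides unchanged (degree is directly elicited, and $b_k$ is pinned down by the population share of trait $k$ via the N-Sum logic), dividing by $d_i b_k$ reduces the problem to showing that $g(\zeta,\eta_k,\theta_{(z_i,\upsilon_k)})=g(\zeta',\eta'_k,\theta'_{(z_i,\upsilon_k)})$ for all $i,k$ forces the parameter equalities, where $g$ denotes the bracketed ratio of von Mises--Fisher normalizing constants in \eqref{eq:lambda}.

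Next I would identify $\zeta$ and the $\eta_k$. Write $g(\zeta,\eta,\theta)=A(\zeta,\eta)/C_{p+1}\!\big(r(\zeta,\eta,\theta)\big)$ with $A(\zeta,\eta)=C_{p+1}(\zeta)C_{p+1}(\eta)/C_{p+1}(0)$ and $r(\zeta,\eta,\theta)=\sqrt{\zeta^2+\eta^2+2\zeta\eta\cos\theta}$. Since $C_{p+1}$ is strictly monotone and real-analytic and $\theta\mapsto r$ is strictly monotone on $[0,\pi]$, the map $\theta\mapsto g(\zeta,\eta,\theta)$ is strictly monotone; and for a generic (non-degenerate) configuration of the sampled $z_i$ relative to the anchored reference-group centers, the angles $\{\theta_{(z_i,\upsilon_k)}\}_i$ take enough distinct values that the observed $\{g(\zeta,\eta_k,\theta_{(z_i,\upsilon_k)})\}_i$ pin down the one-variable function $\theta\mapsto g(\zeta,\eta_k,\cdot)$. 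An injectivity argument for the two-parameter analytic family $(\zeta,\eta)\mapsto g(\zeta,\eta,\cdot)$ then recovers, for each $k$, the \emph{unordered} pair $\{\zeta,\eta_k\}$ — only the unordered pair, because $g$ is symmetric in its first two arguments, so a single $k$ cannot distinguish $\zeta$ from $\eta_k$. Because $\zeta$ is common across all $k$ and (under a mild non-degeneracy condition) the $\eta_k$ are not all equal, $\zeta$ is the common element of these pairs, which yields $\zeta=\zeta'$ and then $\eta_k=\eta'_k$ for every $k$.

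With $\zeta$ and all $\eta_k$ identified, strict monotonicity of $\theta\mapsto g(\zeta,\eta_k,\theta)$ inverts each datum to a unique angle in $[0,\pi]$, so $\theta_{(z_i,\upsilon_k)}=\theta'_{(z_i,\upsilon_k)}$. Treating these angles as geodesic distances on $\mathcal{S}^{p+1}$, the rotation invariance has already been removed by fixing a non-degenerate set of reference centers, so the angles from any sampled $z_i$ to those fixed anchors determine $z_i$ uniquely (the relevant geodesic spheres intersect in a single point for sufficiently many anchors in general position), giving $z_i=z'_i$; feeding the $z_i$ back in, the angles from the $z_i$ to each remaining $\upsilon_k$ then determine $\upsilon_k$ as well. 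This is exactly the triangulation picture of Figure \ref{fig:ident}. Finally, \eqref{eq:deg} reads $d_i=n\,e^{\nu_i}\,\E[e^{\nu_j}]\,C_{p+1}(0)/C_{p+1}(\zeta)$; with $\zeta$ identified and $\E[e^{\nu_j}]$ replaced by $\tfrac1n\sum_j e^{\nu_j}$, summing over $i$ recovers $\sum_j e^{\nu_j}$ (the positive root), hence $\E[e^{\nu_j}]$, hence each $e^{\nu_i}$ from the observed $d_i$, so $\nu_i=\nu'_i$.

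I expect the crux to be the second step: cleanly disentangling the shared intensity $\zeta$ from the group-specific concentrations $\eta_k$ using only ratios of Bessel-type normalizing constants evaluated at finitely many data angles — i.e., establishing injectivity of $(\zeta,\eta)\mapsto g(\zeta,\eta,\cdot)$ on $\{\zeta,\eta>0\}$ up to the $\zeta\leftrightarrow\eta$ swap, together with the claim that the realized angles are rich enough to pin down a member of that family. The remaining ingredients — Poisson inversion, monotone inversion of $g$ in $\theta$, spherical trilateration from anchored reference groups, and the linear degree system — are essentially routine, as is the need for the mild non-degeneracy assumptions (generic latent positions and $\eta_k$ not all equal).
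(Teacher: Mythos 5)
Your overall architecture (Poisson inversion to get $\lambda_{ik}=\lambda'_{ik}$, then peeling off $d_i b_k$, then inverting the von Mises--Fisher ratio in $\theta$, then trilaterating $z_i$ from fixed anchors, then solving the degree system for $\nu_i$) is a genuinely different route from the paper's, and its first and last steps are fine. The paper instead exploits an information source you never use: because the census reveals each sampled individual's trait memberships, the latent positions are draws from the traits' von Mises--Fisher densities, and the observed overlap (the ``lens'' $\ell(k,k')$) between trait groups identifies the concentrations $\eta_k$ and the non-anchored centers directly; $\zeta$ is then pinned down by ratios of conditional expectations $\E_i[\lambda_{ik}\mid i\in G_k]/\E_j[\lambda_{jk}\mid j\in G_{k'}]$ for a pair with $\eta_k\neq\eta_{k'}$ --- this is exactly where the paper's condition that the $\eta_k$ are not all equal enters, playing a role analogous to your ``common element of the unordered pairs'' device. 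The paper's remaining steps (uniqueness of $\nu_i$ via monotonicity of the degree equation, and uniqueness of $z_i$ via the classification of spherical isometries once three non-antipodal centers not lying on a great circle are fixed) match your sketch closely.

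The gap is at the step you yourself flag as the crux, and it is more serious than you suggest. You claim that the observed values $\{g(\zeta,\eta_k,\theta_{(z_i,\upsilon_k)})\}_i$ ``pin down the one-variable function $\theta\mapsto g(\zeta,\eta_k,\cdot)$.'' But the angles $\theta_{(z_i,\upsilon_k)}$ are themselves unknown parameters (the $z_i$ are latent even for the anchored groups), so what you actually observe is a finite multiset of outputs of an unknown member of the family evaluated at unknown inputs. Without further structure one can trade off a perturbation of $(\zeta,\eta_k)$ against compensating perturbations of the angles and leave every $\lambda_{ik}$ unchanged, so the function is not pinned down and the subsequent injectivity-up-to-swap argument has nothing to act on. The over-determination that could rescue this --- each $z_i$ contributes only $p+1$ degrees of freedom to its $K$ angles, and $\zeta$ is shared across all groups --- is never invoked, and the injectivity of $(\zeta,\eta)\mapsto g(\zeta,\eta,\cdot)$ up to the swap is asserted rather than proved. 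The paper's lens argument is precisely the device that supplies the missing identifying information; without it, or a worked-out version of the cross-equation argument, your second step does not go through.
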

We provide a formal proof of the theorem in Appendix \ref{sec:IDProof}.

\

\subsection{From ARD sample to Non-ARD sample} 
\label{sec:ARDtononARD}

Thus far we only have posteriors for our ARD sample $V_{ard}$.  We now turn to predicting $\nu_i$ and $z_i$ for $j \in V_{non}$. 
We use k-nearest neighbors to draw this distribution. Given demographic covariates $X_i$ for all $i \in V$, we define a distance between nodes in the feature space $d(X_i,X_j)$ for $i, j \in V$. For each $j \in V_{non}$, we pick $i' \in V_{ard}$ such that $d(X_{i'},X_j)$ is among the k smallest distances. We then take a weighted average of $\nu_{i'}$ and $z_{i'}$ with weights inversely proportional to $d(X_{i'},X_j)$, to estimate $\nu_j$ and $z_j$, respectively. We normalize $z_j$ such that $|z_j|=1$ to map it to the surface of the sphere.    Thus, we have described a framework that a researcher can use with only ARD data and demographic covariates to take a sample of draws from a network formation latent surface model.

\subsection{Drawing a graph}

We now describe the algorithm used to generate a distribution of graphs $\{{\bf g}_s\}_{s=1}^{S}$.  The algorithm for drawing graphs requires specifying the dimension of the latent hypersphere.  Throughout the paper we follow~\citet{mccormick2015latent} and use $p=2$, for a three-dimensional hypersphere.\footnote{We also investigate the performance of the method in real-world networks for $p=3$ in Appendix \ref{sec:p_3} and $p=4$ in Appendix \ref{sec:p_4}.}  This choice also facilitates visualizing latent structure.  The posterior distribution is not available in closed form.  We therefore use a Metropolis-within-Gibbs algorithm to obtain samples from the posterior.  In the description below the jumping scale is tuned adaptively throughout the course of sampling.  Specifically, every 50 draws we look at the acceptance rate of these draws and then adjust the scale of the jumping distribution.  We follow the guidelines given in ~\citet{gelman2013bayesian} and perform checks to ensure that our sampler has converged.

\begin{alg}[Drawing Graphs]

\ 

Input: $y_{ik} \;\; \forall  i \in V_{ard}$, $X_i \;\; \forall i \in V$.
\\Assume ARD groups, $k=1,...,K$, such that $K \geq p$.  We propose fitting the model as follows (noting that steps 1 \& 2 follow from \cite{mccormick2015latent}):
\begin{enumerate}
\item For a subset of the ARD groups, $k^{(s)}=1,...,K^{(s)}$, fix $\boldsymbol{\upsilon}_k^{(s)}$.  At each step we use these fixed positions in a Procrustes transformation\footnote{Procrustes transformations are a class of transformation that use rotation, translation, or uniform scaling.  Critically, they change the orientation and shape of an object but not the size.} (see~\citet{hoffrh2002}) to rotate the latent space back to a common orientation.
\item Repeat to convergence for $t=1,..., T$
\begin{enumerate}
\item For each $i$, update $z_i$ using a random walk Metropolis step with proposal $z_i^* \sim \mathcal{M}(z_i^{(t-1)}, \mbox{jumping scale})$.
Use the algorithm proposed by~\citet{wood94} to simulate proposals.

\item Update $\boldsymbol{\upsilon}_{k}$ using a conditionally conjugate Gibbs step~\citep{Mardia:1976vb,Guttorp:1988vl,Hornik:2013fy}. 
\item Update $d_i$ with a Metropolis step with \\ $\log(d_i^*)\sim\mbox{N}(\log(d_i)^{(t-1)}, (\mbox{\small jumping distribution scale})$.
\item Update $\beta$ with a Metropolis step with $\log(\beta^*)\sim\mbox{N}(\log(\beta)^{(t-1)}, (\mbox{\small jumping distribution scale})$.
\item Update $\eta_{k}$ with a Metropolis step with $\eta_{k}^*\sim\mbox{N}(\eta_{k}^{(t-1)}, (\mbox{\small jumping distribution scale})$.
\item Update $\zeta$ with a Metropolis step with $\zeta^*\sim\mbox{N}(\zeta^{(t-1)}, (\mbox{jumping distribution scale})$.
\item Update $\mu_\beta\sim\mbox{N}(\hat{\mu}_\beta,\sigma_\beta^2)$ where $\hat{\mu}_\beta=\sum_{k=1}^{K} \beta_k/K$.
\item Update $\sigma_\beta^2 \sim \mbox{Inv-}\chi^2(K-1,\hat{\sigma^2_\beta})$ where $\hat{\sigma^2_\beta}=\frac{1}{K-1}\sum_{k=1}^{K}(\beta_{k}-\mu_{\beta})^{2}$.
\item Update $\mu_d\sim\mbox{N}(\hat{\mu}_d,\sigma_d^2)$ where $\hat{\mu}_d=\sum_{i=1}^{n} d_i/n$.
\item Update $\sigma_d^2 \sim \mbox{Inv-}\chi^2(n-1,\hat{\sigma^2_d})$ where $\hat{\sigma^2_d}=\frac{1}{n-1}\sum_{i=1}^{n}(d_{i}-\mu_{d})^{2}$.
\end{enumerate}
\item Repeat for $s \in \{T/2+1,...,T\}$
\begin{enumerate}
\item Calculate $\nu_i^{t}  \;\; \forall i \in V_{ard}$ such that $\nu_i^{t}$ satisfies $(d_i)^{t} = \exp(\nu_i^{t}) \sum_i \exp(\nu_i^t) \left(\frac{C_{p+1}(0)}{C_{p+1}(\zeta)}\right)$.
\item Use method described in Section \ref{sec:ARDtononARD} to estimate $\nu_j^{t}$ and $z_j^t  \;\; \forall j \in V_{non}$. 
\item Sample graph ${\bf g}_t$ using the the procedure described below.
\end{enumerate}
\end{enumerate}

Output: $\{{\bf g}_s\}_{s=1}^{S}$

\end{alg}

To generate graphs, recall that the formation model has $\Pr(g_{ij} = 1 \vert \nu_i, \nu_j, \zeta, z_i, z_j) \propto \exp(\nu_i + \nu_j + \zeta z_i'z_j)$.  We estimate $\zeta$ and $z_i, z_j$ using the likelihood derived in~\cite{mccormick2015latent}.  The expression (\ref{eq:deg}) relates degree to the unobserved gregariousness parameters, $\nu_i$.  If we approximate $\E[\exp(\nu_j)]$ as the average of the $\nu_i$'s, then we can view (\ref{eq:deg}) as a system with $n$ equations and $n$ unknowns and obtain estimates for $\nu_i$ for each respondent. 

We then normalize the $\exp(\nu_i + \nu_j + \zeta {z_i^t}\; ' z_j^t)$ terms to produce probabilities.  Define \[\Pr(g_{ij}=1|z_i,z_j,\nu_i,\nu_j)=\frac{\exp(\nu_i+\nu_j+\zeta z_i'z_j)\sum_i \E[d_i]}{\sum_{i,j} \exp(\nu_i+\nu_j+\zeta z_i'z_j)}.\]  Normalizing in this way ensures 
$ \sum_i \E[d_i] \triangleq \sum_i \sum_j \Pr(g_{ij}=1|z_i,z_j,\nu_i,\nu_j).$  Since the formation model assumes that the propensities to form a ties between pairs are conditionally independent given the latent variables, we can now generate graphs by taking draws from a Bernoulli distribution for each pair with probability defined by $\Pr(g_{ij}=1|z_i,z_j,\nu_i,\nu_j)$.

\subsection{Discussion}
\subsubsection{Sensitivity to choice of prior distributions.}
A natural question in any Bayesian analysis is how the modelers' choices about prior distributions impact posterior inferences.  In our context, the priors are influential in two settings. First, as explained above, we put priors directly on the parameters of the ARD likelihood.  The ARD likelihood parameters then, in turn, determine the parameters for the network formation model.  To evaluate the influence of the prior distributions on our ability to estimate the parameters of the ARD likelihood (and therefore formation model), we conduct a series of experiments presented in Appendix~\ref{sec:priorsens}.  For the scalar and vector parameters (e.g., the individual degree, $d_i$) we examine the posterior distribution after varying the spread and center of the distribution of the prior.  For the latent space locations, recall that we fix some population centers for identification.  To ensure that our results are not sensitive to these choices, we perform experiments where we randomly choose both which ARD population centers we fix and where these groups are positioned on the sphere's surface.  

A second consideration in exploring our prior choices is the way that priors on the ARD likelihood parameters imply (via the formation model) priors on our network moments of interest.  That is, we do not explicitly put a prior on centrality.  The prior on centrality (and the other network moments) is, however, implied by the prior distribution placed on the parameters in the ARD likelihood.  Appendix~\ref{sec:priorsens} presents a second set of results that show how the priors used for our model relate to the network moments of interest.  We begin by simulating networks using the procedure above without any observed data.  That is, we generate a series of networks entirely from the specified prior distributions.  This series of networks demonstrates the wide range of possible networks that are supported by our formation model and the priors we specified. For context, we also plot the distribution of network moments from our estimated posterior distribution and from the observed data in Section~\ref{sec:data}.   

\subsubsection{Finite population and density}
We have provided a simple algorithm to go from ARD questions to draws from the posterior distribution of the graph that would have given rise to ARD answers by respondents with characteristics similar to those we observed in the data.  The model leverages a latent surface model similar to~\citet{hoffrh2002}, used in \cite{mccormick2015latent}, which is intimately related to the $\beta$-model studied in~\citet{chatterjeed2011} and ~\citet{graham2014econometric}.  One issue that has arisen from both the Bayesian and frequentest perspectives is the notion of density in the limit, or the rate at which the number of edges grows compared to the number of nodes.  The Bayesian paradigm uses the Aldous-Hoover Theorem~\citep{hoover1979relations, aldous1981representations} for node-exchangeable graphs to justify representing dependence in the network through latent variables.  This exchangeability assumption implies that a graph can be sparse if and only if it is empty~\citep{lovasz2006limits, diaconis2007graph, orbanz2015bayesian, crane2015framework}.  From a frequentist perspective,~\citet{chatterjeed2011} show that the individual fixed effects (corresponding to, for example, gregariousness) can only be consistently estimated when the network sequence is dense. 

In contrast to this previous work, however, we assume that our sample of egos arises from a population with fixed $n$.  That is, in our paradigm there is a network of finite size, $n$, and we observe a small $m$ number of actors.  We see the reliance on this assumption in, for example, our expression relating degree to the individual heterogeneity parameters, $\nu_i$.  Put a different way, there is no asymptotic sequence of networks. The number of edges in a graph still impacts estimation, however.  Even when the number of nodes is large, we do not expect $d_i$ to uniformly converge to $\E[d_i]$ if the graph is not dense.  This additional variability propagates through the model and inflates the posteriors of $\nu_i$. These may be quite poor in practice, though it is difficult to derive the finite sample distribution. Nonetheless, what this suggests is that in cases where the network is too sparse, the ARD approach may be uninformative, and the researcher will see this plainly. This is the case for two reasons. First, by definition, anyone in the ARD sample will know fewer alters with trait $k$ since the network has fewer links on average. Second, there will be too much variation in our location estimates and degree estimates, which then will also affect our node heterogeneity estimates. This means that when the researcher faces rather diffuse posteriors, the network may be too sparse to convey much information. We explore these issues in simulations below.

\section{How Well Does the Procedure Perform?}\label{sec:sims}

In this section we explore how well our procedure works under the assumption of correct specification. That is, we assume that the data-generating process is such that graphs are generated from the family of models described in \eqref{eqn:network-model}.  While taking a stand on the formation model permits tractability, it of course carries with it some well-understood limitations. We discuss these limitations and test the procedure in real-world network data from 75 Indian villages in Section \ref{sec:results}. In Section \ref{sec:application}, we further consider two different field experiments that used network data and ask whether using ARD alone would have allowed researchers to make similar conclusions.

Under the assumption of correct specification, and having demonstrated identification above, we cover two questions. The first is for which network statistics do we expect ARD to work well. That is, even if we knew the set of individual fixed effects $\nu_i$ and latent locations $\zeta_i$, when would we have sufficient information to recover the network statistics of interest or the economic parameters of interest in a regression. To do this, in Section \ref{sec:prob_realization} we develop the theory for a taxonomy of network features to classify when we would or would not expect recovery of the network features. We show a straightforward but informative result which says that if, for a sufficiently large graph, our statistic of interest for any random realization from the generating process will be close to its expectation, then we should expect the mean-squared error of our statistic to become negligible. We supplement this with simulations to show practical results as to which network statistics we can recover with low mean-squared error (MSE). Finally, we conduct a rich set of simulations to demonstrate across a number of network statistics how well the procedure works for a data-set that mimics real-world network data, in Section \ref{sec:core_results}.

Second, our simulations explore the sensitivity of our results to important features of the environment. Empirical network data may vary in terms of their degree distribution: how sparse they are (the number of links on average relative to the network size) and whether they exhibit thick right-tails (there are some nodes who are extremely well-connected relative to all others). As such, in Section \ref{sec:sparsity}, our simulations explore the efficacy of our procedure as we vary sparsity and the inclusion of hyper-popular nodes.

Further, the researchers can decide how many nodes to include in their ARD samples. Accordingly, in Section \ref{sec:network_size}, we look at how well the ARD procedures work as we vary the share of nodes that are sampled for the ARD questionnaire. We  simulate networks from what we call a rural environment (a smaller graph of 200-500 nodes) and an urban setting (thousands of nodes) and vary the share of nodes for which we have ARD. This exercise helps to provide guidance for research designs incorporating ARD.

\subsection{Theoretical insights on when ARD works}\label{sec:prob_realization}

We first provide theoretical insights on which network features will be amenable to estimation using ARD.  A core theme in this discussion is that the ARD model produces predicted probabilities of connections between pairs of individuals. To get network statistics, however, we must convert these probabilities into realizations of graphs and, therefore estimates of the expectations of graph characteristics across possible graphs. We investigate the impact of this feature of our procedure both theoretically and empirically. 

For the theoretical exercise, we assume that data arise from a formation model of the form presented in~(\ref{eqn:network-model}). In addition, we assume that the ARD procedure tightly identifies the model parameters.  \footnote{Recall our formal argument for identification in Section~\ref{sec:ID_intuition}.} These assumptions allow us to focus on when the expectation of the network statistic is sufficiently informative about any given graph realization. Under these assumptions, let $p_{ij}^{\theta_{0}}$ denote the probability that nodes $i$ and $j$ are linked under the data generating process with parameter vector $\theta_{0}$.

We separate our discussion into two cases: (1) the researcher has a single large network with $n$ nodes (or a handful of networks); (2) the researcher has many independent networks.

\subsubsection{Single Large Network}
\label{sec:singlebig}

We first consider the case where there is a single large network, and the researcher is interested in measuring a specific network statistic, $S_{i}\left({\bf g}\right)$ for node $i$ computed on graph ${\bf g}$.\footnote{This could easily be extended to functions of multiple nodes.} For the purposes of this argument, there is one actual realization of the graph, ${\bf g}^*$.  This realization is what we would have observed if we had collected information about all actual connections between members of the population, rather than collecting ARD. Importantly, the researcher collecting ARD cannot observe ${\bf g}^*$. This actual network realization does, however, come from a generative model that has parameters that can be estimated from the ARD.  The researcher  can, therefore, simulate graph realizations from the underlying data generating process under the true parameter vector, $\theta_0$, and construct an estimate for ${\rm E}\left[S_{i}\left({\bf g}\right)|\theta_0\right]$.  This expectation is over the possible graphs generated from the model with parameters $\theta_0$.  In practice, we will observe a $n\times K$ matrix of ARD, ${\bf Y}_n$, rather than $\theta_0$.  This expectation, then, is ${\rm E}\left[S_{i}\left({\bf g}\right)|{\bf Y}_n\right]$ or, if part of the graph is observed as part of the data generating process (through e.g. an egocentric strategy), ${\rm E}\left[S_{i}\left({\bf g}\right)|{\bf Y}_n, {\bf g}^{obs}\right]$, where ${\bf g}$ is missing completely at random with ${\bf g}=\left\{{\bf g}^{obs},{\bf g}^{unobs}\right\}$.  As we describe in Section~\ref{sec:ID_intuition}, the ARD data, ${\bf Y}_n$, are sufficient to identify the generative parameters, $\theta_0$. To simplify notation, we will omit the conditioning for the remainder of this section.  

To recap, if a researcher collected information about all links in the population, she could compute $S_i({\bf g}^*)$ directly.  With ARD, however, she can recover an expectation over graphs generated with a given set of parameters, ${\rm E}\left[S_{i}\left({\bf g}\right)\right]$. We are interested in cases in which knowing ${\rm E}\left[S_{i}\left({\bf g}\right)\right]$ is sufficient for learning about $S_i({\bf g}^*)$.  That is, cases where, if we can get a good estimate for ${\rm E}\left[S_{i}\left({\bf g}\right)\right]$ using ARD, we can say with confidence that we have recovered a statistic that is very similar to the statistic the researcher would have observed had she collected data on the entire graph.  More formally, for any realized graph, ${\bf g}$, does 

\[
S_{i}\left({\bf g}\right)\rightarrow_{p}{\rm E}\left[S_{i}\left({\bf g}\right)\right]?
\]

If this condition holds, then when the population of individuals, $n$, is large, the statistic of interest, $S_{i}\left({\bf g}\right)$, will be close to its expectation for any realization of the graph, including the one that is the researcher's population of interest, ${\bf g}^*$.  We have, therefore, that the statistic computed from the true graph and the statistic estimated using ARD are both close to the expectation and must, therefore, be close to each other and have small mean-squared error.  Similarly, if the statistic from a given realization does not converge to its expectation, then even after more nodes are observed, there is not increasing information, and thus the mean-squared error of the estimate should not shrink. The key feature of the result is that we do not need to know the exact structure of the graph that the researcher would have observed using a network census, ${\bf g}^*$.  Instead, we rely on the notion that the statistic will be close to its expectation for a sufficiently large graph and that this is true for any realization of the graph from a given generative process.    
  
We formalize this intuition using the straightforward proposition below. Though the proposition is uncomplicated to prove, it cements the condition required of the statistic of interest for us to reasonably expect that our ARD estimates will be similar to what a researcher would have observed by directly computing the statistic from the fully-elicited graph. Further, it serves to demystify how ARD can work to recover network statistics with such limited information on the graph. The information in ARD, by the arguments in Section~\ref{sec:ID_intuition}, is sufficient to estimate the parameters of the formation model. After proving the proposition, we provide examples of statistics where ARD should and should not perform well.  We demonstrate our result for these statistics mathematically and confirm our intuition through simulations in Section~\ref{sec:taxonomy_sims}.

\begin{prop}
\label{prop:main_taxonomy}Consider a sequence of distributions of graphs on $n$ nodes given by our afore-described model and $n\times K$
ARD ${\bf Y}$. Assume $\theta_{0}$ is known. Let $S_{i}\left({\bf g}^*\right)$
be the (unobserved) statistic of the underlying network and let $S_{i}\left({\bf g}\right)$
be the same statistic computed from graph ${\bf g}$, drawn from
the distribution with parameters $\theta_{0}$. Finally, assume that
\[
S_{i}\left({\bf g}\right)\rightarrow_{p}{\rm E}\left[S_{i}\left({\bf g}\right)\right].
\]
Then the MSE is
\[
{\rm E}\left[\left(S_{i}\left({\bf g}\right)-S_{i}\left({\bf g}^*\right)\right)^{2}\right]=o_{p}\left(1\right).
\]
\end{prop}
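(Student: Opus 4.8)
The plan is the standard two-step one. Let $\mu_{n}:=\E\left[S_{i}\left({\bf g}\right)\right]$; because the actual realization ${\bf g}^*$ is itself a draw from the model with parameter vector $\theta_{0}$, the concentration hypothesis applies to it too, so in particular $\E\left[S_{i}\left({\bf g}^*\right)\right]=\mu_{n}$. Conditioning on ${\bf g}^*$ and using $(a-b)^{2}\le 2a^{2}+2b^{2}$ with $a=S_{i}\left({\bf g}\right)-\mu_{n}$ and $b=S_{i}\left({\bf g}^*\right)-\mu_{n}$,
\[
\E\!\left[\left(S_{i}\left({\bf g}\right)-S_{i}\left({\bf g}^*\right)\right)^{2}\mid {\bf g}^*\right]\ \le\ 2\,\var\!\left(S_{i}\left({\bf g}\right)\right)+2\left(S_{i}\left({\bf g}^*\right)-\mu_{n}\right)^{2},
\]
since, given ${\bf g}^*$, only ${\bf g}$ is random and $\E[a^{2}]=\var(S_{i}({\bf g}))$. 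No independence between ${\bf g}$ and ${\bf g}^*$ is used; taking a further expectation over ${\bf g}^*$ just replaces the right-hand side by $4\,\var(S_{i}({\bf g}))$.

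The first summand vanishes along the sequence: the hypothesis $S_{i}\left({\bf g}\right)\rightarrow_{p}\mu_{n}$, together with the fact that the network statistics of interest (normalized degree, clustering coefficients, normalized centralities, graph-level averages, and the like) lie in a bounded range not depending on $n$, lets the bounded convergence theorem upgrade in-probability convergence to $L^{2}$ convergence, i.e. $\var\left(S_{i}\left({\bf g}\right)\right)\to 0$. The second summand is $o_{p}(1)$ by applying the very same hypothesis to the realized graph, $S_{i}\left({\bf g}^*\right)-\mu_{n}\rightarrow_{p}0$, so its square is $o_{p}(1)$. Hence the conditional mean-squared error is $o_{p}(1)$, which is the stated conclusion --- the subscript $p$ recording that the residual randomness is in the unobserved realization ${\bf g}^*$ (and, once $\theta_{0}$ is in turn replaced by the ARD matrix ${\bf Y}_{n}$ as in Section~\ref{sec:singlebig}, in the data as well).

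The only step that is not pure bookkeeping is the passage from convergence in probability to convergence in mean square: in-probability convergence alone does not control the variance, so one needs the boundedness remark above, or more generally uniform integrability of $\left(S_{i}\left({\bf g}_{n}\right)-\mu_{n}\right)^{2}$ --- essentially the minimal strengthening under which ``the statistic concentrates on its mean'' says anything about MSE. The converse claim made in the surrounding discussion follows from the same display run backwards: when ${\bf g}$ and ${\bf g}^*$ are independent the MSE equals exactly $2\,\var\left(S_{i}\left({\bf g}\right)\right)$, so a statistic that fails to concentrate has MSE bounded away from zero and is not recoverable from ARD no matter how many nodes are sampled.
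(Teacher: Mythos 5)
Your proof is correct and follows essentially the same route as the paper's: both decompose $S_{i}\left({\bf g}\right)-S_{i}\left({\bf g}^*\right)$ around the common mean ${\rm E}\left[S_{i}\left({\bf g}\right)\right]$ and then control the variance term and the realized-graph deviation term separately (you use $(a-b)^{2}\le 2a^{2}+2b^{2}$ where the paper expands the square exactly and lets the cross term vanish, a cosmetic difference). If anything you are more careful than the paper, which merely asserts that each term is ``readily seen'' to be $o_{p}(1)$, whereas you correctly flag that convergence in probability alone does not control second moments and supply the boundedness/uniform-integrability step needed to make the variance term actually vanish.
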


\

To clarify when this applies and when this fails, we provide several pedagogical examples. Our first example is a failure of Proposition \ref{prop:main_taxonomy}. 

\begin{cor}\label{cor:MSE_link}
\label{cor:link}Under the aforementioned assumptions, given an (unobserved) graph of interest, ${\bf g}^*$, and non-degenerate linking probabilities $0<p_{ij}^{\theta_{0}}<1$, then the MSE for $S_{i}\left({\bf g}\right) = g_{ij}$, a draw from the distribution of any single link $g_{ij}$ is given by
\[
{\rm E}\left[\left(g_{ij}-g^*_{ij}\right)^{2}\right]=p_{ij}^{\theta_{0}}\left(1-2g^*_{ij}\right)+g^*_{ij}.
\]
\end{cor}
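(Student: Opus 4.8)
The plan is to compute the mean-squared error directly, using two elementary facts: a single link indicator drawn from the generative model is a Bernoulli random variable, and the realized graph $\mathbf{g}^*$ (the network the researcher would have observed from a census) is a \emph{fixed} object, not random. First I would fix $\theta_0$ (equivalently, condition on ${\bf Y}_n$ as in Section~\ref{sec:singlebig}) so that, under the formation model \eqref{eqn:network-model}, the link $g_{ij}$ of a graph $\mathbf{g}$ drawn from the process is a Bernoulli$(p_{ij}^{\theta_0})$ trial. Since $g_{ij}\in\{0,1\}$ we have $g_{ij}^2=g_{ij}$, hence ${\rm E}[g_{ij}]={\rm E}[g_{ij}^2]=p_{ij}^{\theta_0}$; likewise $g^*_{ij}\in\{0,1\}$ gives $(g^*_{ij})^2=g^*_{ij}$.

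Next I would simply expand the square and take expectations over the draw of $\mathbf{g}$, treating $g^*_{ij}$ as a constant:
\[
{\rm E}\left[\left(g_{ij}-g^*_{ij}\right)^{2}\right]={\rm E}[g_{ij}^2]-2g^*_{ij}{\rm E}[g_{ij}]+(g^*_{ij})^2 = p_{ij}^{\theta_0}-2g^*_{ij}p_{ij}^{\theta_0}+g^*_{ij}=p_{ij}^{\theta_0}\left(1-2g^*_{ij}\right)+g^*_{ij},
\]
which is exactly the claimed identity. To close the loop with Proposition~\ref{prop:main_taxonomy}, I would then note that this quantity does not shrink: if $g^*_{ij}=0$ it equals $p_{ij}^{\theta_0}$, and if $g^*_{ij}=1$ it equals $1-p_{ij}^{\theta_0}$, both of which are strictly positive and bounded away from zero uniformly in $n$ by the non-degeneracy assumption $0<p_{ij}^{\theta_0}<1$. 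This exhibits $S_i(\mathbf{g})=g_{ij}$ as a statistic for which the hypothesis of Proposition~\ref{prop:main_taxonomy} fails — a single Bernoulli link does not concentrate around its mean regardless of how large the network is — so ARD cannot be expected to recover individual links, only aggregates that do concentrate.

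There is no real obstacle here; the computation is routine. The one point that requires care is bookkeeping about what is random: the expectation is over graphs $\mathbf{g}$ generated from the model (with $\theta_0$, or ${\bf Y}_n$, held fixed), while $\mathbf{g}^*$ — and in particular $g^*_{ij}$ — is a fixed realized value to be conditioned on, not averaged over. Conflating the two would collapse the expression to zero and obscure precisely the phenomenon the corollary is meant to illustrate.
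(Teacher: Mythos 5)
Your computation is correct and is essentially identical to the paper's own proof: both expand the square, use $g_{ij}^2=g_{ij}$ with ${\rm E}[g_{ij}]=p_{ij}^{\theta_0}$ and $(g^*_{ij})^2=g^*_{ij}$, and treat $g^*_{ij}$ as a fixed (conditioned-on) realization. Your added remarks on non-vanishing of the MSE and on what is random versus fixed match the paper's surrounding discussion; there is no gap.
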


Note that irrespective of $n$, this cannot tend to zero. When a link exists, the mean-squared error is $1-p_{ij}^{\theta_{0}}$ and when
it does not, the MSE is $p_{ij}^{\theta_{0}}$: these are just the complements of the Bernoulli probabilities.

\
\begin{cor}\label{cor:MSE_density_diffcent}
Under the aforementioned assumptions, given an (unobserved) graph of interest, ${\bf g}^*$, the MSE tends
to zero with probability approaching 1 for the following statistics:
\begin{enumerate}
\item Density (normalized degree):
\[
{\rm E}\left[\left(\frac{d_{i}\left({\bf g}\right)}{n}-\frac{d_{i}\left({\bf g}^*\right)}{n}\right)^{2}\right]=o_{p}\left(1\right).
\]
\item Diffusion centrality (nests eigenvector centrality and Katz-Bonacich
centrality) for parameter sequence $q_{n}=\frac{C}{n}$ and any $T$,
\[
{\rm E}\left[\left(DC_{i}\left({\bf g};q_{n},T\right)-DC_{i}\left({\bf g}^*;q_{n},T\right)\right)^{2}\right]=o_{p}\left(1\right).
\]
\end{enumerate}
\end{cor}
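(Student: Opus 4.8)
The plan is to obtain both statements as immediate consequences of Proposition~\ref{prop:main_taxonomy}: for each statistic $S_i$ it suffices to verify the hypothesis of that proposition, namely that $S_i({\bf g}) - \E[S_i({\bf g})] \rightarrow_p 0$ as $n\to\infty$, where expectation and probability are over ${\bf g}$ drawn from the formation model at the (known) parameter $\theta_0$; the bound on the mean-squared error then follows verbatim. The only tool required is a second-moment (Chebyshev) argument that exploits the following: conditional on $\theta_0$, the edge indicators $\{g_{ij}\}_{i<j}$ are mutually independent Bernoulli variables with means $p_{ij}^{\theta_0}\in(0,1)$, so $\var(g_{ij}) = p_{ij}^{\theta_0}(1-p_{ij}^{\theta_0})\le 1/4$.

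Part (1) is immediate: writing $d_i({\bf g})/n = n^{-1}\sum_{j\ne i} g_{ij}$ as an average of independent Bernoulli variables, $\var\!\big(d_i({\bf g})/n\big) = n^{-2}\sum_{j\ne i} p_{ij}^{\theta_0}(1-p_{ij}^{\theta_0}) \le 1/(4n)\to 0$, so Chebyshev gives $d_i({\bf g})/n - \E[d_i({\bf g})/n]\rightarrow_p 0$ and Proposition~\ref{prop:main_taxonomy} applies. (The normalization by $n$ is essential; the unnormalized degree need not concentrate.)

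Part (2) is the substantive case. Recall $DC_i({\bf g};q,T) = \big[\sum_{t=1}^T (q{\bf g})^t {\bf 1}\big]_i = \sum_{t=1}^T q^t W_i^{(t)}({\bf g})$, where $W_i^{(t)}({\bf g}) = \sum_{v_0=i,\,v_1,\dots,v_t}\prod_{s=1}^t g_{v_{s-1}v_s}$ counts walks of length $t$ out of $i$. Since $T$ is fixed, it is enough to show $q_n^t W_i^{(t)}({\bf g}) - \E\big[q_n^t W_i^{(t)}({\bf g})\big]\rightarrow_p 0$ for each $t\le T$ with $q_n = C/n$; the claim for $DC_i$ then follows by summing the $T$ terms and invoking Proposition~\ref{prop:main_taxonomy}. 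There are at most $n^t$ walks and each product of edge indicators has expectation at most one, so $0\le q_n^t W_i^{(t)}({\bf g})\le (C/n)^t n^t = C^t$ almost surely, and in particular the mean is bounded. For the variance, expand $\big(W_i^{(t)}\big)^2$ over ordered pairs of walks $(w,w')$ from $i$: pairs with disjoint edge sets satisfy $\E\big[\prod_{e\in w} g_e\prod_{e'\in w'} g_{e'}\big] = \E\big[\prod_{e\in w} g_e\big]\E\big[\prod_{e'\in w'} g_{e'}\big]$ and are exactly cancelled in $\var(W_i^{(t)}) = \E[(W_i^{(t)})^2] - (\E W_i^{(t)})^2$, so $\var(W_i^{(t)})\le \#\{(w,w'): E(w)\cap E(w')\ne\emptyset\}$. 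A routine walk-counting estimate bounds this count by $O(n^{2t-1})$ with a constant depending only on $t$: the dominant contribution comes from pairs sharing the edge incident to $i$ (and possibly a longer common initial segment), each shared edge pinning a walk vertex and removing a factor of $n$. Hence $\var\big(q_n^t W_i^{(t)}\big) = (C/n)^{2t}O(n^{2t-1}) = O(n^{-1})\to 0$, and Chebyshev yields the required convergence. The nesting of eigenvector and Katz--Bonacich centrality needs no further argument, as they are limiting or rescaled instances of the same functional.

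The step I expect to be the main obstacle is the variance estimate for $W_i^{(t)}$. Because $W_i^{(t)}$ is a degree-$t$ polynomial in $\binom{n}{2}$ independent coordinates, a crude bounded-differences bound gives only $\var(W_i^{(t)}) = O(n^{2t})$, which does not beat the $(C/n)^{2t}$ prefactor; one genuinely needs the sharper $O(n^{2t-1})$, obtained by carefully tracking how many walk vertices become pinned when two walks share edges. This is also where the hypothesis $q_n = C/n$ earns its keep: for a constant $q$ the quantity $q^t W_i^{(t)}$ would fail to concentrate (and could diverge), so the corollary is genuinely tied to that parameter regime.
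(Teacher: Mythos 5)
Your proof is correct, and it follows the paper's skeleton: verify the concentration hypothesis $S_i({\bf g})\rightarrow_p \E[S_i({\bf g})]$ for each statistic and then invoke Proposition~\ref{prop:main_taxonomy}. For part (1) the two arguments are essentially interchangeable (the paper checks the Kolmogorov condition for an a.s.\ law of large numbers over the independent edge indicators; your Chebyshev bound $\var(d_i/n)\le 1/(4n)$ delivers the in-probability statement, which is all the proposition needs). For part (2) your execution is genuinely different and, in fact, more careful. The paper bounds the marginal variance of each walk summand $g_{ij_1}\cdots g_{j_{t-1}j}$ by $1$ and then asserts that ``the Kolmogorov condition again applies,'' but that condition governs sums of \emph{independent} terms, and distinct walks out of $i$ can share edges, so the summands are dependent; the marginal variances alone do not control $\var\bigl(n^{-t}W_i^{(t)}\bigr)$. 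Your argument supplies exactly the missing covariance accounting: disjoint-edge pairs of walks contribute nothing by independence of the edge coordinates, while the $O(n^{2t-1})$ pairs sharing at least one edge each contribute at most $1$, giving $\var\bigl(q_n^t W_i^{(t)}\bigr)=O(n^{-1})$. You also correctly flag that a crude bound of order $n^{2t}$ on the number of interacting pairs would not beat the $(C/n)^{2t}$ prefactor, which is precisely why the sharper walk-counting step is needed and why the regime $q_n=C/n$ matters. In short: same route, but your version of the diffusion-centrality step is the rigorous form of the one the paper sketches.
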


See Appendix \ref{sec:taxonomy_proofs} for proofs of the proposition and corollaries. 

A few remarks are worth mentioning. First, diffusion centrality is a more general form which nests eigenvector centrality when $q_{n}\geq\frac{1}{\lambda_{1}^{n}}$,
and because the maximal eigenvalue is on the order of $n$, this meets our condition. It also nests Katz-Bonacich centrality. In each of these, $T\rightarrow\infty$. It also captures a number of other features of finite-sample diffusion processes that have been used in applied
work. Each of these notions relate to the eigenvectors of the network \textendash{} objects that are ex-ante not obviously captured by the ARD procedure but ex-post work because the models are such that in large samples the statistics converge to their limits.

These results give us two practical extreme benchmarks. Our procedure should not perform well at all for estimating a realization of any given link in the network. In contrast, it should perform quite well for statistics such as degree or eigenvector centrality. Other statistics may fall somewhere in the middle of this spectrum.  For example, a notion of centrality such as betweenness, which relies on the specifics of the exact realized paths in the network, is unlikely to work particularly well because even for large $n$, the placement of specific nodes may radically change its value. Section \ref{sec:taxonomy_sims} explores these predictions empirically using simulations.

\subsubsection{Many Independent Networks}

Now consider the setting where the researcher has $R$ independent networks each of size $n_r$.  We'll take $n_r = n$ for simplicity, though the results presented here do not require this. We also have an ARD sample ${\bf Y}_{n,r}$ for every network $r=1,...,R$. Every network is generated from a network formation process with true parameter $\theta_{0,r}$.  In this case of many networks, we consider how well the ARD procedure performs when the researcher wants to learn about network properties, aggregating across the $R$ graphs.  This is the case we present empirically in Section~\ref{sec:application}.

Let $S^*_{r}:=S\left({\bf g}^*_{r}\right)$ be a network statistic from the $R$ unobserved graphs generating the ARD.  For any given graph from the data generating process, define $S_{r}:=S\left({\bf g}_{r}\right)$. For notational simplicity, we consider network-level statistics, but the argument can easily be extended to node, pair, or subset-based statistics. 

Assume the goal of the researcher is to estimate some model
\[
y_{r}=\alpha+\beta S^*_{r}+\epsilon_{r}
\]
and the economic parameter of interest is $\beta$. As before, $S^*_{r}$
is unobserved because ${\bf g}^*_{r}$ is unobserved and the researcher
must make do with ARD, ${\bf Y}_{r}$. The researcher instead estimates the expectation of the statistic given using ARD, $\bar{S}_{r}:= {\rm E}(S_{r})$.  The regression then becomes: 
\[
y_{r}=\alpha+\beta\bar{S}_{r}+u_{r}.
\]
The arguments in \cite{chandrasekharl2016} show that $\beta$ is still consistently estimated when using $\bar{S}_{r}$ as a regressor rather than $S_{r}$. We sketch out the argument here for completeness.  First, It is easy to expand the error term,
\begin{align*}
y_{r} & =\alpha+\beta\bar{S}_{r}+u_{r} 
  =\alpha+\beta\bar{S}_{r}+\left\{ \epsilon_{r}+\beta\left(S_{r}^*-\bar{S}_{r}\right)\right\} .
\end{align*}
By iterated expectations we can see that
\begin{align*}
{\rm E}\left[\bar{S}_{r}\left(S^*_{r}-\bar{S}_{r}\right)\right] & ={\rm E}\left[{\rm E}\left[\bar{S}_{r}\left(S^*_{r}-\bar{S}_{r}\right)\vert{\bf Y}_{r}\right]\right] 
  ={\rm E}\left[\bar{S}_{r}\left({\rm E}\left[S^*_{r}\vert{\bf Y}_{r}\right]-\bar{S}_{r}\right)\right] 
  ={\rm E}\left[\bar{S}_{r}\left(\bar{S}_{r}-\bar{S}_{r}\right)\right]=0.
\end{align*}
This means that under standard regularity conditions, we can consistently estimate $\beta$. The intuition is that the deviation of the conditional expectation $\bar{S}_{r}$ from $S_{r}$ is by definition orthogonal to the conditional expectation and independent across $r$. So one can think of the conditional expectation as an instrument of the true $S_{r}$ where the first-stage regression has a coefficient of 1.

Practically speaking, this means that even if we were interested in a regression of
\[
y_{12,r}=\alpha+\beta g_{12,r}+\epsilon_{r},
\]
where whether nodes 1 and 2 are linked affects some outcome variable of interest, and we are interested in this across all $R$ networks,
we can use $p_{12}^{\theta_{0}}:={\rm E}\left[g_{12,r}\vert{\bf Y}_{r}\right]$
instead in the regression to consistently estimate $\beta$. Note that in contrast to the single network case, where we were interested in recovering $g_{12}$ itself, and even with large $n$ the MSE would not tend to zero, here simply having the conditional expectation is enough to be able to estimate the economic slope of interest, $\beta$. Therefore, with many graphs, the ARD procedure should work well regardless of the properties of the given network statistic.

\subsubsection{MSE Simulation Results}\label{sec:taxonomy_sims}

We next explore the results for a single large graph through a simulation exercise.  We describe the simulation set-up we use here in full detail in the next section, but include the MSE results here as a demonstration that the intuition from the theoretical results in the previous section hold empirically.  For this simulation, we use graphs with 250 nodes, which is a similar size to the data we describe in Section~\ref{sec:data}, simulated from the data generating process in Equation~\ref{eqn:network-model}.  
In Figure \ref{fig:MSE}, we plot the mean squared errors of our estimation procedure across a range of network statistics  which are commonly used in applied economics. In order to make the MSEs comparable across statistics, we scale by $\frac{1}{\E[S_i]^2}$. Panel A focuses on node level statistics while Panel B focuses on graph-level statistics.

The node level statistics are as follows: (1) degree (the number of links); (2) eigenvector centrality (the $i$th entry of the eigenvector corresponding to the maximal eigenvalue of the adjacency matrix for node $i$); (3)  betweenness centrality (the share of shortest paths between all pairs $j$ and $k$ that pass through $i$); (4) closeness centrality (the average inverse distance from $i$ over all other nodes); (5)  clustering (the share of a node's links that are themselves linked);  (6) support (as defined in \cite{jacksonrt2012} -- whether linked nodes $ij$ have some $k$ as a link in common); (7) whether link $ij$ exists; (8)  closeness; (9) average path length;  and (10) the average distance from a randomly chosen ``seed'' (as in an information diffusion experiment).

The graph level statistics are as follows: (1) diameter; (2) average path length; (3) average proximity (average of inverse of shortest paths); (4) share of nodes in the giant component; (5) number of components; (6) maximal eigenvalue; (7) clustering; and (8) the share of links across the two groups relative to within the two groups where the cut is taken from the sign of the Fiedler eigenvector (this reflects latent  homophily in the graph).

\begin{figure}[htp]
\centering
\subfloat[Node-level]{
\includegraphics[width=.485\textwidth]{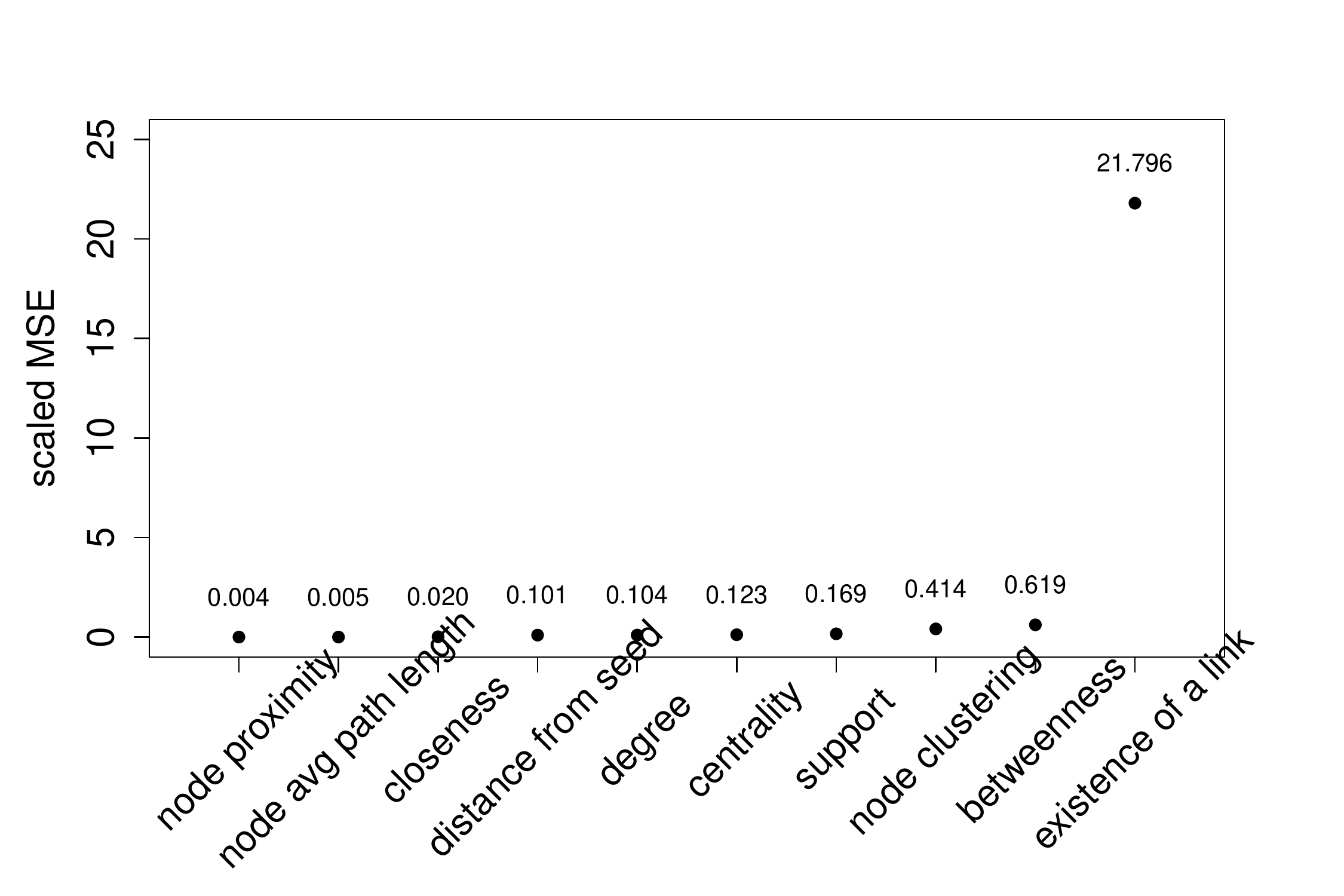}\qquad
}
\subfloat[Graph-level]{
\includegraphics[width=.485\textwidth]{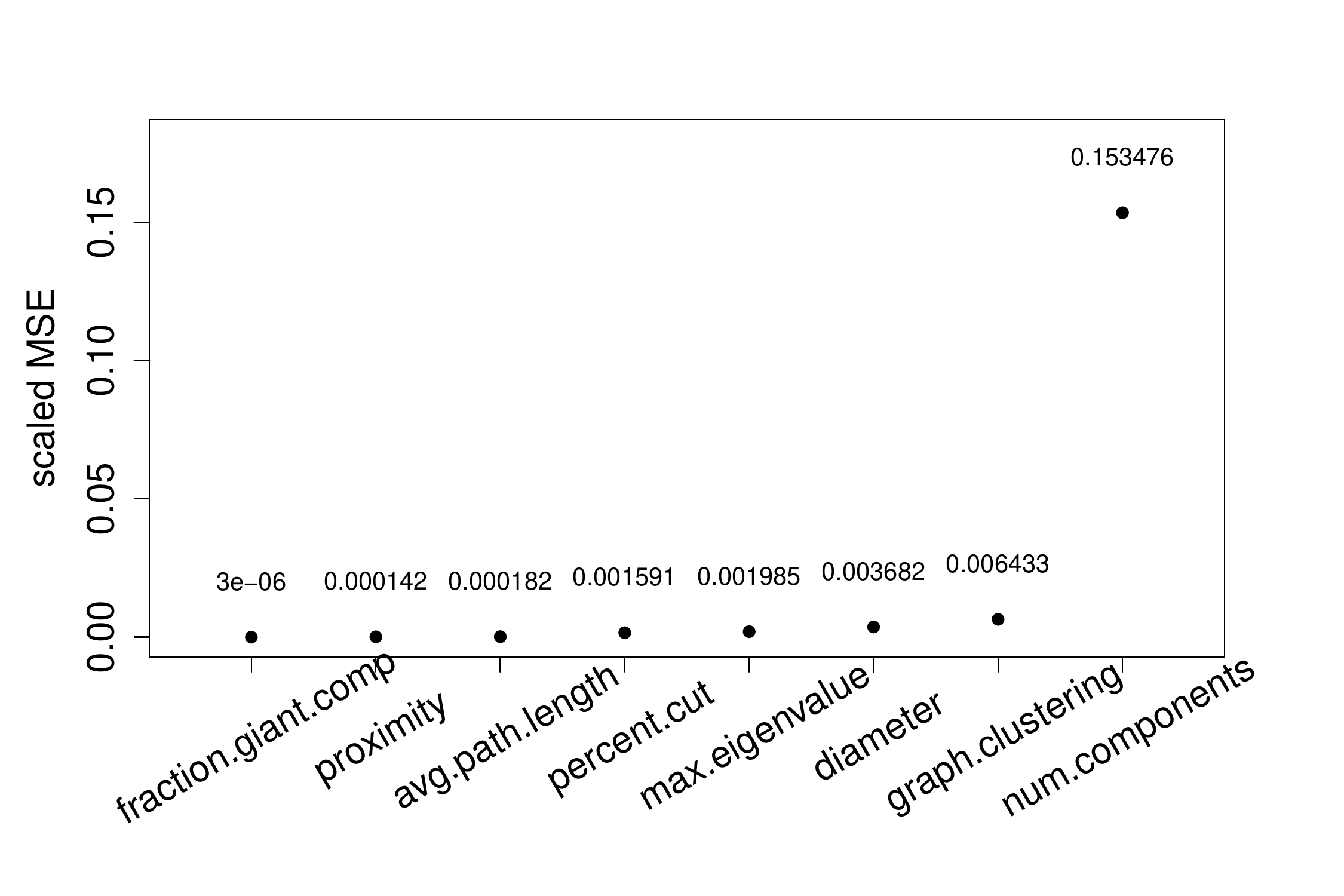}
}

\caption{Scaled MSE of node-level and graph-level network features. Each point in the figure represents the MSE across 250 simulations using graphs of size 250, a size comparable to the data we examine in Section~\ref{sec:data}.  These results corroborate the theoretical intuition devleped in Section~\ref{sec:singlebig}.  Note for example the small MSE for density and centrality measures, with worse performance for inferring a single edge, as predicted by our Corollaries.
}
\label{fig:MSE}
\end{figure}

Panel A of Figure \ref{fig:MSE} shows that the scaled MSEs in our simulations are quite small for most network statistics, including degree and (eigenvector) centrality, as predicted.  Strikingly, the scaled MSE for the estimates of the existence of a link is extremely large and matches the computation in Corollary \ref{cor:MSE_link}.  Moreover, as argued above, betweenness also performs worse than the other statistics.  

Panel B considers graph level statistics.  The scaled MSEs tend to be small for all but one network statistic -- the number of components in the graph. The number of components depends crucially on the existence of a small number of specific link realizations, calling upon the same intuition as the node-level existence of a link.

\subsection{Core Simulations}\label{sec:core_results}

We turn to a set of rich simulations which mimic real-world network data, but allow us to evaluate the efficacy of our procedure under correct specification.
\subsubsection{Simulation Model}

For each of our empirical investigations, we provide simulation evidence. We begin with a graph generated from the network formation model specified in Equation \ref{eqn:network-model} and simulate the ARD on that graph.

The simulation procedure is as follows:
\begin{enumerate}
\item We randomly generate $n$ locations on $\mathcal{S}^{p+1}$ uniformly at random to get $(z_i)_{i=1}^n$.
\item We randomly generate $\nu_i$ i.i.d. from a Normal distribution with parameters $\mu,\sigma^2$.
\item We generate a graph
  \[
  \Prob(g_{ij}=1|z_i,z_j,\nu_i,\nu_j)
  \]
\item We then pick $K$ features which we maintain to be binary.
  \begin{enumerate}
    \item Features are located with centers distributed uniformly at random over $\mathcal{S}^{p+1}$ at sites $\upsilon_k$.
    \item Each feature $k$ is distributed with concentration parameter $\eta_k$.
    \item A given site $i$ at location $z_i$ receives feature $k$ i.i.d. with probability $\Pr(i \in G_k) = \1\{u_{ik} < f(z_i \vert \upsilon_k,\eta_k)\}$ where $u_{ik}$ is a uniform random variable and $f(z_i \vert \upsilon_k,\eta_K)$ is the von Mises-Fisher density value at location $z_i$.
    \end{enumerate}
\item Constructed ARD responses are built using features of one's neighbors and the underlying graph.
\end{enumerate}

Unless otherwise stated, we set $n=250$, $\zeta=0.3$, $\mu=-1.27$, $\sigma=0.5$, and $K=12$, which are chosen to generate graphs that resemble our empirical network data in terms of average degree 20, clustering 0.13, proximity (defined as the  mean of the inverse of path lengths) 0.50, average path length 2.15, and the maximal eigenvalue 26.51 of the network. 

We then run our proposed procedure to estimate a range of network characteristics at both the individual- and node-level.

\subsubsection{Simulation results}

\begin{figure}[htp]
\centering
\subfloat[Degree]{
\includegraphics[width=.3\textwidth]{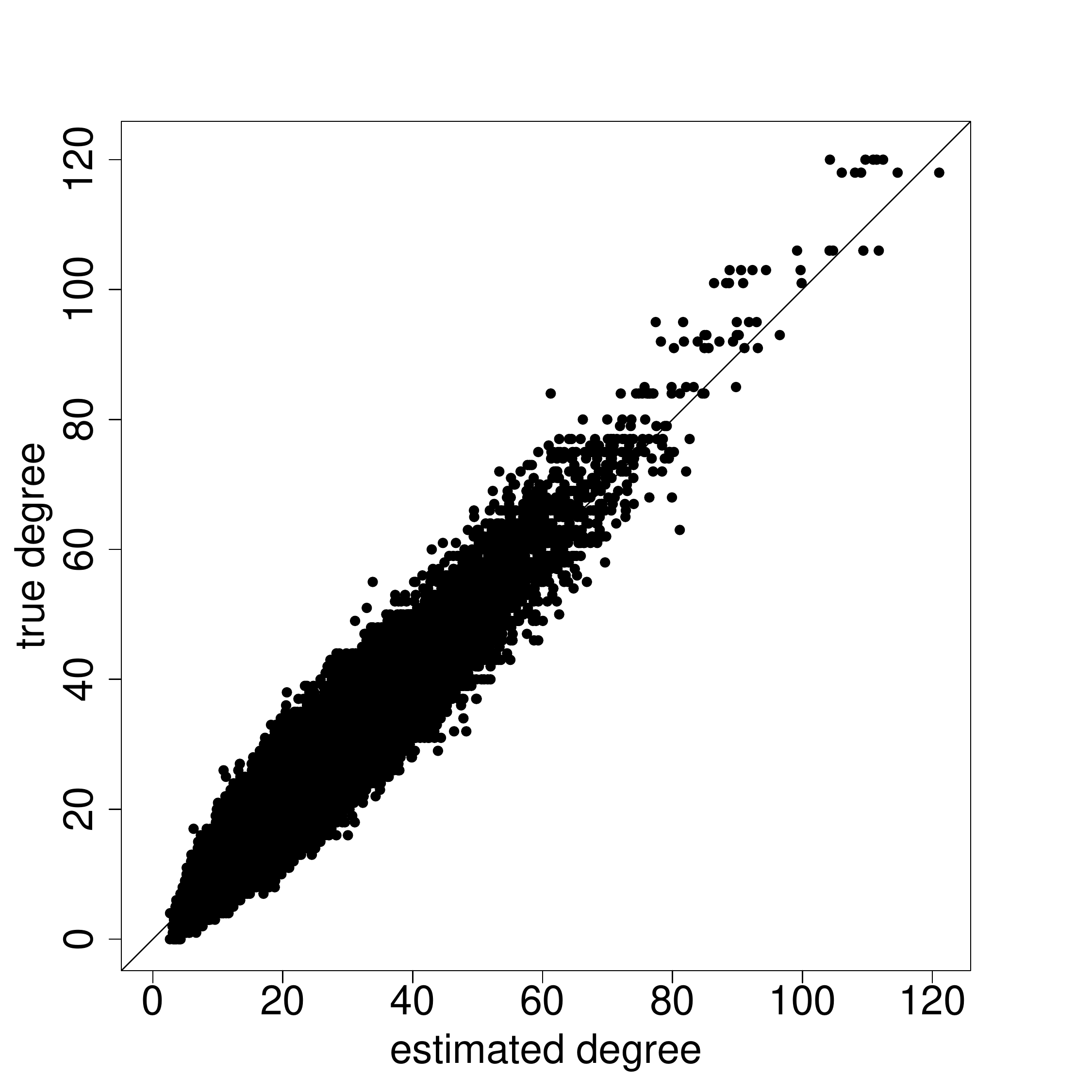}}
\subfloat[Eigenvector Centrality]{
\includegraphics[width=.3\textwidth]{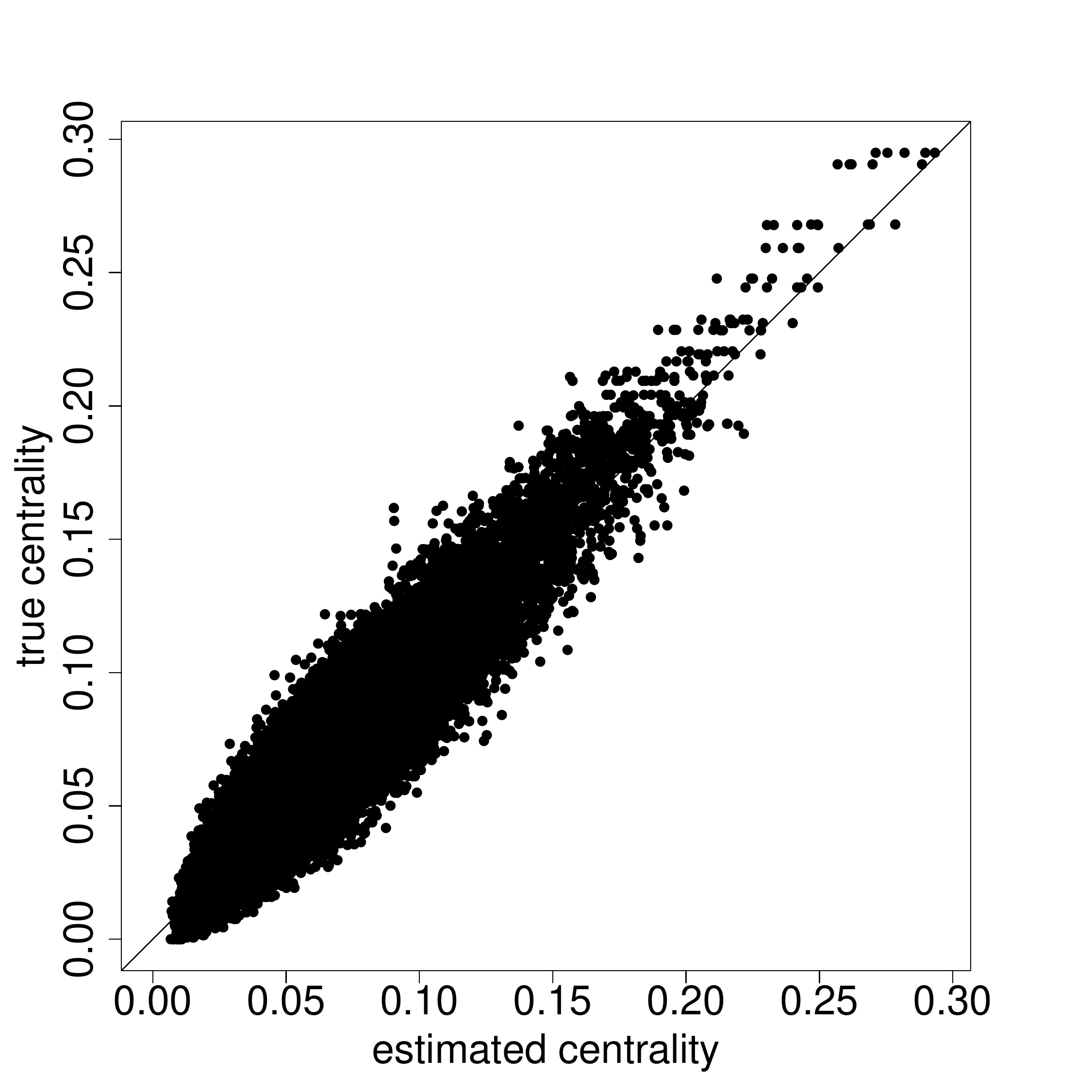}}
\subfloat[Maximum Eigenvalue]{
\includegraphics[width=.3\textwidth]{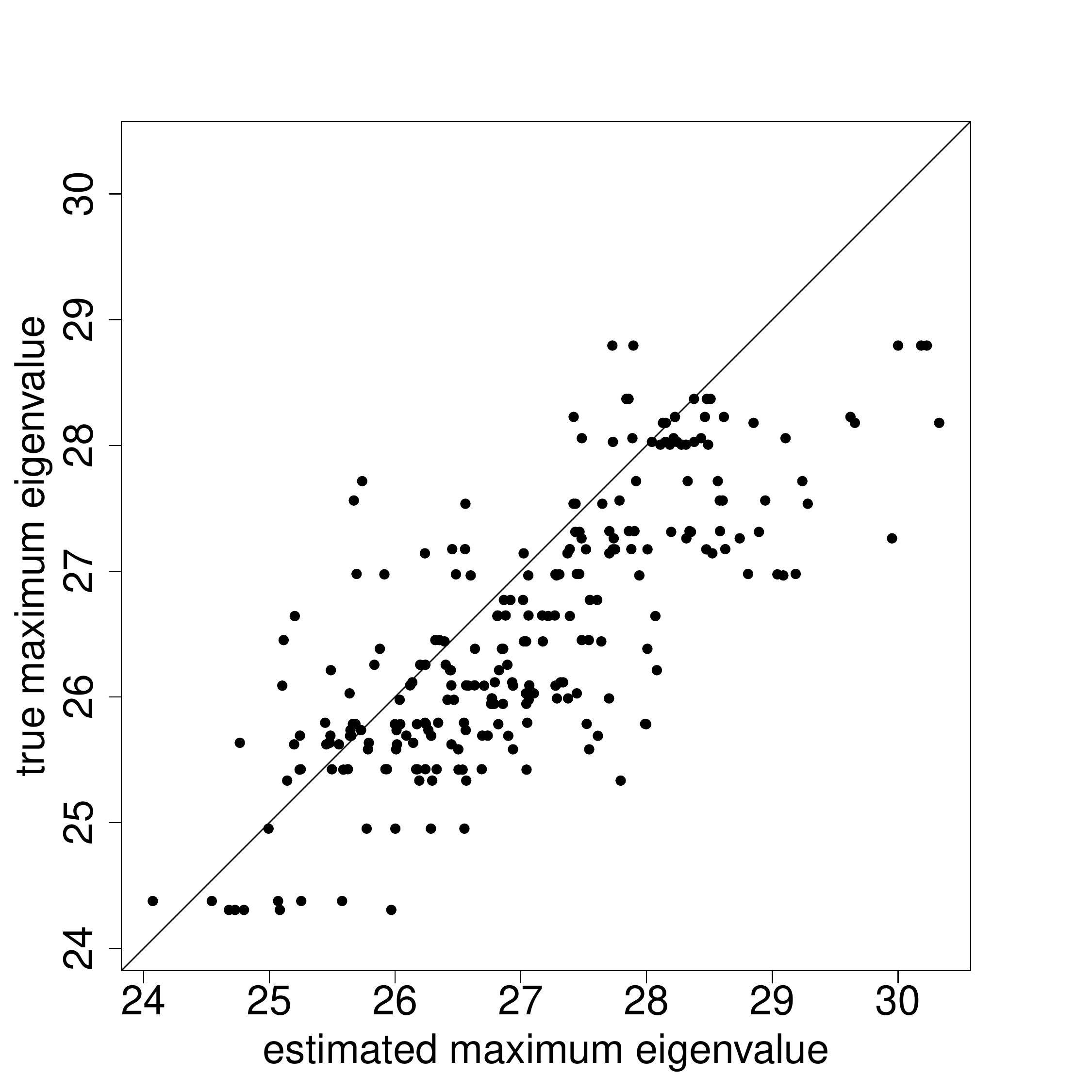}}

\smallskip

\subfloat[Proximity]{
\includegraphics[width=.3\textwidth]{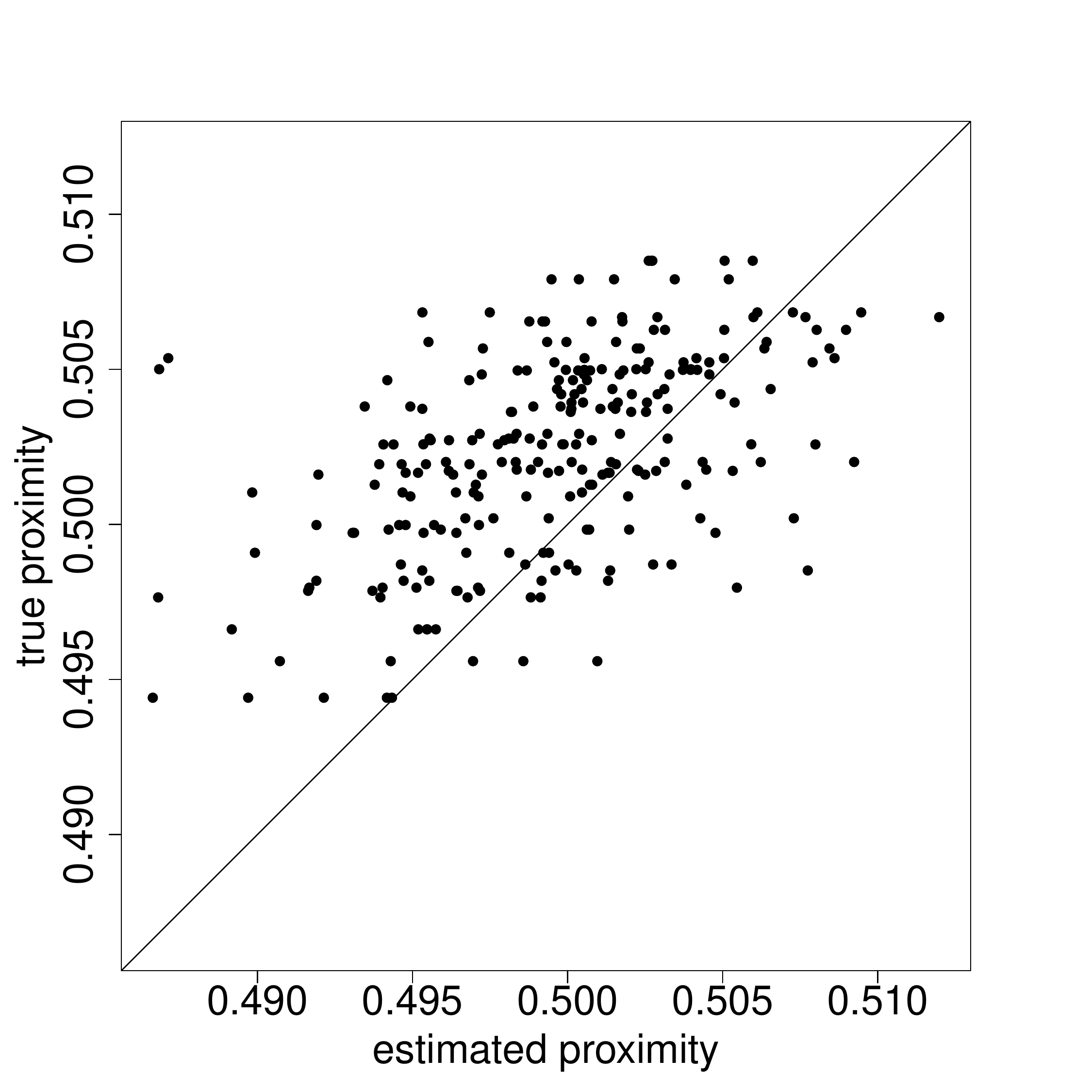}}
\subfloat[Network level Clustering]{
\includegraphics[width=.3\textwidth]{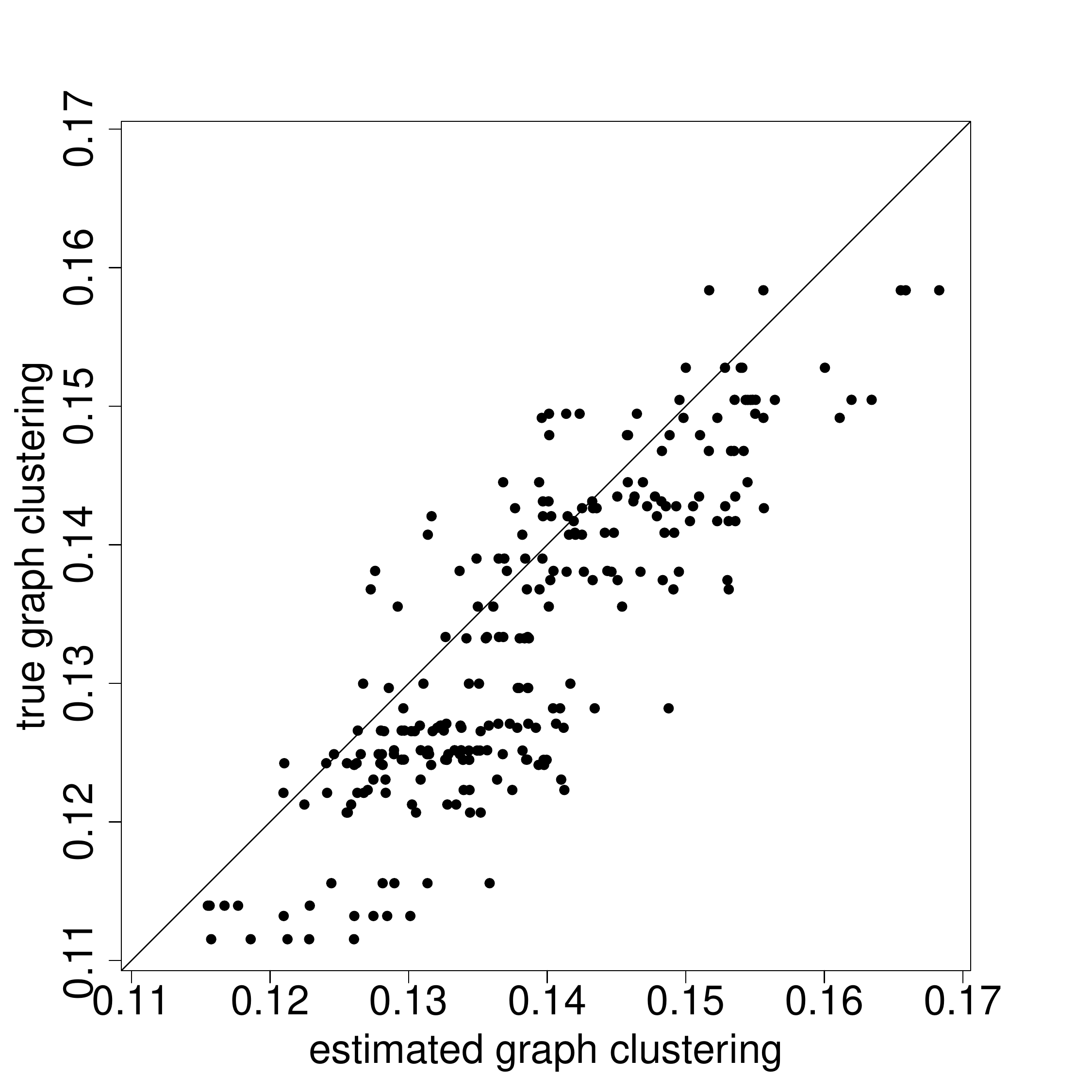}}
\subfloat[Eigenvector Cut]{
\includegraphics[width=.3\textwidth]{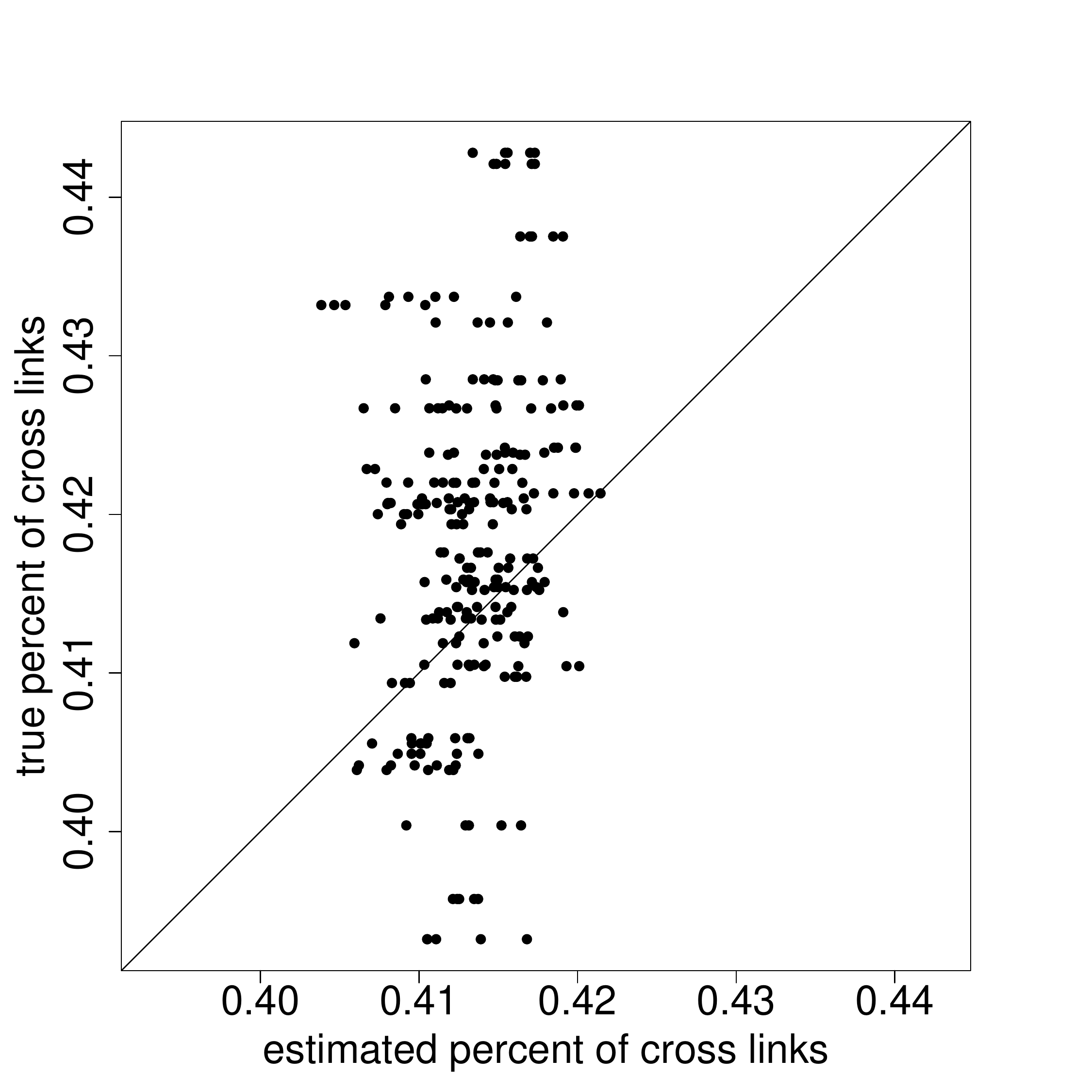}}

\caption{Node level and network level measures estimation for 250 simulations at core simulation set-up. These plots show scatterplots of estimated measure on the x-axis and true measure on the y-axis. There is a strong correlation between estimated statistic and statistic obtained from the true underlying graph, with the exception of eigenvector cut. The weak correlation in eigenvector cut comes from the fact that we sample individuals' and ARD subgroups' latent positions uniformly, as there is no strong separation of two groups in the true simulated graph.}
\label{fig:core_simulation}
\end{figure}

Figure \ref{fig:core_simulation} presents the results of our procedure using synthetic ARD data from graphs generated at the parameters specified above. We see that the procedure works well. Throughout the paper, we look at the degree, eigenvector centrality, and clustering at the node level, as well as the maximal eigenvalue, average path length, clustering, and eigenvector cut at the graph-level.\footnote{The eigenvector cut metric is defined by the eigenvector with the second smallest eigenvalue of the Laplacian matrix. Using the median of the eigenvector to partition the graph gives us two balanced groups of equal size. We plot the fraction of links that cross group boundaries.} The figures also display an additional set of network characteristics including betweenness centrality, closeness centrality, support and distance from ``seed'' at the node level as well as diameter, the fraction of links in the largest connected component and the number of components at the graph level.\footnote{We define support at the individual level to be the fraction of a node's links that are linked to at least one other link of that node. For distance from ``seed'', we arbitrarily choose one node in the graph and measure the minimum path length to that ``seed'' node for all other nodes.}

The figure shows strong correlations between the true value in the simulations and that predicted by the ARD sample for almost all of the statistics examined here. We do note that the correlation is weak in the case of eigenvector cut.  The eigenvector cut takes a narrow range of values in the underlying graph, however, because we simulate the locations of both individuals and groups uniformly across the surface of the sphere.  That is, there is no cut structure in the underlying formation model. Appendix \ref{sec:additional_plots_core} presents plots of additional network measures. There, we note that the estimates are quite close to the true values for several integer-valued statistics including diameter, fraction in the giant component, and the number of components. For these three measures, there is very little variation in the true measures.

\subsection{Varying sparsity}\label{sec:sparsity}

\subsubsection{Varying $\E[\nu_i]$}

We next explore the performance of our procedure when we vary sparsity -- the number of links relative to graph size.  
To do this, we hold all the parameters fixed, including $\zeta = 0.3$ at its original value, but vary the distribution of the node effects. In particular we change the mean of the effect $\mu$, with $\mu \in \{-1.96, -1.62, -1.27, -0.92, -0.58\}$. This varies the expected degree from 5 to 80, holding fixed $n=250$.

\begin{figure}[!h]
\centering
\includegraphics[width=.3\textwidth]{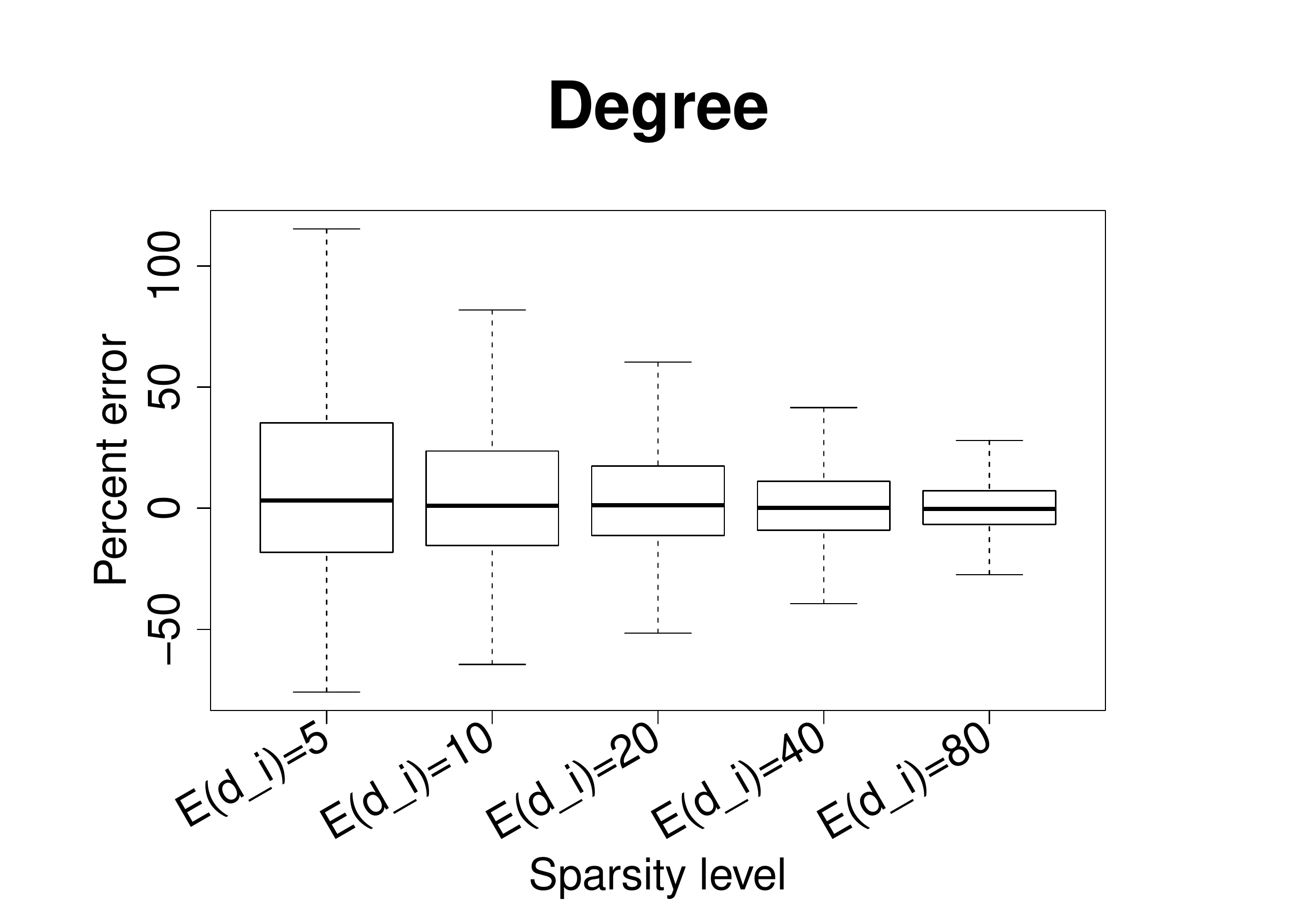}\quad
\includegraphics[width=.3\textwidth]{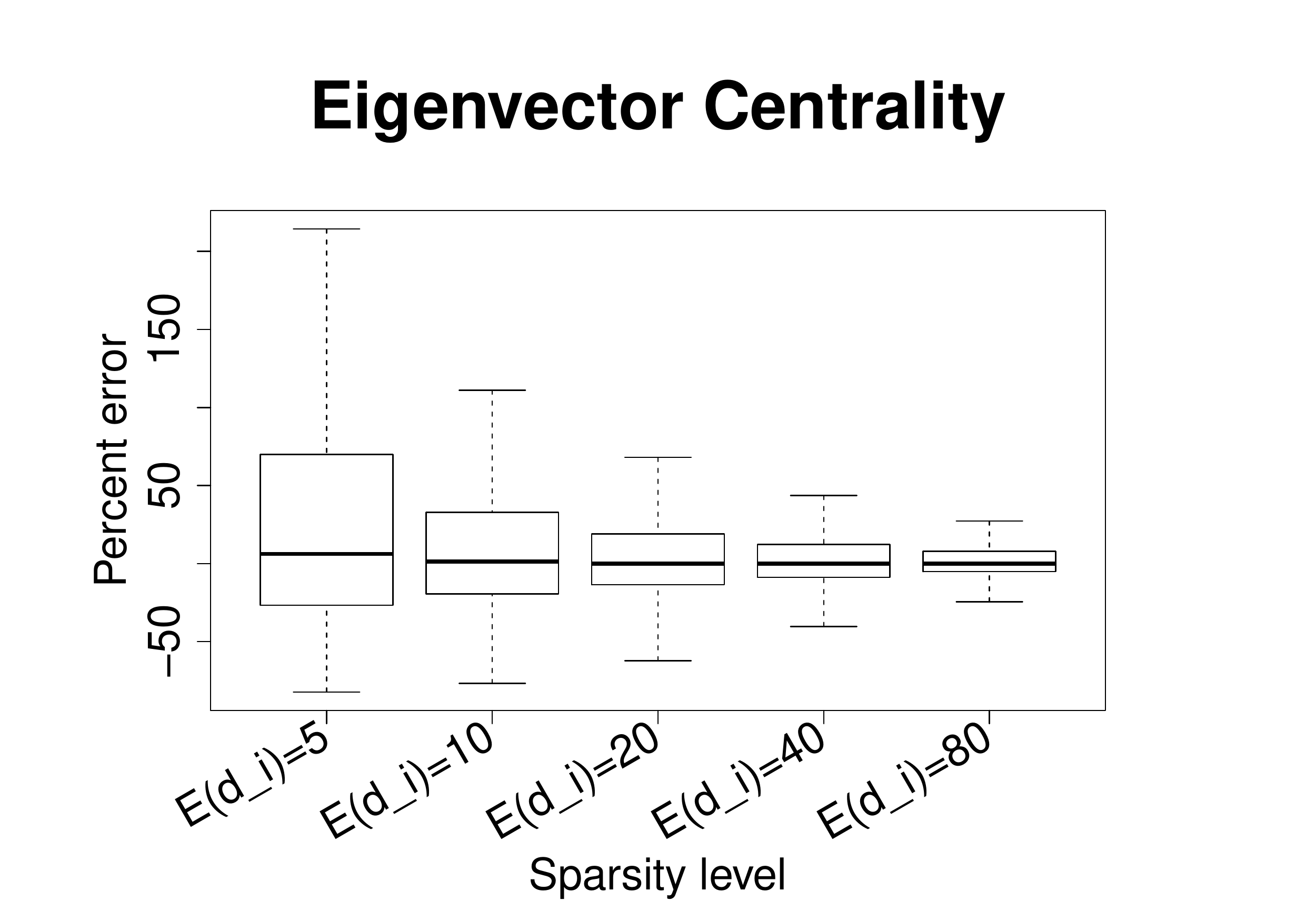}
\includegraphics[width=.3\textwidth]{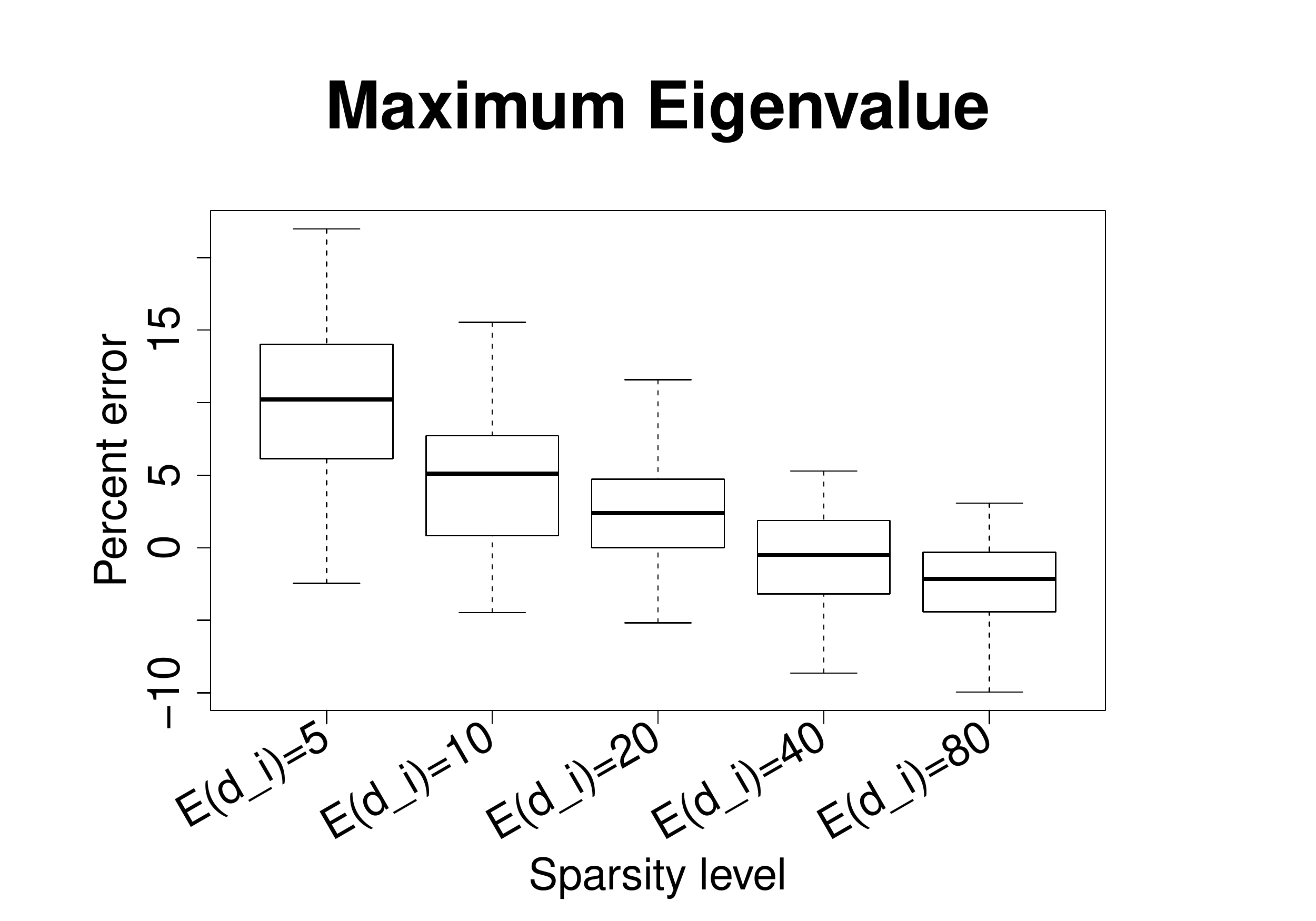}\quad

\medskip

\includegraphics[width=.3\textwidth]{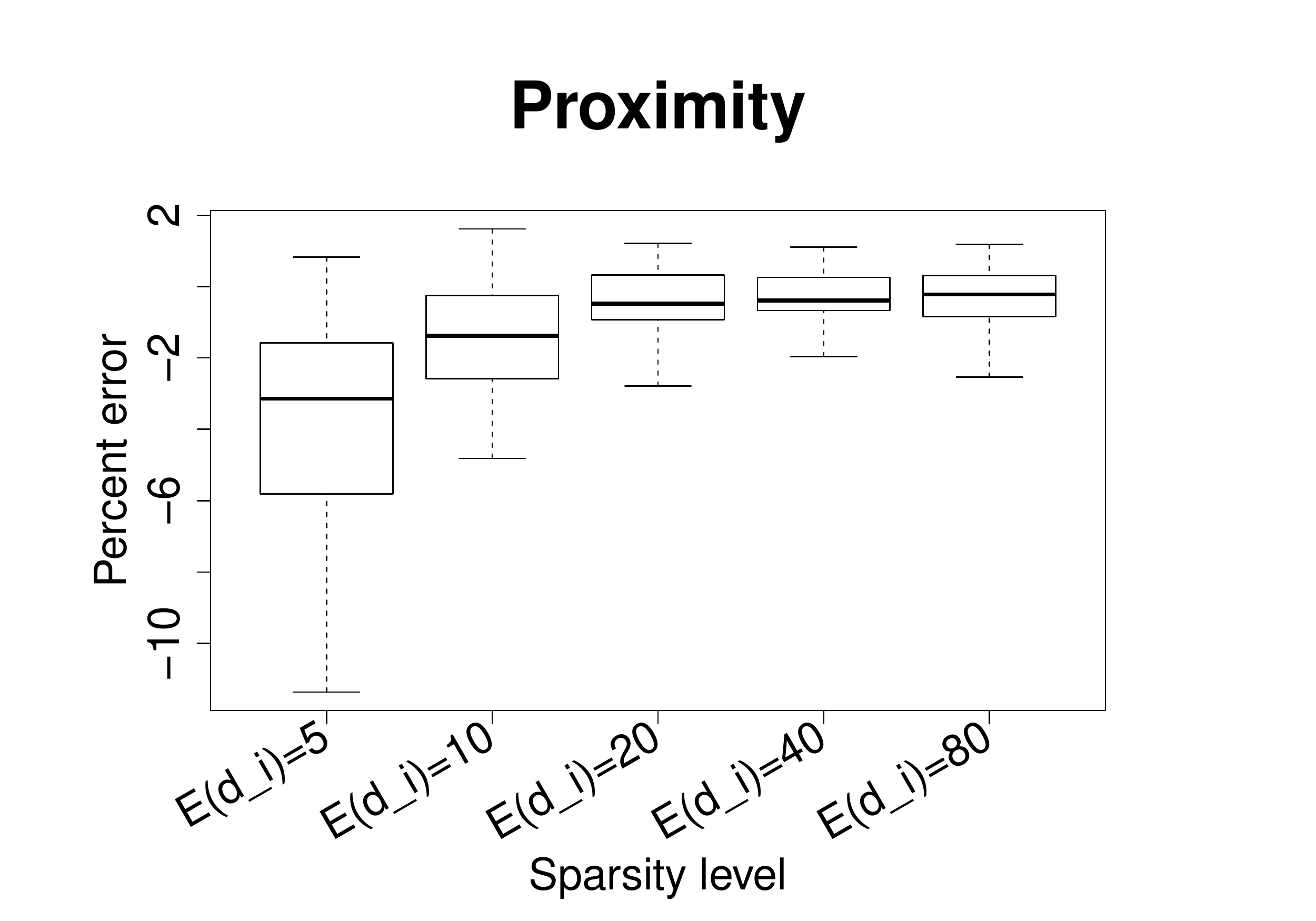}\quad
\includegraphics[width=.3\textwidth]{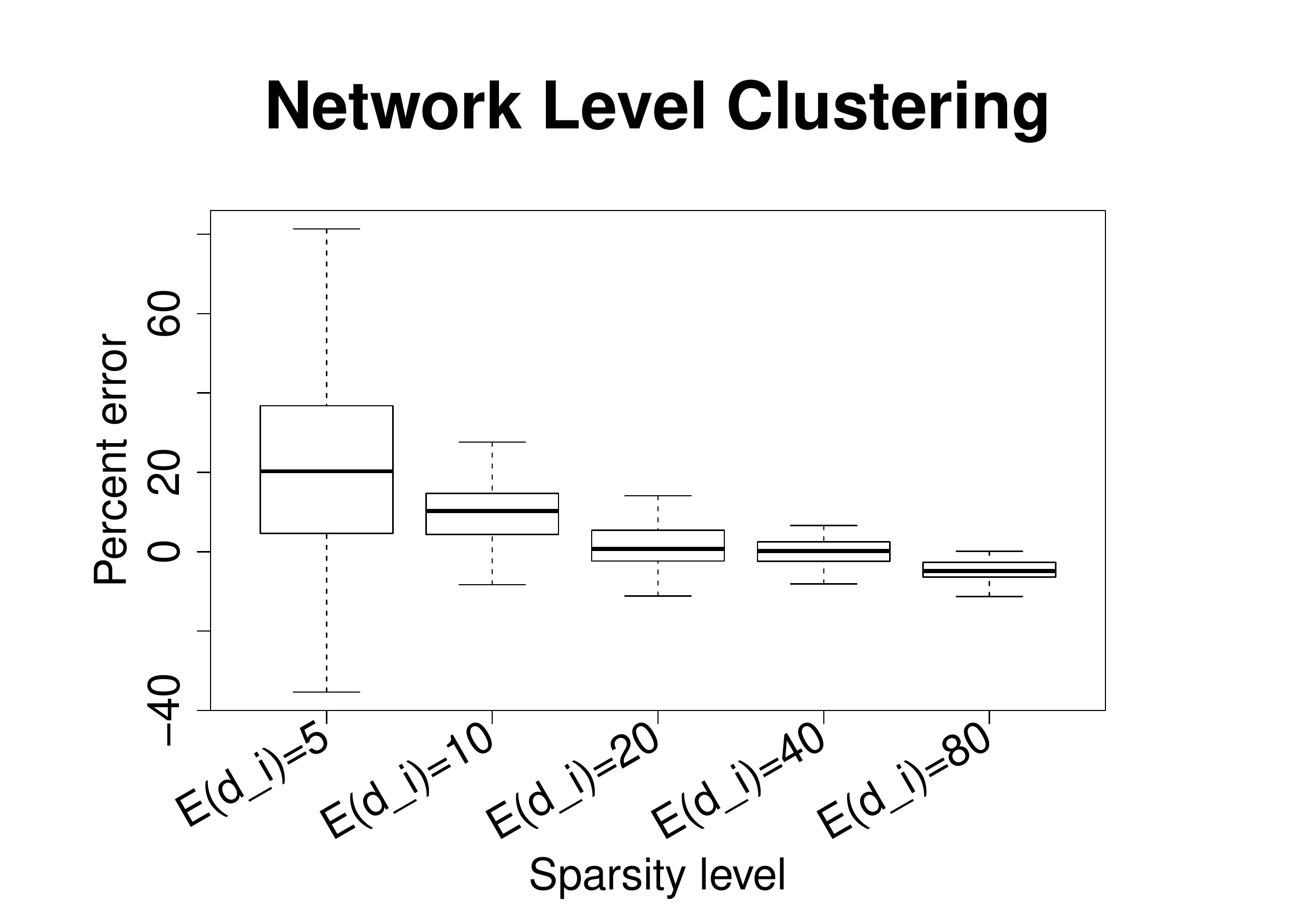}
\includegraphics[width=.3\textwidth]{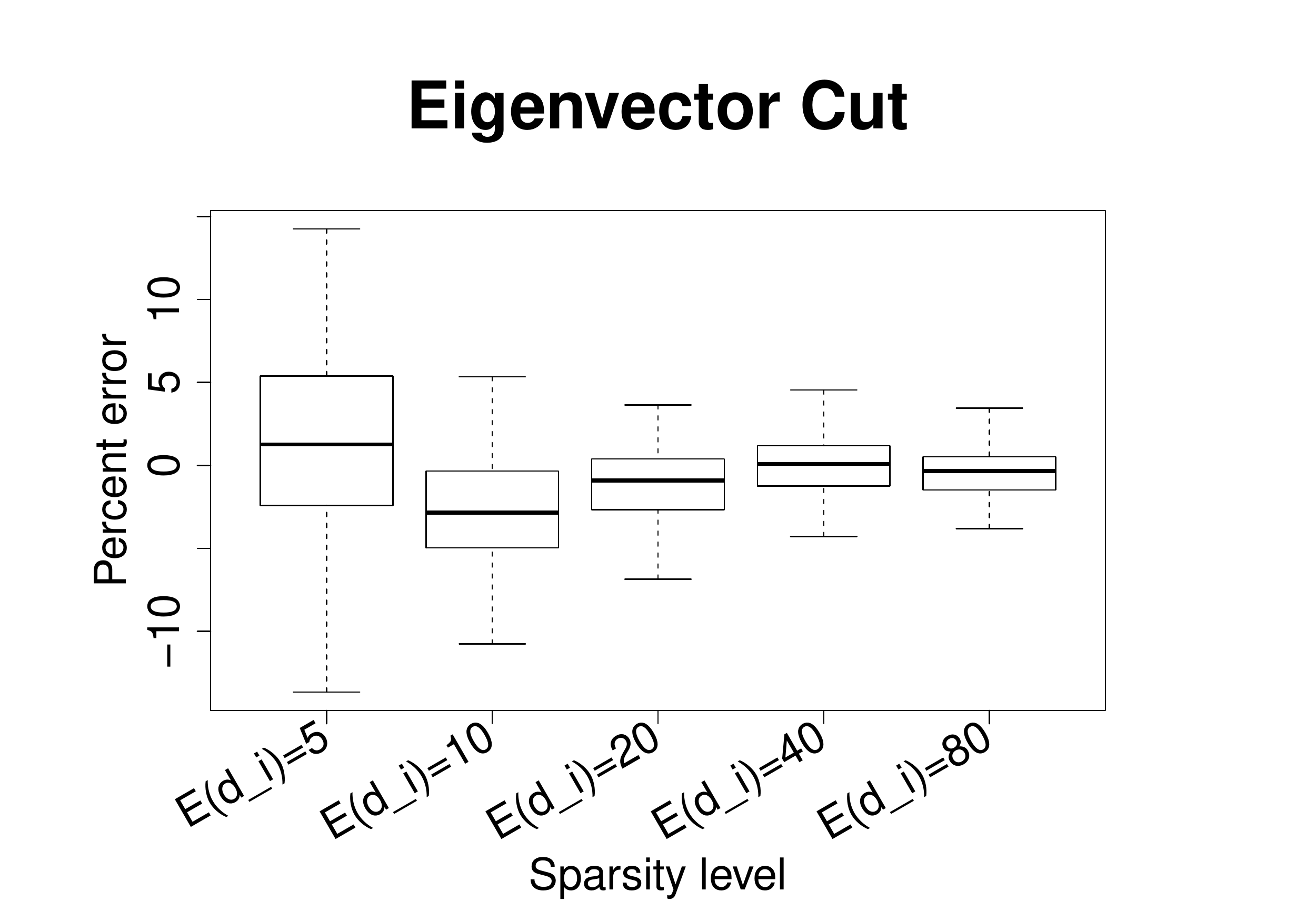}

\medskip

\includegraphics[width=.3\textwidth]{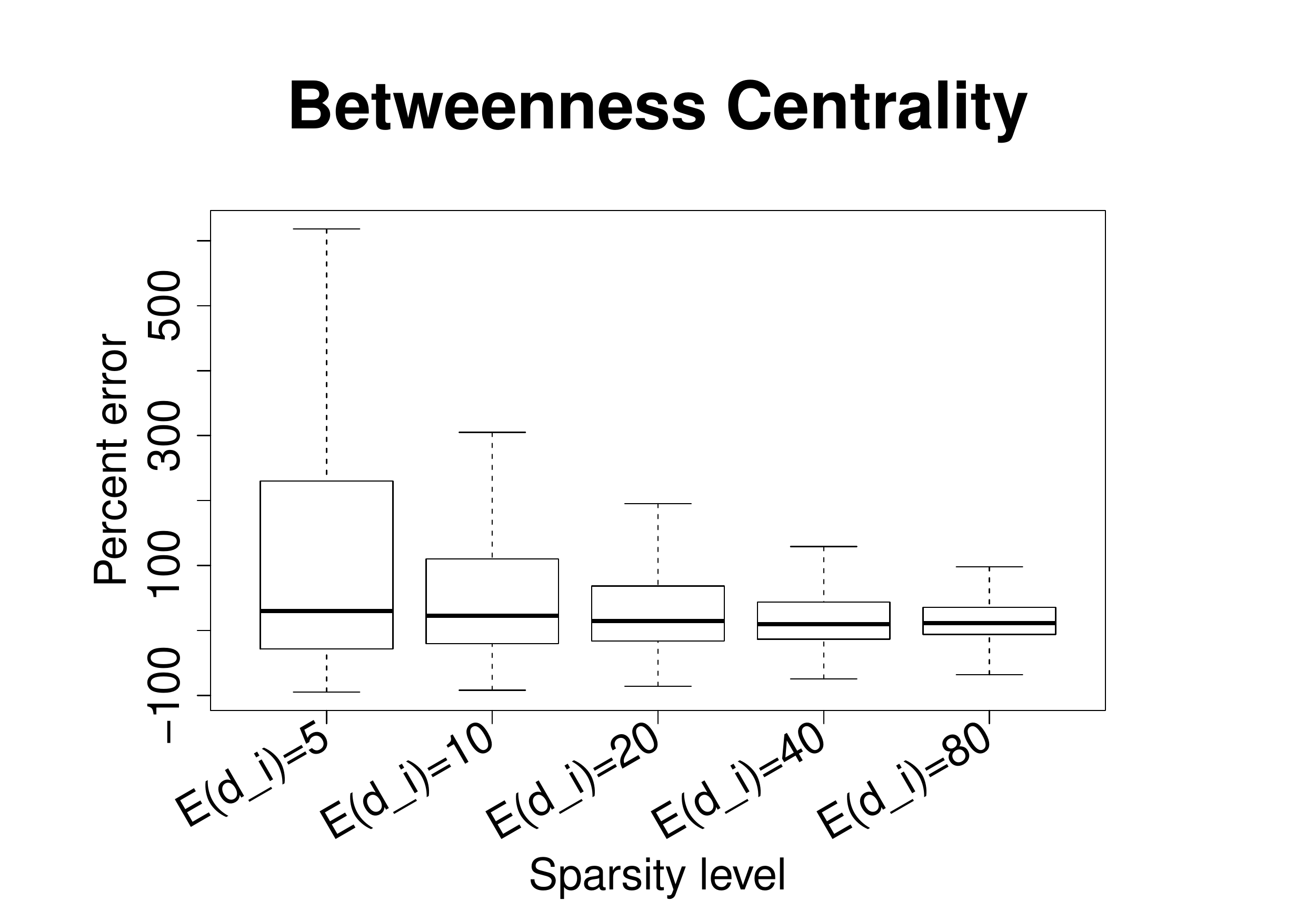}\quad
\includegraphics[width=.3\textwidth]{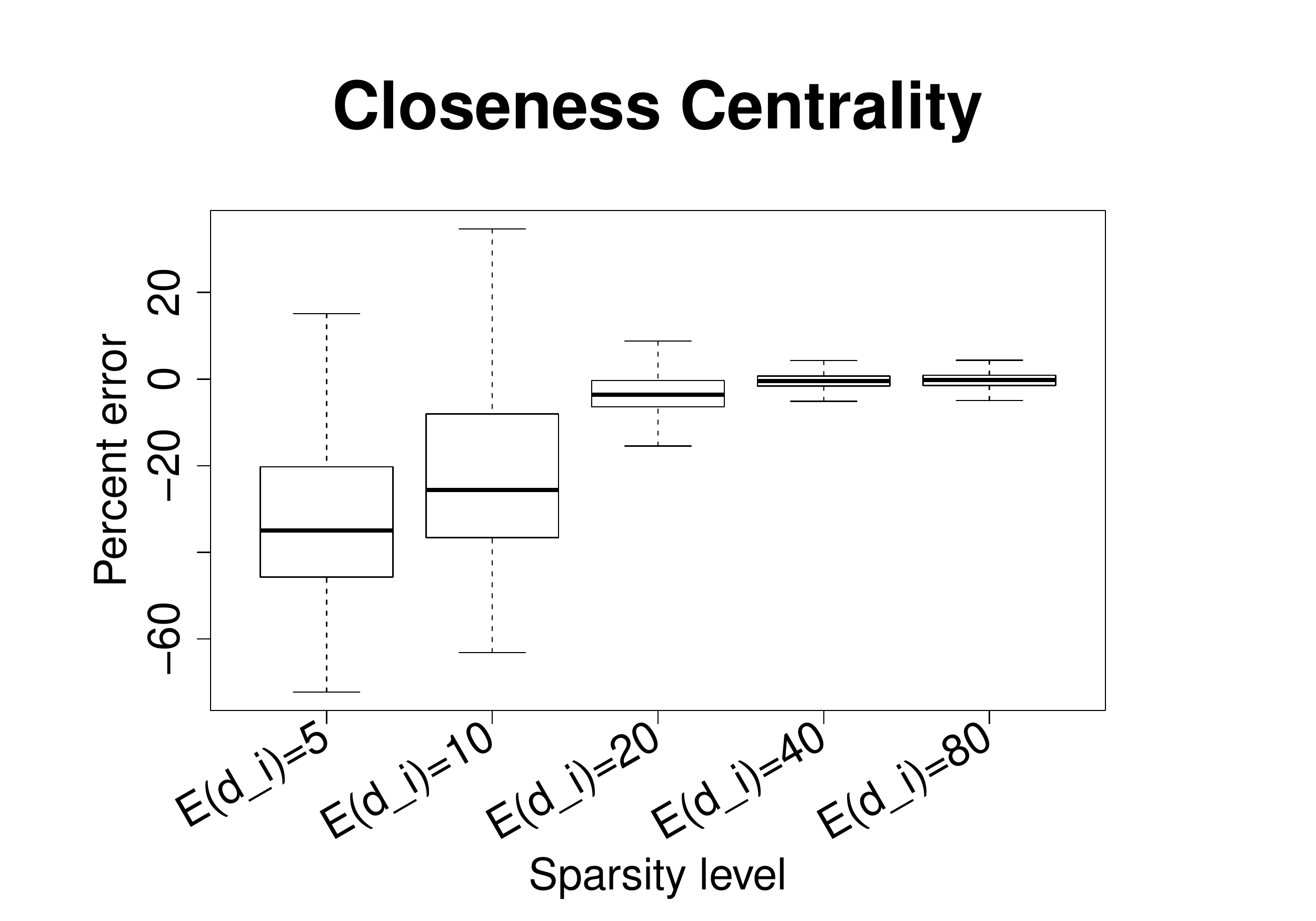}
\includegraphics[width=.3\textwidth]{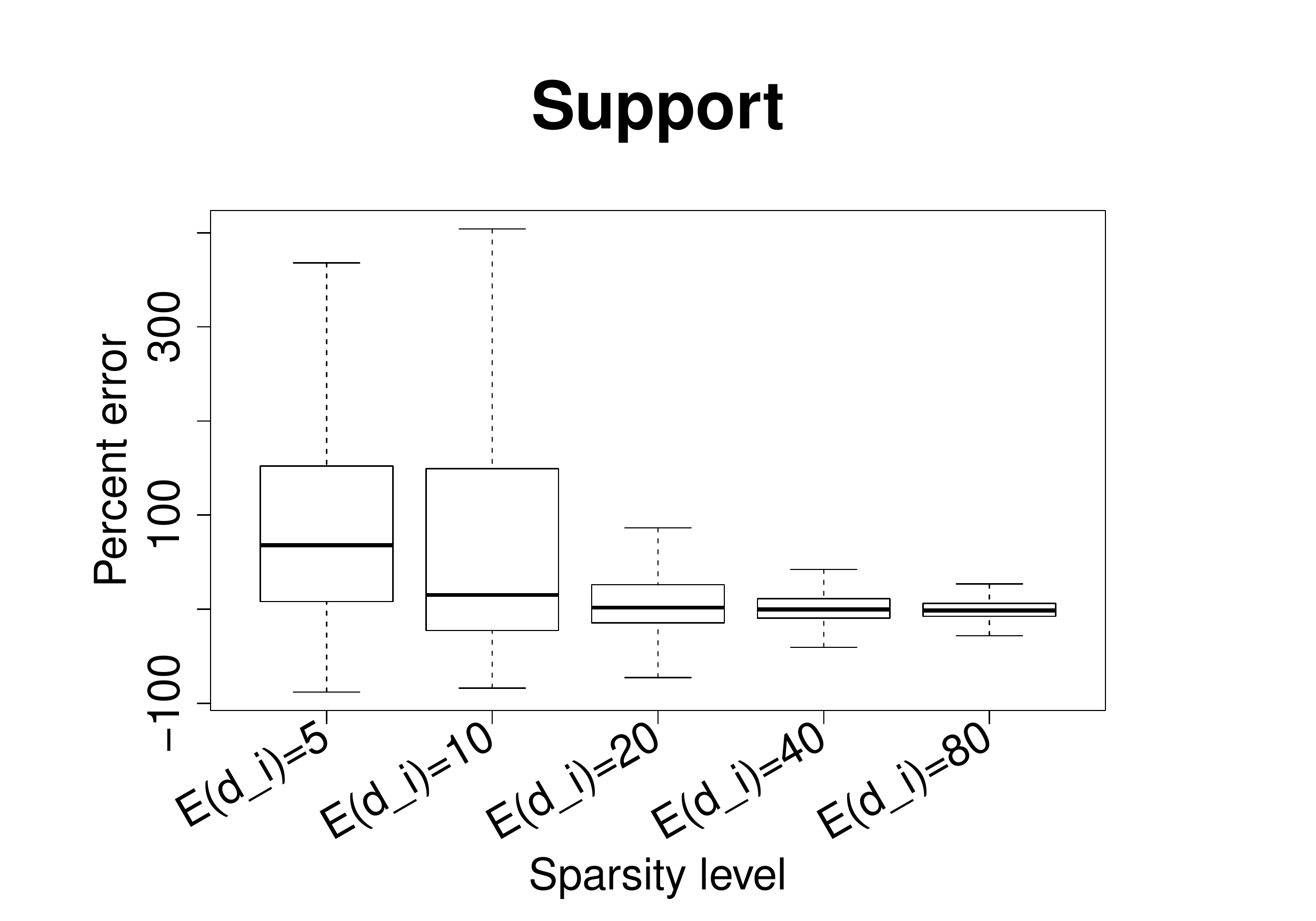}

\caption{Node level and network level measures estimation for 50 simulations at each sparsity level. The plots show boxplots of percentage errors for estimated statistic, with outliers not shown on the graph. For node level measures, the bias is near zero at all sparsity levels except for closeness and support, and variance decrease with decreasing sparsity. 
For closeness, bias increases with increasing sparsity, because the true graph is more likely to have disconnect components as sparsity increases. For network level measures, the bias is overall small. Even for network level clustering estimation at the most sparse level, the middle 50$\%$ has less than 40 percent error.}
\label{fig:sparsity}
\end{figure}

We define the percentage error as the difference between the estimated and true measure divided by the true measure. At each sparsity level, we pool simulations and make plots of mean $\pm$ standard deviation of percent error. Figure \ref{fig:sparsity} shows how well our algorithm estimates these measures at varying sparsity levels. As the graph becomes less sparse, we have smaller bias and variation in the estimation of degree and centrality. For maximum eigenvalue, proximity, and clustering, the bias in estimation has a monotone pattern. For proximity and clustering, we have less variation as the graph becomes less sparse. For eigenvector cut, the bias is very small at all sparsity levels and the variation decreases as the graph becomes less sparse.

\subsubsection{Sparse with thick tails}

Our next exercise is to approximate networks that exhibit heavy tails. That is, the network may mostly be sparse but some nodes may have extremely high degree. To operationalize this, we hold all the parameters fixed as before, but now draw $\nu_i$ from a Normal distribution with $\mu=-0.92, \sigma=0.3$ with probability $\lambda$ and from a Normal distribution with $\mu=-1.96,\sigma=0.3$ with probability $1-\lambda$. The high centrality nodes have, on average, expected degrees of 40, while the rest have, on average, expected degrees of 5. We pick $\lambda=0.1$ so the average number of high centrality nodes is 25, but the actual number may vary in each simulation. The goal of this exercise is to study whether we can pick out which members of the network have high eigenvector centrality, which is important in a diffusion process for instance, even though the graph is extremely sparse. 

\begin{table}[ht!]
\centering

\scalebox{0.8}{\begin{tabular}{lllll}
\multicolumn{2}{l}{\multirow{2}{*}{}}                                            & \multicolumn{2}{l}{Estimated top decile}                 & \multirow{2}{*}{}        \\ \cline{3-4}
\multicolumn{2}{l|}{}                                                             & \multicolumn{1}{l|}{Yes}   & \multicolumn{1}{l|}{No}     &                          \\ \cline{2-5} 
\multicolumn{1}{l|}{\multirow{2}{*}{True top decile}} & \multicolumn{1}{l|}{Yes} & \multicolumn{1}{l|}{18.16} & \multicolumn{1}{l|}{6.84}   & \multicolumn{1}{l|}{25}  \\ \cline{2-5} 
\multicolumn{1}{l|}{}                                 & \multicolumn{1}{l|}{No}  & \multicolumn{1}{l|}{6.84}  & \multicolumn{1}{l|}{218.16} & \multicolumn{1}{l|}{225} \\ \cline{2-5} 
\multicolumn{2}{l|}{}                                                             & \multicolumn{1}{l|}{25}    & \multicolumn{1}{l|}{225}    & \multicolumn{1}{l|}{250} \\ \cline{3-5} 
\vspace{2mm}
\end{tabular}}
\caption{Confusion matrix of top decile eigenvector centrality estimation\label{table:heavytails}}
\begin{tablenotes}
Notes: The confusion matrix reports how well the method picks the top decile of eigenvector central nodes. The rows represent true instances, while the columns represent predicted instances. Thus, the diagonal of the matrix reports true positives and true negatives, while the off-diagonal elements capture mislabeled instances. 
\end{tablenotes}
\end{table}

Table \ref{table:heavytails} shows the confusion matrix for this exercise---which presents true positives, true negatives, false positives, and false negatives---for the top decile eigenvector centrality estimation average over 50 simulations. With a $73\%$ true positive rate and a $27\%$ false positive rate, we successfully recover the majority of high centrality nodes. We note that the actual number of high centrality nodes varies in each simulation, which results in some noise in our estimation. 
\

\subsection{Varying network size and sampling share}\label{sec:network_size}

\begin{figure}[htp]
\centering

\includegraphics[width=.4\textwidth]{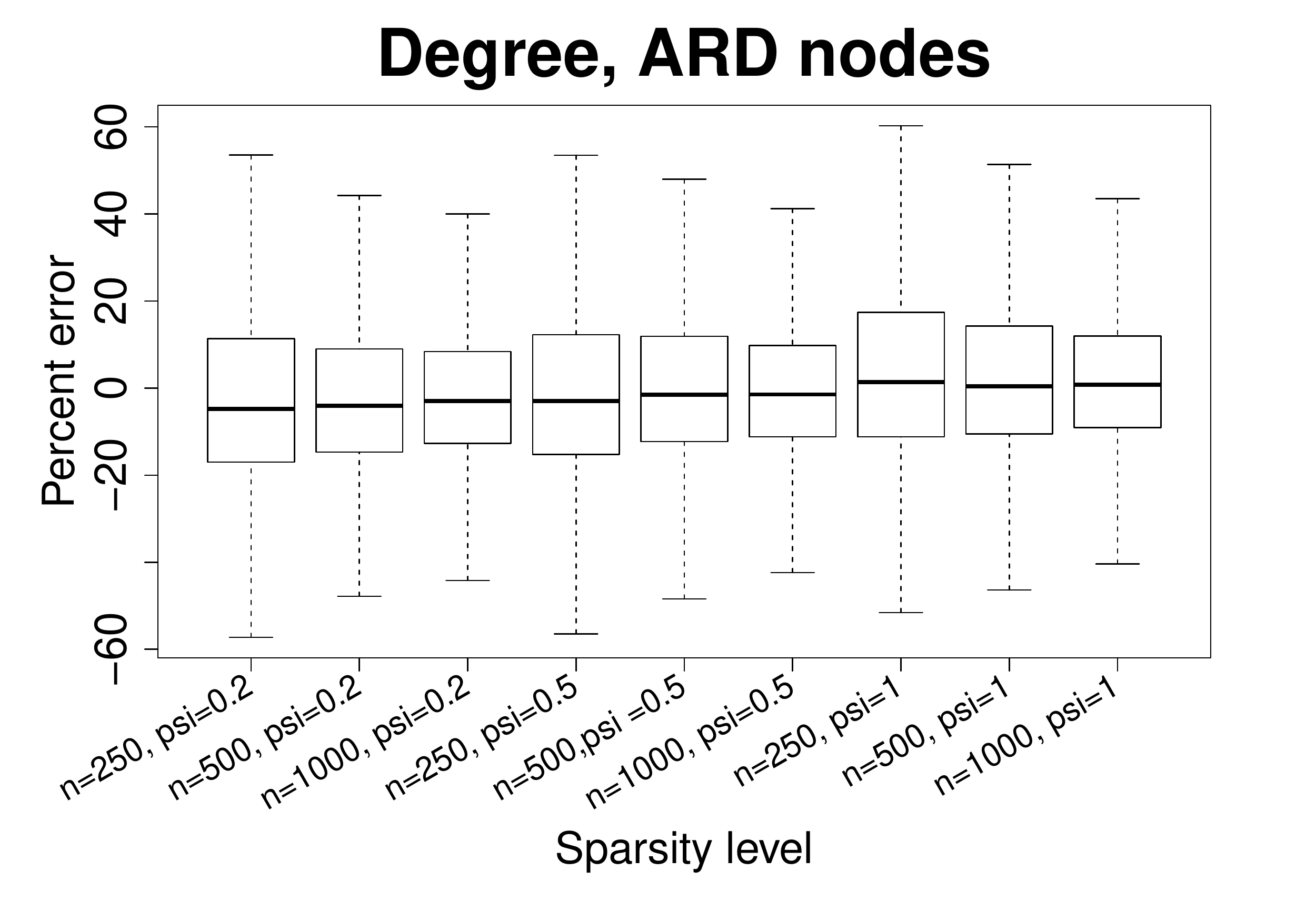}\quad
\includegraphics[width=.4\textwidth]{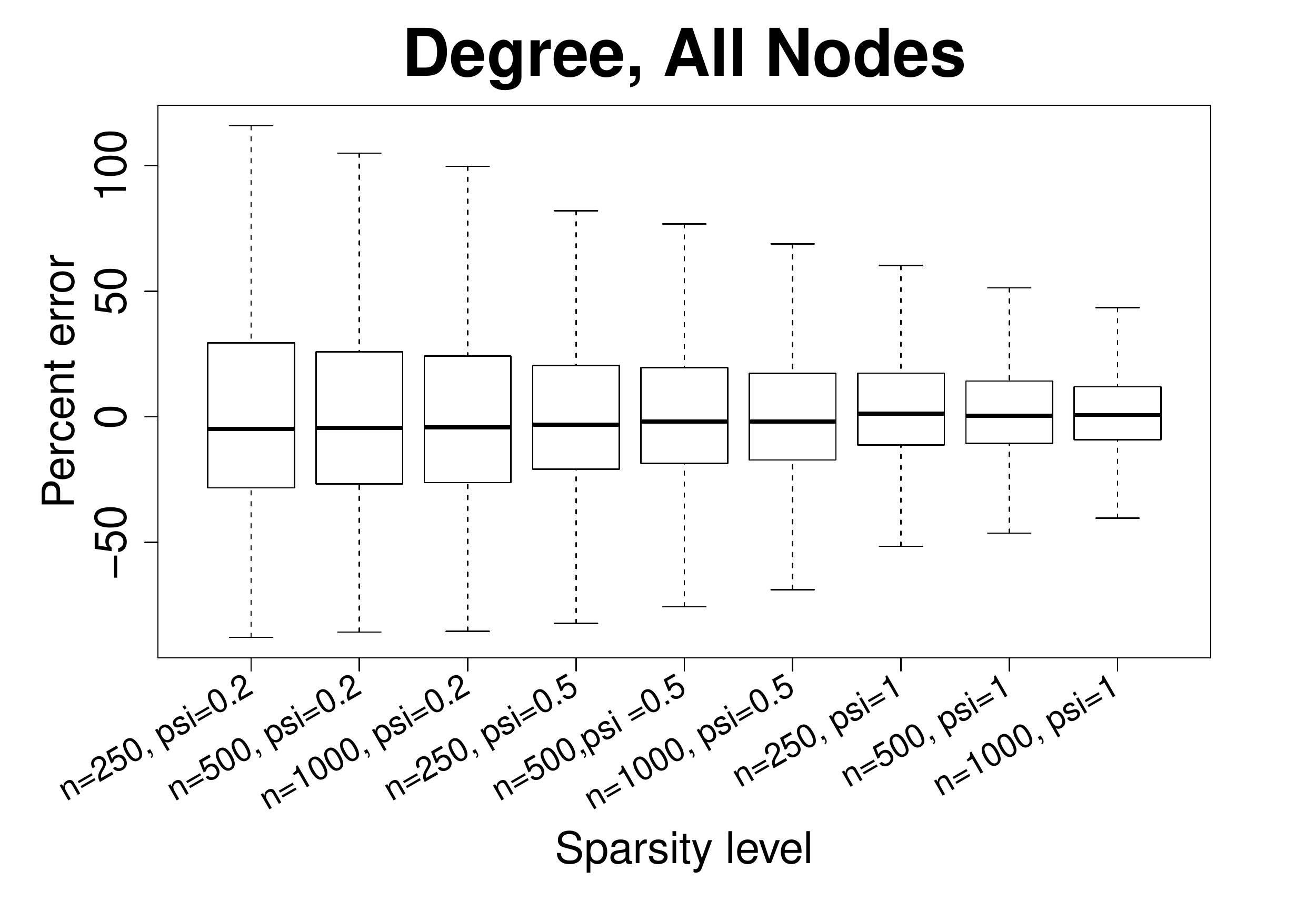}

\medskip

\includegraphics[width=.4\textwidth]{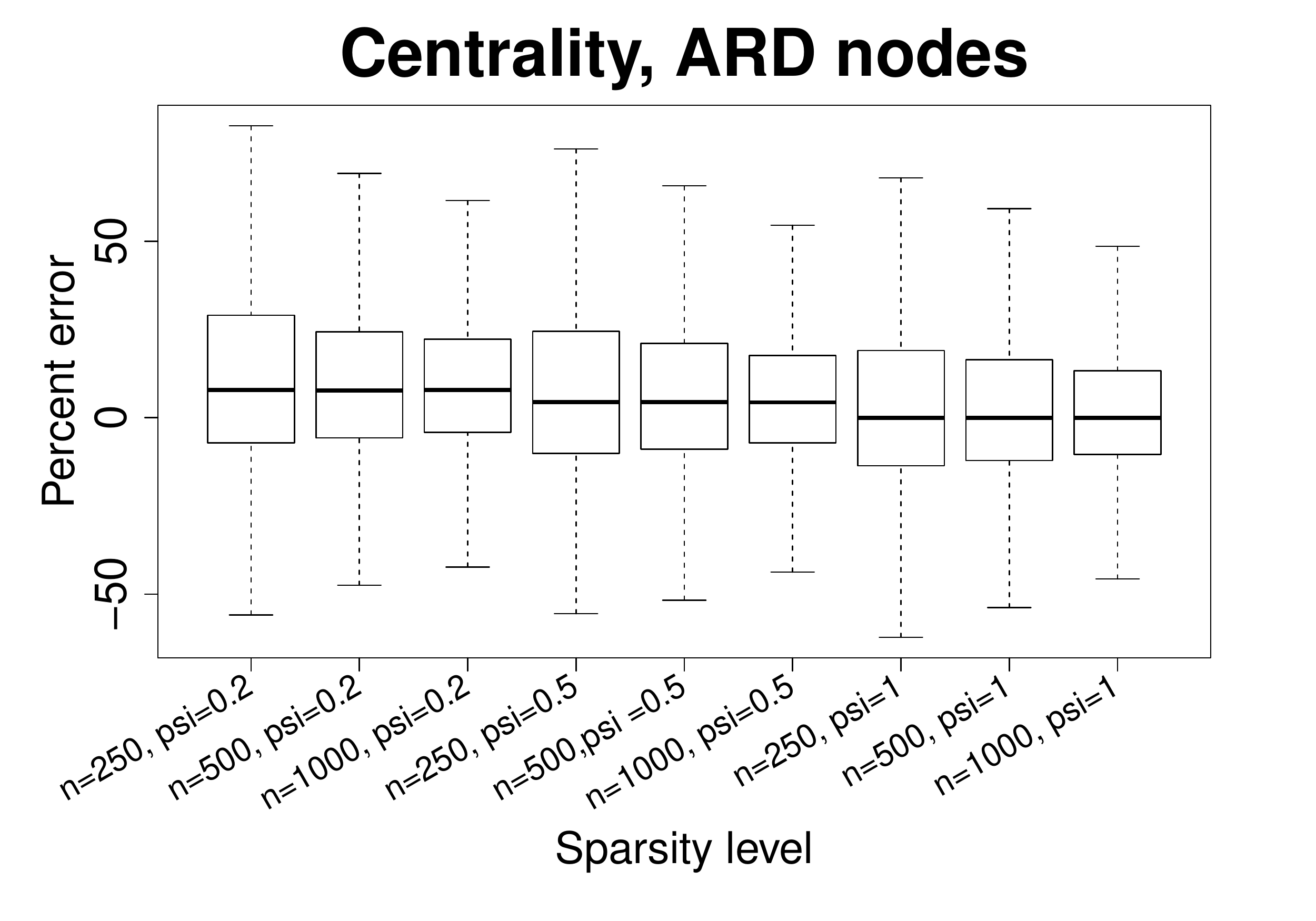}\quad
\includegraphics[width=.4\textwidth]{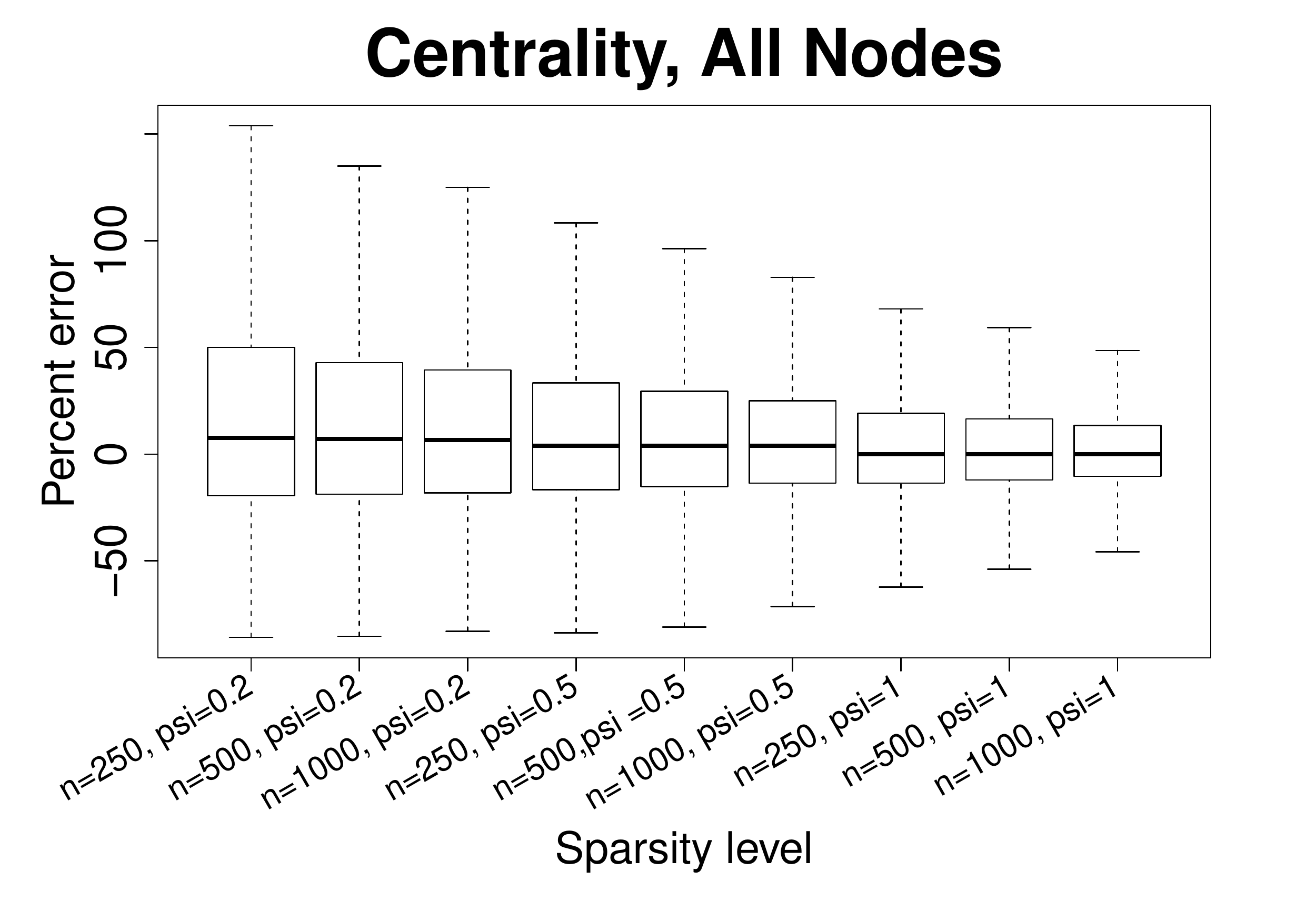}

\medskip

\includegraphics[width=.4\textwidth]{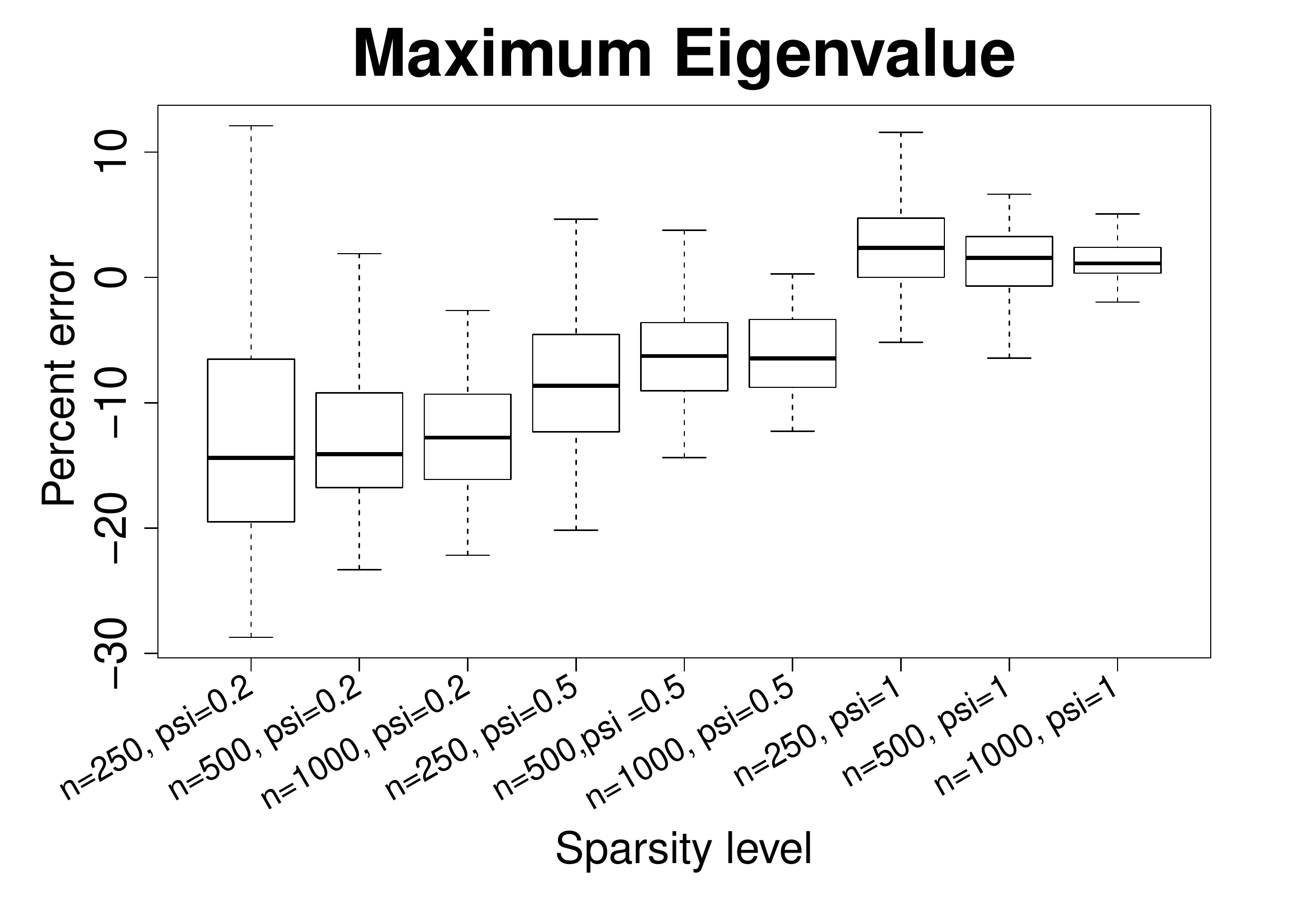}\quad
\includegraphics[width=.4\textwidth]{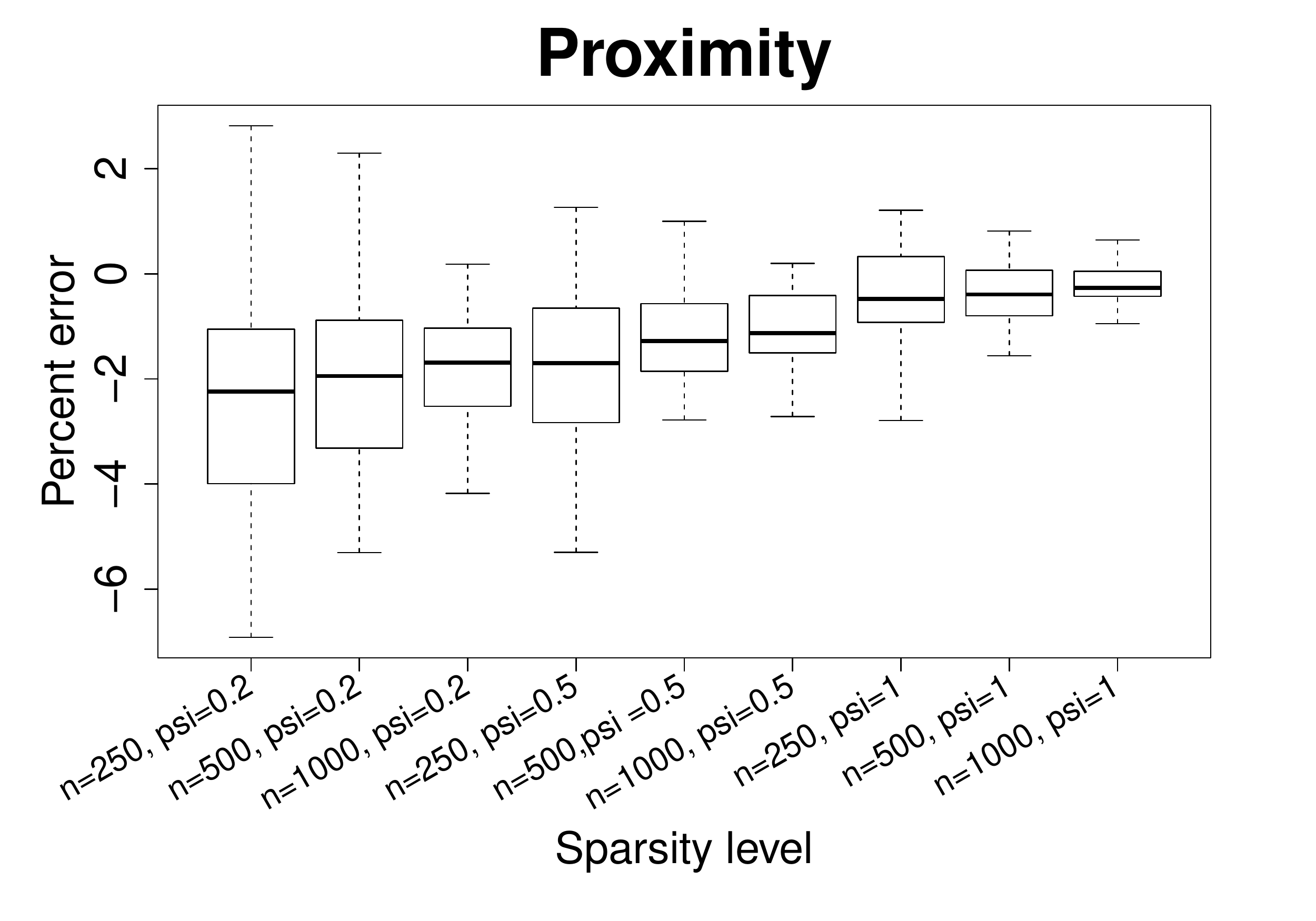}\quad

\medskip

\includegraphics[width=.4\textwidth]{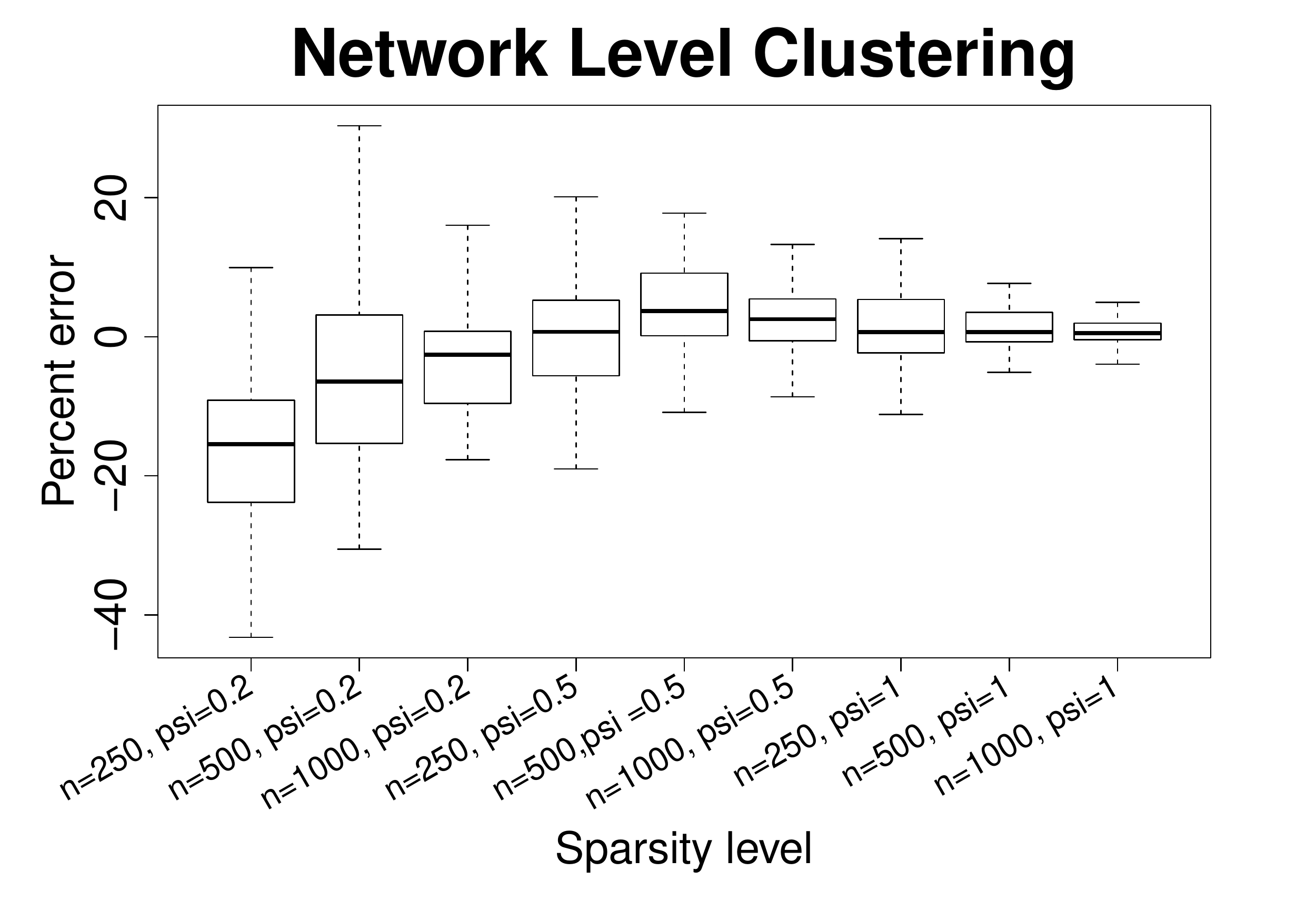}
\includegraphics[width=.4\textwidth]{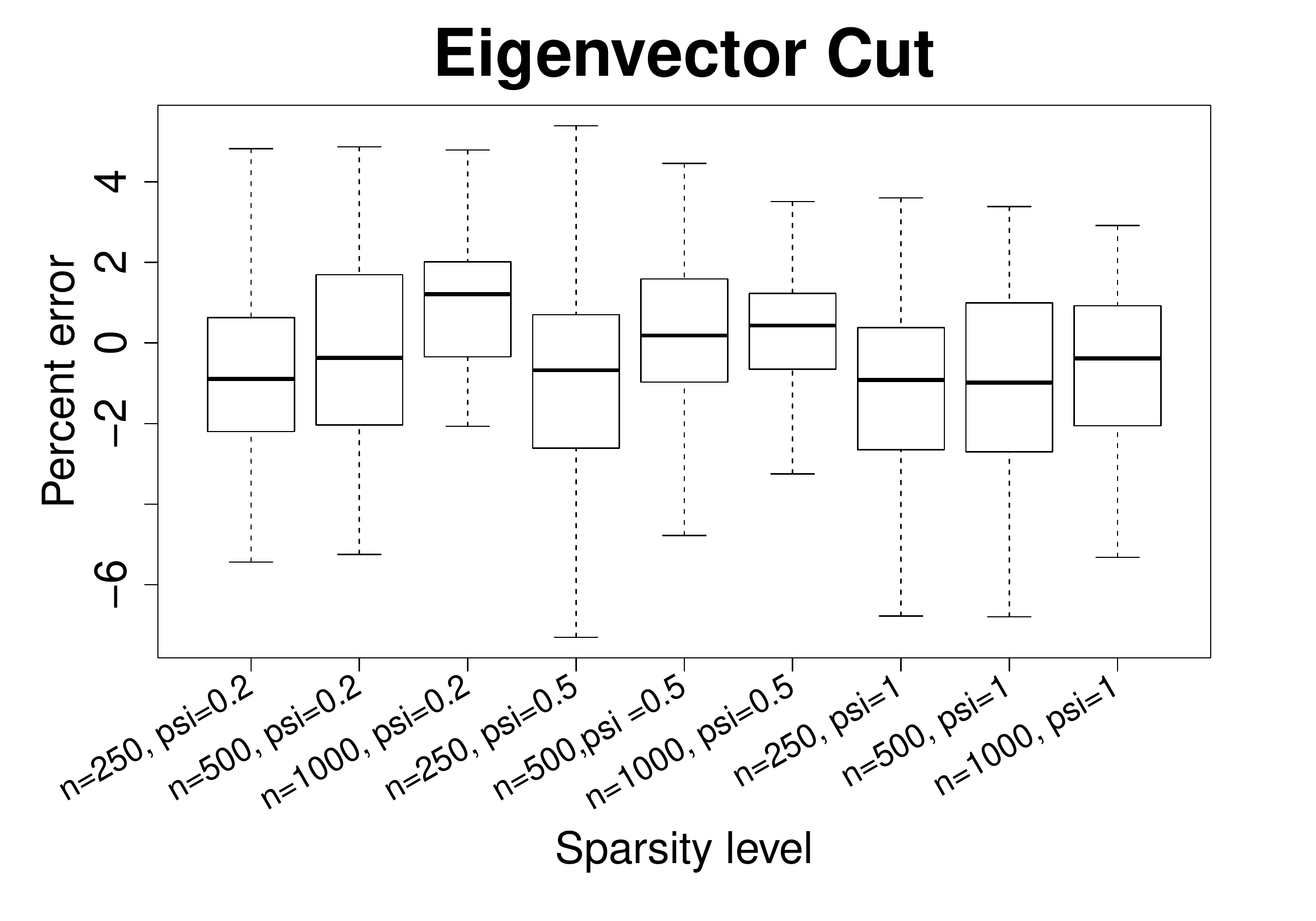}

\caption{Node level and network level measures estimation for 50 simulations at each combination of $\psi \in \{0.2,0.5,1\}$ and $n \in \{250, 500, 1000\}$. The plots show boxplots of percentage errors for estimated statistic, with outliers not shown on the graph. The typical bias for node level statistic estimation is near zero at all levels of $\psi$ and $n$, and variance decreases as we increase $\psi$ and $n$. Our estimation of network level statistics improve with increasing $\psi$ and $n$, with the exception of eigenvector cut. The estimated percentage of cross links has low bias and variance at all levels of $\psi$ and $n$.}
\label{fig:psivary}
\end{figure}

Next we study what happens as we move from what we call a rural environment to an urban environment, exploring  what happens as the number of nodes in the population gets larger, and when we have to reduce the ARD sampling share.  In particular we vary $n \in \{250, 500, 1000\}$. We also vary the share in the ARD sample, $\psi \in \{0.2,0.5,1\}$. When $\psi<1$, we sample demographic features $X$ for all nodes with $X_{i1} \sim N(\nu_i,\sigma)$. We construct $X_{i2}$ such that $X_{i2}$ is in one of eight categories depending on the sign of each coordinate of $z_i$.

Figure \ref{fig:psivary} presents estimation results when we vary $n$ and $\psi$. When $\psi$ is fixed, in general we have less bias and variation as we increase $n$. When $n$ is fixed, performances of degree and centrality estimation on ARD nodes are similar at various $\psi$. As we expect, increasing the share of ARD nodes increases the precision of node level measures estimation for all nodes.
\

We underestimate maximum eigenvalue when we do not have 100\% ARD sampling, and we overestimate maximum eigenvalue when we do have a $100\%$ ARD sample. We underestimate average path length at all $n$ and $\psi$; the bias in estimation decreases as we increase $n$ and $\psi$. Our estimation of network level clustering is within $20\%$ of true value most of the time, and our estimation of the percentage of cross links using eigenvector cut is mostly within $5\%$ of the true values.

\

\section{Simulations with Real-World Networks}\label{sec:results}

The goal of this section is to take the technique to the field and see how well, in a real, empirically-relevant context, we might have done using ARD in place of full network data. After all, our ARD technique can only do as well as the latent surface model specified in Equation \ref{eqn:network-model} does at capturing network structure.\footnote{Here, we remind the reader that ARD information and other insights from our method could, in principle, be applied to other network formation models that may better suit certain applications. For instance, the sub-graph generated models (SUGMs) discussed in \cite{chandrasekharj2012} allow for violations of the ``triangle inequality'' in latent space to generate a different distribution of triangles among nodes. We conjecture it is straightforward to identify SUGMs through ARD.} 

Our choice of a parametric model clearly has implications for the performance of the method and carries with it some of the limitations of random geometric graph sorts of models: conditional on locations on the surface, it is unlikely for very distant nodes to ever link, making so-called ``short-cuts'' rather rare events. Further, clustering in the network (e.g. homophily based on a given characteristic) is accomplished through the positions of particular individuals in the latent space.\footnote{If a node is more likely to link to those whose locations are nearby, and the network neighbor is also more likely to link to those with close latent locations, then the initial node is also on average going to be in relatively close proximity to the neighbor's friends on the latent space, leading to a higher linking probability and higher levels of clustering.} If there is a clear cleavage in the network (and the ARD questions asked on the survey also make it possible to detect this), then our model will generate graphs that faithfully reflect this distinction. If, however, there is a weak preference for connection within rather than between groups, this will be more difficult to detect.

\subsection{Setting and Data} 
\label{sec:data}
We aim to show the potential for ARD to be used in place of detailed social network maps. To do this, we begin with the rich network data collected by \cite{banerjeegossip}. This consists of network data from 89\% of 16,476 households across 75 villages in Karnataka, India. Thus, in the undirected, unweighted graph, we have information about 98\% of all potential links. The survey asks about 12 types of interactions: (1) whose house the respondent visits; (2) who visits the respondent's house; (3) kin in the village; (4) non-relatives with whom the respondent socializes; (5) who provides help on medical decisions; (6) from whom the respondent borrows money; (7) to whom the respondent lends money; (8) from whom the respondent borrows material goods such as kerosene or rice; (9) to whom the respondent lends such material goods; (10) from whom the respondent receives advice before an important decision; (11) to whom the respondent gives advice; and (12) with whom the respondent goes to temple, mosque or church. We use a graph which is undirected and unweighted, taking a link as the union over all the above dimensions. The ratio of average degree over network size ranges from 0.04 to 0.21, with a median of 0.08. The sparsity level is the same as our core simulation, where ratio of expected degree over network size is 20/250=0.08.

We asked 12 additional questions in a follow-up survey 12 months later to a random sample of approximately 30\% of households, covering traits such as owning a tractor, having met with an accident, illness incidents, birth of twins, educational attainment and family composition. We use 8 of  these 12 traits as the basis for the ARD analysis. The other four questions are deleted because they are rare or non-informative of sampled households' positions in the network.

Our first goal is a proof of concept for the use of ARD and the latent distance model to generate a posterior distribution for each graph.  To do this, we construct ARD responses for the 30\% sample: what would be the aggregate counts these respondents would have given us had we asked them ARD questions? It also allows us to abstract from errors in knowledge or in recall by survey respondents.\footnote{For example, we know the tractor ownership of each individual in the 30\% sample.  We can then construct the number of links of each ARD respondent to others in the ARD sample who have a tractor. This gives us the ARD responses for the induced subgraph.  To estimate the number of links to tractor-owning households in the full graph, we can simply scale by the sampling rate.} 

For what follows, the 29\% of the households with supplemental surveys form our ARD sample, while the remaining 71\% of households are non-ARD nodes. Because we construct ARD responses for households who answer supplemental surveys in each village, the actual percentage of households with constructed ARD responses varies by village. One village only has a 6.7\% sampling rate and therefore gets dropped, increasing the sampling rate across all villages used to 30\%. Recall that we observe a set of demographic covariates collected in the census of \cite{banerjeegossip} for all nodes and we can use these covariates to predict $\nu_i$ and $z_i$ for nodes not in the ARD sample.

\subsection{Network Level Results}

We begin by looking at the same network-level statistics that we have focused on throughout the paper: $\lambda_1(g)$, social proximity, clustering, and eigenvector cut. 

\iffalse
\begin{figure}[htp]
\centering
\subfloat[$\lambda_1(g)$]{
\includegraphics[width=.3\textwidth]{}\qquad
}
\subfloat[Social Proximity]{
\includegraphics[width=.3\textwidth]{}
}
\medskip

\subfloat[Clustering]{
\includegraphics[width=.3\textwidth]{}
}
\subfloat[Eigenvector Cut]{
\includegraphics[width=.3\textwidth]{}
}

\caption{Network level measures estimation for households in villages in Karnataka.  These plots show scatterplots across all villages with the estimated network level measure on the x-axis and the measure from the true underlying graph on the y-axis.  There is correlation between the estimated statistic and the true statistic, even though there is some bias for clustering.}
\label{fig:indian_network}
\end{figure}
\fi

\begin{figure}[htp]
\centering
\subfloat[$\lambda_1(g)$]{
\includegraphics[width=.3\textwidth]{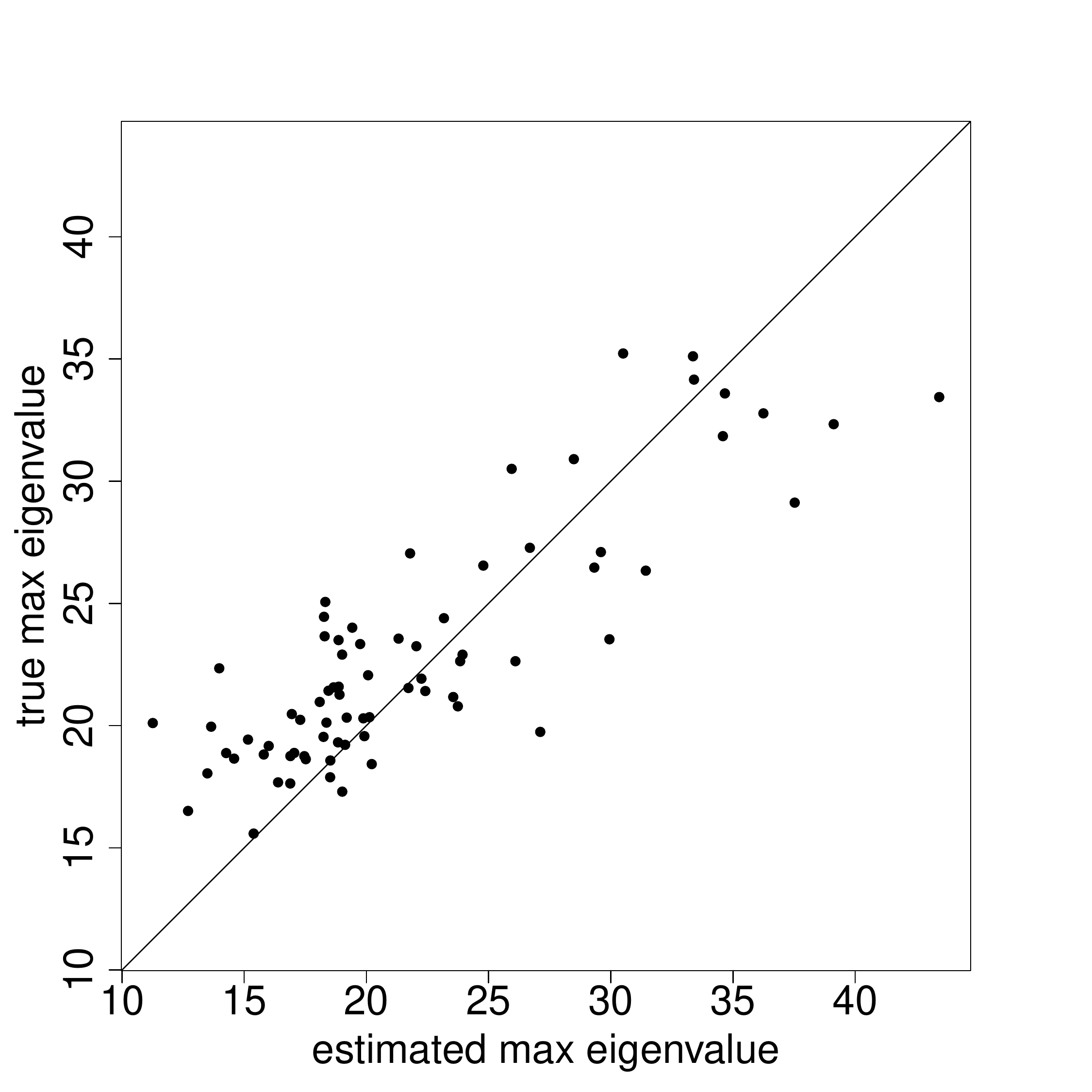}\qquad
}
\subfloat[Social Proximity]{
\includegraphics[width=.3\textwidth]{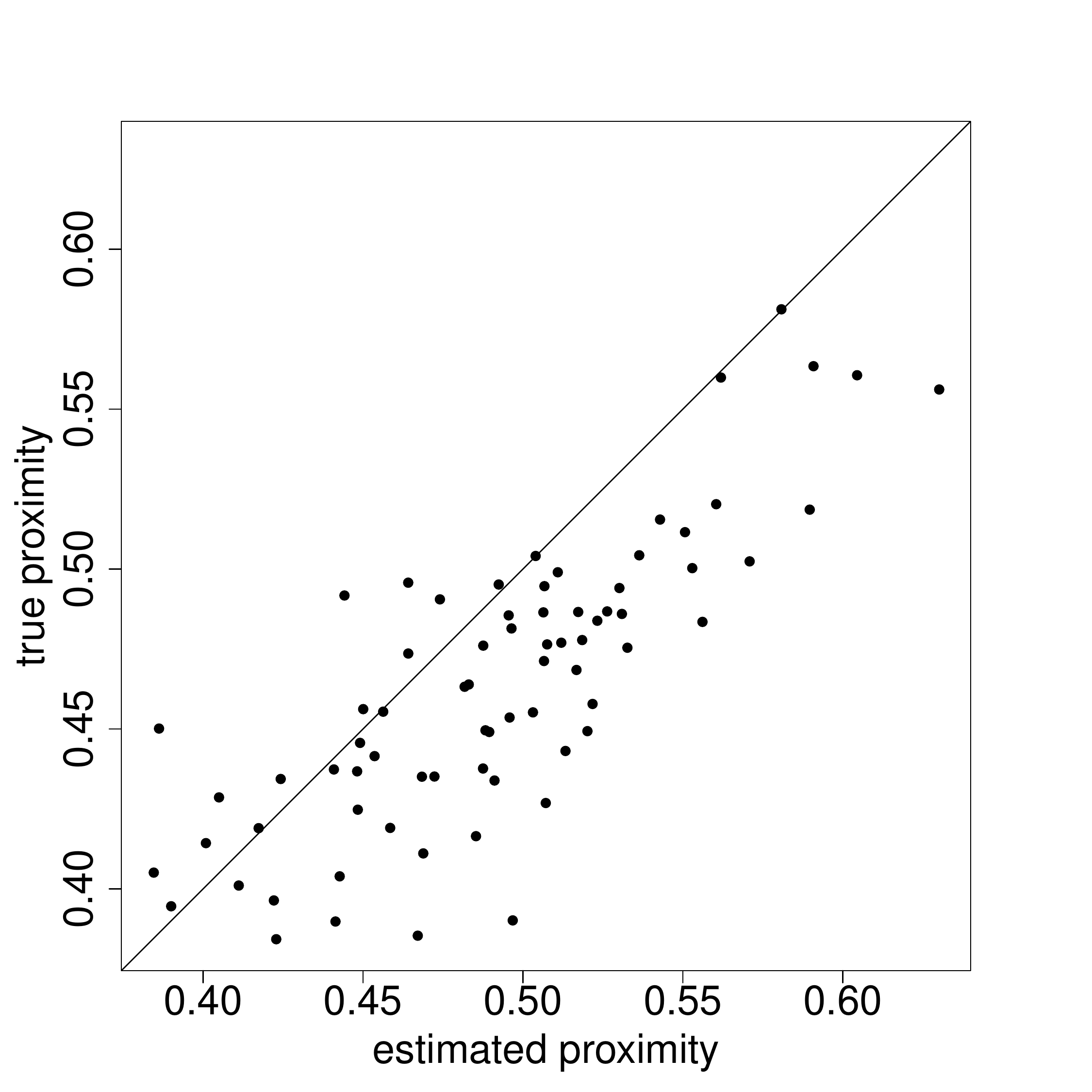}
}
\medskip

\subfloat[Clustering]{
\includegraphics[width=.3\textwidth]{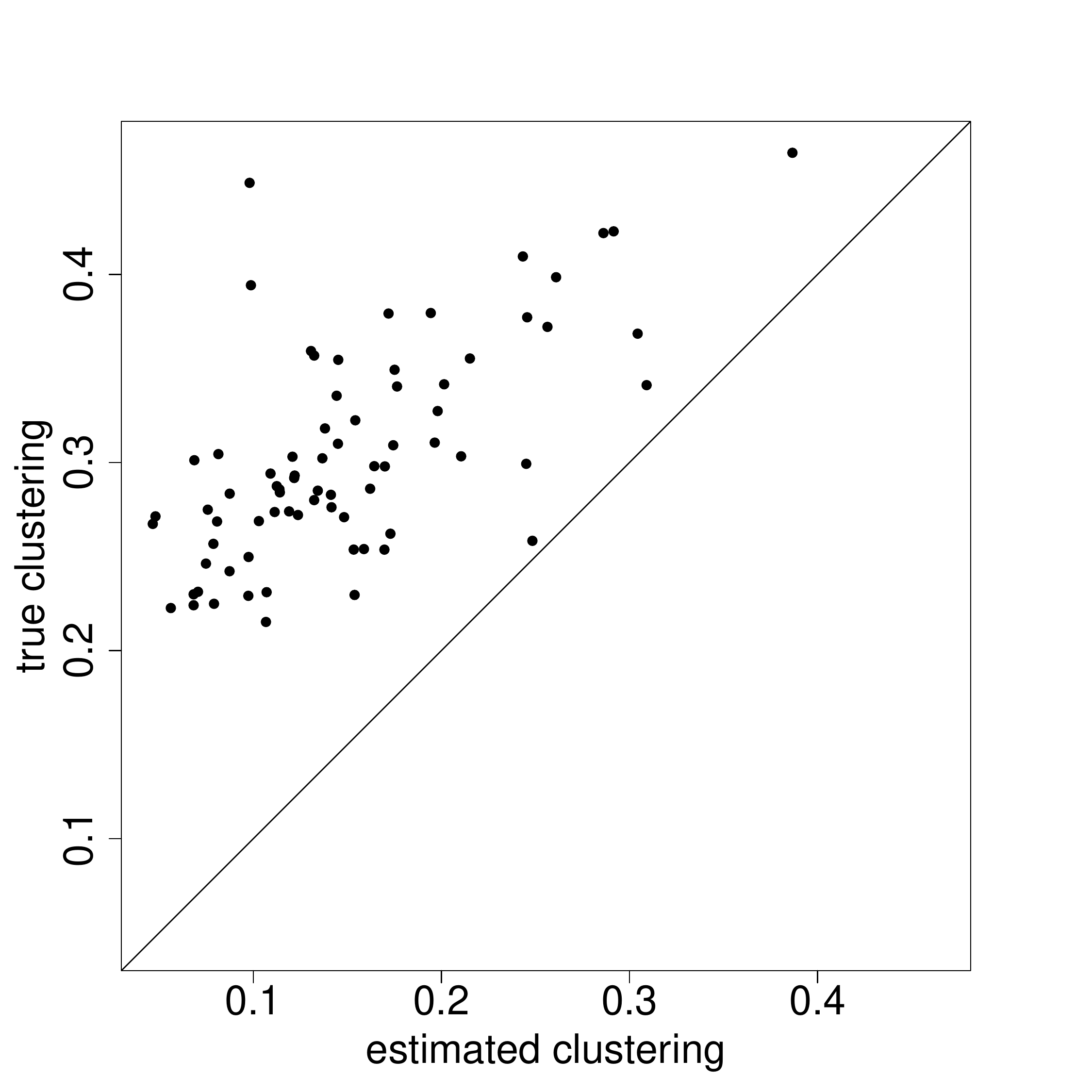}
}
\subfloat[Eigenvector Cut]{
\includegraphics[width=.3\textwidth]{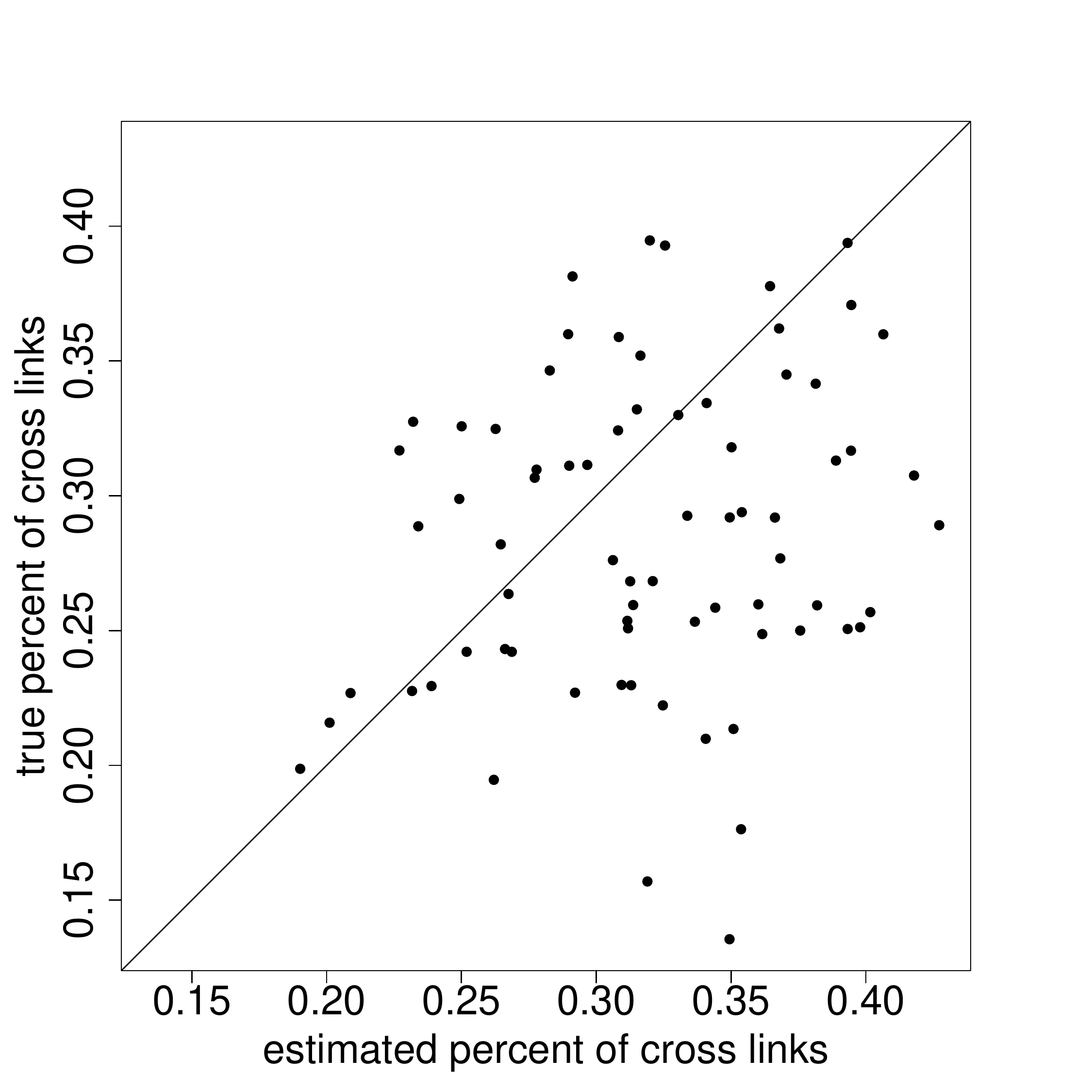}
}

\caption{Network level measures estimation for households in villages in Karnataka.  These plots show scatterplots across all villages with the estimated network level measure on the x-axis and the measure from the true underlying graph on the y-axis.  There is correlation between the estimated statistic and the true statistic, even though there is some bias for clustering.}
\label{fig:indian_network}
\end{figure}

Figure \ref{fig:indian_network} plots the results.\footnote{See Appendix \ref{sec:additional_plots_karnataka} for plots of additional network statistics at both the graph and node levels.} In particular, each panel plots the posterior mean for the network statistic in question against the true value in the data, for each of the 75 villages. We see, rather remarkably, that these global network features are rather well-captured by the ARD procedure. The procedure is weakest for clustering but note that though there is clearly a bias, it is small and out-performs many off-the-shelf models of network formation \citep{chandrasekharj2012}.\footnote{See Table 2 in the paper which compares the implied network level statistics (e.g., eigenvector cut, maximal eigenvalue, clustering, average path length) when we fit (1) a conditional edge independent model flexibly using a rich set of covariates and (2) the same conditional edge independent model but adding in node-level fixed effects (i.e., the model of \cite{graham2014econometric}). Both of these miss across the board in terms of the relevant network statistics. Finally, prior work has demonstrated theoretical failures of consistency of ERGMs when links and triangles are introduced (as would be needed to model realistic data) and also slow (exponential) mixing times for MCMCs used in estimation \citep{shalizi2013consistency,bhamidi2011mixing}. Therefore, our model out-performs, both theoretically and through simulations, conditional edge independent models, \cite{graham2014econometric} which adds fixed effects but no latent locations, and non-trivial ERGMs.}

\subsection{Node Level Results}

Next we turn to node-level results. We again focus on degree, eigenvector centrality, and clustering.

Figure \ref{fig:indian_ARD} presents the results for the ARD sample and Figure \ref{fig:indian_Full} presents the results for the entire sample. We see from Figure \ref{fig:indian_ARD} that the estimated degree, eigenvector centrality, and clustering coefficient are strongly correlated with the true values in the data (Panels A, B, C). Furthermore, in Panels D, E, and F we plot the percent error averaged over all nodes in the sample by village, plotted by village ordered by standard deviation of percent errors.

\begin{figure}[!h]
\centering
\subfloat[Degree]{
\includegraphics[width=.3\textwidth]{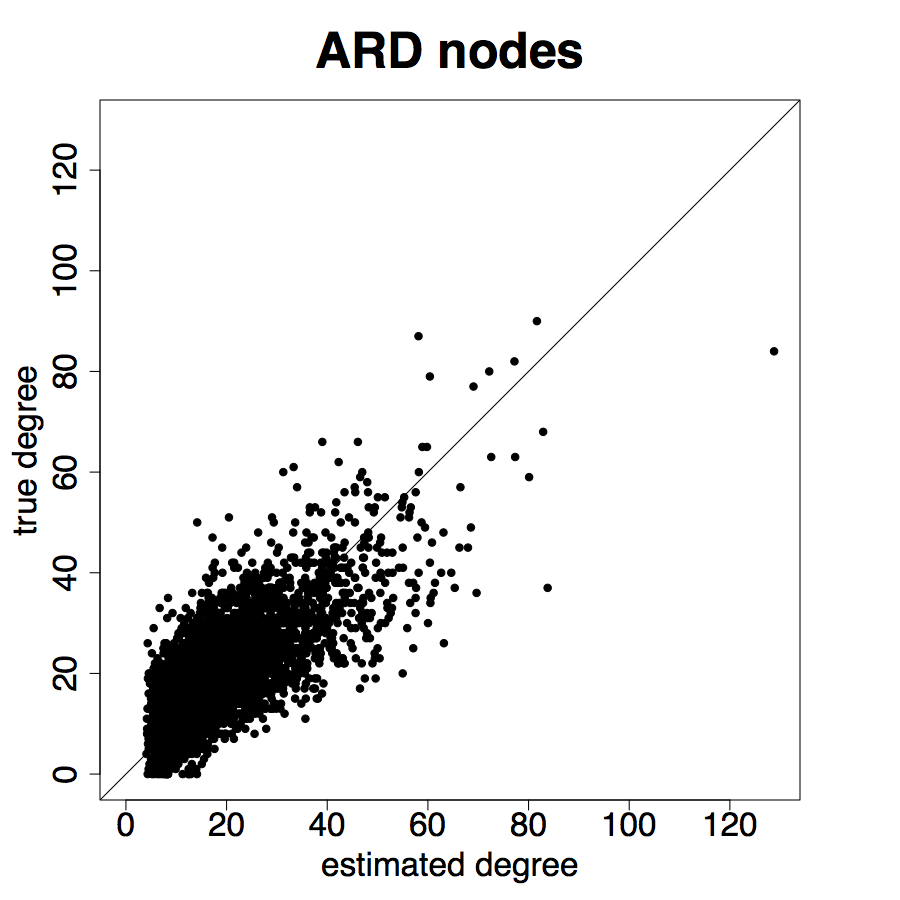}

}
\subfloat[Eigenvector Centrality]{
\includegraphics[width=.3\textwidth]{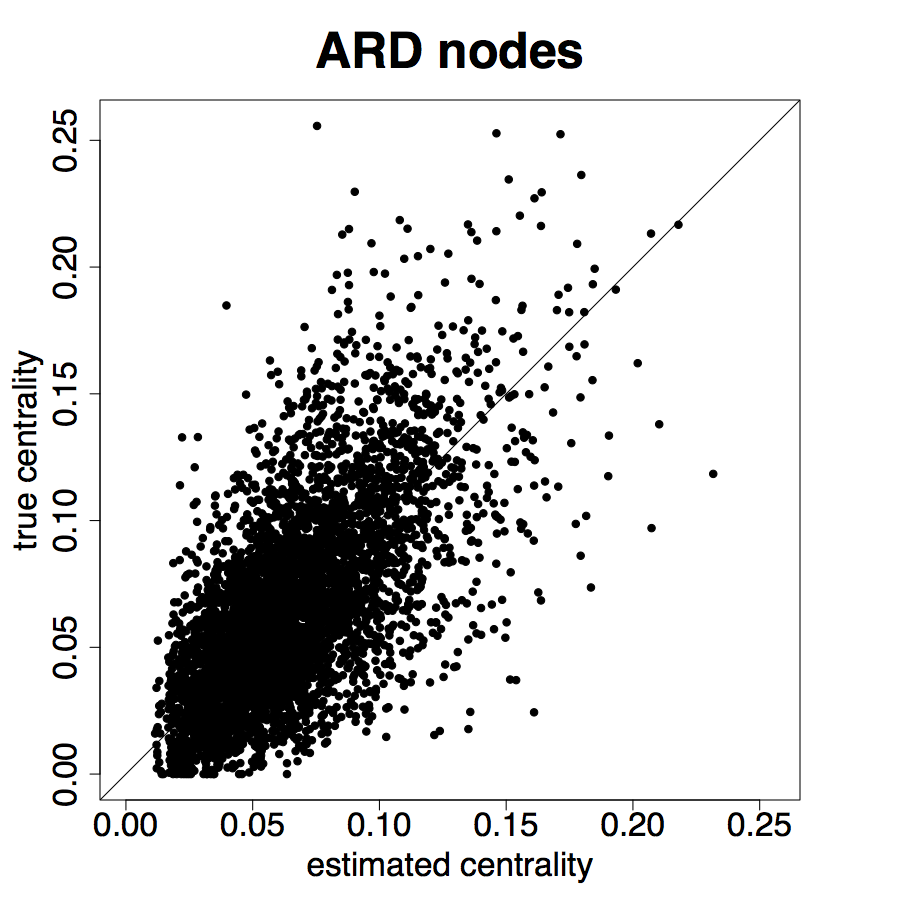}
}
\subfloat[Clustering]{
\includegraphics[width=.3\textwidth]{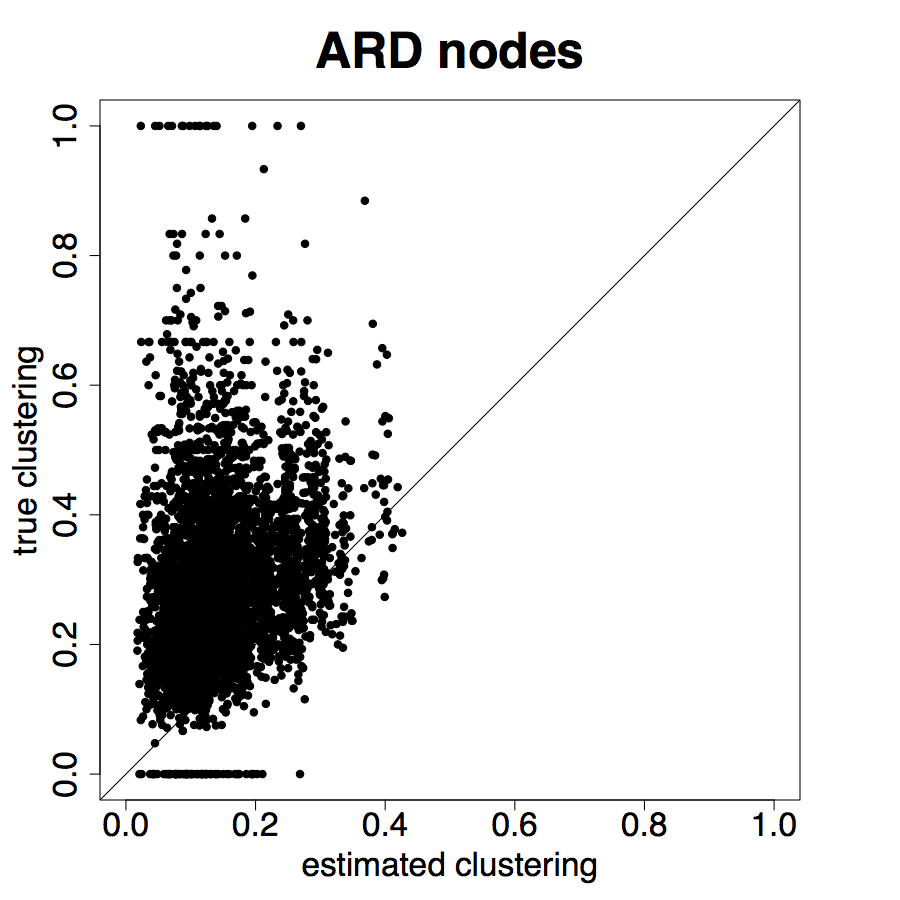}

}

\medskip

\subfloat[Degree]{
\includegraphics[width=.3\textwidth]{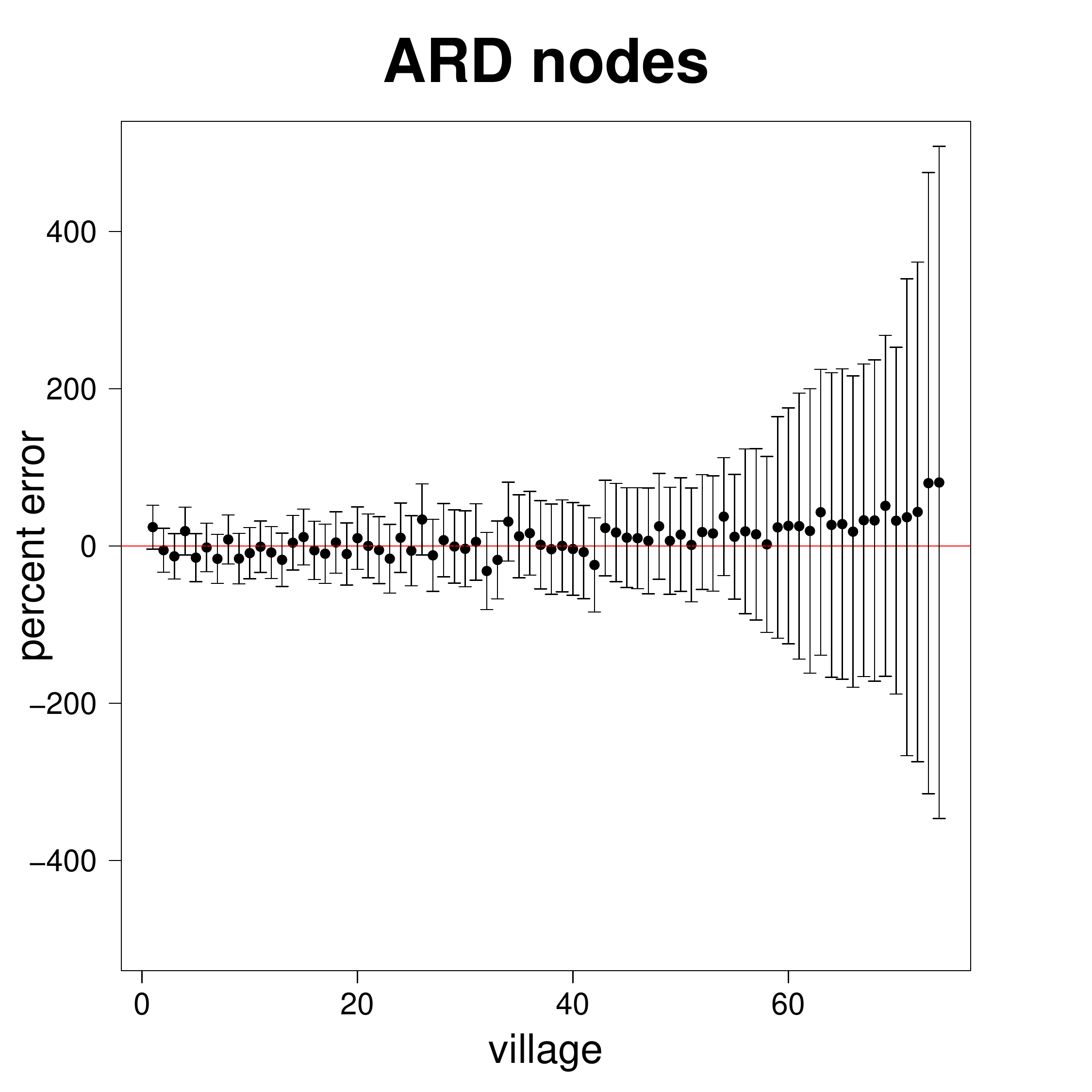}

}
\subfloat[Eigenvector Centrality]{
\includegraphics[width=.3\textwidth]{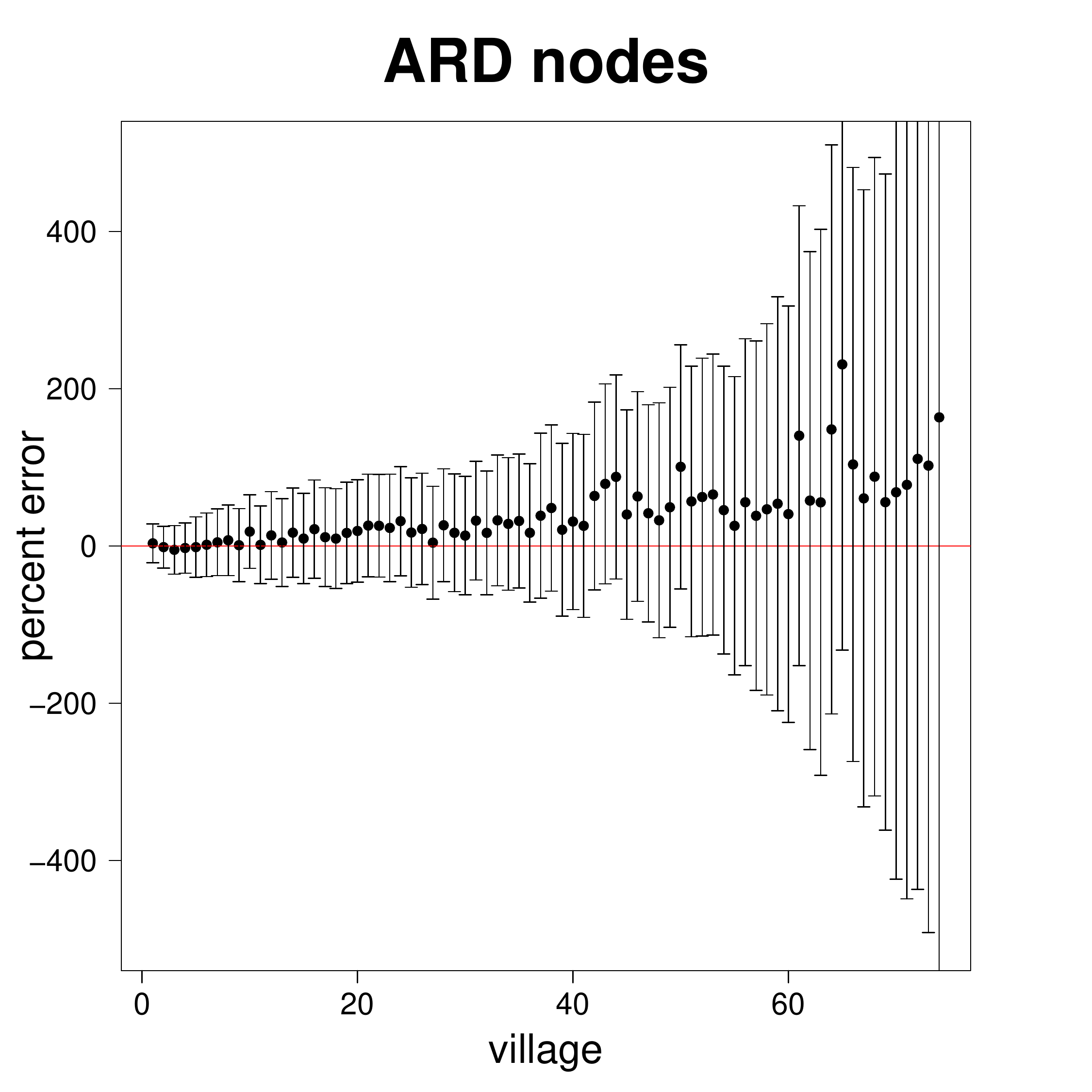}
}
\subfloat[Clustering]{
\includegraphics[width=.3\textwidth]{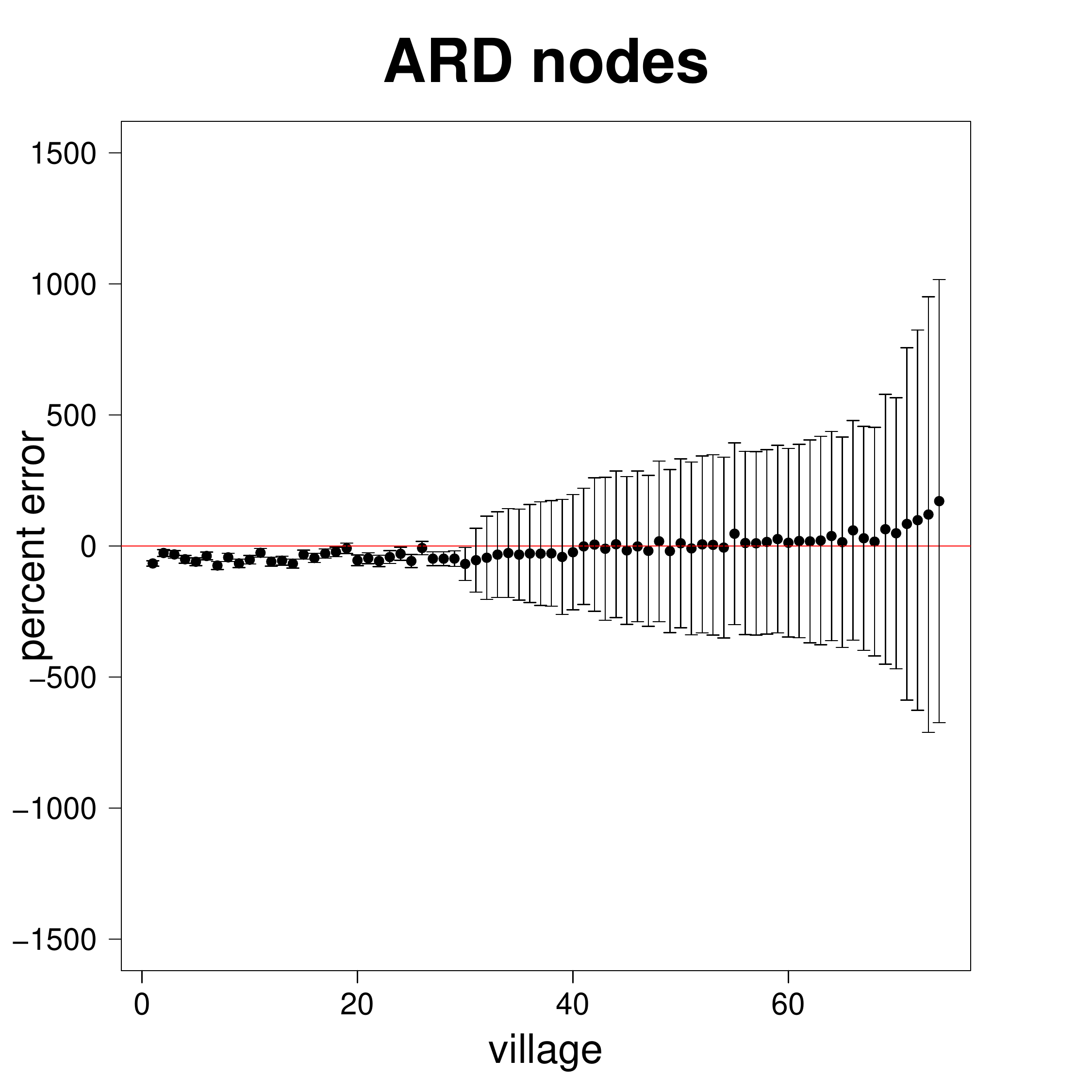}

}

\caption{Node level measures estimation for households in villages in Karnataka.  These plots show results using only nodes with ARD.  Plots in the top row show scatterplots across all villages with the estimated node level measure on the x-axis and the measure from the true underlying graph on the y-axis.  
The bottom row shows mean $ +/- $ standard deviation of percent errors of the estimated node level measure across all villages.  We see that overall there is strong correlation between the statistic on the underlying graph and the one estimated with ARD, with the exception of clustering. With clustering as a measure of triadic closure and the specified form of our generative model, it is not surprising that node level clustering estimation is a little weak.}
\label{fig:indian_ARD}
\end{figure}

Table \ref{table:heavytails_India_ARD}, Panel A presents a confusion matrix to look at the probability that a node picked by a researcher using ARD is in the top decile of the centrality distribution, which is a $47\%$ true positive rate. For comparison, this is a comparable rate to that in \cite{banerjeegossip} using the ``gossip survey'' technique to elicit nominations from the village as to who is central if the nominee is also a social or political leader in the village.

\iffalse
\begin{table}[!h]
\centering

\begin{tabular}{lllll}
\multicolumn{2}{l}{\multirow{2}{*}{}}                                            & \multicolumn{2}{l}{Estimated top decile}                 & \multirow{2}{*}{}        \\ \cline{3-4}
\multicolumn{2}{l|}{}                                                             & \multicolumn{1}{l|}{Yes}   & \multicolumn{1}{l|}{No}     &                          \\ \cline{2-5} 
\multicolumn{1}{l|}{\multirow{2}{*}{True top decile}} & \multicolumn{1}{l|}{Yes} & \multicolumn{1}{l|}{233} & \multicolumn{1}{l|}{273}   & \multicolumn{1}{l|}{506}  \\ \cline{2-5} 
\multicolumn{1}{l|}{}                                 & \multicolumn{1}{l|}{No}  & \multicolumn{1}{l|}{273}  & \multicolumn{1}{l|}{4020} & \multicolumn{1}{l|}{4293} \\ \cline{2-5} 
\multicolumn{2}{l|}{}                                                             & \multicolumn{1}{l|}{506}    & \multicolumn{1}{l|}{4293}    & \multicolumn{1}{l|}{4799} \\ \cline{3-5} 
\vspace{2mm}
\end{tabular}
\caption{Confusion matrix of top decile eigenvector centrality estimation for ARD nodes}
\label{table:heavytails_India_ARD}

\end{table}
\fi

\begin{table}[!h]
\centering
\subfloat[ARD Nodes\label{tab:confARD}]{
\scalebox{0.7}{\begin{tabular}{lllll}
\multicolumn{2}{l}{\multirow{2}{*}{}}                                            & \multicolumn{2}{l}{Estimated top decile}                 & \multirow{2}{*}{}        \\ \cline{3-4}
\multicolumn{2}{l|}{}                                                             & \multicolumn{1}{l|}{Yes}   & \multicolumn{1}{l|}{No}     &                          \\ \cline{2-5} 
\multicolumn{1}{l|}{\multirow{2}{*}{True top decile}} & \multicolumn{1}{l|}{Yes} & \multicolumn{1}{l|}{234} & \multicolumn{1}{l|}{271}   & \multicolumn{1}{l|}{505}  \\ \cline{2-5} 
\multicolumn{1}{l|}{}                                 & \multicolumn{1}{l|}{No}  & \multicolumn{1}{l|}{271}  & \multicolumn{1}{l|}{4012} & \multicolumn{1}{l|}{4283} \\ \cline{2-5} 
\multicolumn{2}{l|}{}                                                             & \multicolumn{1}{l|}{505}    & \multicolumn{1}{l|}{4283}    & \multicolumn{1}{l|}{4788} \\ \cline{3-5} 
\vspace{2mm}
\end{tabular}}
}
\subfloat[All Nodes\label{tab:confAll}]{
\scalebox{0.7}{\begin{tabular}{lllll}
\multicolumn{2}{l}{\multirow{2}{*}{}}                                            & \multicolumn{2}{l}{Estimated top decile}                 & \multirow{2}{*}{}        \\ \cline{3-4}
\multicolumn{2}{l|}{}                                                             & \multicolumn{1}{l|}{Yes}   & \multicolumn{1}{l|}{No}     &                          \\ \cline{2-5} 
\multicolumn{1}{l|}{\multirow{2}{*}{True top decile}} & \multicolumn{1}{l|}{Yes} & \multicolumn{1}{l|}{470} & \multicolumn{1}{l|}{1167}   & \multicolumn{1}{l|}{1637}  \\ \cline{2-5} 
\multicolumn{1}{l|}{}                                 & \multicolumn{1}{l|}{No}  & \multicolumn{1}{l|}{1167}  & \multicolumn{1}{l|}{13262} & \multicolumn{1}{l|}{14429} \\ \cline{2-5} 
\multicolumn{2}{l|}{}                                                             & \multicolumn{1}{l|}{1637}    & \multicolumn{1}{l|}{14429}    & \multicolumn{1}{l|}{16066} \\ \cline{3-5} 
\vspace{2mm}
\end{tabular}}
}
\caption{Confusion matrix of top decile eigenvector centrality estimation for ARD nodes (Panel A) and all nodes (Panel B)}
\label{table:heavytails_India_ARD}

\end{table}
Figure \ref{fig:indian_Full} repeats the above results for the entire sample. The results are largely similar to the ARD sample alone, though clearly there is more noise, as expected, when including the non-ARD sample.

\begin{figure}[!h]
\centering
\subfloat[Degree]{
\includegraphics[width=.3\textwidth]{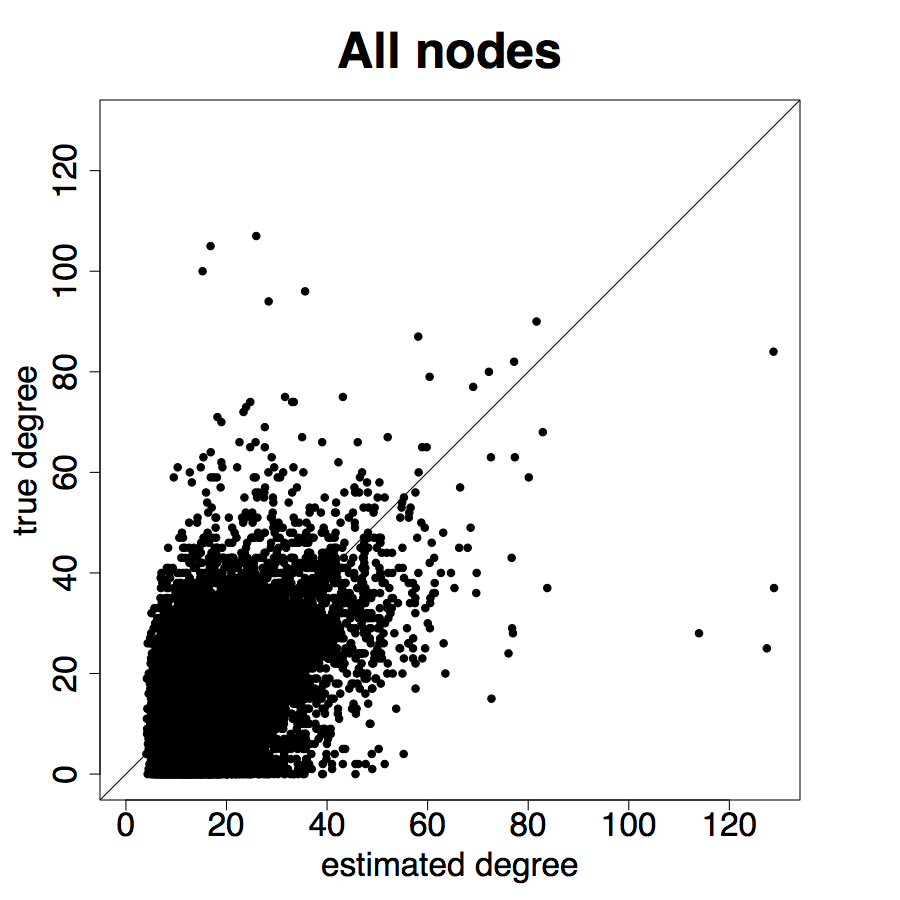}}
\subfloat[Eigenvector Centrality]{
\includegraphics[width=.3\textwidth]{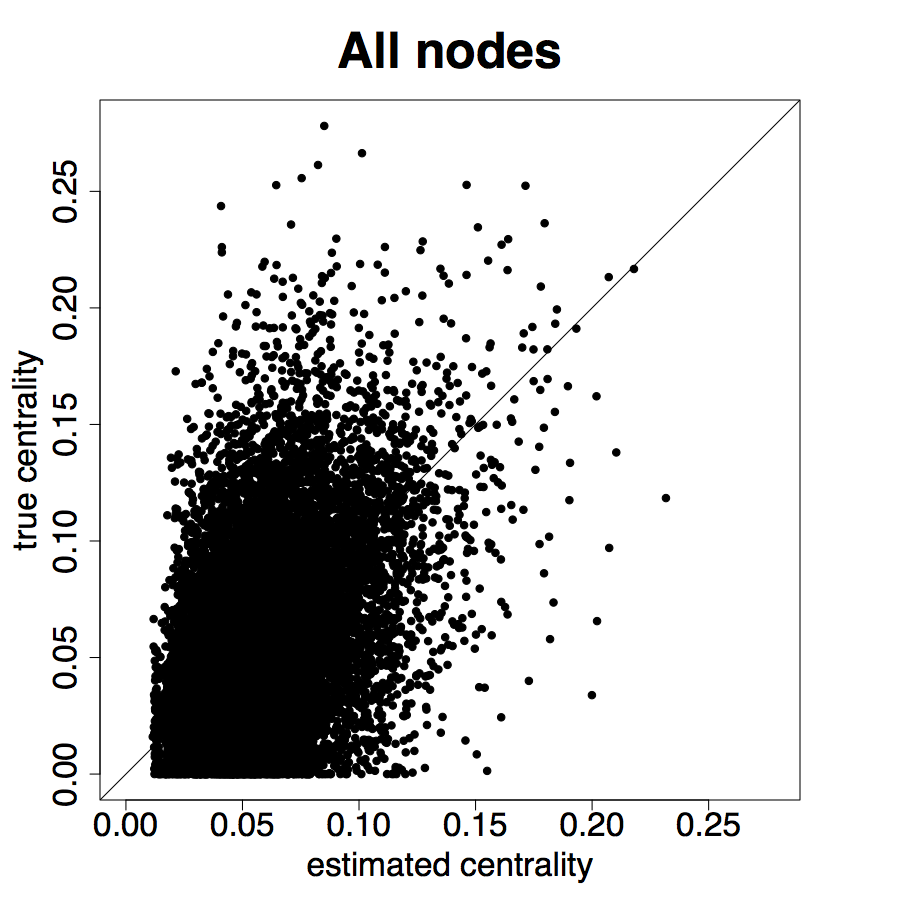}
}
\subfloat[Clustering]{
\includegraphics[width=.3\textwidth]{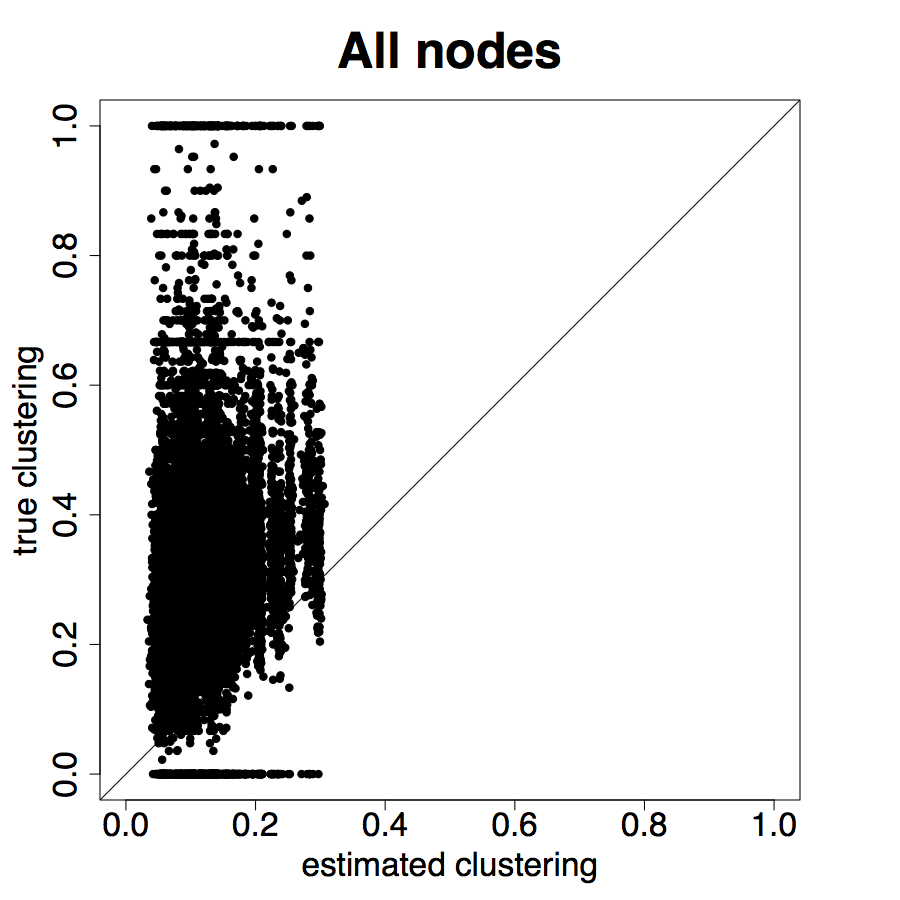}
}

\medskip

\subfloat[Degree]{
\includegraphics[width=.3\textwidth]{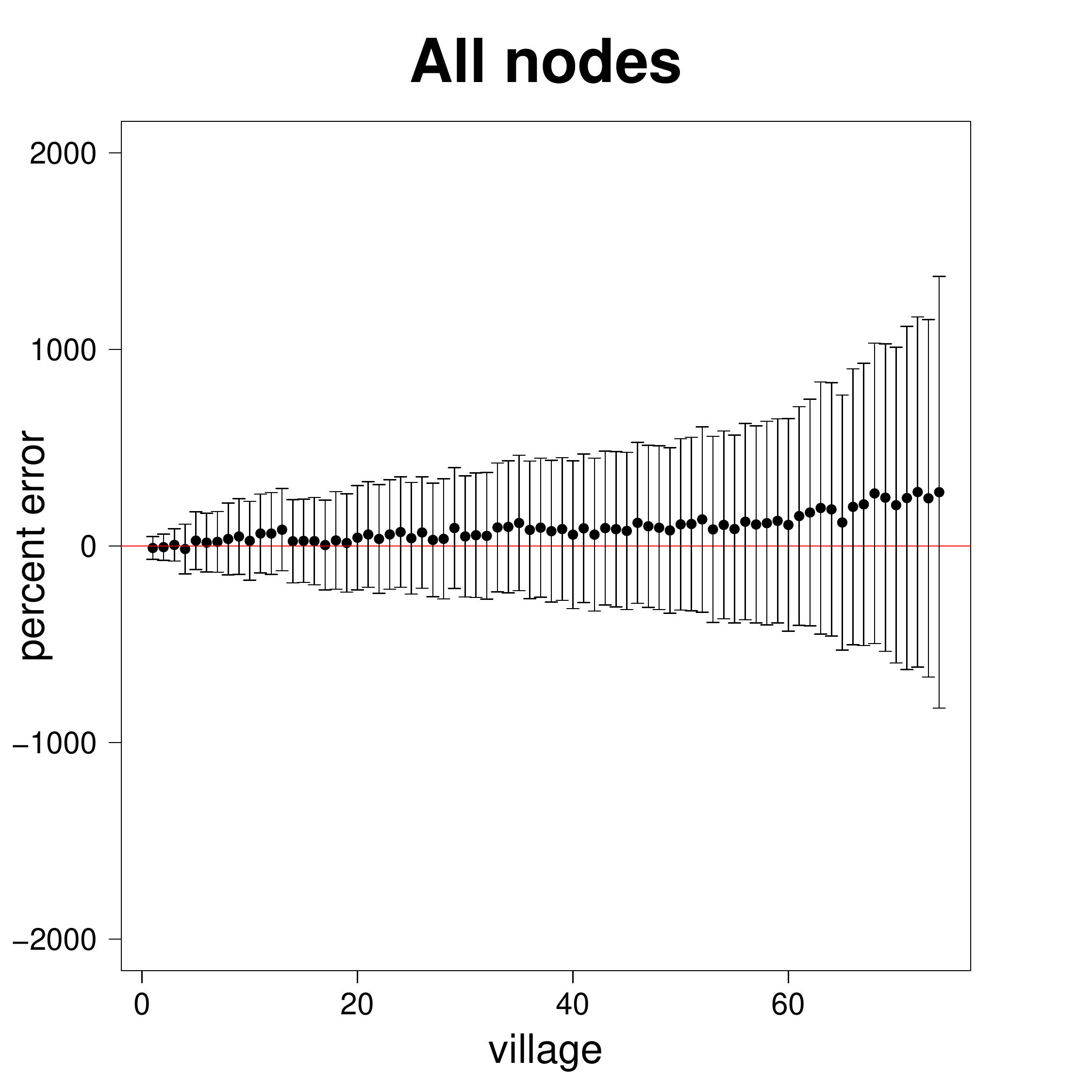}
}
\subfloat[Eigenvector Centrality]{
\includegraphics[width=.3\textwidth]{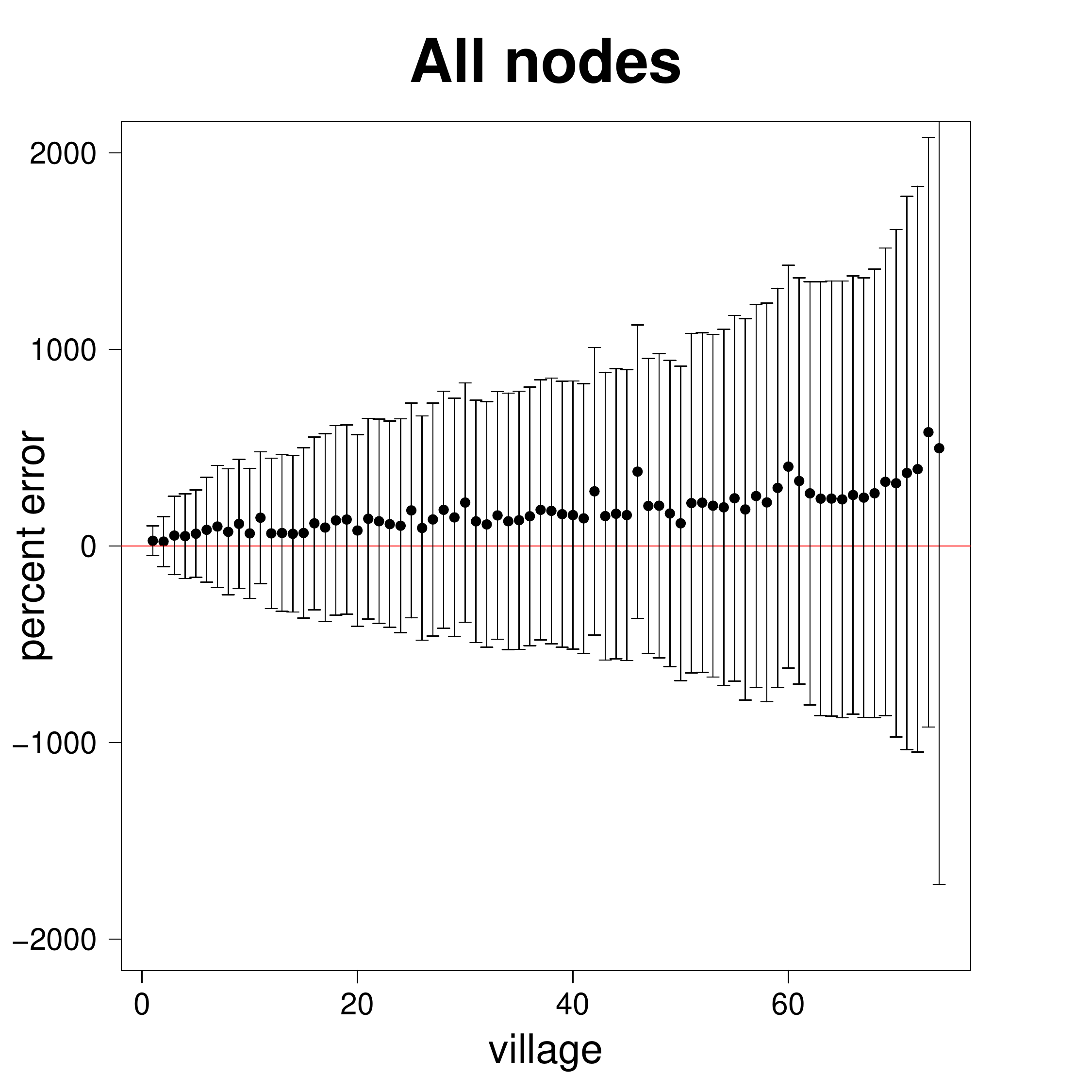}
}
\subfloat[Clustering]{
\includegraphics[width=.3\textwidth]{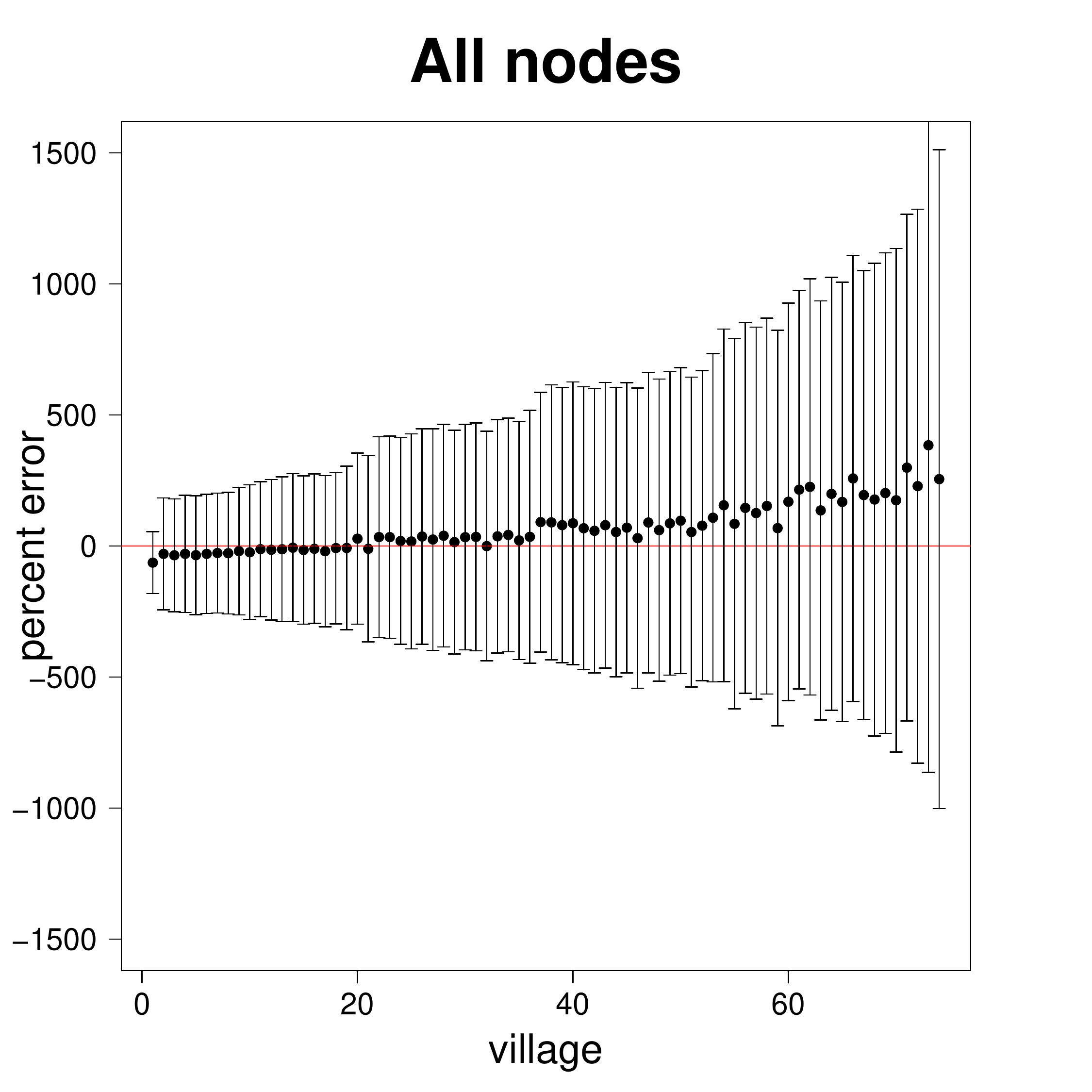}
}

\caption{Node level measures estimation for households in villages in Karnataka.  These plots show results using all nodes. Plots in the top row show scatterplots across all villages with the estimated node level measure on the x-axis and the measure from the true underlying graph on the y-axis.  The bottom row shows mean $ +/- $ standard deviation of percent errors of the estimated node level measure across all villages.  We see that overall there is weak correlation between the statistic on the underlying graph and the one estimated with ARD. The weak correlation for non-ARD nodes comes from  the noisy mapping from demographic covariates to $\nu$ and $z_i$.}
\label{fig:indian_Full}
\end{figure}

Table \ref{table:heavytails_India_ARD} Panel B presents the confusion matrix for the entire sample, with a $29\%$ true positive rate. We have a $16\%$ true positive rate even when we pick top decile centrality nodes from non-ARD sample. For context, this is about as high as a non-nominated leader \citep{banerjeegossip}, whom a microfinance institution might specifically pick to diffuse information widely.

\begin{comment}
\iffalse
\begin{table}[!h]
\centering

\begin{tabular}{lllll}
\multicolumn{2}{l}{\multirow{2}{*}{}}                                            & \multicolumn{2}{l}{Estimated top decile}                 & \multirow{2}{*}{}        \\ \cline{3-4}
\multicolumn{2}{l|}{}                                                             & \multicolumn{1}{l|}{Yes}   & \multicolumn{1}{l|}{No}     &                          \\ \cline{2-5} 
\multicolumn{1}{l|}{\multirow{2}{*}{True top decile}} & \multicolumn{1}{l|}{Yes} & \multicolumn{1}{l|}{472} & \multicolumn{1}{l|}{1182}   & \multicolumn{1}{l|}{1654}  \\ \cline{2-5} 
\multicolumn{1}{l|}{}                                 & \multicolumn{1}{l|}{No}  & \multicolumn{1}{l|}{1182}  & \multicolumn{1}{l|}{13392} & \multicolumn{1}{l|}{14574} \\ \cline{2-5} 
\multicolumn{2}{l|}{}                                                             & \multicolumn{1}{l|}{1654}    & \multicolumn{1}{l|}{14574}    & \multicolumn{1}{l|}{16228} \\ \cline{3-5} 
\vspace{2mm}
\end{tabular}
\caption{Confusion matrix of top decile eigenvector centrality estimation for all nodes}
\label{table:heavytails_India_Full}

\end{table}
\fi

\begin{table}[!h]
\centering

\begin{tabular}{lllll}
\multicolumn{2}{l}{\multirow{2}{*}{}}                                            & \multicolumn{2}{l}{Estimated top decile}                 & \multirow{2}{*}{}        \\ \cline{3-4}
\multicolumn{2}{l|}{}                                                             & \multicolumn{1}{l|}{Yes}   & \multicolumn{1}{l|}{No}     &                          \\ \cline{2-5} 
\multicolumn{1}{l|}{\multirow{2}{*}{True top decile}} & \multicolumn{1}{l|}{Yes} & \multicolumn{1}{l|}{470} & \multicolumn{1}{l|}{1167}   & \multicolumn{1}{l|}{1637}  \\ \cline{2-5} 
\multicolumn{1}{l|}{}                                 & \multicolumn{1}{l|}{No}  & \multicolumn{1}{l|}{1167}  & \multicolumn{1}{l|}{13262} & \multicolumn{1}{l|}{14429} \\ \cline{2-5} 
\multicolumn{2}{l|}{}                                                             & \multicolumn{1}{l|}{1637}    & \multicolumn{1}{l|}{14429}    & \multicolumn{1}{l|}{16066} \\ \cline{3-5} 
\vspace{2mm}
\end{tabular}
\caption{Confusion matrix of top decile eigenvector centrality estimation for all nodes}
\label{table:heavytails_India_Full}

\end{table}
\end{comment}

\subsection{Discussion}

Taken together, our results suggest that ARD with the latent distance model and the procedure proposed here is a useful tool because the researcher will have reasonable estimates of a number of network features. As is unsurprising for a model of the form specified here, it is a little bit weak when it comes to clustering.

\section{Empirical Applications}\label{sec:application}

We now present two empirical applications that use ARD techniques. They build upon prior work by the authors, in part. The goal is to illustrate here that a researcher could have done this sort of economic analysis using ARD only, equipped with our method.

The first example looks at what would have happened if the researchers had obtained ARD for an experiment on savings and reputation. The second example actually looks at a setting where survey ARD was collected.

\subsection{Encouraging savings behavior in rural Karnataka}

\

Our first application builds on \cite{SavingsMonitors}. The
authors study social reputation through the lens of savings. In a
field experiment, savers set 6-month targets for themselves. They
do so knowing they may be assigned a \textquotedblleft monitor,\textquotedblright{}
a villager who will be notified biweekly about their progress. Progressing
towards a self-set target exhibits more responsibility, providing
an avenue for the saver to build reputation with the monitor and others
in the community. In 30 villages, monitors are randomly assigned to
a subset of savers. This generates variation in the position of the
monitor in the network. Because the monitor is free to talk to others,
information about the saver\textquoteright s progress and reputation
may spread. A signaling model on a network guides the analysis: if
the saver is more central, information can spread more widely, and
if the saver is more proximate to the monitor, information likely
spreads to those with whom the saver is more likely to interact in
the future. For saver $i$ and monitor $j$, the model shows that the network matters for signaling through the quantity\footnote{Formally, \cite{SavingsMonitors} show \[
q_{ij} = \frac{1}{n}\sum_k p_{jk} \sum_k p_{ik}+ n \cdot \cov(p_{\cdot i}. p_{\cdot j})
\]
Here $p_{ij} \propto \left[\sum_{t=1}^T (\theta g)^t\right]$ is the probability that a unit of information that begins with $i$ is sent to $j$, where transmission across each link happens with probability $\theta$. \cite{banerjeegossip} shows that for sufficiently high $T$, $\sum_k p_{jk}$ converges to the eigenvector centrality of $j$. \cite{SavingsMonitors} shows that in equilibrium, only when $q_{ij}$ is sufficiently high does the saver actually save.
}
\[
q_{ij} = \frac{1}{n}\text{Monitor Centrality}\times \text{Saver Centrality} + n \cdot \text{Proximity of Saver-Monitor}.
\]

\cite{SavingsMonitors} have near-full network data (from the \cite{banerjeegossip} sample), allowing them to calculate $q_{i,j}$. They find that randomly-selected monitors increase household savings across all accounts by 35\%. Consistent with the model, a one-standard deviation increase in $q_{ij}$ 
leads to an additional 29.6\% increase in total savings. Additionally, 15 months after the end of our savings period, they show that reputational information
spread: randomly selected individuals surveyed about savers in the
study were more likely to have updated correctly about a saver\textquoteright s
responsibility when the saver was randomly assigned a more central
monitor. Moreover, the savings increase persisted, and in the intervening 15 months, monitored savers were better able to cope with shocks. 

\begin{table}[ht!]
\centering
\caption{Log total savings across all household accounts regressed on monitor signaling value\label{tab:SM_rev}}
\scalebox{0.65}{\begin{threeparttable} 
\begin{tabular}{lcc} \hline
 & (1) & (2) \\
 & Log Total Ending Savings & Log Total Ending Savings \\ \hline
 &  &  \\
Signaling value of monitor with full network data ($q_{ij}$), Standardized & 0.248 &  \\
 & (0.0931) &  \\
Predicted signaling value of monitor with ARD ($q_{ij}$), Standardized &  & 0.181 \\
 &  & (0.0888) \\
 &  &  \\
Observations & 422 & 422 \\
 Number of villages & 30 & 30 \\ \hline
%\multicolumn{3}{c}{ Robust standard errors in parentheses} \\
\end{tabular}

\begin{tablenotes} Notes: Standard deviation of village-level block bootstrap in parentheses.
\end{tablenotes} 
\end{threeparttable}} 
\end{table}

How would our conclusions have changed if \cite{SavingsMonitors} only had access to ARD and not the full network maps? Table \ref{tab:SM_rev} presents regressions of the log of total household savings across all household accounts against the model-based measure of how much signaling value the monitor provides the saver, $q_{ij}$. We construct ARD estimates by taking samples from the posterior distribution and then using the average estimated $q_{ij}$ across those posterior draws.  In the experiment we showed that a one standard deviation increase in $q_{ij}$ due to random assignment of the monitor led to a 24.8\% increase in total household savings (column 1). In column 2 we show that even if we did not have the network data, if we had ARD alone for a 30\% sample, we would have had a very similar conclusion, inferring that a one standard deviation increase in predicted $q_{ij}$ corresponds to a 18.1\% increase in total household savings across all accounts. Said differently, we could have used ARD questions to easily pick good monitor-saver pairs.  

\begin{table}[ht!]
\centering
\caption{Beliefs about savers and monitor centrality\label{tab:SM_rev_rep}}
\scalebox{0.65}{\begin{threeparttable} 
\begin{tabular}{lcc} \hline
 & (1) & (2) \\
 & Belief about saver's responsibility & Belief about saver's responsibility \\ \hline
 &  &  \\
Monitor centrality with full network data, Standardized & 0.0500 &  \\
 & (0.0142) &  \\
Predicted monitor centrality with ARD, Standardized &  & 0.0340 \\
 &  & (0.0161) \\
 &  &  \\
Observations & 4,743 & 4,743 \\
 Number of villages & 30 & 30 \\ \hline
%\multicolumn{3}{c}{ Robust standard errors in parentheses} \\
\end{tabular}

\begin{tablenotes} Notes: Standard deviation of village-level block bootstrap in parentheses. ``Responsibility'' is constructed as 1(Saver reached goal)*1(Respondent indicates saver is good or very good at meeting goals) + (1-1(Saver reached goal))*1(Respondent indicates saver is mediocre, bad or very bad at meeting goals).  See~\cite{SavingsMonitors} for further details.
\end{tablenotes} 
\end{threeparttable}} 
\end{table}
As a further examination of our approach, we repeat the same exercise using another specification from~\cite{SavingsMonitors}.  Table~\ref{tab:SM_rev_rep} shows the results of a regression where we the outcome is the respondent's belief about the saver's responsibility and the regressor is the monitor's centrality.  Observing the complete network, a unit increase in the monitor's centrality corresponds to about a 5\% increase respondent's belief about saver responsibility.  Using ARD, we would estimate an increase of about 3.4\%, leading (as in the previous example) to the same substantive conclusions.

This application also gives us an opportunity visualize how network characteristics map to the latent space representation. In Figure \ref{fig:SM_pics}, we plot the locations and concentrations of the ARD traits for four sample villages that were part of the \cite{SavingsMonitors} savings study. We then overlay the positions in the latent space of the individuals participating in the experiment as monitors, depicted as rings.  The size of the ring depicts the monitor's eigenvector centrality.  Finally, we color the monitor rings to indicate the savings performance of the saver to whom each monitor was randomly allocated -- darker shades depict higher levels of savings.

\begin{figure}[!h]
\centering
\subfloat[Village 13]{
\includegraphics[width=.45\textwidth]{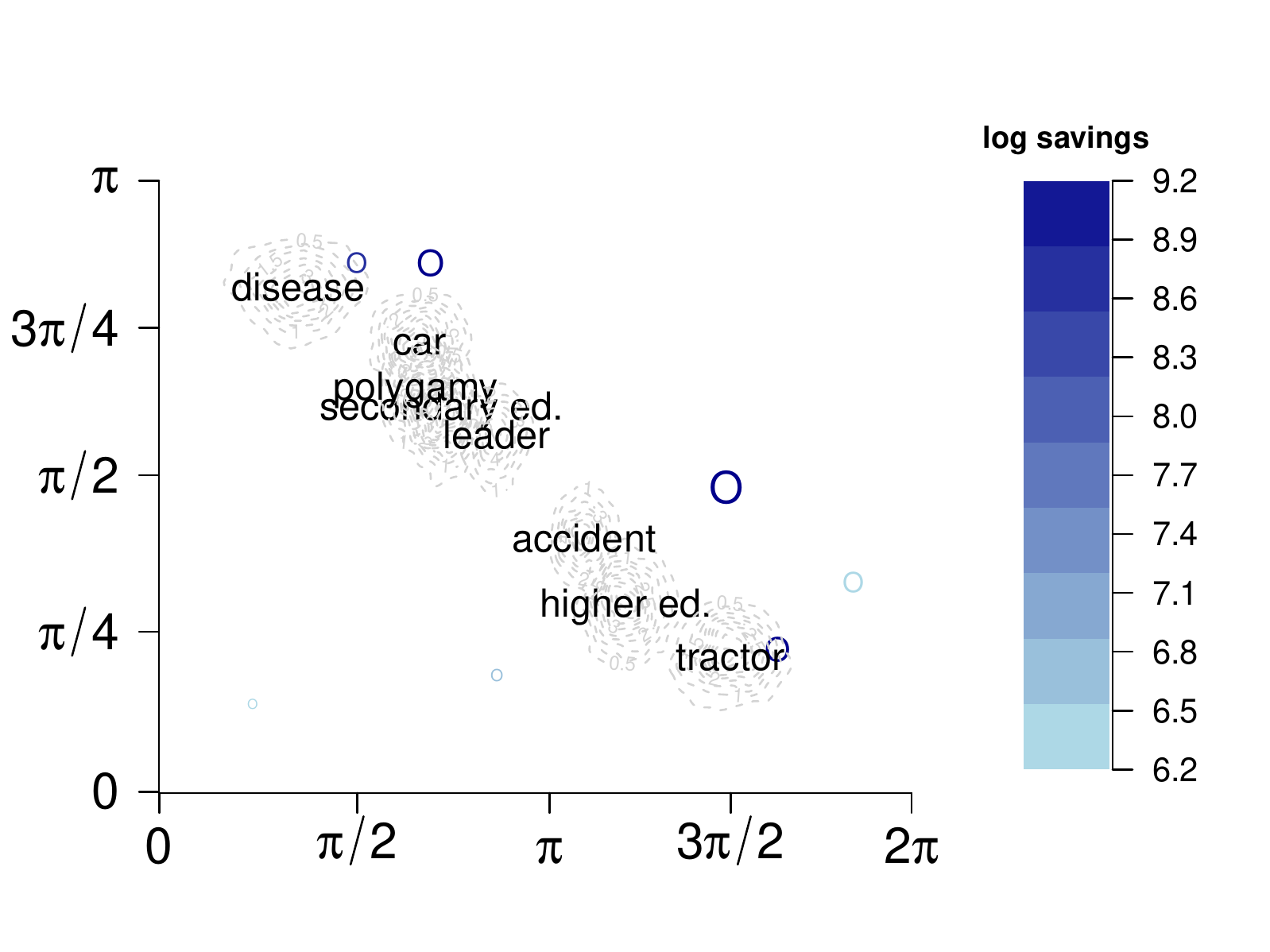}}
\subfloat[Village 22]{
\includegraphics[width=.45\textwidth]{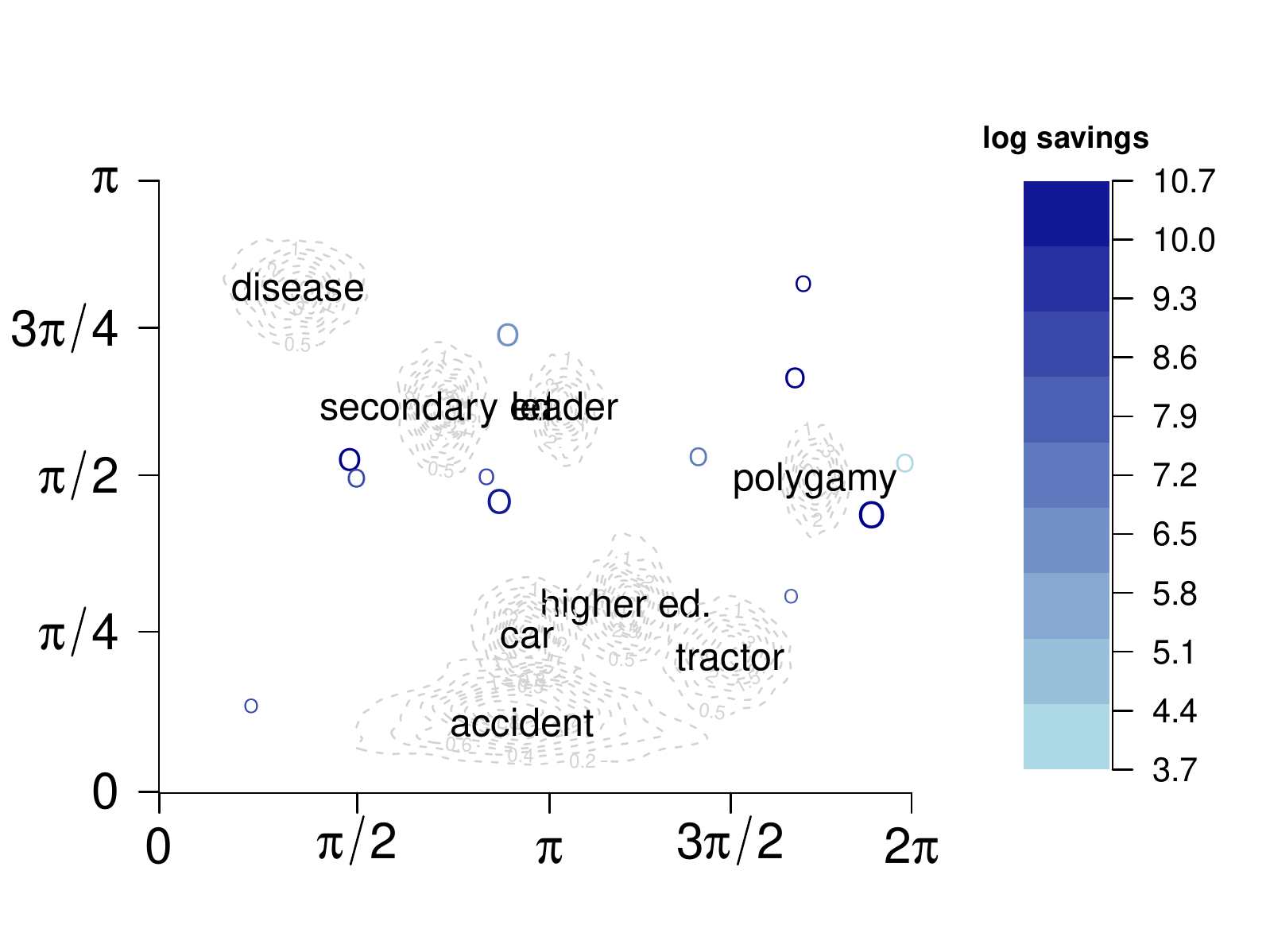}
}
\smallskip

\subfloat[Village 25]{
\includegraphics[width=.45\textwidth]{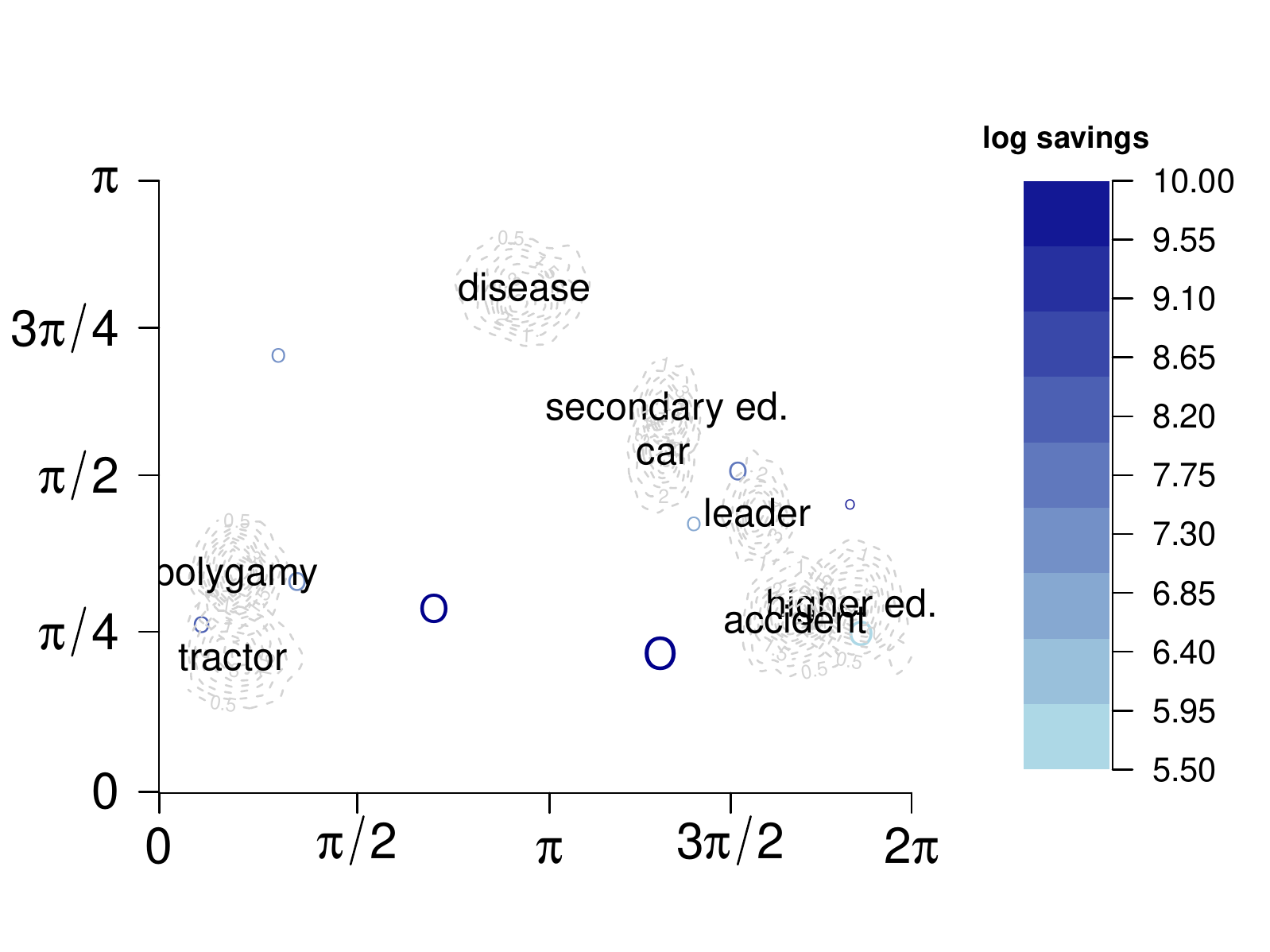}
}
\subfloat[Village 54]{
\includegraphics[width=.45\textwidth]{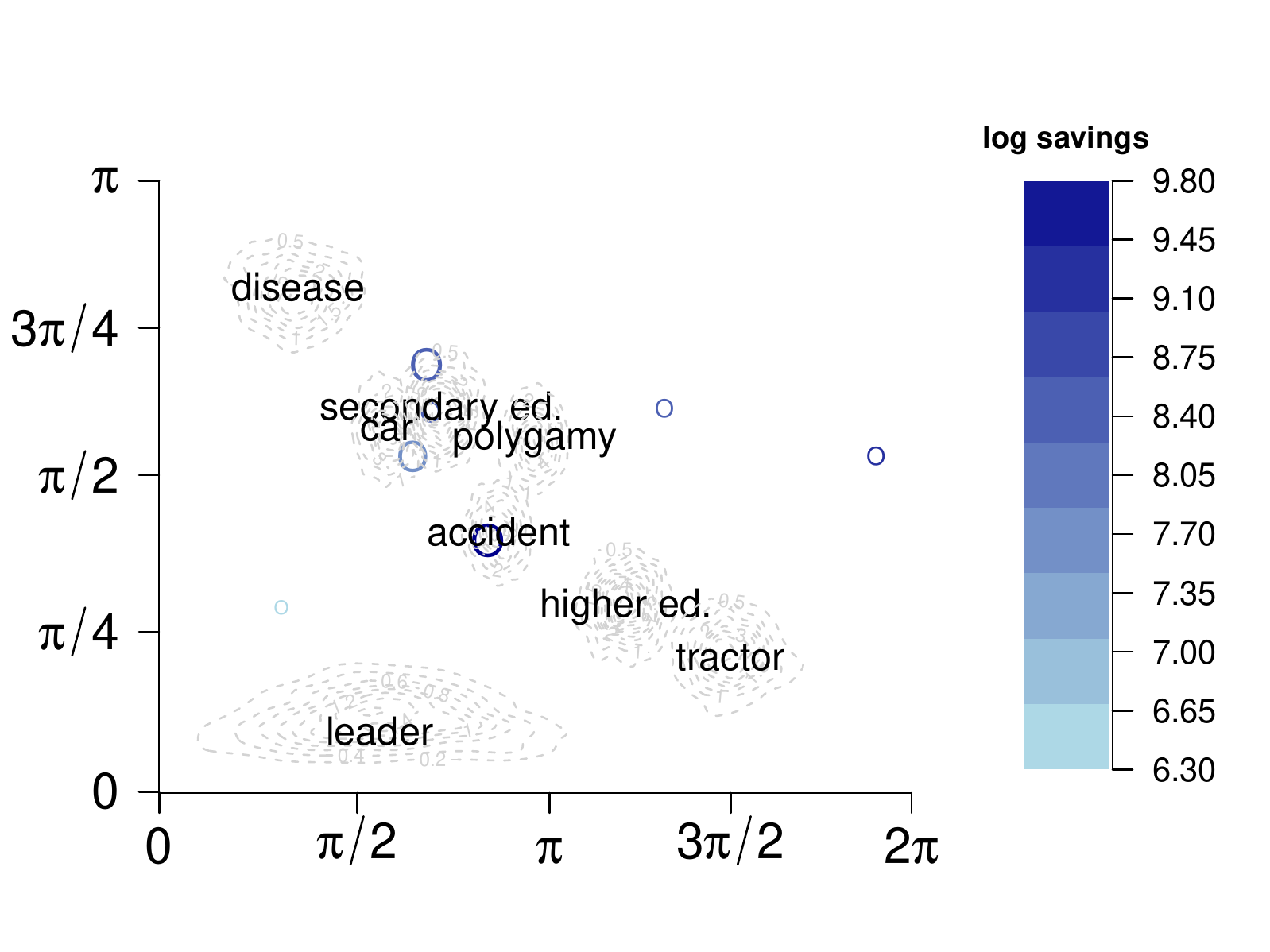}
}

\caption{Sample latent locations of randomly assigned monitors by centrality and the savings of the their respective savers. This illustrates the pattern that more central monitors corresponded to higher levels of savings.
}
\label{fig:SM_pics}
\end{figure}

As \cite{SavingsMonitors} find, there appears to be a relationship between monitor centrality (here denoted by larger rings) and the saver's performance (here given by darker colors). This is consistent with the theory that more central monitors under the signaling model generate larger incentives for the saver to save. Furthermore, the visualization demonstrates that the larger rings tend to be located closer to the centers of traits or between centers of traits. That is, they are closer to the center of masses of clusters of types of individuals. This makes sense as this means that the latent location of a central monitor will tend to be closer to many more other individuals, ceteris paribus.

\subsection{Impact of microfinance in Hyderabad}\label{sec:hyderabad}

\

The goal of our final example is to demonstrate to the reader a context in which we collected and use only ARD survey questions in our analysis. We first demonstrate that the researcher could have obtained the same conclusions using the ARD instead of the network data that was collected in this study. But because the network data was incomplete (specifically the authors only measured degree -- the number of links but not the identities -- and support -- how many links had a friend in common), the researchers could not ask how their intervention impacted the network more generally. Using ARD techniques, we show what conclusions the researchers could have learned about how the network was affected by the intervention only using the ARD survey data and estimates from the surveys of each neighborhood's average degree.

This example concerns the introduction of microfinance in Hyderabad, India. A recent literature has examined the effects that introducing microfinance to previously unbanked communities can have ambiguous and heterogenous effects on the underlying social and economic networks that facilitate informal risk-sharing. On the one hand, as in \cite{feigenberg2013economic}, links may be built between microfinance members and there may be an increased incentive to build links to relend \citep{kinnan2012kinship}. On the other hand, the fact that individuals who have now become banked have less of a need to rely on informal insurance may nudge them to break links with others, and this can have local or even general equilibrium effects on the network, which can reduce density and increase paths among all nodes \citep{banerjeecdj2016change}.

In \cite{banerjee2015miracle}, the authors study a randomized controlled trial where microfinance was introduced randomly
to 52 out of 104 neighborhoods in Hyderabad. \cite*{banerjeebdk2016} look at longer run outcomes 6 years after the intervention. This example is useful for two reasons. First, it is an urban setting where the researchers have no hope of obtaining full network data.\footnote{We thank an anonymous referee for noting that we could also tweak our surveys in urban settings to measure ARD responses separately within the respondent's own neighborhood and also across neighborhoods.  While mapping an entire urban space likely requires an infeasible number of surveys, putting some structure on relationships within and across neighborhoods might allow for better urban network maps. We leave such an application to future work.}
Second, it shows how we may measure the effect of economic interventions
on social network structure, as predicted by theory, despite not having
network data. 

In the original paper, \cite{banerjeebdk2016} measure each node's within-neighborhood degree and support, defined as the fraction of links between the respondent and a connection such that there exists a third person who is linked to both nodes in the pair. They find that both degree and support decrease with the treatment. Note that they did not get any subgraph data since the links were not matched to a household listing: degree and support can be thought of as just two numbers. 

\cite{banerjeebdk2016} also collected ARD data, which we use here. In particular, a sample
of approximately 55 nodes in every neighborhood was surveyed and demographic covariates as well as ARD were collected for this entire sample. As before, we fit a network formation model using the ARD data and this sample of nodes.\footnote{In this application we use the survey responses for degree and input each graph's estimated average degree directly into the model.} A complete list of ARD questions used in this survey is in Appendix~\ref{sec:ardqlist}. 

We explore whether microfinance affects network structure by regressing

\[
y_{v}\left(g\right)=\alpha+\beta\text{Treatment}_{v}+\epsilon_{v}
\]
where $v$ indexes neighborhood and $\text{Treatment}_{v}$ is a dummy for treatment neighborhoods. Our outcome variable $y_{v}\left(g\right)$ of interest is the rate of support.

Theory is silent on whether density should increase or reduce,
whether triadic closure (clustering or support) should increase or
reduce, which can depend on a number of things: for instance, whether relending or autarky forces affect the incentives to maintain risk-sharing links \citep{jacksonrt2012}.

\

\begin{table}[!h]
\centering
\caption{Network statistics regressed on treatment \label{tab:Tables20}}
\scalebox{0.7}{\begin{threeparttable} 
\begin{tabular}{lccc} \hline
 & (1) & (2) & (3) \\
 & Percent Supported (Data) & Percent Supported (Estimate) & Graph-level Proximity (Estimate) \\ \hline
 &  &  &  \\
Treatment Neighborhood & -0.0655 & -0.0892 & -0.0463 \\
 & (0.0318) & (0.0532) & (0.0144) \\
 &  &  &  \\
Constant & 0.4427 & 0.4404 & 0.4485 \\
 & (0.0644) & (0.0935) & (0.0096) \\
 &  &  &  \\
 Mean of the response variable&  0.3880&0.3129&0.4238\\
  &  &  &  \\
Observations & 3,514 & 3,598 & 62 \\ \hline
\end{tabular}

\begin{tablenotes} Notes: Standard deviation of village-level block bootstrap in parentheses. Sample includes neighborhoods with estimated sampling rate $\geq 20 \%$. For large number of excluded low sampling rate neighborhoods, the population count is top-coded at 500 households. For these very large neighborhoods, we calculate the sampling rate using a population of 500. The outcome variable of columns 1 and 2 is the share of links that are supported and in column 3 it is the average proximity in the graph.
\end{tablenotes} 
\end{threeparttable}} 
\end{table}

Table \ref{tab:Tables20} reports the regression results.  Column 1 replicates the specification from \cite{banerjeebdk2016} that past exposure decreased support.  Column 2 presents the same regression, but using estimated support.  The estimates of the treatment effects along with the levels of support (the regression constant) are quite similar.  We view this exercise as a ``validation'' of the ARD-based model.  Given that the estimated treatment effect looks quite similar using the different support measures, in Column 3, we present the results of a graph-level regression, using proximity (the average inverse path length in the network) as the outcome variable.  Note that it was not possible for the authors to collect such a statistic using their surveys.  We find that estimated proximity decreases, meaning that the decline in links due to microfinance exposure lead to larger average distances between households in the community. This exercise demonstrates how our method may be useful to researchers seeking to study the evolution of networks, without requiring full network data.

\

\section{Cost Savings Using ARD}\label{sec:cost}

We have demonstrated that our approach for estimating network statistics has the potential to serve as a replacement for the collection of full network data. Namely, we show above that we can replicate the findings of \cite{SavingsMonitors} and \cite{banerjeebdk2016} with our ARD-based estimates alone. While it is always preferable to collect the underlying graph data, one important benefit from ARD is that it is substantially easier and cheaper to collect. 

Table \ref{tab:CostTab} presents a comparison of the costs associated with a full network survey with those of an ARD exercise for a target sample of 120 villages. Panel A summarizes the major differences in the budget assumptions between the two methods. We assume that a census is conducted in both methodologies, though household members need only be enumerated in the full network surveys.  We also assume that the full network data is collected from 100\% of households, while the ARD protocol samples from 30\% of households. Importantly, the ARD method does not require the time consuming matching of a household's reported links with the enumerated census. Given these assumptions, Panel B of Table \ref{tab:CostTab} shows that ARD is substantially cheaper, costing approximately 80\% less than the full network surveys.

\begin{table}[!h]
\centering
\caption{Cost Comparison: Full Network vs. ARD Surveys \label{tab:CostTab}}

\begin{threeparttable}
\includegraphics[scale = 0.75]{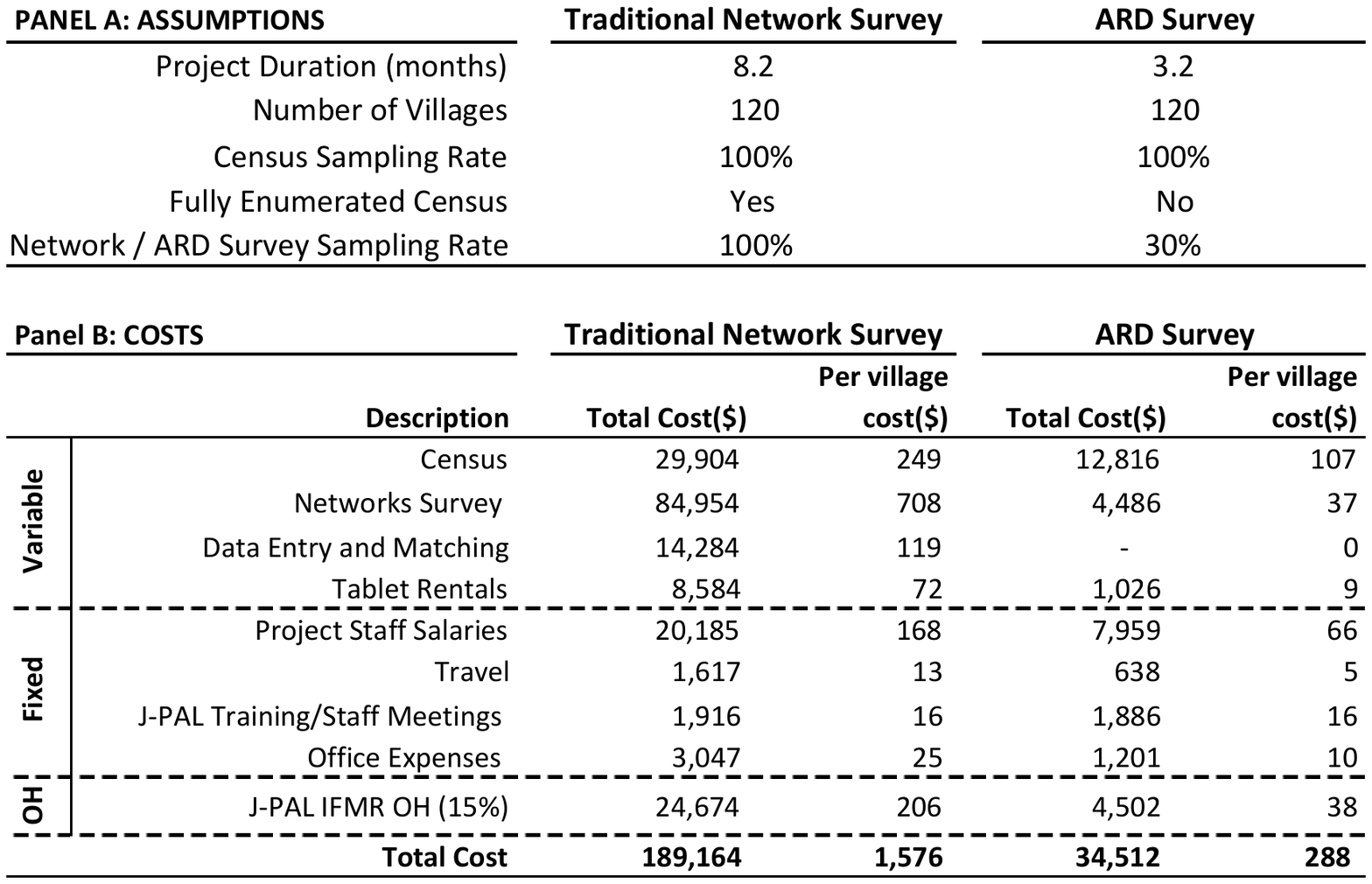}
\begin{tablenotes} Notes: This cost comparison was prepared by J-PAL South Asia, the organization that implemented the network surveys for \cite{banerjeecdj2013} in Karnataka, India. 
\end{tablenotes} 
\end{threeparttable}
\end{table}

In Figure \ref{fig:CostFig}, we show that these dramatic cost reductions are not only a bi-product of the 30\% sampling rate assumption. Even with 100\% sampling, ARD surveys are still over 70\% cheaper than the full network alternative. This sample budget highlights that using ARD estimates could indeed expand the feasibility of empirical network research.

\begin{figure}[!h]
\centering
\includegraphics[scale = 0.4]{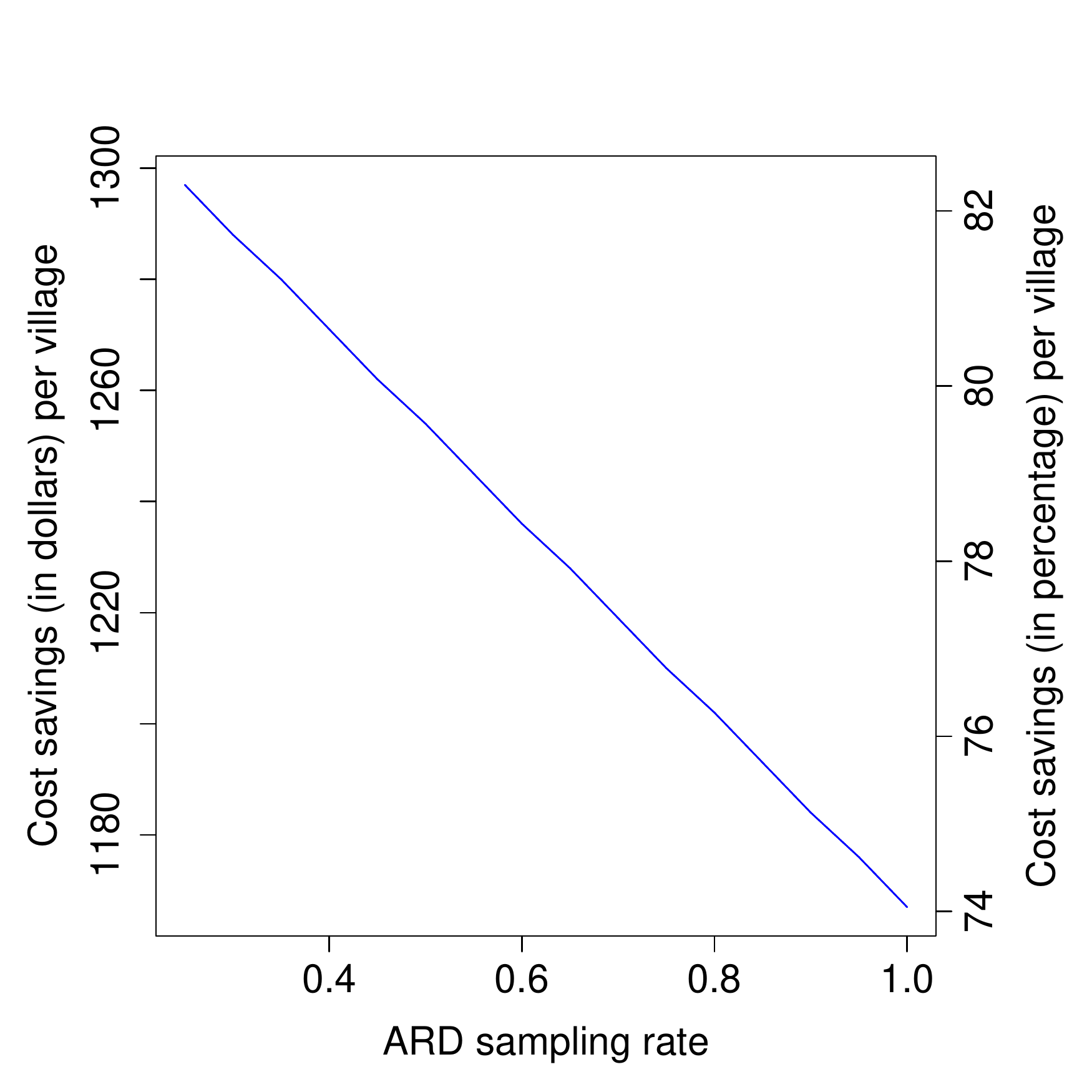}

\caption{Cost Savings of ARD vs. Full Network Surveys by ARD Sampling Rate}
\label{fig:CostFig}
\end{figure}

It should go without saying that should a researcher be able to afford it, full network data is the gold-standard, and even partial network data could help being used in conjunction with ARD. The findings of this paper suggests that the \cite{hoff2008modeling} model is good enough at capturing relevant features of the network. Therefore, while the network formation model can be estimated using ARD, certainly having more information about a subgraph will aid the researcher in both estimating the network formation model and integrating over the missing data in order to recover features of interest to the researcher as argued in  \cite{chandrasekharl2016}.

\section{Conclusion}\label{sec:conclusion}

We have shown that by adding a very simple set of questions to standard survey instruments, researchers and policymakers can retrieve powerful information about the underlying social network structure. This information is easy to obtain in standard instruments and therefore can be employed in a cost-effective way.

There is a prior literature as to whether a researcher could simply ask individuals from the network. For instance, \cite{banerjeegossip} shows that simply asking "gossip" questions can be used to identify eigenvector central individuals. However, there are no results for other features such as  such as clustering, path length, cut in the network, and so on.\footnote{Note that part of the insight in \cite{banerjeegossip} was to realize that eigenvector seems complicated but if you know who you hear gossip about frequently, this mechanically corresponds to central individuals. This is a unique trait for  centrality, not all statistics.} Further, we have reason to believe this sort of procedure likely would not work for other network features. For instance \cite{friedkin1983}, \cite{krackhardt1987,krackhardt2014}, among others in sociology, and also our own work in \citet*{bct}, all document such biases. They show that network knowledge decays in distance, that degrees are systematically misestimated and that individuals are more likely to think their friends are friends, among other things.

We suggest a simple blueprint for researchers and policymakers in the field to obtain network data. If possible, researchers should add five to ten ARD questions to the census as a standard demographic variable that would be recorded just like geographic data. If not, then researchers should at least ask ARD questions for a sample of respondents. We discuss how one might collect ARD data for use in our model in Appendix \ref{sec:blueprint}.

There are several avenues for future research. The first would involve optimizing and standardizing ARD question design. What sorts of ARD questions should be asked? What would provide the most information to make better inferences about network structure? This has been in part the subject of work by, for example, \cite{feehan2016quantity} in the sociology and epidemiology literatures. Another avenue for future work builds upon the recent interest in trying to control for unobservables that both drive network structure and outcome variables of interest, the ARD approach might allow us to identify and control for latent variables.  Yet another direction would provide guidelines for picking the dimension of the latent space.  In particular, we could use fraction of overlap between traits to restrict the set of feasible latent dimensions.\footnote{To see the intuition for this, consider the case where there are three groups A, B, and C.  Each of these groups would need to be placed on a sphere in such a way as to reflect the overlaps between individuals in one or more of the groups (a person who is a member of A and B should go in the disc of both groups, for example.  The configuration implied by these overlaps may not be possible in all dimensions.~\citet{fosdick2019multiresolution} point out a similar restriction arising because of the triangle inequality for latent spaces on the plane.}

A final avenue for future research involves looking beyond the survey network setting. Predominantly, the literature on ARD has been focused on surveyed social networks. However, we note here that our entire framework readily extends to any network context where the researchers naturally have aggregated data about links between nodes and categories of other nodes. To see this, consider the two most common economic network applications outside of social networks: inter-sectoral linkages \citep{acemoglu2012network,barrot2016input,carvalho2016supply} and banking \citep{acemoglu2015systemic,elliott2014financial,gandy2016bayesian, gandy2017adjustable,upper2004estimating}. 

Let us consider the simple example of a dataset where the researcher has a sample of firms and input-output data. So the researcher sees a collection of firms and then transactions the firm has with other (sub-)sectors. One can reinterpret this as simply
``How many links does the firm have to firms with trait $k$?"
where many links will now just be a weighted (by, for example the volume of trade) conditional degree instead of a conditional degree and trait $k$ is just (sub-)sector $k$. This is just ARD for a weighted and directed graph.\footnote{The model presented above in the paper is for cases when the underlying network is unweighted (binary) and undirected.  The formation model we use is un-normalized, however, making the extension to the weighted case straightforward.  One could extend the method to address directed graphs by introducing an asymmetric distance measure as suggested in, for example, \cite{hoffrh2002}.}

What this immediately implies is that questions of interest such as whether firm-level shocks propagate or get absorbed in their production networks (e.g., \cite{barrot2016input}) or whether if theory suggest that certain supply chains should be more robust than others to shocks, could be probed even with limited ARD data, using the techniques developed in this paper. There is nothing specific to survey network data in our statistical framework, rather it applies more broadly to any context where there are measurements of aggregate interactions between connected units.

Similarly, if we consider a dataset where the researcher sees aggregated data from bank loans, where the bilateral inter-bank loan is unavailable, but aggregated loans are (e.g., by type of bank), the methodology applies once again. Thus, our technique suggests an avenue for regulators and agencies, such as the Federal Reserve, to release anonymized data in aggregates that still allow researchers to get at important network economic questions.

\bibliographystyle{ecta}
\bibliography{networks}

\begin{thebibliography}{79}
\newcommand{\enquote}[1]{``#1''}
\expandafter\ifx\csname natexlab\endcsname\relax\def\natexlab#1{#1}\fi

\bibitem[\protect\citeauthoryear{Acemoglu, Carvalho, Ozdaglar, and
  Tahbaz-Salehi}{Acemoglu et~al.}{2012}]{acemoglu2012network}
\textsc{Acemoglu, D., V.~M. Carvalho, A.~Ozdaglar, and A.~Tahbaz-Salehi}
  (2012): \enquote{The network origins of aggregate fluctuations,}
  \emph{Econometrica}, 80, 1977--2016.

\bibitem[\protect\citeauthoryear{Acemoglu, Ozdaglar, and
  Tahbaz-Salehi}{Acemoglu et~al.}{2015}]{acemoglu2015systemic}
\textsc{Acemoglu, D., A.~Ozdaglar, and A.~Tahbaz-Salehi} (2015):
  \enquote{Systemic risk and stability in financial networks,} \emph{The
  American Economic Review}, 105, 564--608.

\bibitem[\protect\citeauthoryear{Alatas, Banerjee, Chandrasekhar, Hanna, and
  Olken}{Alatas et~al.}{2016}]{alatas2016network}
\textsc{Alatas, V., A.~Banerjee, A.~G. Chandrasekhar, R.~Hanna, and B.~A.
  Olken} (2016): \enquote{Network structure and the aggregation of information:
  Theory and evidence from Indonesia,} \emph{The American Economic Review},
  106, 1663--1704.

\bibitem[\protect\citeauthoryear{Aldous}{Aldous}{1981}]{aldous1981representations}
\textsc{Aldous, D.~J.} (1981): \enquote{Representations for partially
  exchangeable arrays of random variables,} \emph{Journal of Multivariate
  Analysis}, 11, 581--598.

\bibitem[\protect\citeauthoryear{Aral}{Aral}{2016}]{aralnetworked2016}
\textsc{Aral, S.} (2016): \enquote{Networked Experiments,} \emph{Oxford
  Handbook on the Economics of Networks, Oxford: Oxford University Press}.

\bibitem[\protect\citeauthoryear{Auerbach}{Auerbach}{2016}]{auerbach2016identification}
\textsc{Auerbach, E.} (2016): \enquote{Identification and estimation of models
  with wndogenous network formation,} \emph{Working Paper}.

\bibitem[\protect\citeauthoryear{Banerjee, Breza, Duflo, and Kinnan}{Banerjee
  et~al.}{2016{\natexlab{a}}}]{banerjeebdk2016}
\textsc{Banerjee, A., E.~Breza, E.~Duflo, and C.~Kinnan} (2016{\natexlab{a}}):
  \enquote{Do credit constraints limit entrepreneurship: Heterogeneity in the
  returns to microfinance,} \emph{Working Paper}.

\bibitem[\protect\citeauthoryear{Banerjee, Chandrasekhar, Duflo, and
  Jackson}{Banerjee et~al.}{2013}]{banerjeecdj2013}
\textsc{Banerjee, A., A.~Chandrasekhar, E.~Duflo, and M.~Jackson} (2013):
  \enquote{Diffusion of Microfinance,} \emph{Science}, 341, 1--7.

\bibitem[\protect\citeauthoryear{Banerjee, Chandrasekhar, Duflo, and
  Jackson}{Banerjee et~al.}{2016{\natexlab{b}}}]{banerjeecdj2016change}
---\hspace{-.1pt}---\hspace{-.1pt}--- (2016{\natexlab{b}}): \enquote{Changes in
  social network structure in response to exposure to formal credit markets,}
  \emph{Working Paper}.

\bibitem[\protect\citeauthoryear{Banerjee, Chandrasekhar, Duflo, and
  Jackson}{Banerjee et~al.}{2016{\natexlab{c}}}]{banerjeegossip}
\textsc{Banerjee, A., A.~G. Chandrasekhar, E.~Duflo, and M.~O. Jackson}
  (2016{\natexlab{c}}): \enquote{Using Gossips to Spread Information: Theory
  and Evidence from Two Randomized Controlled Trials,} \emph{National Bureau of
  Economic Research Working Paper}.

\bibitem[\protect\citeauthoryear{Banerjee, Duflo, Glennerster, and
  Kinnan}{Banerjee et~al.}{2015}]{banerjee2015miracle}
\textsc{Banerjee, A., E.~Duflo, R.~Glennerster, and C.~Kinnan} (2015):
  \enquote{The miracle of microfinance? Evidence from a randomized evaluation,}
  \emph{American Economic Journal: Applied Economics}, 7, 22--53.

\bibitem[\protect\citeauthoryear{Barrot and Sauvagnat}{Barrot and
  Sauvagnat}{2016}]{barrot2016input}
\textsc{Barrot, J.-N. and J.~Sauvagnat} (2016): \enquote{Input specificity and
  the propagation of idiosyncratic shocks in production networks,} \emph{The
  Quarterly Journal of Economics}, 1543--1592.

\bibitem[\protect\citeauthoryear{Beaman, BenYishay, Magruder, and
  Mobarak}{Beaman et~al.}{2016}]{beaman2016can}
\textsc{Beaman, L., A.~BenYishay, J.~Magruder, and A.~M. Mobarak} (2016):
  \enquote{Can network theory based targeting increase technology adoption?}
  \emph{Working Paper}.

\bibitem[\protect\citeauthoryear{Bernard, Hallett, Iovita, Johnsen, Lyerla,
  McCarty, Mahy, Salganik, Saliuk, Scutelniciuc et~al.}{Bernard
  et~al.}{2010}]{bernard2010counting}
\textsc{Bernard, H.~R., T.~Hallett, A.~Iovita, E.~C. Johnsen, R.~Lyerla,
  C.~McCarty, M.~Mahy, M.~J. Salganik, T.~Saliuk, O.~Scutelniciuc, et~al.}
  (2010): \enquote{Counting hard-to-count populations: the Network Scale-up
  Method for public health,} \emph{Sexually Transmitted Infections}, 86,
  ii11--ii15.

\bibitem[\protect\citeauthoryear{Bhamidi, Bresler, and Sly}{Bhamidi
  et~al.}{2011}]{bhamidi2011mixing}
\textsc{Bhamidi, S., G.~Bresler, and A.~Sly} (2011): \enquote{Mixing Time of
  Exponential Random Graphs,} \emph{The Annals of Applied Probability},
  2146--2170.

\bibitem[\protect\citeauthoryear{Biringer}{Biringer}{2015}]{Biringer:2015}
\textsc{Biringer, I.} (2015): \emph{Geometry in Two Dimensions}.

\bibitem[\protect\citeauthoryear{Blitzstein and Diaconis}{Blitzstein and
  Diaconis}{2011}]{blitzstein2011sequential}
\textsc{Blitzstein, J. and P.~Diaconis} (2011): \enquote{A sequential
  importance sampling algorithm for generating random graphs with prescribed
  degrees,} \emph{Internet Mathematics}, 6, 489--522.

\bibitem[\protect\citeauthoryear{Blumenstock, Eagle, and Fafchamps}{Blumenstock
  et~al.}{2016}]{blumenstock2016airtime}
\textsc{Blumenstock, J.~E., N.~Eagle, and M.~Fafchamps} (2016):
  \enquote{Airtime transfers and mobile communications: Evidence in the
  aftermath of natural disasters,} \emph{Journal of Development Economics},
  120, 157--181.

\bibitem[\protect\citeauthoryear{Boucher and Fortin}{Boucher and
  Fortin}{2016}]{boucher2016some}
\textsc{Boucher, V. and B.~Fortin} (2016): \enquote{Some challenges in the
  empirics of the effects of networks,} \emph{Oxford Handbook on the Economics
  of Networks, Oxford: Oxford University Press}.

\bibitem[\protect\citeauthoryear{Breza}{Breza}{2016}]{breza2016field}
\textsc{Breza, E.} (2016): \enquote{Field experiments, social networks, and
  development,} \emph{The Oxford Handbook on the Economics of Networks, Oxford:
  Oxford University Press}.

\bibitem[\protect\citeauthoryear{Breza and Chandrasekhar}{Breza and
  Chandrasekhar}{2016}]{SavingsMonitors}
\textsc{Breza, E. and A.~Chandrasekhar} (2016): \enquote{Social Networks,
  Reputation and Commitment: Evidence from a Savings Monitors Experiment,}
  \emph{Working Paper}.

\bibitem[\protect\citeauthoryear{Breza, Chandrasekhar, and Tahbaz-Salehi}{Breza
  et~al.}{2017}]{bct}
\textsc{Breza, E., A.~Chandrasekhar, and A.~Tahbaz-Salehi} (2017):
  \enquote{Seeing the forest for the trees? An investigation of network
  knowledge,} .

\bibitem[\protect\citeauthoryear{Cai, deJanvry, and Sadoulet}{Cai
  et~al.}{2013}]{caijs2013}
\textsc{Cai, J., A.~deJanvry, and E.~Sadoulet} (2013): \enquote{Social networks
  and the decision to insure,} \emph{University of Michigan Working Paper}.

\bibitem[\protect\citeauthoryear{Carrell, Sacerdote, and West}{Carrell
  et~al.}{2013}]{carrell2013natural}
\textsc{Carrell, S.~E., B.~I. Sacerdote, and J.~E. West} (2013): \enquote{From
  natural variation to optimal policy? The importance of endogenous peer group
  formation,} \emph{Econometrica}, 81, 855--882.

\bibitem[\protect\citeauthoryear{Carvalho, Nirei, Saito, and
  Tahbaz-Salehi}{Carvalho et~al.}{2016}]{carvalho2016supply}
\textsc{Carvalho, V.~M., M.~Nirei, Y.~U. Saito, and A.~Tahbaz-Salehi} (2016):
  \enquote{Supply chain disruptions: Evidence from the great east {J}apan
  earthquake,} \emph{Working paper}.

\bibitem[\protect\citeauthoryear{Centola}{Centola}{2010}]{centola2010spread}
\textsc{Centola, D.} (2010): \enquote{The spread of behavior in an online
  social network experiment,} \emph{Science}, 329, 1194--1197.

\bibitem[\protect\citeauthoryear{Chandrasekhar and Lewis}{Chandrasekhar and
  Lewis}{2016}]{chandrasekharl2016}
\textsc{Chandrasekhar, A. and R.~Lewis} (2016): \enquote{Econometrics of
  sampled networks,} Stanford Working Paper.

\bibitem[\protect\citeauthoryear{Chandrasekhar and Jackson}{Chandrasekhar and
  Jackson}{2016}]{chandrasekharj2012}
\textsc{Chandrasekhar, A.~G. and M.~O. Jackson} (2016): \enquote{A network
  formation model based on subgraphs,} \emph{Stanford University Working
  Paper}.

\bibitem[\protect\citeauthoryear{Chassang, Dupas, Reardon, and
  Snowberg}{Chassang et~al.}{2017}]{chassangetal2017}
\textsc{Chassang, S., P.~Dupas, C.~Reardon, and E.~Snowberg} (2017):
  \enquote{{Selective trials for technology evaluation and adoption},}
  \emph{Working Paper}.

\bibitem[\protect\citeauthoryear{Chatterjee and Diaconis}{Chatterjee and
  Diaconis}{2011}]{chatterjeed2011}
\textsc{Chatterjee, S. and P.~Diaconis} (2011): \enquote{Estimating and
  understanding Exponential Random Graph Models,} \emph{Arxiv preprint
  arXiv:1102.2650}.

\bibitem[\protect\citeauthoryear{Chatterjee, Diaconis, and Sly}{Chatterjee
  et~al.}{2010}]{chatterjeeds2010}
\textsc{Chatterjee, S., P.~Diaconis, and A.~Sly} (2010): \enquote{Random graphs
  with a given degree sequence,} \emph{Arxiv preprint arXiv:1005.1136}.

\bibitem[\protect\citeauthoryear{Chuang and Schechter}{Chuang and
  Schechter}{2015}]{chuang2015social}
\textsc{Chuang, Y. and L.~Schechter} (2015): \enquote{Social networks in
  developing countries,} \emph{Annual Review of Resource Economics}, 7,
  451--472.

\bibitem[\protect\citeauthoryear{Crane and Dempsey}{Crane and
  Dempsey}{2015}]{crane2015framework}
\textsc{Crane, H. and W.~Dempsey} (2015): \enquote{A framework for statistical
  network modeling,} \emph{arXiv preprint arXiv:1509.08185}.

\bibitem[\protect\citeauthoryear{Csardi and Nepusz}{Csardi and
  Nepusz}{2006}]{csardi2006igraph}
\textsc{Csardi, G. and T.~Nepusz} (2006): \enquote{The igraph software package
  for complex network research,} \emph{InterJournal, Complex Systems}, 1695,
  1--9.

\bibitem[\protect\citeauthoryear{Diaconis and Janson}{Diaconis and
  Janson}{2007}]{diaconis2007graph}
\textsc{Diaconis, P. and S.~Janson} (2007): \enquote{Graph limits and
  exchangeable random graphs,} \emph{arXiv preprint arXiv:0712.2749}.

\bibitem[\protect\citeauthoryear{Dragulescu, Dragulescu, and
  Provide}{Dragulescu et~al.}{2018}]{dragulescu2018package}
\textsc{Dragulescu, A.~A., M.~A.~A. Dragulescu, and R.~Provide} (2018):
  \enquote{Package xlsx,} \emph{Cell}, 9, 1.

\bibitem[\protect\citeauthoryear{Elliott, Golub, and Jackson}{Elliott
  et~al.}{2014}]{elliott2014financial}
\textsc{Elliott, M., B.~Golub, and M.~O. Jackson} (2014): \enquote{Financial
  networks and contagion,} \emph{The American Economic Review}, 104,
  3115--3153.

\bibitem[\protect\citeauthoryear{Ezoe, Morooka, Noda, Sabin, and Koike}{Ezoe
  et~al.}{2012}]{ezoe2012population}
\textsc{Ezoe, S., T.~Morooka, T.~Noda, M.~L. Sabin, and S.~Koike} (2012):
  \enquote{Population size estimation of men who have sex with men through the
  Network Scale-up Method in Japan,} \emph{PLOS ONE}, 7, 1--7.

\bibitem[\protect\citeauthoryear{Feehan, Umubyeyi, Mahy, Hladik, and
  Salganik}{Feehan et~al.}{2016}]{feehan2016quantity}
\textsc{Feehan, D.~M., A.~Umubyeyi, M.~Mahy, W.~Hladik, and M.~J. Salganik}
  (2016): \enquote{Quantity versus quality: A survey experiment to improve the
  Network Scale-up Method,} \emph{American Journal of Epidemiology}, 183,
  747--757.

\bibitem[\protect\citeauthoryear{Feigenberg, Field, and Pande}{Feigenberg
  et~al.}{2013}]{feigenberg2013economic}
\textsc{Feigenberg, B., E.~Field, and R.~Pande} (2013): \enquote{The economic
  returns to social interaction: Experimental evidence from microfinance,}
  \emph{The Review of Economic Studies}.

\bibitem[\protect\citeauthoryear{Fosdick, McCormick, Murphy, Ng, and
  Westling}{Fosdick et~al.}{2019}]{fosdick2019multiresolution}
\textsc{Fosdick, B.~K., T.~H. McCormick, T.~B. Murphy, T.~L.~J. Ng, and
  T.~Westling} (2019): \enquote{Multiresolution network models,} \emph{To
  appear, Journal of Computational and Graphical Statistics}.

\bibitem[\protect\citeauthoryear{Friedkin}{Friedkin}{1983}]{friedkin1983}
\textsc{Friedkin, N.~E.} (1983): \enquote{Horizons of Observability and Limits
  of Informal Control in Organizations,} \emph{Social Forces}, 61:1, 54--77.

\bibitem[\protect\citeauthoryear{Gandy and Veraart}{Gandy and
  Veraart}{2016}]{gandy2016bayesian}
\textsc{Gandy, A. and L.~A. Veraart} (2016): \enquote{A Bayesian methodology
  for systemic risk assessment in financial networks,} \emph{Management
  Science}, 63, 4428--4446.

\bibitem[\protect\citeauthoryear{Gandy and Veraart}{Gandy and
  Veraart}{2017}]{gandy2017adjustable}
---\hspace{-.1pt}---\hspace{-.1pt}--- (2017): \enquote{Adjustable network
  reconstruction with applications to CDS exposures,} .

\bibitem[\protect\citeauthoryear{Garbuszus and Jeworutzki}{Garbuszus and
  Jeworutzki}{2018}]{readstata13}
\textsc{Garbuszus, J.~M. and S.~Jeworutzki} (2018): \emph{readstata13: Import
  'Stata' Data Files}, r package version 0.9.2.

\bibitem[\protect\citeauthoryear{Gelman, Carlin, Stern, Dunson, Vehtari, and
  Rubin}{Gelman et~al.}{2013}]{gelman2013bayesian}
\textsc{Gelman, A., J.~B. Carlin, H.~S. Stern, D.~B. Dunson, A.~Vehtari, and
  D.~B. Rubin} (2013): \emph{Bayesian Data Analysis}, CRC Press.

\bibitem[\protect\citeauthoryear{Graham}{Graham}{2017}]{graham2014econometric}
\textsc{Graham, B.~S.} (2017): \enquote{An econometric model of network
  formation with degree heterogeneity,} \emph{Econometrica}, 85, 1033--1063.

\bibitem[\protect\citeauthoryear{Guo, Bao, Lin, Wu, Zhang, Hladik,
  Abdul-Quader, Bulterys, Fuller, and Wang}{Guo
  et~al.}{2013}]{guo2013estimating}
\textsc{Guo, W., S.~Bao, W.~Lin, G.~Wu, W.~Zhang, W.~Hladik, A.~Abdul-Quader,
  M.~Bulterys, S.~Fuller, and L.~Wang} (2013): \enquote{Estimating the size of
  HIV key affected populations in Chongqing, China, using the Network Scale-up
  Method,} \emph{PLOS ONE}, 8, e71796.

\bibitem[\protect\citeauthoryear{Guttorp and Lockhart}{Guttorp and
  Lockhart}{1988}]{Guttorp:1988vl}
\textsc{Guttorp, P. and R.~A. Lockhart} (1988): \enquote{{Finding the location
  of a signal: a Bayesian analysis},} \emph{Journal of the American Statistical
  Association}, 83, 322--330.

\bibitem[\protect\citeauthoryear{Hoff}{Hoff}{2008}]{hoff2008modeling}
\textsc{Hoff, P.} (2008): \enquote{Modeling homophily and stochastic
  equivalence in symmetric relational data,} in \emph{Advances in Neural
  Information Processing Systems}, 657--664.

\bibitem[\protect\citeauthoryear{Hoff, Raftery, and Handcock}{Hoff
  et~al.}{2002}]{hoffrh2002}
\textsc{Hoff, P., A.~Raftery, and M.~Handcock} (2002): \enquote{Latent Space
  Approaches to Social Network Analysis,} \emph{Journal of the American
  Statistical Association}, 97:460, 1090--1098.

\bibitem[\protect\citeauthoryear{Holland and Leinhardt}{Holland and
  Leinhardt}{1981}]{holland1981exponential}
\textsc{Holland, P.~W. and S.~Leinhardt} (1981): \enquote{An exponential family
  of probability distributions for directed graphs,} \emph{Journal of the
  American Statistical Association}, 76, 33--50.

\bibitem[\protect\citeauthoryear{Hoover}{Hoover}{1979}]{hoover1979relations}
\textsc{Hoover, D.~N.} (1979): \enquote{Relations on probability spaces and
  arrays of random variables,} \emph{Preprint, Institute for Advanced Study,
  Princeton, NJ}.

\bibitem[\protect\citeauthoryear{Hornik and Gr{\"u}n}{Hornik and
  Gr{\"u}n}{2013}]{Hornik:2013fy}
\textsc{Hornik, K. and B.~Gr{\"u}n} (2013): \enquote{{On conjugate families and
  Jeffreys priors for von Mises-Fisher distributions},} \emph{Journal of
  Statistical Planning and Inference}, 143, 992--999.

\bibitem[\protect\citeauthoryear{Hornik and Gr{\"u}n}{Hornik and
  Gr{\"u}n}{2014}]{hornik2014movmf}
---\hspace{-.1pt}---\hspace{-.1pt}--- (2014): \enquote{movMF: an R package for
  fitting mixtures of von Mises-Fisher distributions,} \emph{Journal of
  Statistical Software}, 58, 1--31.

\bibitem[\protect\citeauthoryear{Hunter}{Hunter}{2004}]{hunter2004mm}
\textsc{Hunter, D.~R.} (2004): \enquote{{MM} algorithms for generalized
  Bradley-Terry models,} \emph{Annals of Statistics}, 384--406.

\bibitem[\protect\citeauthoryear{Jackson, Rodriguez-Barraquer, and Tan}{Jackson
  et~al.}{2012}]{jacksonrt2012}
\textsc{Jackson, M.~O., T.~R. Rodriguez-Barraquer, and X.~Tan} (2012):
  \enquote{{Social capital and social quilts: Network patterns of favor
  exchange},} \emph{American Economic Review}, 102, 1857--1897.

\bibitem[\protect\citeauthoryear{Kadushin, Killworth, Bernard, and
  Beveridge}{Kadushin et~al.}{2006}]{kadushin2006scale}
\textsc{Kadushin, C., P.~D. Killworth, H.~R. Bernard, and A.~A. Beveridge}
  (2006): \enquote{Scale-up methods as applied to estimates of heroin use,}
  \emph{Journal of Drug Issues}, 36, 417--440.

\bibitem[\protect\citeauthoryear{Karlan, Mobius, Rosenblat, and Szeidl}{Karlan
  et~al.}{2009}]{karlanmrs2009}
\textsc{Karlan, D., M.~Mobius, T.~Rosenblat, and A.~Szeidl} (2009):
  \enquote{Trust and social collateral,} \emph{The Quarterly Journal of
  Economics}, 24, 1307--1361.

\bibitem[\protect\citeauthoryear{Killworth, McCarty, Bernard, Shelley, and
  Johnsen}{Killworth et~al.}{1998}]{killworth1998estimation}
\textsc{Killworth, P.~D., C.~McCarty, H.~R. Bernard, G.~A. Shelley, and E.~C.
  Johnsen} (1998): \enquote{Estimation of seroprevalence, rape, and
  homelessness in the United States using a social network approach,}
  \emph{Evaluation Review}, 22, 289--308.

\bibitem[\protect\citeauthoryear{Kinnan and Townsend}{Kinnan and
  Townsend}{2012}]{kinnan2012kinship}
\textsc{Kinnan, C. and R.~Townsend} (2012): \enquote{Kinship and financial
  networks, formal financial access, and risk reduction,} \emph{The American
  Economic Review}, 102, 289--293.

\bibitem[\protect\citeauthoryear{Krackhardt}{Krackhardt}{1987}]{krackhardt1987}
\textsc{Krackhardt, D.} (1987): \enquote{{Cognitive social structures},}
  \emph{Social Networks}, 9, 109--134.

\bibitem[\protect\citeauthoryear{Krackhardt}{Krackhardt}{2014}]{krackhardt2014}
---\hspace{-.1pt}---\hspace{-.1pt}--- (2014): \enquote{A preliminary look at
  accuracy in egonets,} \emph{Contemporary Perspectives on Organizational
  Social Networks, Research in the Sociology of Organizations}, 40, 277--293.

\bibitem[\protect\citeauthoryear{Ligon and Schechter}{Ligon and
  Schechter}{2012}]{ligon2012motives}
\textsc{Ligon, E. and L.~Schechter} (2012): \enquote{Motives for sharing in
  social networks,} \emph{Journal of Development Economics}, 99, 13--26.

\bibitem[\protect\citeauthoryear{Lov{\'a}sz and Szegedy}{Lov{\'a}sz and
  Szegedy}{2006}]{lovasz2006limits}
\textsc{Lov{\'a}sz, L. and B.~Szegedy} (2006): \enquote{Limits of dense graph
  sequences,} \emph{Journal of Combinatorial Theory, Series B}, 96, 933--957.

\bibitem[\protect\citeauthoryear{Maghsoudi, Baneshi, Neydavoodi, and
  Haghdoost}{Maghsoudi et~al.}{2014}]{maghsoudi2014network}
\textsc{Maghsoudi, A., M.~R. Baneshi, M.~Neydavoodi, and A.~Haghdoost} (2014):
  \enquote{Network Scale-up correction factors for population size estimation
  of people who inject drugs and female sex workers in Iran,} \emph{PLOS ONE},
  9, e110917.

\bibitem[\protect\citeauthoryear{Mardia and El-Atoum}{Mardia and
  El-Atoum}{1976}]{Mardia:1976vb}
\textsc{Mardia, K.~V. and S.~A.~M. El-Atoum} (1976): \enquote{{Bayesian
  inference for the von Mises-Fisher distribution},} \emph{Biometrika}, 63,
  203--206.

\bibitem[\protect\citeauthoryear{McCormick, Salganik, and Zheng}{McCormick
  et~al.}{2010}]{mccormick2010many}
\textsc{McCormick, T.~H., M.~J. Salganik, and T.~Zheng} (2010): \enquote{How
  many people do you know?: Efficiently estimating personal network size,}
  \emph{Journal of the American Statistical Association}, 105, 59--70.

\bibitem[\protect\citeauthoryear{McCormick and Zheng}{McCormick and
  Zheng}{2015}]{mccormick2015latent}
\textsc{McCormick, T.~H. and T.~Zheng} (2015): \enquote{Latent surface models
  for networks using Aggregated Relational Data,} \emph{Journal of the American
  Statistical Association}, 110, 1684--1695.

\bibitem[\protect\citeauthoryear{Orbanz and Roy}{Orbanz and
  Roy}{2015}]{orbanz2015bayesian}
\textsc{Orbanz, P. and D.~M. Roy} (2015): \enquote{Bayesian models of graphs,
  arrays and other exchangeable random structures,} \emph{IEEE Transactions on
  Pattern Analysis and Machine Intelligence}, 37, 437--461.

\bibitem[\protect\citeauthoryear{Park and Newman}{Park and
  Newman}{2004}]{park2004statistical}
\textsc{Park, J. and M.~E. Newman} (2004): \enquote{Statistical mechanics of
  networks,} \emph{Physical Review E}, 70, 066117.

\bibitem[\protect\citeauthoryear{Penrose}{Penrose}{2003}]{penrose2003}
\textsc{Penrose, M.} (2003): \emph{Random Geometric Graphs}, Oxford University
  Press.

\bibitem[\protect\citeauthoryear{{R Core Team}}{{R Core
  Team}}{2018}]{Rcitation}
\textsc{{R Core Team}} (2018): \emph{R: A Language and Environment for
  Statistical Computing}, R Foundation for Statistical Computing, Vienna,
  Austria.

\bibitem[\protect\citeauthoryear{Salganik, Fazito, Bertoni, Abdo, Mello, and
  Bastos}{Salganik et~al.}{2011}]{salganik2011assessing}
\textsc{Salganik, M.~J., D.~Fazito, N.~Bertoni, A.~H. Abdo, M.~B. Mello, and
  F.~I. Bastos} (2011): \enquote{Assessing Network Scale-up estimates for
  groups most at risk of HIV/AIDS: evidence from a multiple-method study of
  heavy drug users in Curitiba, Brazil,} \emph{American Journal of
  Epidemiology}, 174, 1190--1196.

\bibitem[\protect\citeauthoryear{Shalizi and Rinaldo}{Shalizi and
  Rinaldo}{2013}]{shalizi2013consistency}
\textsc{Shalizi, C.~R. and A.~Rinaldo} (2013): \enquote{Consistency under
  sampling of exponential random graph models,} \emph{Annals of Statistics},
  41, 508.

\bibitem[\protect\citeauthoryear{Tontarawongsa, Mahajan, and
  Tarozzi}{Tontarawongsa et~al.}{2011}]{tontarawongsa2011public}
\textsc{Tontarawongsa, C., A.~Mahajan, and A.~Tarozzi} (2011):
  \enquote{(Limited) diffusion of health-protecting behaviors: Evidence from
  non-beneficiaries of a public health program in Orissa (India),}
  \emph{Working Paper}.

\bibitem[\protect\citeauthoryear{Upper and Worms}{Upper and
  Worms}{2004}]{upper2004estimating}
\textsc{Upper, C. and A.~Worms} (2004): \enquote{Estimating bilateral exposures
  in the German interbank market: Is there a danger of contagion?}
  \emph{European Economic Review}, 48, 827--849.

\bibitem[\protect\citeauthoryear{Wood}{Wood}{1994}]{wood94}
\textsc{Wood, A. T.~A.} (1994): \enquote{Simulation of the von Mises Fisher
  distribution,} \emph{Communications in Statistics, Simulation and
  Computation}, 23, 157--64.

\bibitem[\protect\citeauthoryear{Zheng, Salganik, and Gelman}{Zheng
  et~al.}{2006}]{zheng2006many}
\textsc{Zheng, T., M.~J. Salganik, and A.~Gelman} (2006): \enquote{How many
  people do you know in prison? Using overdispersion in count data to estimate
  social structure in networks,} \emph{Journal of the American Statistical
  Association}, 101, 409--423.

\end{thebibliography}

\appendix

\
\clearpage

%%%%%%%%%%%% PROOFS %%%%%%%%%%%%%%%%%%
\section{Proofs}\label{sec:proofs}
\subsection{Identification}\label{sec:IDProof}
\

In this section, we formally discuss identification.  Essentially, we need three latent group centers to be fixed and to have distinct positions on the hypersphere.  We also need to know the trait status of at least some individuals and for there to be at least some individuals with more than one trait.  This is sufficient to identify the parameters governing the locations of each of the types and the concentration parameters. If we assume that trait status is unrelated to gregariousness (which is necessary for the derivation of the likelihood anyway) then we can identify the coefficient zeta.  Based on zeta and degree (which is identified as described in~\citet{mccormick2015latent} using the latent trait group sizes) we can identify the individual gregariousness parameters.   All that is left are the individual level latent positions, which we show can be identified based on the previously described parameters.

We begin by defining terms necessary to describe the spherical geometry and then provide the necessary conditions.  Throughout the proofs here we will assume a latent sphere. We now proceed to out definitions and conditions.~\\ 

\noindent{\bf Definitions.}
\begin{itemize}
    \item A sphere path consists of the points where a plane going through the origin intersects the sphere.
    \item Two points are antipodal if there are indefinitely many great circles passing through them.
\end{itemize}

\noindent{\bf Conditions:}
\begin{enumerate}
    \item The centers of the von Mises-Fisher distributions representing three of the alter groups are fixed.
    \item The fixed points are not antipodal.
    \item The fixed points are not on one great circle.
    \item For some $k,k'$, $\eta_k \neq \eta_{k'}$.
\end{enumerate}

\

\begin{proof}[Proof of Theorem \ref{thm:identification}]
Under the above conditions, this is a direct corollary to Propositions \ref{prop:latent-distribution}, \ref{prop:nu}, and \ref{prop:z}.
\end{proof}

\

\begin{prop}\label{prop:latent-distribution}
Considering the conditions above, fixing $\upsilon_k$ for $k=1,2,3$ such that all three are not on a great circle, trait centers $\upsilon_k$ for $k=4,...,K$, concentration parameters $\eta_k$ for $k=1,...,K$, and $\zeta$ are identified.
\end{prop}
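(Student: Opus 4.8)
My plan is to reduce the statement to an injectivity question for the map from model parameters to Poisson intensities, and then to resolve it using the analytic structure of the von Mises--Fisher normalizing constant together with the rigidity forced by Conditions (1)--(4). First I would note that the ARD likelihood factorizes into independent Poisson terms, $\mathcal{L}=\prod_{i\in V_{ard}}\prod_{k=1}^{K}e^{-\lambda_{ik}}\lambda_{ik}^{y_{ik}}/y_{ik}!$, and that the Poisson family is identified by its mean; hence equality of two likelihoods for every response matrix ${\bf Y}$ (it already suffices to test matrices with entries in $\{0,1\}$) forces $\lambda_{ik}=\lambda'_{ik}$ for all $i\in V_{ard}$ and all $k$. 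So it suffices to show that, with $\upsilon_1,\upsilon_2,\upsilon_3$ held fixed and obeying Conditions (1)--(3) and with Condition (4) in force, the intensity matrix $(\lambda_{ik})$ pins down $\zeta$, the $\eta_k$, and the centers $\upsilon_k$ for $k\ge 4$.

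Next I would record the properties of $C_{p+1}$ that I need: it is a ratio of modified Bessel functions, hence real-analytic and strictly decreasing on $[0,\infty)$, with $C_{p+1}(0)$ finite and $C_{p+1}(\kappa)\to 0$ as $\kappa\to\infty$. Writing $r_{ik}:=\sqrt{\zeta^2+\eta_k^2+2\zeta\eta_k\cos\theta_{(z_i,\upsilon_k)}}$, equation (\ref{eq:lambda}) becomes $\lambda_{ik}=d_i b_k\,\frac{C_{p+1}(\zeta)}{C_{p+1}(0)}\cdot\frac{C_{p+1}(\eta_k)}{C_{p+1}(r_{ik})}$, so $r_{ik}$ is strictly decreasing in $\theta_{(z_i,\upsilon_k)}\in[0,\pi]$ and $\lambda_{ik}$ is strictly increasing in $\cos\theta_{(z_i,\upsilon_k)}$. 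This monotonicity --- upgraded to real-analyticity in the arguments where I need it --- is the workhorse: it lets me invert an intensity for an angle once the scalar parameters are known, and it rules out compensating perturbations.

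I would then handle the three anchor columns $k\le 3$ first. Because $\upsilon_k$ is held fixed, $\cos\theta_{(z_i,\upsilon_k)}$ is the same function of the latent position in both configurations. The cross-ratios $\lambda_{ik}\lambda_{i'k'}/(\lambda_{ik'}\lambda_{i'k})$ eliminate the nuisance factors $d_i$ and $b_k$ and depend only on $\zeta$, $\eta_k$, $\eta_{k'}$ and the four anchor angles involved; feeding in enough individuals and using Condition (4) to break the degeneracy between $\zeta$ and the $\eta_k$, together with the degree relation (\ref{eq:deg}) to reinstate the individual scale, forces $\zeta=\zeta'$, then $\eta_k=\eta'_k$ and $\cos\theta_{(z_i,\upsilon_k)}=\cos\theta_{(z'_i,\upsilon_k)}$ for $k\le 3$. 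By Conditions (2)--(3) the anchors are not antipodal and not on a common great circle --- i.e.\ in general position --- so the geodesic distances to $\upsilon_1,\upsilon_2,\upsilon_3$ determine a point of the sphere uniquely (spherical trilateration), giving $z_i=z'_i$. Finally, for $k\ge 4$ the per-individual factor $d_i\,C_{p+1}(\zeta)/C_{p+1}(0)$ and the positions $z_i$ are now known, so $\lambda_{ik}$ is a known constant (depending only on $i$) times $b_k\,C_{p+1}(\eta_k)/C_{p+1}(r_{ik})$, with $r_{ik}$ a function of $\eta_k$ and of the direction $\upsilon_k$ through $\cos\theta_{(z_i,\upsilon_k)}$ alone; dividing two such entries for distinct $i$ removes $b_k$, and enough individuals with non-coplanar positions, together with the monotonicity lemma, recover $\upsilon_k$ as the unique point with the prescribed inner products and then $\eta_k$.

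The hard part will be the $\zeta$-versus-$\eta_k$ separation inside the anchor step, since $\zeta$ and $\eta_k$ enter only through the composite $r_{ik}=\sqrt{\zeta^2+\eta_k^2+2\zeta\eta_k\cos\theta_{(z_i,\upsilon_k)}}$ together with the product $C_{p+1}(\zeta)C_{p+1}(\eta_k)$, while the angle $\theta_{(z_i,\upsilon_k)}$ is itself unknown, so at first glance a rescaling of $\zeta$ looks absorbable by readjusting the $\eta_k$ and the latent positions. Condition (4) --- two anchor groups with genuinely different concentrations --- is precisely what obstructs this: an adjustment that compensates for one concentration cannot simultaneously compensate for a different one while still matching the observed cross-individual profile of intensities, and making this watertight will require the real-analyticity of $C_{p+1}$ (so that two distinct parameter vectors cannot agree on an entire profile of angles), not merely its monotonicity. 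I would also want a mild richness condition on the ARD sample $V_{ard}$ --- enough sampled individuals whose latent positions spread out non-degenerately relative to the anchors --- which is the sampling-side analogue of the general-position hypotheses imposed on the $\upsilon_k$.
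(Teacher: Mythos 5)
Your route is genuinely different from the paper's, and it has a gap at its central step. The paper does not try to extract the concentration parameters from the Poisson intensity matrix at all. Instead it exploits an additional observable that your argument never uses: the trait memberships of the sampled nodes themselves. Defining the lens $\ell(k,k')$ as the expected share of individuals carrying both traits $k$ and $k'$ (observable because each sampled node's traits are known), the paper identifies $\eta_1,\eta_2,\eta_3$ directly from the pairwise lenses of the three fixed centers by a cyclic bootstrap ($\ell(1,2)$ determines $\eta_2$ given $\eta_1$, $\ell(2,3)$ determines $\eta_3$, and so on), with uniqueness from a perturbation argument: no joint change of the concentrations can preserve all three pairwise overlaps simultaneously. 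Only then does it turn to the ARD counts, and only through ratios of group-conditional expectations $\E_i[\lambda_{ik}\mid i\in G_k]/\E_j[\lambda_{jk}\mid j\in G_{k'}]$, in which the $\nu_i$ drop out because gregariousness is independent of trait; Condition (4) then pins down $\zeta$ by monotonicity. Identification of the individual positions $z_i$ is deferred to Proposition \ref{prop:z}.

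The gap in your version is exactly the step you flag as ``the hard part.'' You assert that cross-ratios of $\lambda_{ik}$ over the three anchors, together with Condition (4), force $\zeta=\zeta'$, $\eta_k=\eta'_k$ and the anchor angles, but you give no argument beyond an appeal to real-analyticity of $C_{p+1}$ and an unspecified richness condition on $V_{ard}$. The parameters enter only through $C_{p+1}(\zeta)C_{p+1}(\eta_k)/C_{p+1}(r_{ik})$ with $r_{ik}=\sqrt{\zeta^2+\eta_k^2+2\zeta\eta_k\cos\theta_{(z_i,\upsilon_k)}}$, and each additional individual brings three new unknown angles along with at most three cross-ratio constraints net of $d_i$, so the degrees of freedom do not obviously close; compensating deformations of $(\zeta,\eta_k,\{\theta_{(z_i,\upsilon_k)}\})$ are not excluded by monotonicity or analyticity alone, and you would need to actually exhibit injectivity of the map from these parameters to the cross-ratio profile. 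That is the entire content of the identification claim, and it is precisely what the paper's lens construction is designed to avoid. A smaller point: invoking the degree relation \eqref{eq:deg} to ``reinstate the individual scale'' inside the anchor step is premature, since the $\nu_i$ are only recovered in Proposition \ref{prop:nu}, which takes $\zeta$ as an input.
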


\begin{proof} The von Mises-Fisher distribution is a symmetric unimodal distribution with probability mass declining in distance from the center, $\upsilon$, tuned by concentration parameter $\eta$. For each individual we know their latent trait group(s).  This is a fundamental distinction between our setting and that of~\citet{mccormick2015latent}, who typically do not assume this information is known.  We can think of the positions of each individuals as draws from one or more of the von Mises-Fisher distributions on the sphere.  An individual who belongs to two trait group has to be at the intersection of the densities of the two trait groups.  Knowing the fraction of individuals who have both traits, therefore, intuitively tells us something about the overlap between the densities of the two trait groups.  Throughout this proof keep in mind that we are not using the specific locations of individuals (which we only show is identified in a subsequent proposition), but rather the density defined by the overlap between trait groups. 

More formally, define the lens, $\ell(A,B)$, as the expected share of individuals drawn from this distribution who have traits $A$ and $B$. Equivalently, we can think of this as the volume of the overlap between the densities of the two distributions for all individuals up to a pre-specified, but arbitrary\footnote{We could define the lens for example as the are of the overlap in bands that represent that 95th percentile of the distribution.  We need to specify a cutoff because the densities are continuous across the surface. The choice is arbitrary so long as the discs are sufficiently wide to include the overlap between densities.}, cumulative probability.  In general let $\ell(A_1,...,A_k)$ denote the expected share of individuals drawn who have all traits. We can treat all lenses as observed in the data because for a large $m$, we know the traits that every node has.

For notational convenience and without loss of generality, we will assume that the fixed group centers correspond to the first three latent trait groups, $\upsilon_1, \upsilon_2, \upsilon_3$. Observe that this immediately implies all three $\eta_k$ for $k=1,...,3$ are identified. For the sake of argument assume that $\eta_1$ is known. Then from $\ell(1,2)$ we have that $\eta_2$ is identified. Given $\eta_2$, from $\ell(2,3)$, we have $\eta_3$ identified. But we can of course identify $\eta_1$ similarly from $\eta_3$.  This logic applies because we can map the overlapping section, $\ell(1,2)$, into specific values of the cumulative distribution function of the von Mises-Fisher distributions.  If we change $\eta_2$, then the location of individuals' latent positions that are draws from this distribution must also change.  Changing these locations changes the boundary of $\ell(1,2)$.  Similarly, changing the boundary of $\ell(1,2)$ implies a change in the densities of the von Mises-Fisher distributions for the first and second traits.  Since the centers of these distributions are fixed any change in the distribution must come through the concentration parameter.   

Further, this solution is unique. To see this, assume that we are at some unique solution $\eta_1,\eta_2,\eta_3$. Consider an alternative value of any combination of concentration parameters. Clearly all concentration parameters cannot increase because then the lenses would not match the true lenses. Consider then the case where at least one $\eta_k$ declines. In this case, if $\eta_{k'}$ were not to increase, then $\ell(k,k')$ would not match the expectation observed in the data. Consequently, $\eta_{k'}$ must increase. In this case, should $\eta_{k'}$ increase, then $\eta_{k''}$ must decline to preserve $\ell(k',k'')$. But in this case, the lens $\ell(k,k'')$ must increase as both concentration parameters have declined. Therefore the solution is unique.

To see why $\zeta$ is identified, consider any two $k,k'$ with $\eta_k \neq \eta_{k'}$. Because we know the respective von Mises-Fisher distributions for each trait, we can compute the ratios of the expectations of \eqref{eq:lambda} conditional on each type $k$ and $k'$, plugging in for $d_i$ from \eqref{eq:deg}. Because the individual effects are drawn independently of trait by assumption, all terms that depend on $\nu_i$ drop since the distribution of $\nu_i$ is independent of trait type, so they have the same expectations irrespective of $k$ or $k'$. As such
\[
\frac{\E_i[\lambda_{ik} \vert i\in G_k]}{\E_j[\lambda_{jk} \vert j\in G_{k'}]} = f(b_k,b_{k'},\eta_k,\eta_{k'},\zeta)
\]
where the right hand side is a known function that  comes from taking these ratios. The only unknown is $\zeta$. There is a unique solution to the equation---we leave the algebra to the reader---but can be seen from the fact that the link probability is monotonically declining in $\zeta$ and faster for lower $\eta_k$, holding all else fixed, so the ratio term also is monotone in $\zeta$.\end{proof}

\

\begin{prop}\label{prop:nu}
Considering the conditions above, $\nu_i$ for $i=1,...,m$, individual gregariousness effects for the entire  ARD sample, are identified.
\end{prop}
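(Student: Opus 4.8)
The plan is to recover each $\nu_i$ by combining two facts: (i) the degrees $d_i$ of the ARD respondents are identified, and (ii) equation \eqref{eq:deg} is a deterministic relation tying $d_i$ to $\nu_i$ and the single scalar $\E[\exp(\nu_j)]$, with every other quantity appearing in it already identified by Proposition \ref{prop:latent-distribution}.

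First I would establish (i). The Poisson rates $\lambda_{ik}=\E[y_{ik}]$ are identified directly from the ARD, and by Proposition \ref{prop:latent-distribution} the parameters $\zeta$, $\{\eta_k\}$ and $\{\upsilon_k\}$ are known. Invoking the degree argument of \citet{mccormick2015latent}, which exploits the \emph{known} population sizes of the latent trait groups (these are ordinary, non--hard-to-reach groups, so their sizes are available from the census), the degrees $d_i$ for $i\in V_{ard}$ are identified; alternatively, if respondents' degrees are elicited on the survey --- which the paper recommends --- this step is immediate. This is the step I expect to carry the real weight: everything downstream is a short inversion, whereas pinning down $d_i$ from aggregated counts alone is precisely where the trait-group-size information and the \citet{mccormick2015latent} construction do the work.

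Next I would carry out (ii). Set $\kappa := n\,\E[\exp(\nu_j)]\,C_{p+1}(0)/C_{p+1}(\zeta)$, a strictly positive scalar independent of $i$; then \eqref{eq:deg} reads $d_i=\kappa\exp(\nu_i)$, so $\exp(\nu_i)=d_i/\kappa$. The only unidentified object left is $\kappa$. Following the maintained assumption used throughout the paper --- and, in the same spirit, in \citet{chatterjeeds2010} and \citet{graham2014econometric} --- that $\E[\exp(\nu_j)]$ is well approximated by the sample average $\frac{1}{m}\sum_{i\in V_{ard}}\exp(\nu_i)$ (legitimate since the ARD sample is uniform at random), the system closes: substituting $\exp(\nu_i)=d_i/\kappa$ gives $\E[\exp(\nu_j)]=\frac{1}{m\kappa}\sum_i d_i$, and eliminating $\E[\exp(\nu_j)]$ between this and the definition of $\kappa$ yields
\[
\kappa^{2}=\frac{n\,C_{p+1}(0)}{m\,C_{p+1}(\zeta)}\sum_{i\in V_{ard}}d_i .
\]
Taking the (unique) positive root identifies $\kappa$, hence $\nu_i=\log d_i-\log\kappa$ is identified for each $i=1,\dots,m$. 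I would close with the remark that, in contrast to the bare $\beta$-model, no residual location normalization is needed: the $d_i$ are identified on an absolute (count) scale, so $\kappa$ and each $\nu_i$ are pinned down outright. The one genuine caveat --- already flagged in the body --- is that the closure in step (ii) replaces a population mean by a sample mean, so the identification is exact in the idealized limit or under that approximation.
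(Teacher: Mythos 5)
Your proposal is correct and follows essentially the same route as the paper: identify the degrees $d_i$ from the ARD likelihood \eqref{eq:lambda} once Proposition \ref{prop:latent-distribution} has pinned down $\zeta$, $\{\upsilon_k\}$, $\{\eta_k\}$, and then close the system \eqref{eq:deg} by replacing $\E[\exp(\nu_j)]$ with the sample average. The only difference is cosmetic: where the paper establishes uniqueness by a monotonicity argument in a single reference value $\nu_1$, you solve the system in closed form for $\kappa$, which is a cleaner (and arguably tidier) way to make the same point.
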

\begin{proof}
By Proposition \ref{prop:latent-distribution}, the $\upsilon_k$ and $\eta_k$ and $\zeta$ are identified. By \eqref{eq:lambda}, $d_i$ can be obtained and by \eqref{eq:deg} we have for every $i=1,...,m$ in the ARD sample an equation relating the fixed effect $\nu_i$ to the degree. We have $m$ equations and $m$ unknowns.

To see why the solution is unique consider fixing for the moment some $\nu_1$ without loss of generality. In this case, we can write
\( \nu_i = h_i \nu_1 \) for every $i$, where $h_i$ is the ratio of the degrees between person $i$ and person 1. Then we can write
\[
\exp(\nu_1)(\frac{1}{n}\sum_i \exp(h_i \nu_1)) = \frac{d_1}{m \cdot \frac{C_{p+1}(0)}{C_{p+1}(\zeta)}}. 
\]
This is a monotone function in $\nu_1$ and has a unique solution, which then identifies the remainder of the $\nu_i$ as well scaling by $h_i$.
\end{proof}

\

\begin{prop}\label{prop:z}
Considering the conditions above, the latent locations  $z_i$ for $i=1,...,m$ for the entire  ARD sample, are identified.
\end{prop}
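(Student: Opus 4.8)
The plan is to reduce the identification of each latent location $z_i$ to a triangulation problem on the sphere $\mathcal{S}^{p+1}$. Suppose two parameter vectors induce the same likelihood. Because $y_{ik}\sim\text{Poisson}(\lambda_{ik})$, equality of likelihoods forces $\lambda_{ik}=\lambda_{ik}'$ for every $i$ and $k$, so each rate $\lambda_{ik}$ is identified from the data. By Propositions~\ref{prop:latent-distribution} and~\ref{prop:nu}, the trait centers $\upsilon_k$, the concentrations $\eta_k$, the scale $\zeta$, and the gregariousness effects $\nu_i$ (hence the degrees $d_i$ via~\eqref{eq:deg}) agree across the two vectors, and $b_k$ is pinned down by the observed population share of group $k$. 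Consequently, in the expression~\eqref{eq:lambda} for $\lambda_{ik}$ the only remaining free quantity is the angle $\theta_{(z_i,\upsilon_k)}$.

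First I would invert~\eqref{eq:lambda} for $\cos\theta_{(z_i,\upsilon_k)}$. With $d_i,b_k,\zeta,\eta_k$ fixed and positive, $\lambda_{ik}$ is a strictly monotone function of $C_{p+1}\bigl(\sqrt{\zeta^2+\eta_k^2+2\zeta\eta_k\cos\theta_{(z_i,\upsilon_k)}}\bigr)$; the normalizing constant $C_{p+1}(\cdot)$, a ratio of modified Bessel functions, is strictly monotone on $[0,\infty)$, and the radicand is strictly increasing in $\cos\theta_{(z_i,\upsilon_k)}$ since $2\zeta\eta_k>0$. Composing these maps shows $\lambda_{ik}$ is a strictly monotone, hence invertible, function of the inner product $z_i'\upsilon_k=\cos\theta_{(z_i,\upsilon_k)}$; thus every inner product $z_i'\upsilon_1,\dots,z_i'\upsilon_K$ is recovered.

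Next I would reconstruct $z_i$ from these inner products. Stacking the (now known) trait centers as rows of a matrix $U$, the location $z_i$ solves the linear system $U z_i = c_i$, where $c_i$ has entries $\cos\theta_{(z_i,\upsilon_k)}$. By Conditions~2 and~3 the three fixed centers $\upsilon_1,\upsilon_2,\upsilon_3$ are not antipodal and do not lie on a single great circle, so they are linearly independent; together with enough of the remaining identified centers $\upsilon_4,\dots,\upsilon_K$ in general position (this is where $K$ large relative to $p$ is used) the matrix $U$ has full column rank $p+2$, so $z_i=(U'U)^{-1}U'c_i$ is the unique solution, and it lies on $\mathcal{S}^{p+1}$ because a valid $z_i$ exists. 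Geometrically this is the familiar fact that a point of $\mathcal{S}^{p+1}$ is determined by its geodesic distances to $p+2$ reference points in general position, the non-antipodal and non-coplanar conditions ruling out the reflected solution. Hence $z_i=z_i'$ for all $i$ in the ARD sample, which combined with Propositions~\ref{prop:latent-distribution} and~\ref{prop:nu} completes the proof of Theorem~\ref{thm:identification}.

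The hard part is the geometric non-degeneracy in the last step: one must check that the identified trait centers actually span the ambient $\mathbb{R}^{p+2}$ — three generic centers suffice on the two-sphere, but higher-dimensional latent spheres need correspondingly more fixed or identified centers in general position — and that Conditions~2--3 kill the reflection ambiguity rather than only the antipodal one. The remaining point, recording strict monotonicity of the composed map $\theta_{(z_i,\upsilon_k)}\mapsto C_{p+1}(\sqrt{\zeta^2+\eta_k^2+2\zeta\eta_k\cos\theta_{(z_i,\upsilon_k)}})$ used in the inversion, is routine and follows from monotonicity of ratios of modified Bessel functions.
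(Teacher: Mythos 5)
Your proof is correct, but it reaches the key geometric fact by a different route than the paper. The paper argues synthetically: it quotes the uniqueness of sphere paths between non-antipodal points and the classification of distance-preserving transformations of the sphere, then checks by cases that no two distinct points can share the same distances to all three fixed centers --- the locus of points equidistant from $\upsilon_1$ and $\upsilon_2$ is a reflection pair across the plane through $\upsilon_1,\upsilon_2$, and Condition 3 places $\upsilon_3$ off the corresponding great circle, so the reflected copy is ruled out (with a separate Case 2 for a point antipodal to one center). You instead invert the likelihood analytically, using strict monotonicity of $\cos\theta \mapsto C_{p+1}\bigl(\sqrt{\zeta^2+\eta_k^2+2\zeta\eta_k\cos\theta}\bigr)$ to recover each inner product $z_i'\upsilon_k$, and then solve the linear system $Uz_i=c_i$, observing that ``not on one great circle'' is exactly linear independence of the three fixed centers. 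Each approach buys something. Your monotonicity step makes explicit an inversion the paper leaves implicit: its assertion that any change in a latent distance changes the likelihood is precisely this injectivity, and it silently requires $\zeta\eta_k>0$ for the fixed groups (a point neither proof states but both need). Your rank condition also makes transparent why the argument is confined to the latent two-sphere, where three independent centers span $\mathbb{R}^3$; the paper restricts its proofs to a latent sphere up front, so your observation that higher-dimensional hyperspheres would need $p+2$ identified centers in general position is a genuine caveat for extensions rather than a gap relative to the paper's stated scope. Conversely, the paper's case analysis isolates the antipodal degeneracy explicitly, which your linear-algebra solution absorbs automatically since $U^{-1}c_i$ is unique whenever $U$ is invertible.
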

\begin{proof} From Propositions \ref{prop:latent-distribution} and \ref{prop:nu}, we have identified all parameters except for $z_i$.  To show this result, we first state two results from spherical geometry.  The proofs of these results are available in standard texts (e.g.~\citet{Biringer:2015}).

{Result:}{\it The sphere path between two points is unique unless the points are antipodal.}

{Result:}{\it There are exactly three isomorphisms for spherical geometry.}

The first result defines a unique distance from each respondent latent position and at least two of the three latent group means.  A respondent position can be  antipodal with one of the three fixed groups, but then cannot be with the two others because the three groups cannot be antipodal. 

The second result limits the number of possible operations that threaten identifiability.  Recall that, if an operation changes the latent distance between an point and the center of a group, then the operation will also change the likelihood.  Thus, if we show that we cannot perform any of the three possible distance preserving transformations on the sphere after fixing group centers, then we have also completed the proof. 

We consider two cases, the first takes and arbitrary point that is not antipodal to any of the latent centers, whereas the second case considers any point that is antipodal with one latent center. 

{\bf Case 1.}
Since we fix three centers which are not on a great circle, we cannot do any reflections of points without changing the distance to one of the centers. For rotations, consider centers $\upsilon_1$ and $\upsilon_1$, and a point $z_i$. Since $\upsilon_1$ and $\upsilon_2$ are not antipodes, if we rotate $z_i$ around center $\upsilon_1$ and keep $d(z_i,\upsilon_1)$ the same, it is possible that $d(z_i,\upsilon_2)$ changes. The points $z_i, z_i'$ such that $d(z_i,\upsilon_1)=d(z_i',\upsilon_1)$ and $d(z_i,\upsilon_2)=d(z_i',\upsilon_2)$ are reflections over the plane that intersects $\upsilon_1$ and $\upsilon_2$ in a great circle. $z_i$ and $z_i'$ have equal distance to any point on this great circle, and unequal distance to any point not on this great circle. Since the third center $\upsilon_3$ is not on this the great circle that intersects $\upsilon_1$ and $\upsilon_2$, $d(z_i,\upsilon_3) \neq d(z_i',\upsilon_3)$.

{\bf Case 2.}
When we change the point's position, then the distance between that point and the antipodal latent center decreases.

This completes the proof.
\end{proof}
%
%
%%%%%%%%%%%%%%%%%%%%%%%%%%%%%%%%%%%%%%%%%
%
%
\subsection{Taxonomy}\label{sec:taxonomy_proofs}
We present the proofs for the taxonomical results.

\

\begin{proof}[Proof of Proposition \ref{prop:main_taxonomy}]
Observe that
\begin{align*}
{\rm E}\left[\left(S_{i}\left({\bf g}\right)-S_{i}\left({\bf g}^*\right)\right)^{2}\right] & ={\rm E}\left[\left(\left(S_{i}\left({\bf g}\right)-{\rm E}\left[S_{i}\left({\bf g}\right)\right]\right)+\left({\rm E}\left[S_{i}\left({\bf g}\right)\right]-S_{i}\left({\bf g}^*\right)\right)\right)^{2}\right]\\
 & \leq{\rm E}\left[\left(S_{i}\left({\bf g}\right)-{\rm E}\left[S_{i}\left({\bf g}\right)\right]\right)^{2}\right]\\
 & +2{\rm E}\left[\left(S_{i}\left({\bf g}\right)-{\rm E}\left[S_{i}\left({\bf g}\right)\right]\right)\right]\left({\rm E}\left[S_{i}\left({\bf g}\right)\right]-S_{i}\left({\bf g}^*\right)\right)\\
 & +\left(S_{i}\left({\bf g}^*\right)-{\rm E}\left[S_{i}\left({\bf g}\right)\right]\right)^{2}.
\end{align*}
We can readily see that each of these terms are $o_{p}\left(1\right)$.
\end{proof}

\begin{proof}[Proof of Corollary \ref{cor:MSE_link}]
This is straightforward to calculate:
\begin{align*}
{\rm E}\left[\left(g_{ij}-g^*_{ij}\right)^{2}\right] & ={\rm E}\left[g_{ij}^{2}-2g_{ij}g^*_{ij}\right]+g_{ij}^{2}\\
 & =p_{ij}^{\theta_{0}}\left(1-2g^*_{ij}\right)+g^*_{ij}
\end{align*}
which completes the proof.
\end{proof}

\

\begin{proof}[Proof of Corollary \ref{cor:MSE_density_diffcent}]
For degree, one can check that
\[
\sum_{k}\frac{p_{ij}^{\theta_{0}}\left(1-p_{ij}^{\theta_{0}}\right)}{k^{2}}\leq\sum_{k}\frac{1}{k^{2}}\rightarrow0
\]
so the Kolmogorov condition is satisfied and
\[
\frac{d_{i}}{n}-\frac{{\rm E}\left[d_{i}\right]}{n}\rightarrow_{a.s.}0
\]
which satisfies the conditions of Proposition \ref{prop:main_taxonomy}.

For diffusion centrality, recall that
\begin{align*}
DC_{i}\left({\bf g};q_{n},T\right): & =\sum_{j}\left[\sum_{t=1}^{T}\left(q_{n}{\bf g}\right)^{t}\right]_{ij}\\
 & =\sum_{j}\sum_{t=1}^{T}\frac{C^{t}}{n^{t}}\sum_{j_{1},...,j_{t-1}}g_{ij_{1}}\cdots g_{j_{t-1}j}.
\end{align*}
It is easy to check analogous to the degree term, for any $t$,
\[
\frac{1}{n^{t}}\sum_{j}\sum_{j_{1},...,j_{t-1}}g_{ij_{1}}\cdots g_{j_{t-1}j},
\]
which has variance at most $\prod_{s=1}^{t}p_{j_{s-1}j_{s}}\left(1-\prod_{s=1}^{t}p_{j_{s-1}j_{s}}\right)\leq1$
for any summand, with $j_{0}=i$ and $j_{s}=j$. The Kolmogorov condition
again applies and so every term converges to its expectation.
\end{proof}

\clearpage

\section{Implementation Appendix}\label{sec:blueprint}

\subsection{Cookbook}
The goal of this section is to provide a researcher or policymaker with a practical blueprint for collecting the required data and implementing our latent distance model. We propose this method in situations when the researchers want to estimate social network characteristics but when full social network maps are either infeasible or prohibitively expensive to collect.

In our preferred implementation, the researchers would collect a census of all members of the graph of interest. This approach might be feasible in settings such as a rural village, where typically there is enumeration done and basic demographics are taken for all nodes. However, we recognize that censuses might not be feasible in all settings such as a large urban slum. We include a discussion of such settings in Section \ref{sec:Census_Infeasible}.

We envision researchers conducting the following steps:
\begin{enumerate}
    \item {\bf Design ARD survey questions:} The first step is to choose which traits to use. This choice will depend on the context of the specific empirical setting.  But generally-speaking, the traits should satisfy the following criteria:
    \begin{itemize}
        \item The traits should satisfy the core assumptions of the model: that in a latent space sense they are located predominantly in one region (the distribution of individuals' latent positions is single-peaked). See Section \ref{sec:QuestionDesign} for a more detailed discussion of this. 
        \item The traits should likely be observable by others (because eliciting the information in a survey relies on the observations of the respondent) and should not be subject to much measurement error (respondents should not know so many people with the trait that it is difficult for them to recall everyone, for example).
        \item The number of traits should not be very long, both to avoid survey fatigue and keep costs low.\footnote{However, recall that the method requires fixing the positions of three groups on the surface.  Therefore, the number should be larger than five.}
    \end{itemize}
    \item {\bf Conduct census survey:} The census survey should include the following parts:
    \begin{itemize}
        \item The ARD traits: Knowing this allows the researcher to calculate the population share of nodes in the graph with each trait $k$.
        \item An additional set of demographic characteristics denoted by $X$. The vector of $X$s allows for the researcher to predict the latent locations of the nodes not included in the ARD survey sample.   
    \end{itemize}
    See Section \ref{sec:CensusBihar} for a sample census questionnaire.
    \item {\bf Conduct ARD survey:}
    The researchers will need to decide what share $\psi$ of households will be surveyed. This is simply a budgetary computation, but we suggest that $\psi\geq 0.2$. 
    
    The ARD survey should contain:
    \begin{itemize}
        \item Link enumeration: This step is useful for providing a clear way to define a link, to aid in the interpretation of the ARD counts, and to decrease measurement error in the ARD counts. This step also gives a direct estimate of each node's degree in the sample. If the friend list methodology is not possible, we discuss how the procedure changes in Section \ref{sec:No_DegreeList}.
        \item ARD responses: For every trait in the ARD list, ask the subject to count within the enumerated list of links, how many have each trait.

    \end{itemize}
    See Section \ref{sec:ARDBihar} for a sample ARD questionnaire.
    
    \item {\bf Run the ARD estimation procedure using inputs from the surveys:} Section \ref{sec:ARD_Codes} details how to download and execute all of the ARD estimation codes in R. 
    \item {\bf Estimate the economic parameter of interest:} See Section \ref{sec:ARD_Codes} for details on the estimation procedure.

\end{enumerate}

\subsubsection{Census Infeasible}\label{sec:Census_Infeasible}

In this subsection we assume that the researcher does not have access
to a census of the population and has a vector of attributes for every
unit (e.g., household or individual) in the population. Intuitively,
the core difference between this context and the prior context is
that the researcher does not have the population share by type from
the census itself. This is the case for the Hyderabad urban example in Section \ref{sec:hyderabad}.
\begin{enumerate}
\item[(3)] {\bf ARD survey:} If there is no census, then the researcher should ask every node in the ARD sample whether they have each trait. Then these sampled observations can be used to compute estimates of the population
shares.
\item[(4)] {\bf ARD estimation procedure:} Without a census, one cannot follow the procedure in Section \ref{sec:ARDtononARD} to estimate the locations of the non=ARD nodes.  Instead:
\begin{itemize}
    \item For the 1-$\psi$ share of non-ARD nodes, draw node locations based on latent trait distributions observed in the ARD sample. 
    \item When drawing graphs, use the estimated latent locations based on the ARD responses for the $m$ ARD survey nodes and from this procedure for the remaining $n-m$ nodes. 
\end{itemize}
\end{enumerate}

\subsubsection{Link enumeration is infeasible}\label{sec:No_DegreeList}

\begin{enumerate}
\item[(3)] {\bf ARD survey:} Ask the subject to reflect on their friends (or links in whatever
manner the researcher is trying to collect data).
\begin{itemize}
\item This can be recorded by the enumerators. The number of links gives the degree for each ARD node. 
\item If the number of links is expected to be too large for respondents to reliably count, use a N-Sum like method (see e.g.~\cite{mccormick2010many}).
\end{itemize}
\item[(4)] {\bf ARD estimation procedure:} The one difference in the estimation procedure is that the expected degree of each node needs to be estimated (see Equation \ref{eq:lambda}), rather than taken directly from survey responses. The code is built to accommodate this case.  
\end{enumerate}
\

\subsection{Discussion of Question Design}\label{sec:QuestionDesign}

Here, we discuss how to choose ARD traits to enable us to construct a good image of the network. While we leave a precise characterization of optimal questions to future research, we nonetheless can offer practical insights to aid in ARD survey design. 

Conceptually ARD traits are those which, under the model, organize the latent space into regions such that nodes with certain traits are more likely to be towards the centers of those regions. Recognizing that under the model, nodes are linked as a function of their distance in this latent space, nodes are more liked to be linked to other nodes with similar such traits. This gives some insight as to which ARD features may be useful to organize the latent space.

Then when we ask, ``how many of your friends have been gored by a bull" or ``how many of your friends have multiple wives," those that have a positive count of this are going to have to be located somewhere close to the (latent, unknown) location of the cluster of people with this kind of experience. The reason is because we assume that the network that exists forms from the model in Equation \ref{eqn:network-model}, so it is most likely that someone who knows a friend that got gored by a bull and another person who has a friend who got gored by a bull are then likely to be in the same part of the latent space.
What this means is that we do not need traits that ``drive" the latent space per se, but traits that are informative. So a bad example might be a trait where it is peppered throughout the village. Not everyone does it, but many groups do, and so many people at very different points in the latent space are likely to have known someone who has this trait. As such, both (1) how many friends have ever experienced crop loss due to a drought and (2)  how many friends do you know who have twins (in a rural setting where IVF is uncommon) would presumably be uninformative. However, something where a subcommunity engages in a practice (multiple wives) would be a better trait.

In sum, a good way to think about a useful trait, in our view, is one that is ``single peaked". It should be a characteristic that is likely to be held by one group, not distributed throughout.  Furthermore because traits are used to triangulate the latent space, ones that are not essentially redundant should be chosen. If the traits essentially identify the same set of people (e.g., how many friends are Muslim?; how many friends have ever gone to a mosque?), then clearly they do not add value.

\clearpage

\subsection{Sample Census Questionnaire}\label{sec:CensusBihar}

\subsection*{IDENTIFICATION}

\begin{enumerate}
\item Date of Interview
\item Surveyor Name
\item District Name
\item SubDistrict Name
\item Village Name
\item HHID 
\item GPS of the HH (marked automatically)
\end{enumerate}

\subsection*{HOUSEHOLD IDENTIFIERS}
\begin{enumerate}
\item What is the name of the respondent ?
\item What is the name of the Household Head ?
\item What is the caste of the household head?
\item What is the sub-caste of the household head?
\item Does the Household have an electricity connection?
\item What type of roofing material does the household have?
\item Does the Household own land ?
\item Does the Household have a toilet ?
\end{enumerate}

\subsection*{ARD TRAITS}

\begin{enumerate}
\item Does the House have 2 or greater than 2 floors ? 
\item Does the respondent own a kirana shop / tea/ sweets shop/PDS shop?
\item Has any member in your household migrated to another city for labor or construction
work in the last 2 years?
\item Does any member in your household own a bike?
\item Does the respondents' house have iron/steel gates?
\item Has any member in the household passed the 12th Standard?
\item Does anyone in your household own a goat/hen?
\item Is any member in the household a government Employee?
\item Does anyone in your household have a smart phone?
\item Did any adult in the household have typhoid, malaria, or cholera in
the past six months ?
\item Does the house have 5 or greater than 5 children below the age of
18 ?
\item Is anyone in your Household a member of religious or cultural committee
at the village level ?
\end{enumerate}

\clearpage

\subsection{Sample ARD Questionnaire}\label{sec:ARDBihar}

\subsection*{IDENTIFICATION}

\begin{enumerate}
\item Enter HHID
\item Village Name
\item District Name 
\item SubDistrict Name 
\item Gram Panchayat Name 
\item Name of the Respondent
\end{enumerate}

\subsection*{FRIEND LIST ELICITATION}

\textit{Instruction to Enumerator: Note down the list of names as
given by the respondent. As you note down the names make sure that
names that are repeated are marked, so that at the end of the 3 questions,
we have a list of unique friends}
\begin{enumerate}
\item Tell the names of the Household Heads of those families in this
village whose house you visit or who visit your house frequently or
with whom you socialize frequently ? 
\item Tell the names of Household Heads of those families in this village
who give you advice/ or to whom you give advice on farming/health/financial
issues? \textit{(Ask each part separately)}
\item If you urgently needed kerosene/charcoal, rice or money, who do you
borrow them from or who borrows it from you? \textit{(Ask each part
separately)}
\end{enumerate}

\subsection*{ARD}

\textit{Instruction to Enumerator: }
\begin{itemize}
\item \textit{Inform the respondent of the name and no of friends that have
been named in the previous section}
\item \textit{Tell the respondent that the questions in this section pertain
to the friends named in the previous section}
\end{itemize}
Out of all the households whose name you took in the previous section,
how many have the following traits :
\begin{enumerate}
\item No of floors in the house are greater than or equal to 2?
\item No of households out of your friend list who own a kirana shop/ tea/
sweets shop/ PDS?
\item No of households out of your friend lis wherein any member migrated
to another city for labor/construction work in the last 2 years?
\item No of households among your friend list who own a bike?
\item No of households among your friend list whose house have iron/steel
gates?
\item No of households among your friend list wherein any member has passed
the 12th Standard?
\item No of households among your friend list which own goats/ hen?
\item No of households among your friend list where in any member is a government
Employee?
\item No of households among your friend list where someone has a smart
phone?
\item No of households among your friend list where any adult has had typhoid,
malaria, or cholera in the past six months?
\item No of households among your friend list which have 5 or greater than
5 children below the age of 18?
\item No of households among your friend list where anyone is a member of
religious or cultural committee at the village level?
\item No of households among your friend list who belong to the Scheduled
Caste?
\end{enumerate}

\clearpage
%%%%%%%%%%%%%%%%%%%%%%%%%%%%%%%%%%%%%%%%%%%%%%%%%%%%
\subsection{Estimation using ARD}\label{sec:ARD_Codes}

This section presents an abridged walk-through as to how to use our estimation procedure. We assume the researcher has csv or xls data and is familiar with Stata and (to a lesser degree) R~\citep{Rcitation}. We walk the researcher step-by-step moving from the raw data, through Stata, through R (with code provided) which outputs csv files, back to Stata in order to conduct estimation of interest. A more detailed walk-through that explains all intermediate code is provided in Section \ref{sec:ARD_Codes_Long}.

\begin{enumerate}
    \item Download ARD code: \url{https://github.com/MengjiePan/BCMP}    
    
    \item Format survey data in the following manner:
    \begin{itemize}
        \item Create a dataset(csv,xls)  that is $m$ ARD nodes by $K$ ARD responses for each village and save each file as ARD\_SURVEY\_i.csv
        \item Create a dataset that is $n$ nodes by the $K$ ARD-trait covariates from the census for each village and save each file as ARD\_CENSUS\_i.csv
        \item Create a dataset that is $m$ ARD nodes by the $L$ covariates from the census (e.g., GPS, household identifiers). Create another dataset that is $n-m$ Non ARD nodes by the $L$ covariates from the census(same covariates as used for ARD Nodes). Use $L$ covariates of these two datasets in a distance function to create a $n-m$ by $m$ dataset. This will be used in k-nearest neighbours algorithm. Save each file as distance\_i.csv

    \end{itemize}
    
	\includegraphics[scale = 0.8]{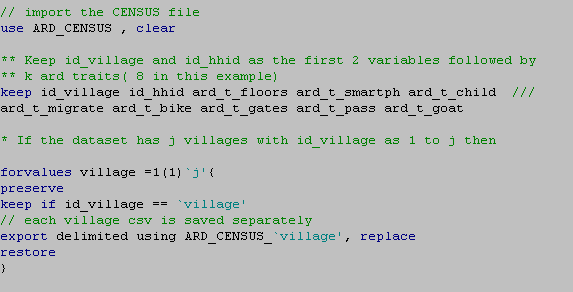}
	
	\includegraphics[scale = 0.8]{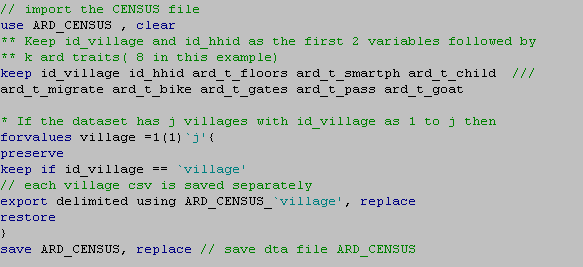}
		
	\item Copy the downloaded R files in the same folder. The folder structure should be as shown in the figure below(for 4 villages)
	
	\includegraphics{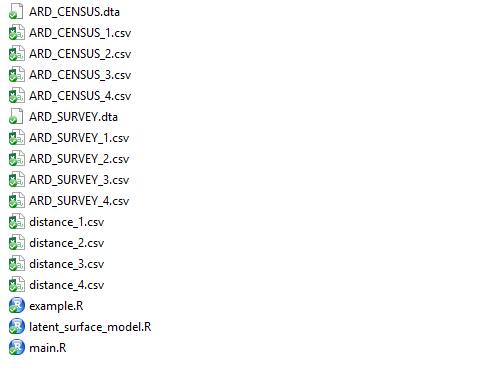}

	\item Open the file example.R	
	\item Download R Packages - \texttt{igraph}\citep{csardi2006igraph} , \texttt{movMF}\citep{hornik2014movmf}, \texttt{xlsx}\citep{dragulescu2018package} (if the datasets are in xls), \texttt{readstata13}\citep{readstata13} (if the datasets are in Stata 13,14) [example.R downloads these packages]
	\item Enter the path to the folder in variable \texttt{r\_folder} (Line 24).
	
	\includegraphics{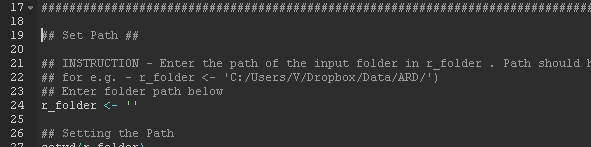} 
	
	\item	Run the R Script example.R. Output should be generated in Folder \texttt{OUT} in the current folder.

    \item Import the network characteristics that have been generated in folder \texttt{OUT}
    
    \includegraphics{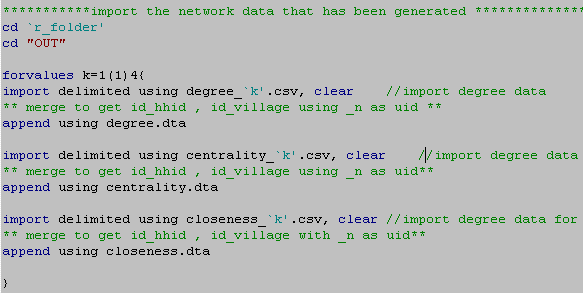}
    
    \item Import the graph simulations that have been generated from folder \texttt{OUT/SIMULATION}
    
    \item Conduct economic estimation of interest. For instance,
    \[
    y_{iv} = \alpha + \beta \frac{1}{B}\sum_{b=1}^BS(g)_{iv,b} + \epsilon_{iv},
    \]
    to estimate $\beta$, which is the parameter of interest in this example, where $i$ is a node and $v$ is the independent network for $v=1,...,V$ networks in the sample.
    
     \includegraphics{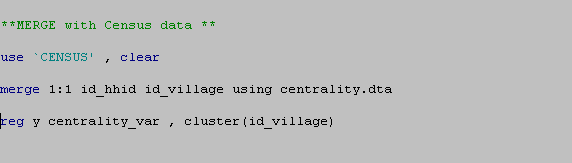}
     
\end{enumerate}

\clearpage

%%%%%%%%%%%%%%%%%%%%%%%%%%%%%%%%%%%%%%%%%%%%%%%%%%%%%%%%
%%%%%%%%%%%%%%%%%%%%%%%%%%%%%%%%%%%%%%%%%%%%%%%%%%%%%%%%
\begin{center}
    {\large {\sc {\bf Online Appendix: Not for Publication}}}
\end{center}

\section{Detailed Estimation Procedure}\label{sec:ARD_Codes_Long}

This section presents the detailed walk-through for the estimation procedure.

\begin{enumerate}
    \item Download ARD code: \url{https://github.com/MengjiePan/BCMP}    
    
    \item Format survey data in the following manner:
    \begin{itemize}
        \item Create a dataset(csv,xls)  that is $m$ ARD nodes by $K$ ARD responses for each village and save each file as ARD\_SURVEY\_i.csv
        \item Create a dataset that is $n$ nodes by the $K$ ARD-trait covariates from the census for each village and save each file as ARD\_CENSUS\_i.csv
        \item Create a dataset that is $m$ ARD nodes by the $L$ covariates from the census (e.g., GPS, household identifiers). Create another dataset that is $n-m$ Non ARD nodes by the $L$ covariates from the census(same covariates as used for ARD Nodes). Use $L$ covariates of these two datasets in a distance function to create a $n-m$ by $m$ dataset. This will be used in k-nearest neighbours algorithm. Save each file as distance\_i.csv

    \end{itemize}
    
	\includegraphics[scale = 0.8]{STATA_1}
	
	\includegraphics[scale = 0.8]{STATA_2}
		
	\item Copy the downloaded R files in the same folder. The folder structure should be as shown in the figure below(for 4 villages)
	
	\includegraphics{Folder}

	\item Open the file example.R	
	\item Download R Packages - \texttt{igraph}\citep{csardi2006igraph} , \texttt{movMF}\citep{hornik2014movmf}, \texttt{xlsx}\citep{dragulescu2018package} (if the datasets are in xls), \texttt{readstata13}\citep{readstata13} (if the datasets are in Stata 13,14) [example.R downloads these packages]
	\item Enter the path to the folder in variable \texttt{r\_folder} (Line 24). \textbf{Running the R Script example.R now } should generate the ARD Output in the Folder \texttt{OUT} in the current folder. The steps given next explain the process in detail through code snippets.

	\includegraphics{PATH}
	
     \item Preparing the datasets for constructing ARD :
      \begin{itemize}
      
      	\item The datasets created in Step 2 are imported(Line 36-54) and are named \texttt{ard\_survey}, \texttt{ard\_census} and \texttt{distance.all} respectively
      	\item Calculate the value of variable \texttt{total.prop} - \textit{fraction of ties in the network that are made with members of group k, summed over K groups} using \texttt{example.R} (Line no 69-80). The variable \texttt{villagei} stores the \texttt{ard\_census} traits.
        \end{itemize}
        
    	\includegraphics{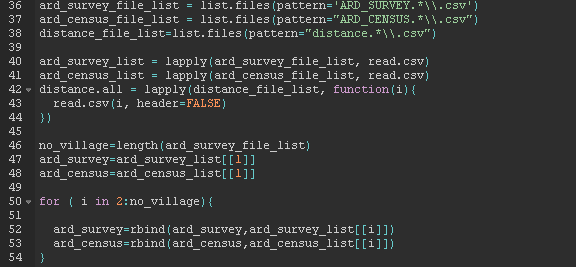}
    	
	    \includegraphics{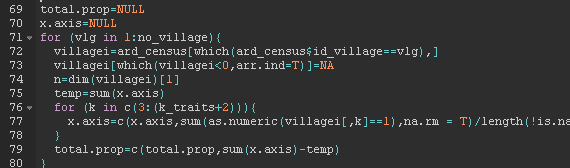}

    \item Estimate the parameters of the model: $(\nu_i,z_i)_{i=1}^m$ for the $m$ ARD households, $\zeta$, $(\upsilon_k,\eta_k)_{k=1}^m$ (the latent trait distribution location and concentration parameters).
    \begin{itemize}
        \item Use \texttt{example.R} to call(Line 93) \texttt{main.R}, which calls(Line 23) function  \texttt{f.metro} in \texttt{latent\_surface\_model.R}. 
        \item The call to function \texttt{main} of \texttt{main.R} on Line 93 requires 4 input variables
	            \begin{itemize}
	            	\item \texttt{y} - use the \texttt{ard\_survey} dataset that has been imported
	            	\item \texttt{total.prop} - Calculated in Step 4 
	            	\item \texttt{muk.fix } - the positions of fixed variables calculated in Line 126-127 of \texttt{example.R}
	            	\item \texttt{distance.matrix} - use the \texttt{distance.all} dataset that has been imported
		         \end{itemize}

        \item The Output of the call to \texttt{f.metro} is stored in variable \texttt{posterior} of \texttt{main.R}

    \end{itemize}
    
       	\includegraphics{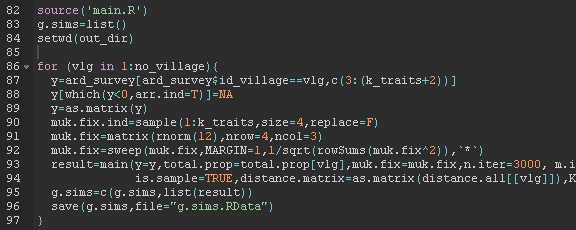}

      	\includegraphics{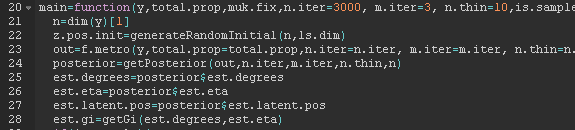}
	
	\item Estimate $\nu_i$ and $z_i$ for the $n-m$ nodes that are in the census but not the ARD sample.
	\begin{itemize}
		\item  \texttt{main.R} (Line no 30) calls function \texttt{getPosteriorAllnodes} in \texttt{main.R}. The call to the function takes variable \texttt{distance.matrix} as an input(which had been passed to function \texttt{main} from \texttt{example.R} in Step 5)

		\item Output is stored in variable \texttt{posteriorAll}. The estimated latent positions $z_i$ are stored as an attribute of \texttt{posteriorAll} as \texttt{est.latent.pos.all}
		\item \texttt{getPosteriorAllnodes} estimates  $\nu_i$ and $z_i$ using $k$-means from \texttt{distance.matrix} variable. This variable  has been calculated using the $K+L$ covariates for the $m$ nodes in the ARD sample
		and  $n-m$ Non-ARD nodes 
	\end{itemize}
			\includegraphics{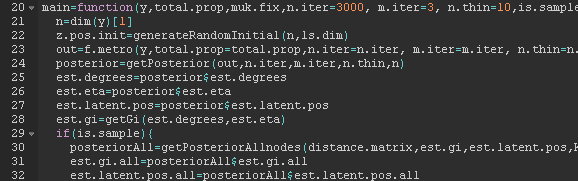}
			
			\includegraphics{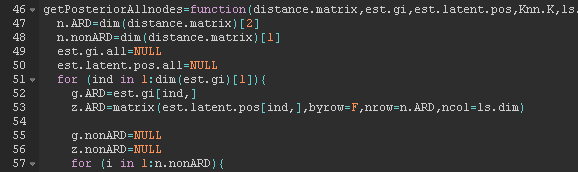}

	\item Draw a set of $b=1,\ldots,B$ draws from the network formation probability model (now with estimated parameters for all nodes) from the posterior distribution.		
        \begin{itemize}
        	\item Use \texttt{main.R} (Line no 33) to call function \texttt{simulate.graph.all} . The output is stored in variable \texttt{g.sims} . \texttt{simulate.graph.all} calls(Line 108) \texttt{simulate.graph.once} for each run.

            \item Draw a parameter vector ${\bf \theta}$ (all the above parameters) from the posterior.
            \item Draw a graph $g_b$ given ${\bf \theta}_b$. (Line 130 - function \texttt{simulate.graph.once})
        \end{itemize}    
        
		 \includegraphics{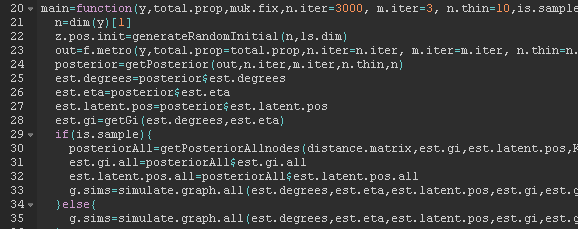}
		 
		 \includegraphics{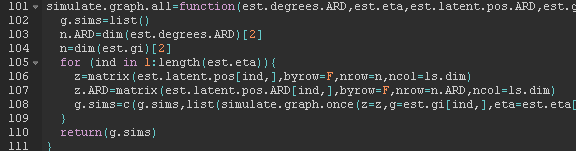}
		 
 		 \includegraphics{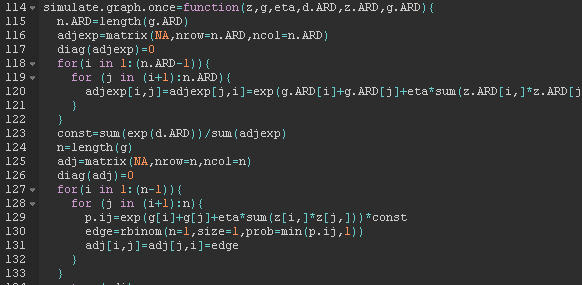}
    \item Compute network statistics of interest $S(g_b)$ for each draw $g_b$ for $b=1,...,B$.
    \begin{itemize}
        \item Construct your own desired functions
        \item Or use a suggested code \texttt{example.R} (Line no 115-144)
    
        \end{itemize}
        \includegraphics{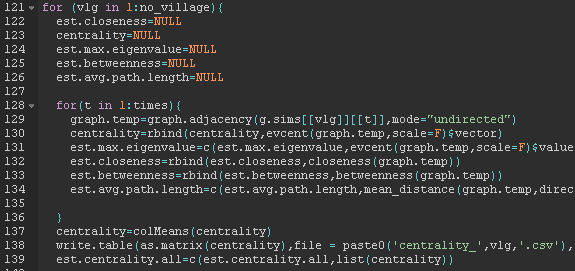}
    \item Import the network characteristics that have been generated in folder \texttt{OUT}
    
    \includegraphics{STATA_OUT1}
    
    \item Import the graph simulations that have been generated from folder \texttt{OUT/SIMULATION}
    
    \item Conduct economic estimation of interest. For instance,
    \[
    y_{iv} = \alpha + \beta \frac{1}{B}\sum_{b=1}^BS(g)_{iv,b} + \epsilon_{iv},
    \]
    to estimate $\beta$, which is the parameter of interest in this example, where $i$ is a node and $v$ is the independent network for $v=1,...,V$ networks in the sample.
    
     \includegraphics{STATA_OUT2}
     
\end{enumerate}

\clearpage

%%%%%%%%%%%%%%%%%%%%%%%%%%%%%%%%%%%%%%%%%%%%%%%%%%%%%%%%
\section{ARD Questions from \cite*{banerjeebdk2016}}\label{sec:ARD_questions}
\label{sec:ardqlist}

This section presents the ARD questions used in \cite*{banerjeebdk2016} that we use in Section \ref{sec:hyderabad}. 

 How many other households do you know in your neighborhood ...
\begin{enumerate}
\item where a woman has ever given birth to twins?
\item where there is a permanent government employee?
\item where there are 5 or more children?
\item where any child has studied past 10th standard?
\item where any adult has had typhoid, malaria, or cholera in the past six months?
\item where any adult has been arrested by the police?
\item where at least one woman has had a second marriage?
\item where at least one man currently has more than one wife?
\end{enumerate} 

 \begin{comment}

\begin{enumerate}
\item where any child has studied past 10th standard?
\item where any person has completed a college degree/engineering?
\item where any adult has a four wheeler?
\item where any adult owns a tractor?
\item where any adult had an accident?
\item where any adult has had typhoid, malaria, or cholera in the past six months?
\item where at least one man currently has more than one wife?
\item where any person is a lead of a self-help group?
\end{enumerate} 
\end{comment}

\clearpage

%%%%%%%%%%%%%%%%%%%%%%%%%%%%%%%%%%%%%%%%%%%%%%%%%%%%%%%%
\section{Comparing Latent Model to a Beta Model}
\label{sec:betacompare}
\setcounter{table}{0}
\renewcommand{\thetable}{E.\arabic{table}}
\setcounter{figure}{0}
\renewcommand{\thefigure}{E.\arabic{figure}}

\

We compare our model to the beta model to illustrate how adding latent positions to our model fitting procedure affects the precision of our estimation. 
\

To fit a beta model, we first run \citep{mccormick2010many} to get a posterior distribution of estimated degrees for ARD nodes. Then taking $\zeta=0$ in Equation \eqref{eq:deg}, we get a posterior distribution of $\nu_i$. As with the latent case, we generate graph using $\Pr(g_{ij}=1|\nu_i,\nu_j) \propto \exp(\nu_i) \exp(\nu_j)$ and average measures over simulated graphs.

\begin{comment}
\begin{table}[H]
\centering
\caption{Estimation results using beta model and latent surface model\label{tab:beta_compare}}
\scalebox{0.8}{\begin{threeparttable} 
\input{beta_compare.tex}
\end{threeparttable}} 
\end{table}
\end{comment}

\begin{figure}[!h]
\centering
\includegraphics[width=\textwidth]{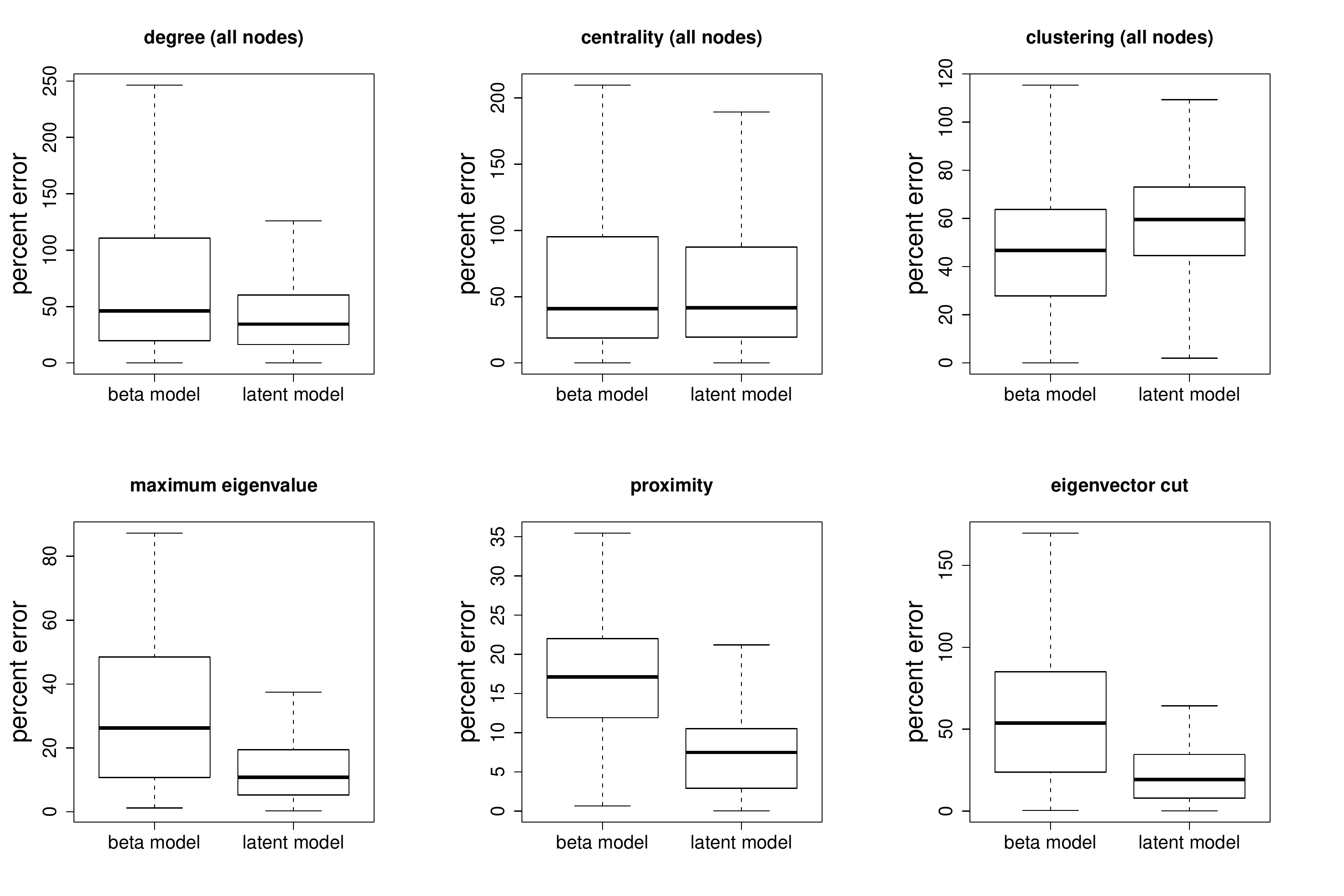}
\caption{Comparison of using beta model and latent model to estimate node level measures for all nodes and network level measures. These plots show boxplot of absolute percentage error for each statistic. Latent model outperforms beta model on all network level measures, and has similar performances on node level measures.}
\label{fig:boxplot}
\end{figure}

We compare beta model and latent model on degree, centrality, and clustering estimation on all nodes, as well as maximum eigenvalue, proximity, and eigenvector cut. Because the absolute percentage errors are very right skewed, we present boxplots that show the distribution for each measure (Figure \ref{fig:boxplot}), with outliers omitted from the plot. The beta model performs slightly better in estimating clustering, but performs worse in degree, proximity, maximum eigenvalue, and eigenvector cut. The mean absolute percentage error with eigenvector cut using latent model is approximately two thirds of the one using beta model. This illustrates one advantage of using a latent surface model. The propensity of forming an edge not only depends on the popularity of two nodes, but also on their distance on the latent surface. So the simulated graphs resemble the true graph's partitioning better than the simulated graphs from a beta model.

\clearpage

%%%%%%%%%%%%%%%%%%%%%%%%%%%%%%%%%%%%%%%%%%%%%%%%%%%%%%%%%%%%%%%%%%%%%%%%
\section{Prior Experiments}
\label{sec:priorsens}
\setcounter{table}{0}
\renewcommand{\thetable}{F.\arabic{table}}
\setcounter{figure}{0}
\renewcommand{\thefigure}{F.\arabic{figure}}

We show how the choice of priors and fixed subpopulations affect our results.  The priors we use in Section \ref{sec:results} are: uniform hyperpriors for $\mu_d, \sigma_d^2$, Gamma(0.5,0.5) for $\zeta$, and Gamma(5,0.1) for $\eta_k$, and this is what ``base model'' in Figures \ref{fig:boxplot_mu}-\ref{fig:boxplot_mukfix} refers to. We have experimented with the following alternate priors: $\mu_d \sim \mathcal{N} (0,5)$, $\mu_d \sim \mathcal{N} (2,5)$, and $\mu_d \sim \mathcal{N} (4,5)$; $\sigma_d^2$ follow inverse-chi-squared distribution with parameters (1,0.5) and (1,3); $\zeta \sim$ Gamma(2,0.5) and Uniform(0.001,10); $\eta_k \sim$ Gamma(10,0.1) and Uniform(0.1,150).

We perform two types of sensitivity analyses.  First, we show that the quality of our estimates is consistent across a wide set of choices of prior values.  Second, we directly examine the influence of the prior by comparing three sets of densities: the density in the observed Karnataka data, the posterior density, and the density from the prior.  Additionally we consider two different ways of fixing positions of a subset of subpopulations on the latent space. In Section \ref{sec:results} we fix subpopulations based on their caste information and the fact that people in the same caste are more likely to know each other. Here we experiment with choosing randomly positions and which subpopulations to fix (``mukfixRandom'' in Figure \ref{fig:boxplot_mukfix}), as well as intentionally fixing subpopulations very close to each other (``mukfixClose'' in Figure \ref{fig:boxplot_mukfix}).

Similar to Figure \ref{fig:boxplot}, Figures \ref{fig:boxplot_mu}-\ref{fig:boxplot_mukfix} show the distribution of absolute percentage errors for each measure with outliers omitted. We see from these figures that changing priors and fixed subpopulations have no impact on the performances of our proposed method, although the prior on $\zeta$ has slight impact on the estimation of maximum eigenvalue.

Moving now to our second set of sensitivity analyses, Figure~\ref{fig:prior_density} shows density plots for five different network features.  In each of the plots we see three histograms.  The green histograms represent the density of the network feature that arises from the prior distribution choices we use in Section~\ref{sec:results}.  These densities arise from generating networks from the prior distributions.  That is, they describe the types of networks our formation model would produce in the absence of data.  As a contrast, we plot the densities from the (estimated) posterior, which includes information from both the prior and from ARD constructed using the Karnataka data.  For comparison, we also included the observed density from the Karnataka data, or the ``true'' density.

\begin{figure}[!h]
\centering
\includegraphics[width=\textwidth]{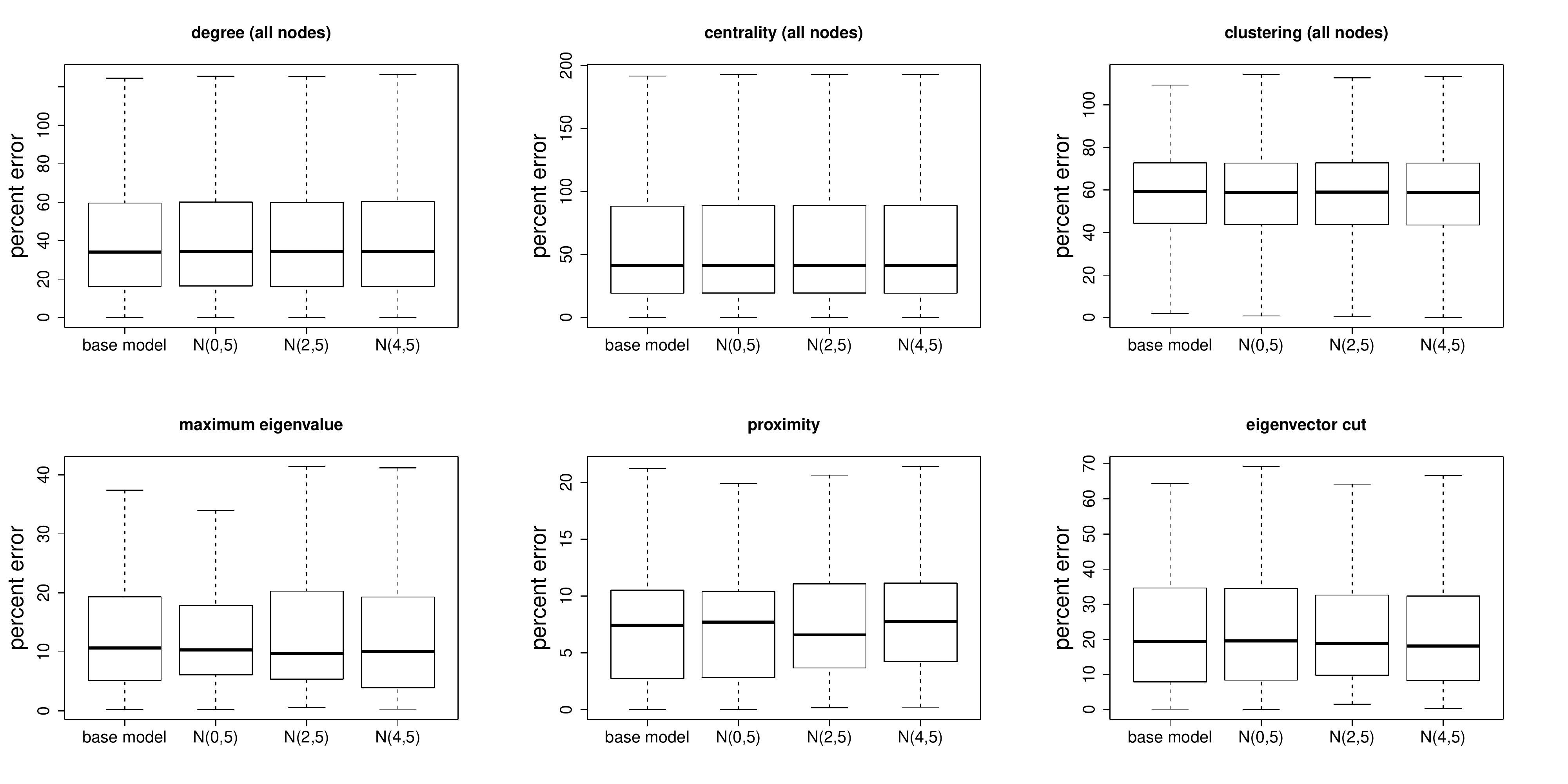}
\caption{Comparison of using uniform, $\mathcal{N} (0,5), \mathcal{N} (2,5), \mathcal{N} (4,5)$ priors for hyperparameter $\mu_d$ to estimate node level measures for all nodes and network level measures. These plots show boxplot of absolute percentage error for each statistic. Prior of $\mu_d$ do not have an impact on the results.}
\label{fig:boxplot_mu}
\end{figure}

\begin{figure}[!h]
\centering
\includegraphics[width=\textwidth]{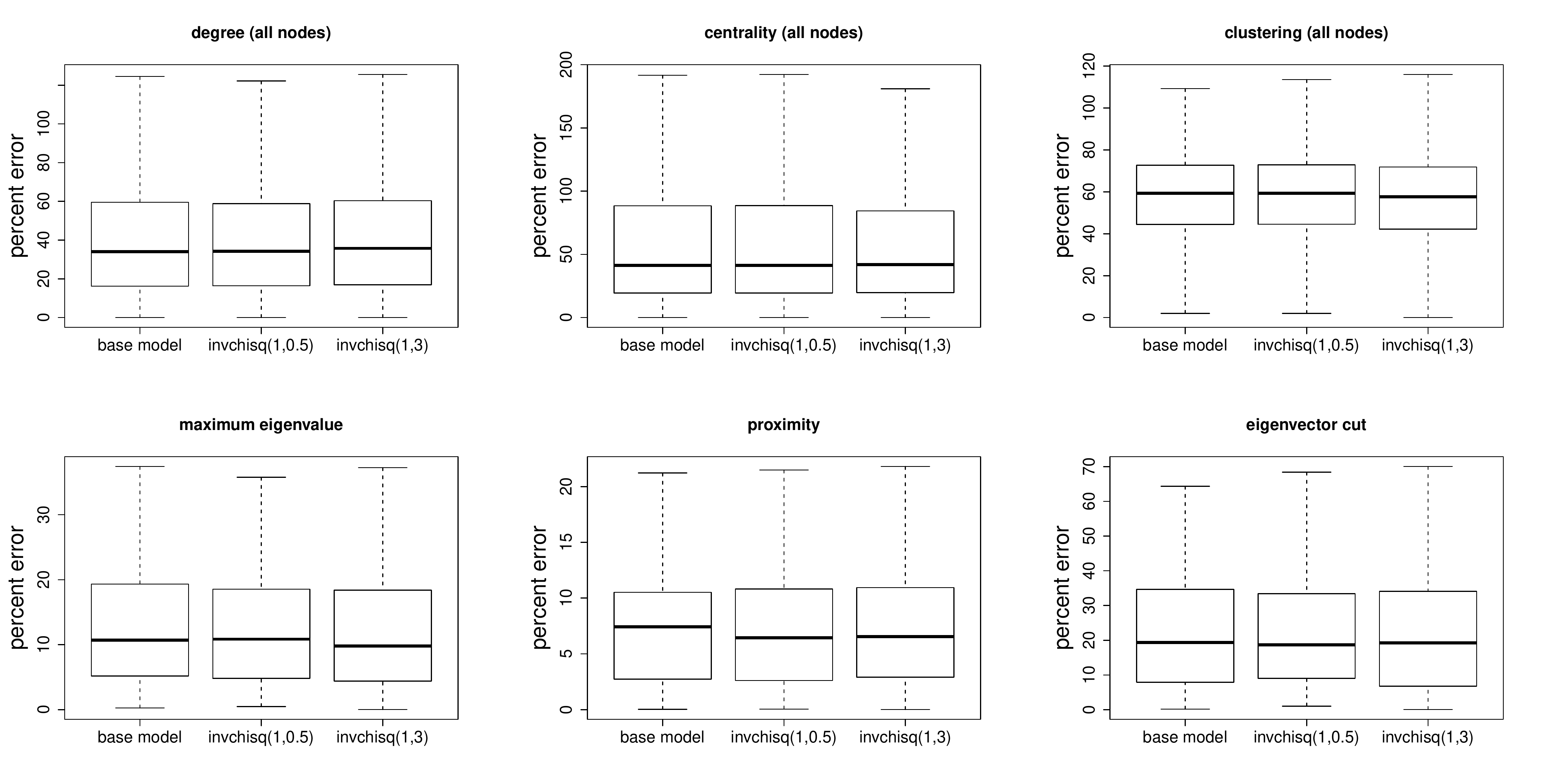}
\caption{Comparison of using uniform, inverse-chi-squared distribution with parameters (1,0.5) and (1,3) priors for hyperparameter $\sigma_d^2$ to estimate node level measures for all nodes and network level measures. These plots show boxplot of absolute percentage error for each statistic. Prior of $\sigma_d^2$ do not have an impact on the results.}
\label{fig:boxplot_sd}
\end{figure}

\begin{figure}[!h]
\centering
\includegraphics[width=\textwidth]{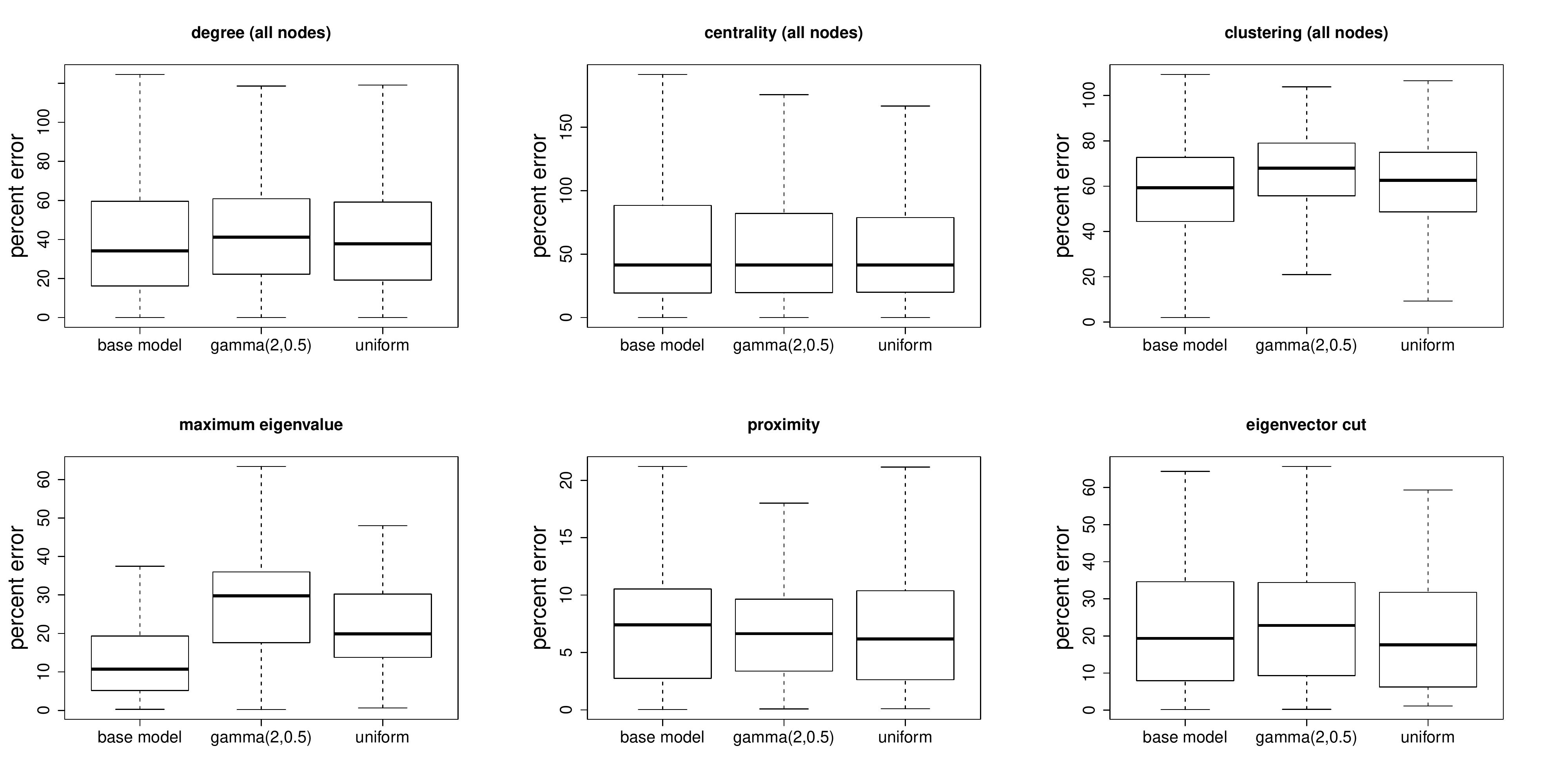}
\caption{Comparison of using Gamma(0.5,0.5), Gamma(2,0.5) and Uniform(0.001,10) priors for $\zeta$ to estimate node level measures for all nodes and network level measures. These plots show boxplot of absolute percentage error for each statistic. Prior of $\zeta$ impacts maximum eigenvalue and clustering slightly.}
\label{fig:boxplot_eta}
\end{figure}

\begin{figure}[!h]
\centering
\includegraphics[width=\textwidth]{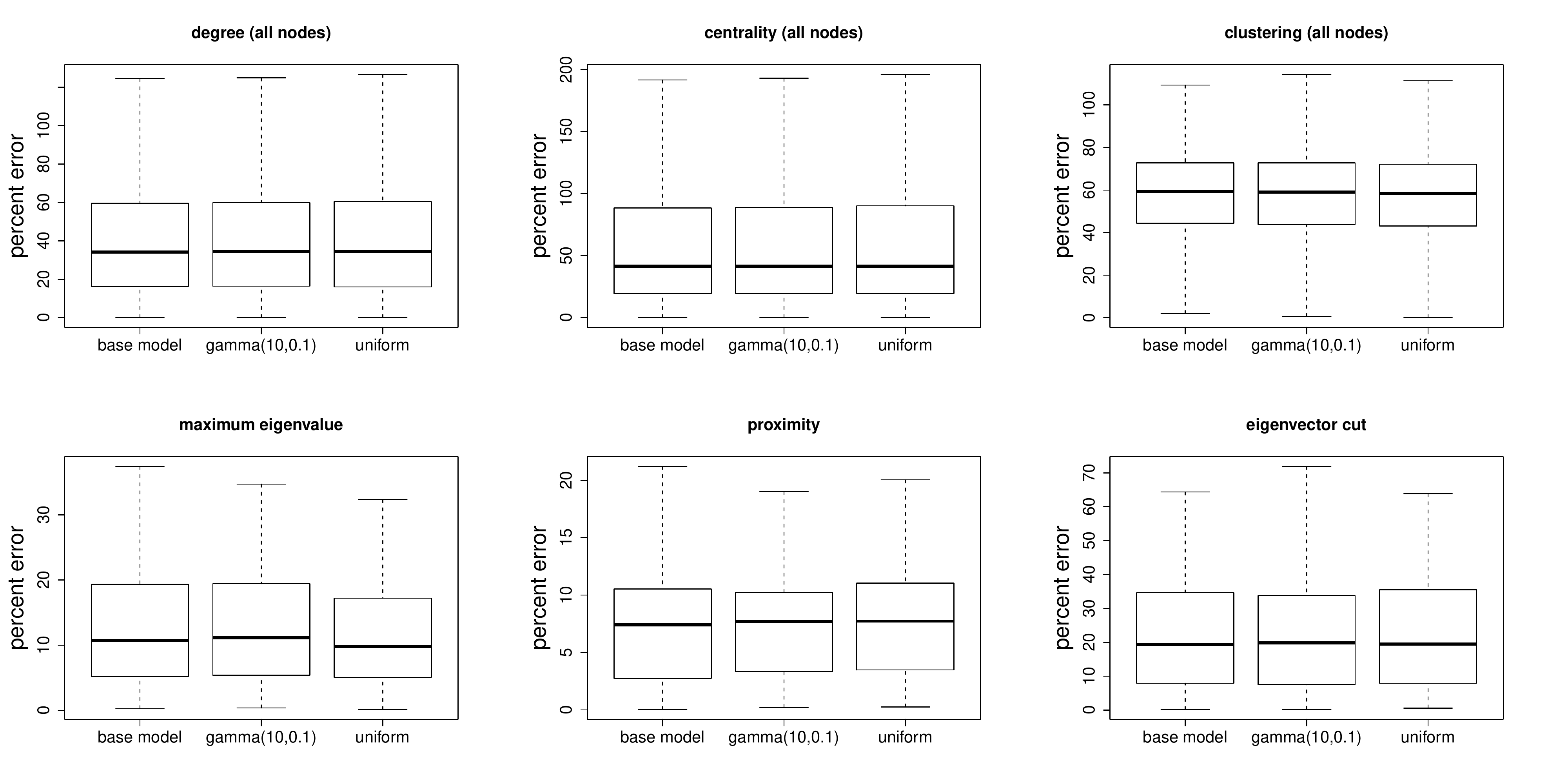}
\caption{Comparison of using Gamma(5,0.1), Gamma(10,0.1) and Uniform(0.1,150) priors for $\eta_k$ to estimate node level measures for all nodes and network level measures. These plots show boxplot of absolute percentage error for each statistic. Prior of $\eta_k$ do not have an impact on the results.}
\label{fig:boxplot_etak}
\end{figure}

\begin{figure}[!h]
\centering
\includegraphics[width=\textwidth]{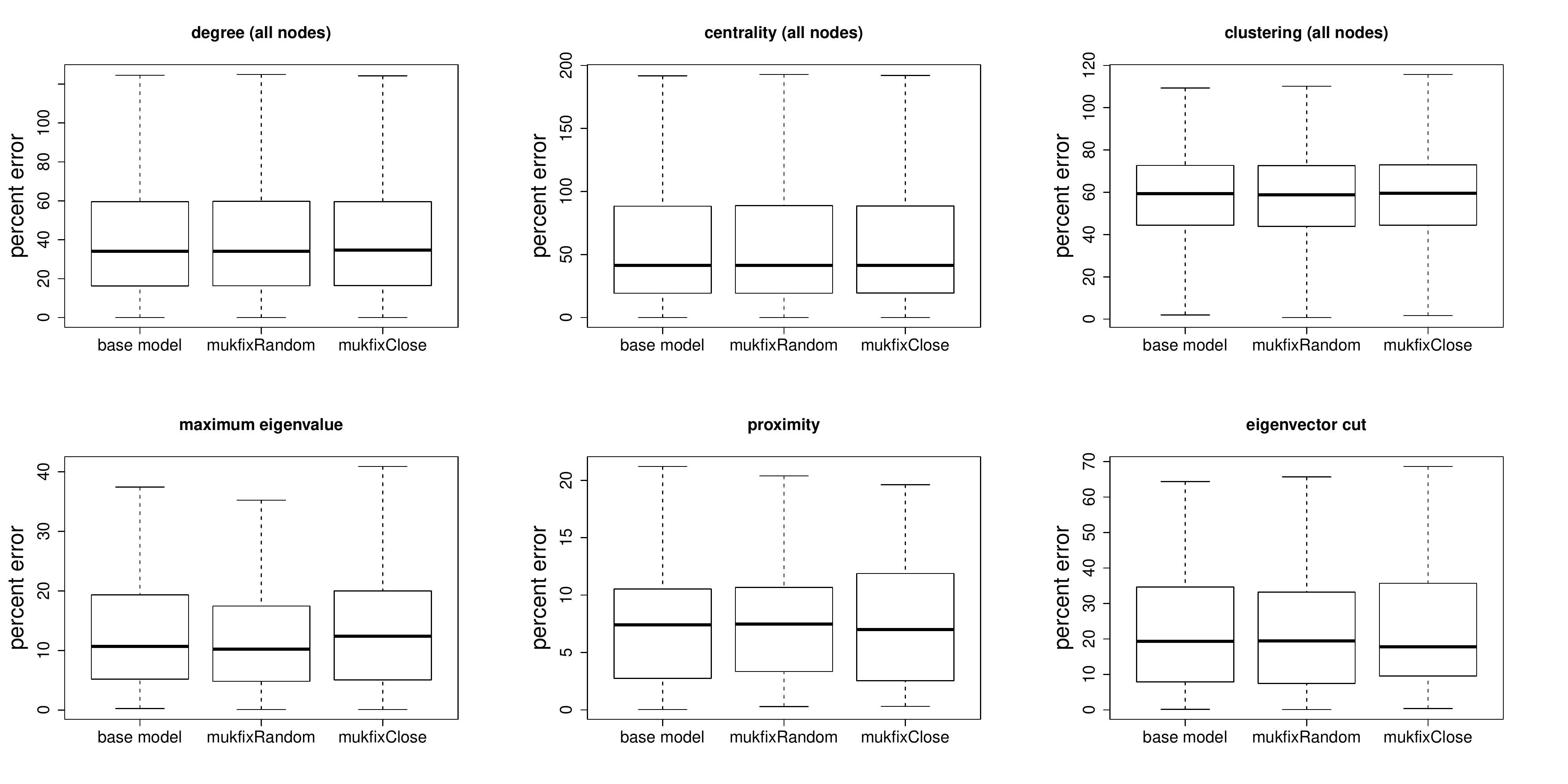}
\caption{Comparison of results from models fixing subpopulations based on caste information, fixing subpopulations randomly, and intentionally fixing subpopulations close. These plots show boxplot of absolute percentage error for node level measures for all nodes and network level measures. These three ways of fixing a subset of subpopulations do not have an impact on the results.}
\label{fig:boxplot_mukfix}
\end{figure}

\begin{figure}[!h]
\centering
\subfloat{
\includegraphics[width=.3\textwidth]{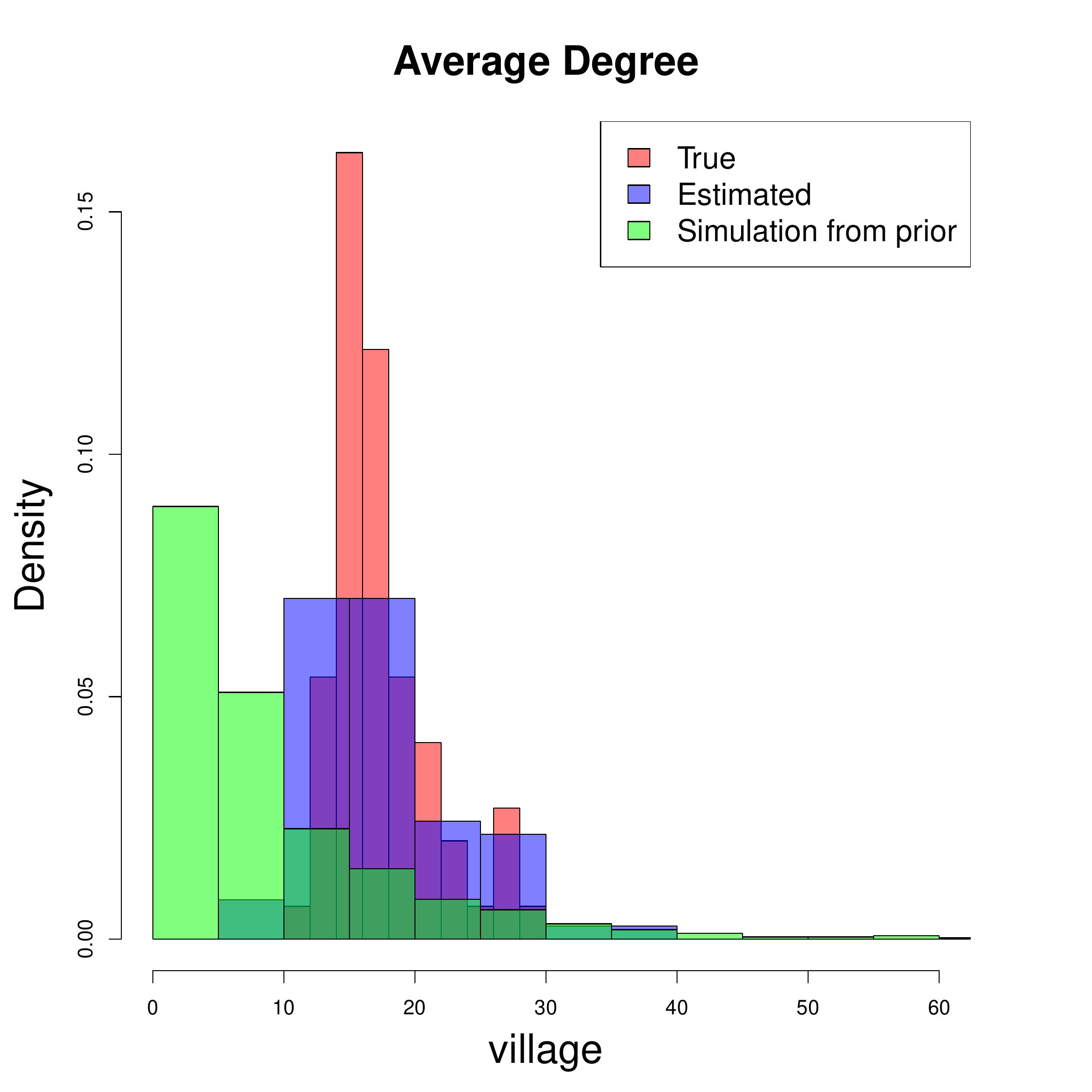}}
\subfloat{
\includegraphics[width=.3\textwidth]{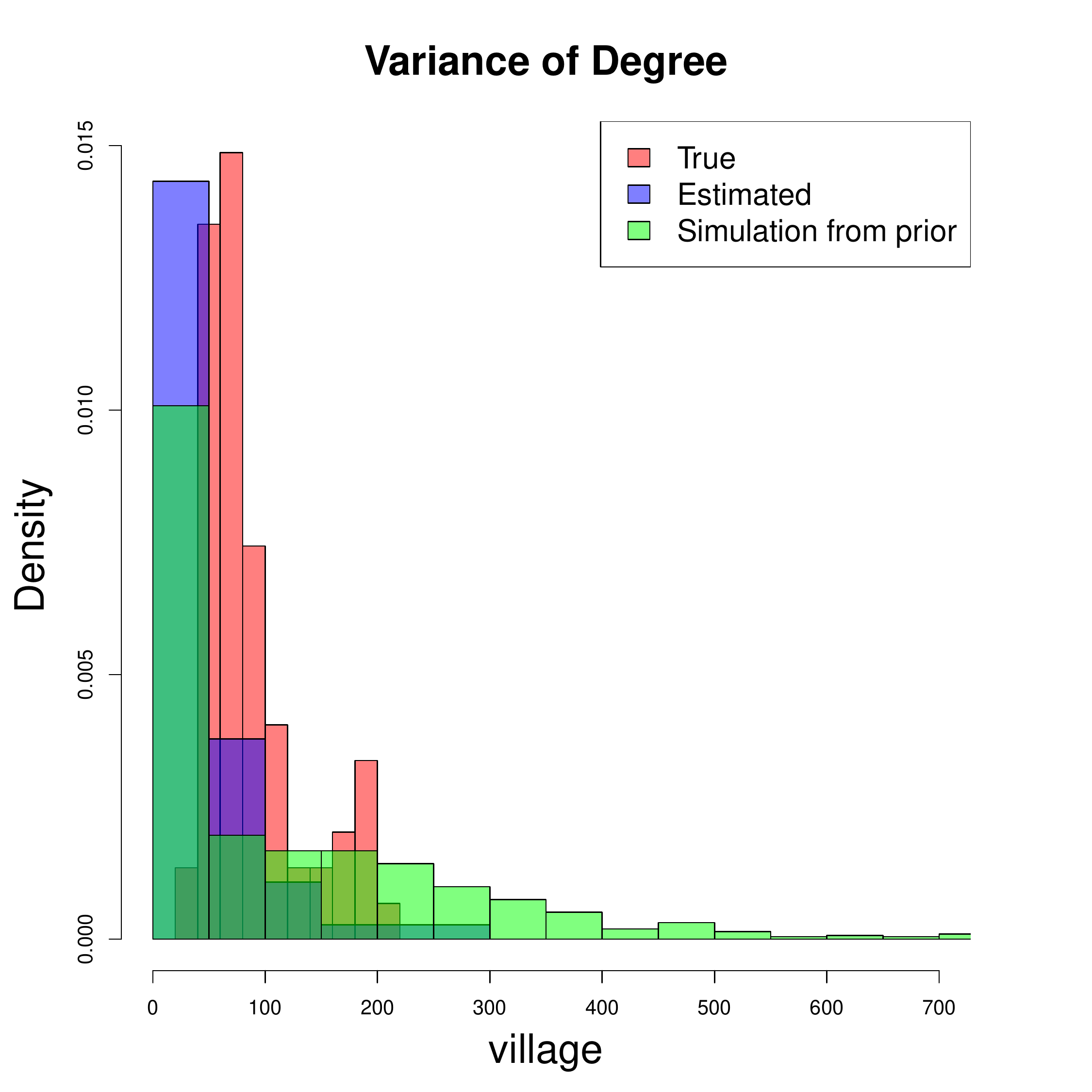}
}
\subfloat{
\includegraphics[width=.3\textwidth]{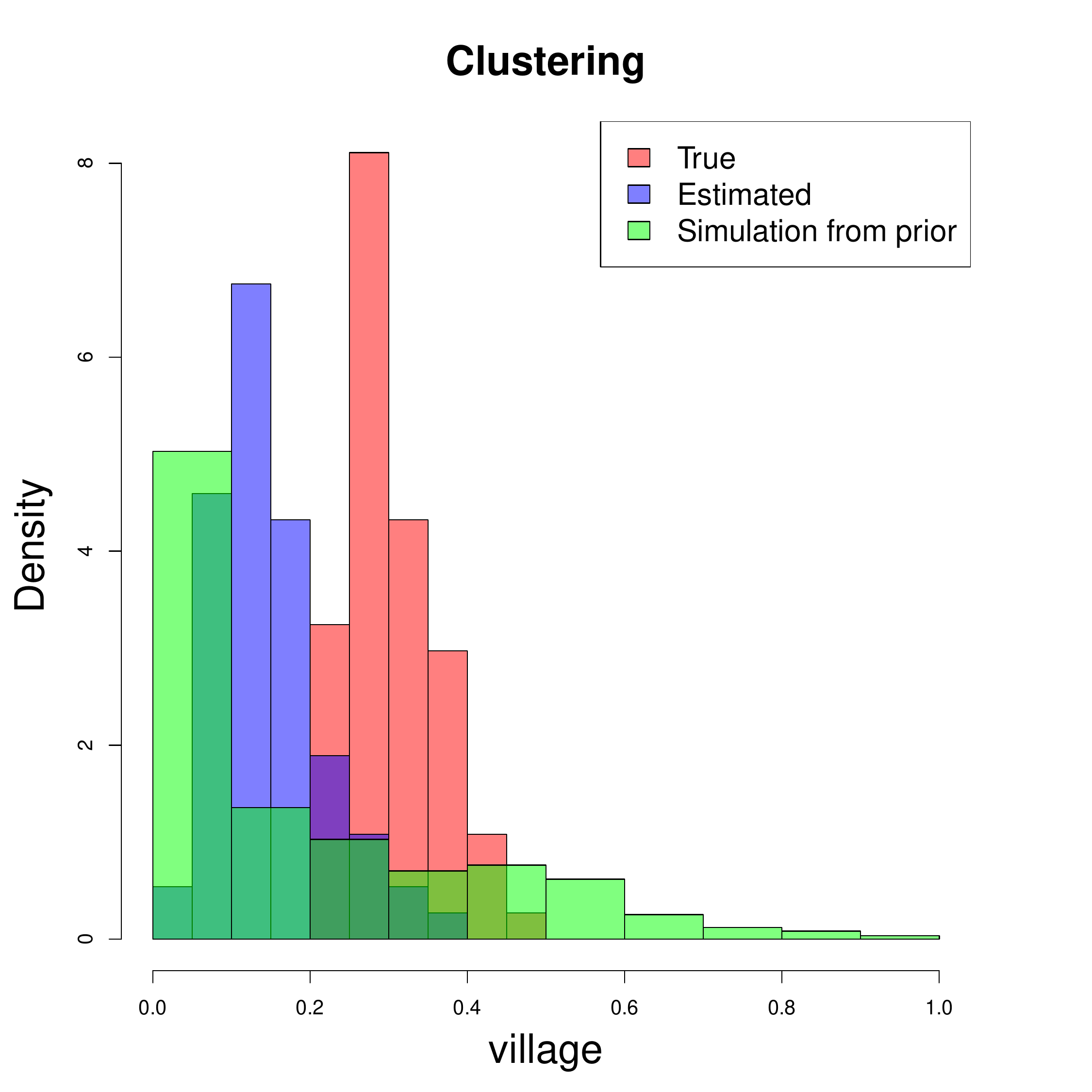}
}

\medskip

\subfloat{
\includegraphics[width=.3\textwidth]{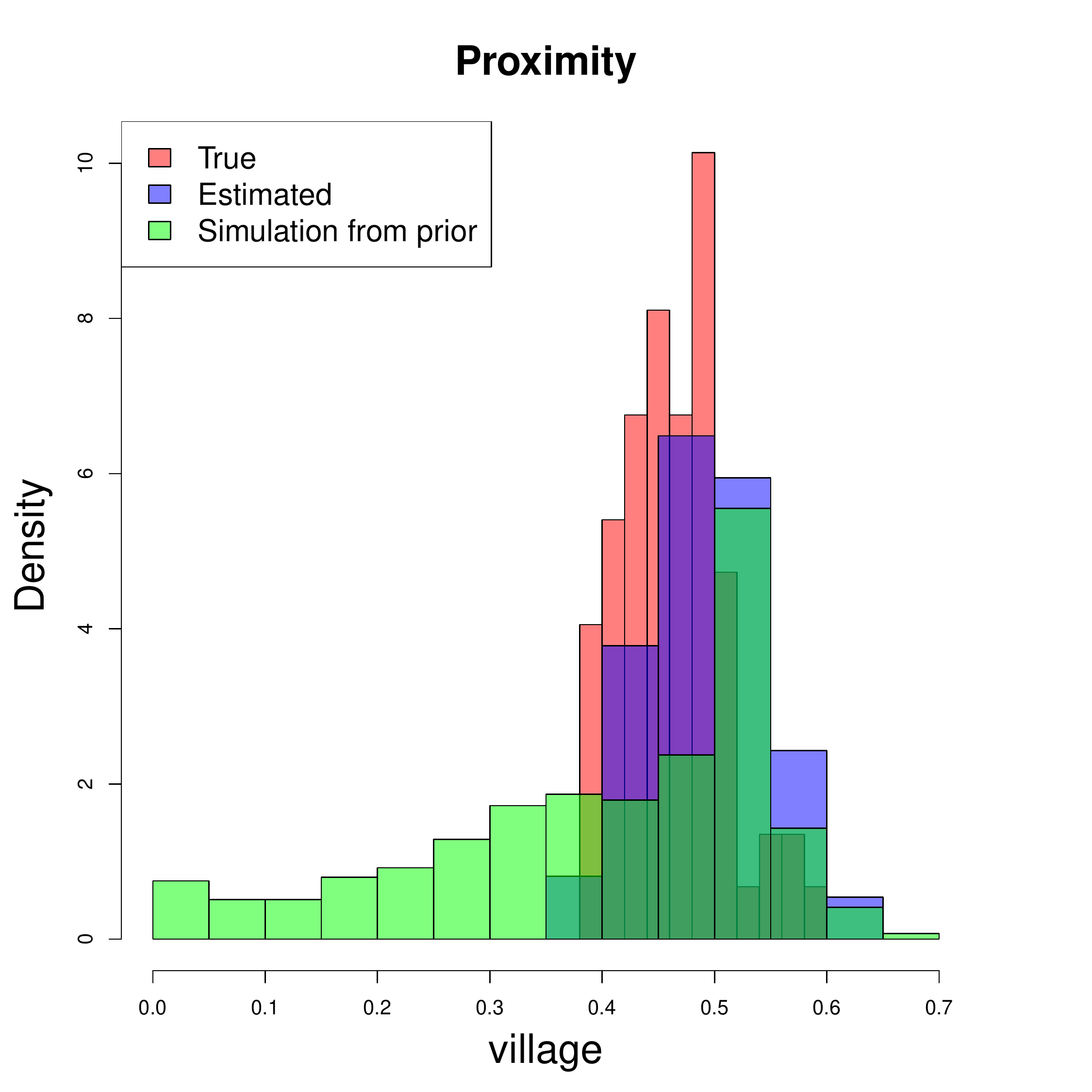}
}
\subfloat{
\includegraphics[width=.3\textwidth]{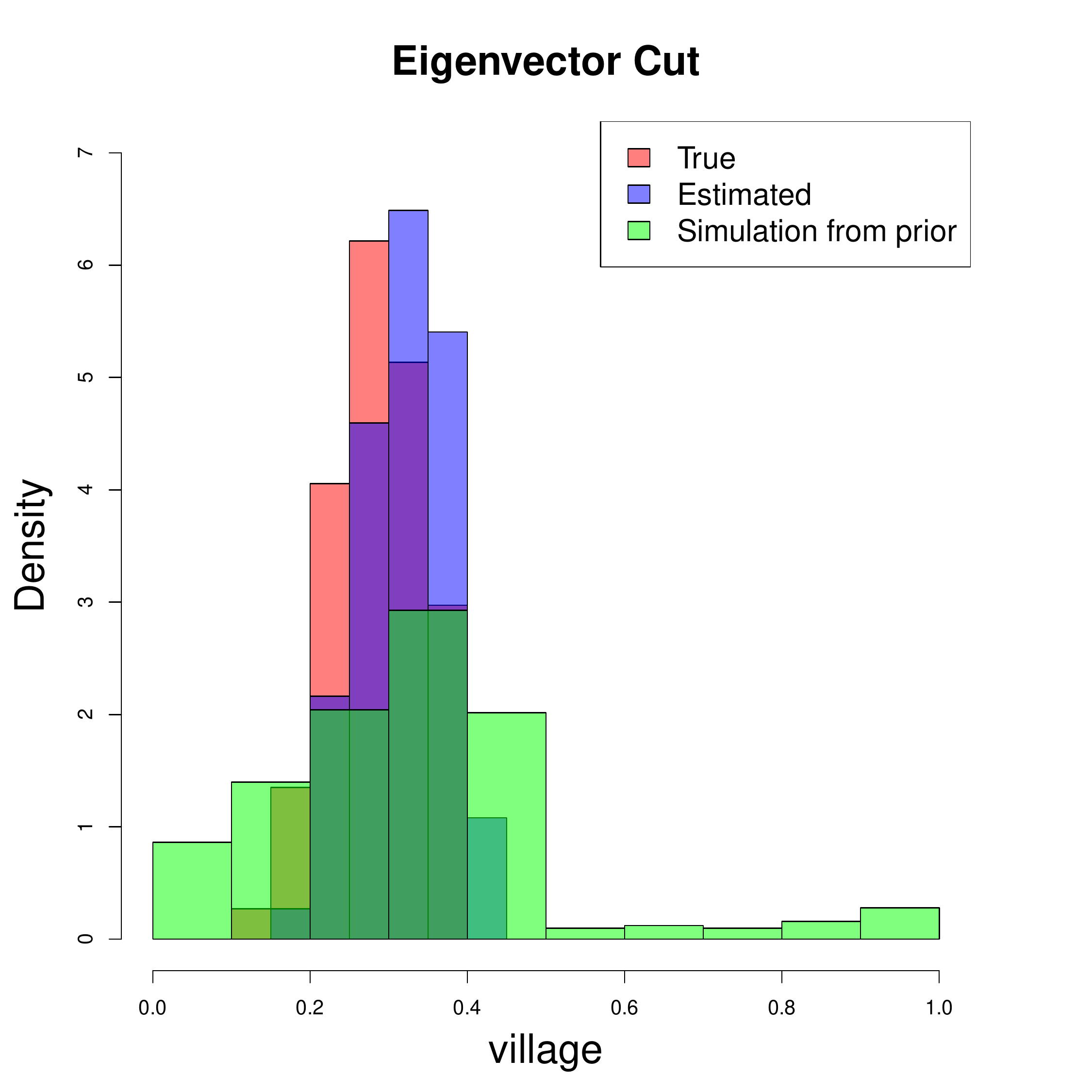}
}

\caption{Density of average degree, variance of degree, network-level clustering, proximity, and eigenvector cut.  The histograms labeled ``True'' show the density observed in the Karnataka networks.  The ``Estimated'' histograms are the density estimated from fitting our model to this data and the ``Prior'' histograms are the density from networks simulated using our chosen prior distributions.  Overall, the ``Prior'' histograms have higher variance and are, in many cases, centered in different places than estimated (posterior) densities, indicating that information in the ARD data are driving estimation, rather than the prior.}
\label{fig:prior_density}
\end{figure}

\clearpage

%%%%%%%%%%%%%%%%%%%%%%%%%%%%%%%%%%%%%%%%%%%%%%%%%%%%%%%%%%%%%%%%%%%%%%%%
\section{Simulating non-uniform feature centers}
\label{sec:nonuniformcenters}
\setcounter{figure}{0}
\renewcommand{\thefigure}{G.\arabic{figure}}

In this section, we show that in the case where some of the feature centers are clustered in latent space, our method is able to achieve the same results as in core simulation (Section \ref{sec:core_results}). To simulate non-uniform feature centers, we first simulate 4 out of the 12 centers uniformly randomly. Then for each of the 4 centers, we simulate two additional centers that are close to it in the latent space. Specifically, let $\mathbf{\upsilon}_1$,...,$\mathbf{\upsilon}_4$ be the first four uniform centers. Then we sample $\mathbf{\upsilon}_5$, $\mathbf{\upsilon}_6 \sim \mathcal{M}(\mathbf{\upsilon}_{1},20)$, $\mathbf{\upsilon}_7$, $\mathbf{\upsilon}_8 \sim \mathcal{M}(\mathbf{\upsilon}_{2},20)$, $\mathbf{\upsilon}_9$, $\mathbf{\upsilon}_{10} \sim \mathcal{M}(\mathbf{\upsilon}_{3},20)$, and $\mathbf{\upsilon}_{11}$, $\mathbf{\upsilon}_{12} \sim \mathcal{M}(\mathbf{\upsilon}_{4},20)$. We choose the variance parameter to be 20 by trial and error such that the centers are clustered together. The simulation setup for the rest parameters are the same as in Section \ref{sec:sims}. Figure \ref{fig:nonuniformcenters} shows that with non-uniform features centers, the estimated measures and true measures are tightly correlated.

\begin{figure}[!h]
\centering
\subfloat{
\includegraphics[width=.3\textwidth]{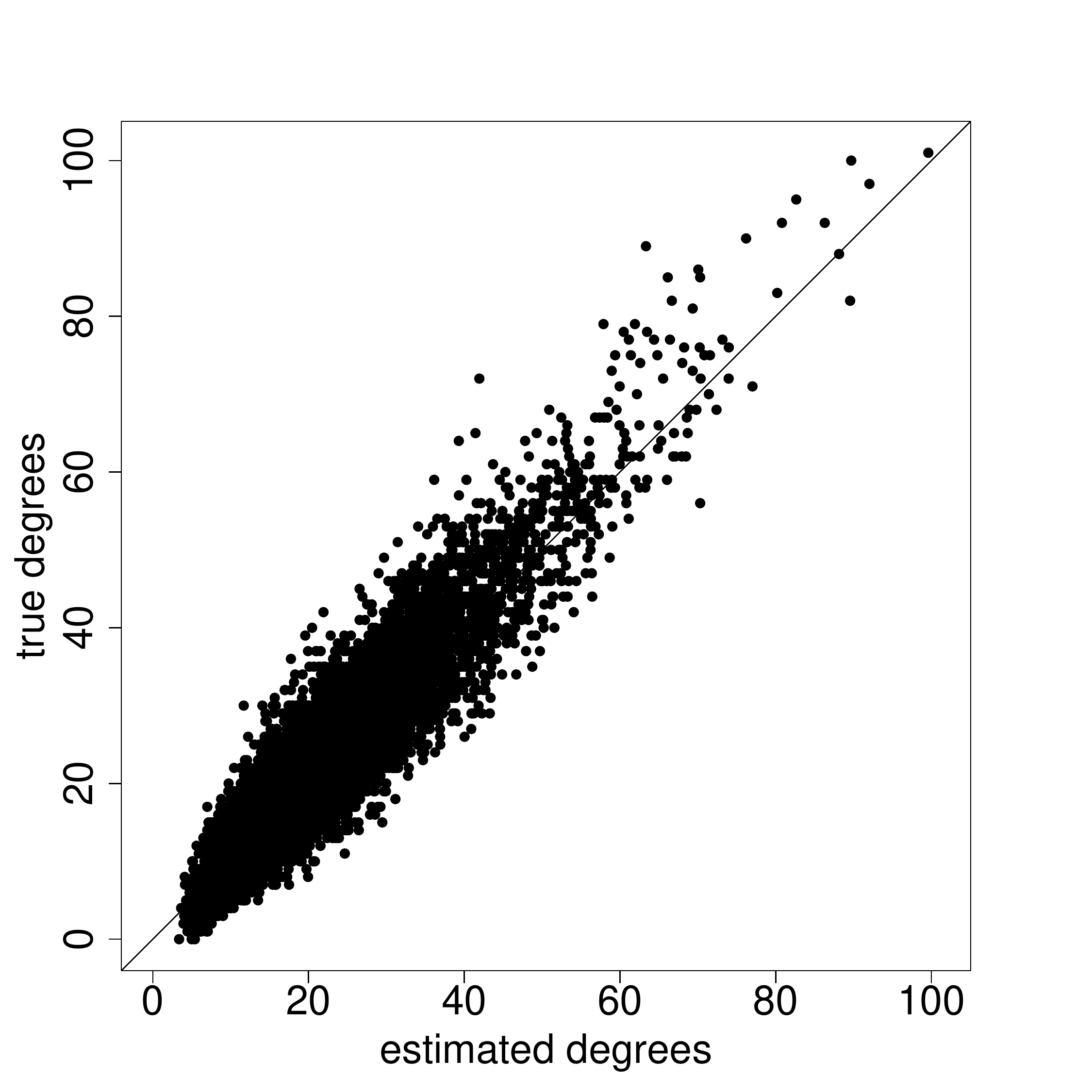}}
\subfloat{
\includegraphics[width=.3\textwidth]{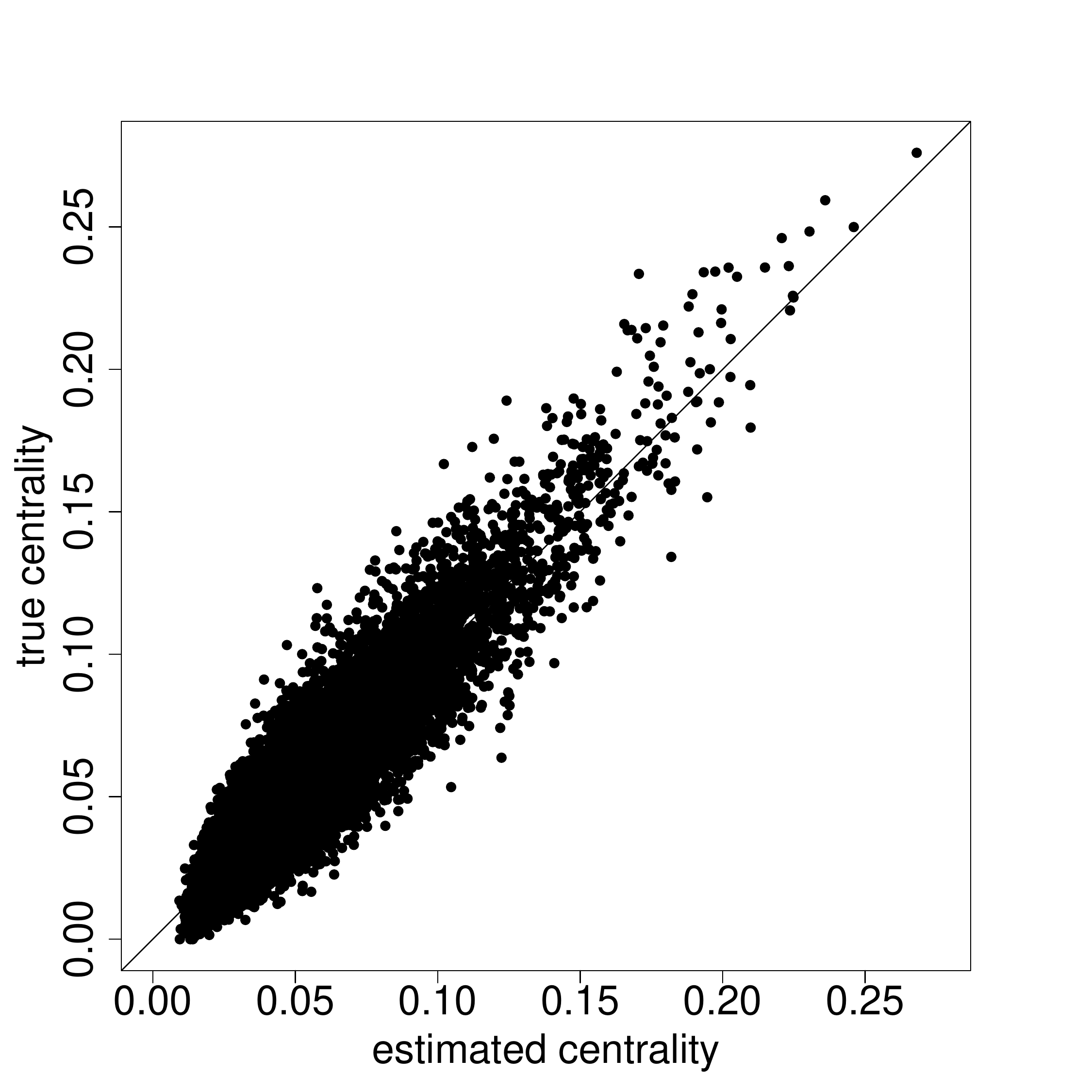}
}
\subfloat{
\includegraphics[width=.3\textwidth]{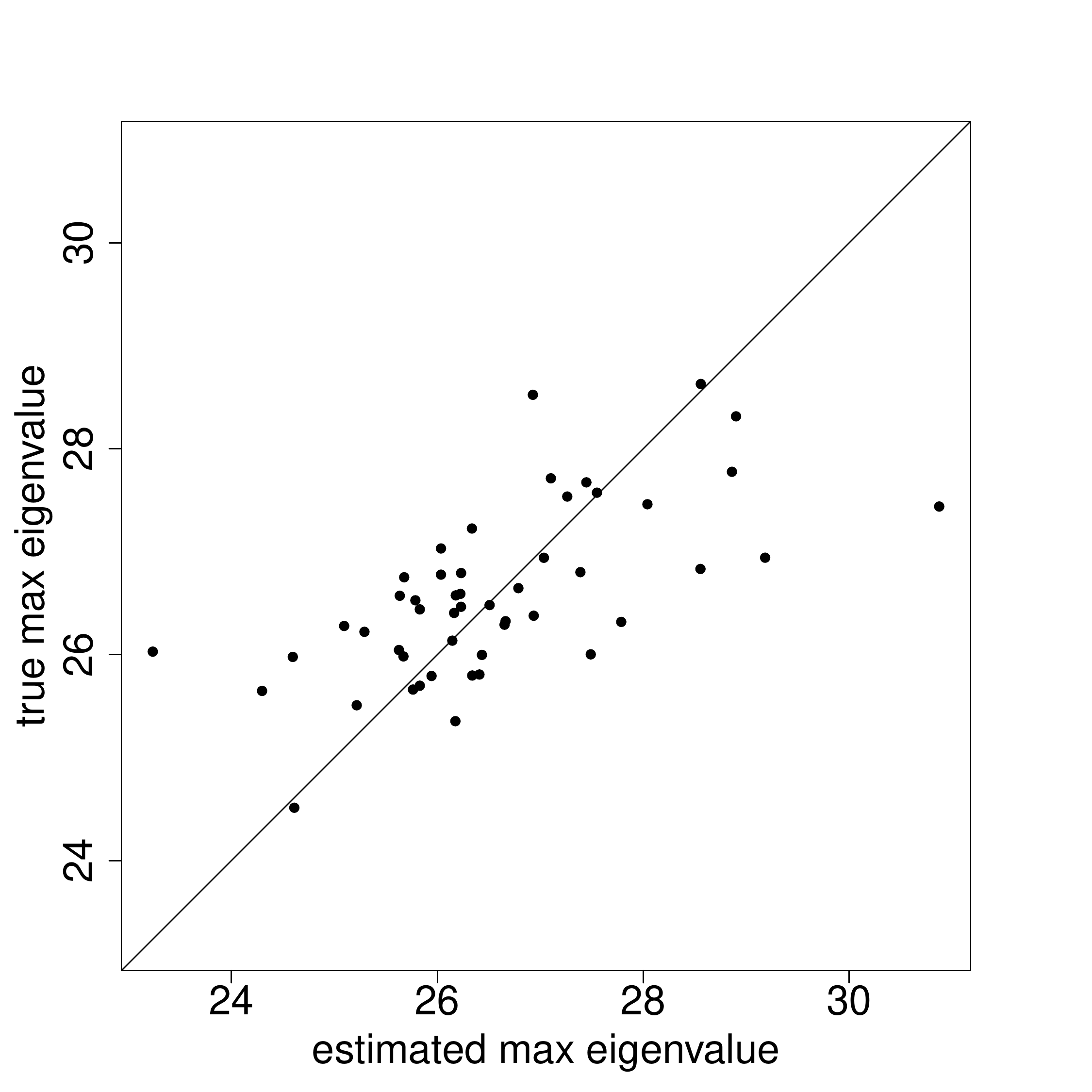}
}

\medskip

\subfloat{
\includegraphics[width=.3\textwidth]{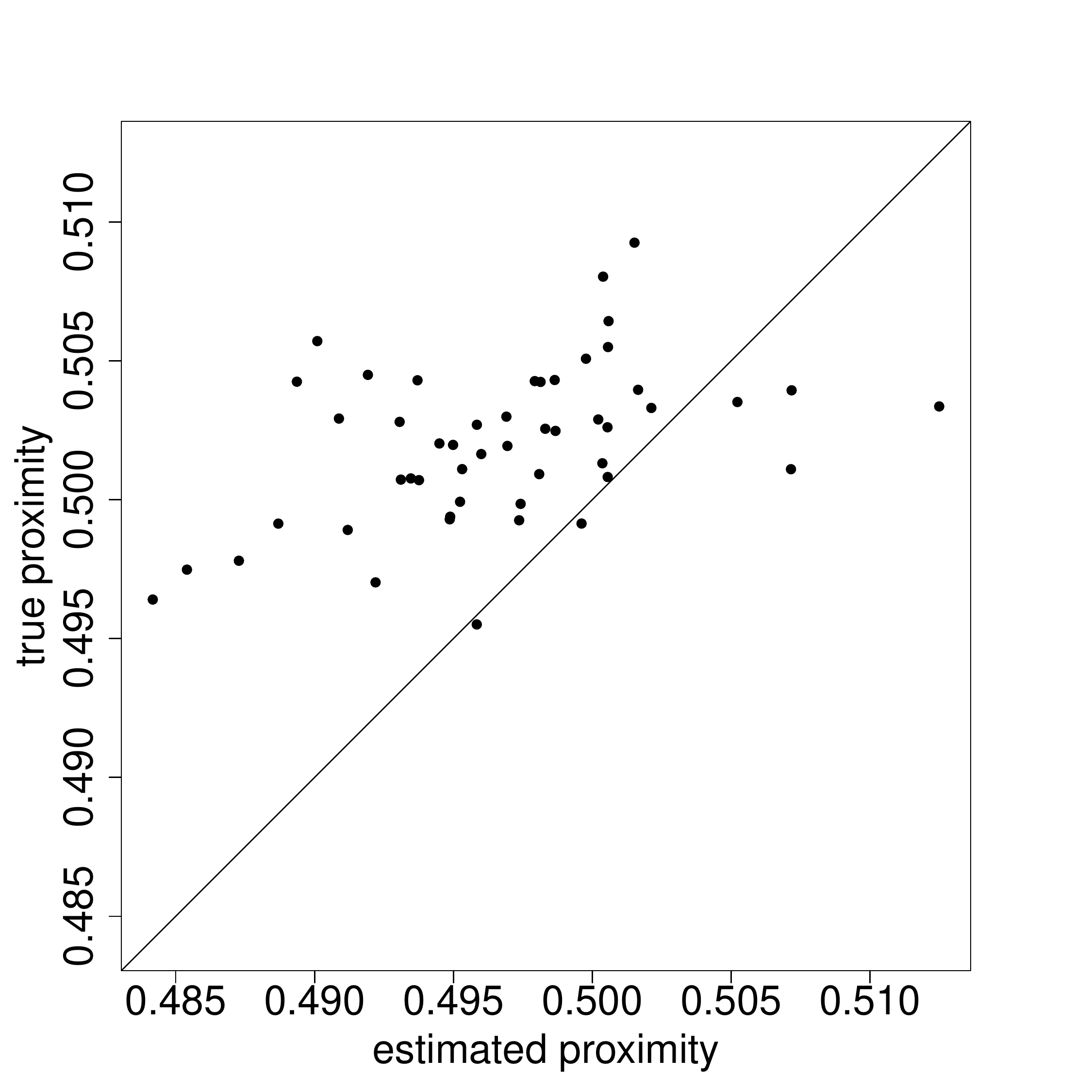}
}
\subfloat{
\includegraphics[width=.3\textwidth]{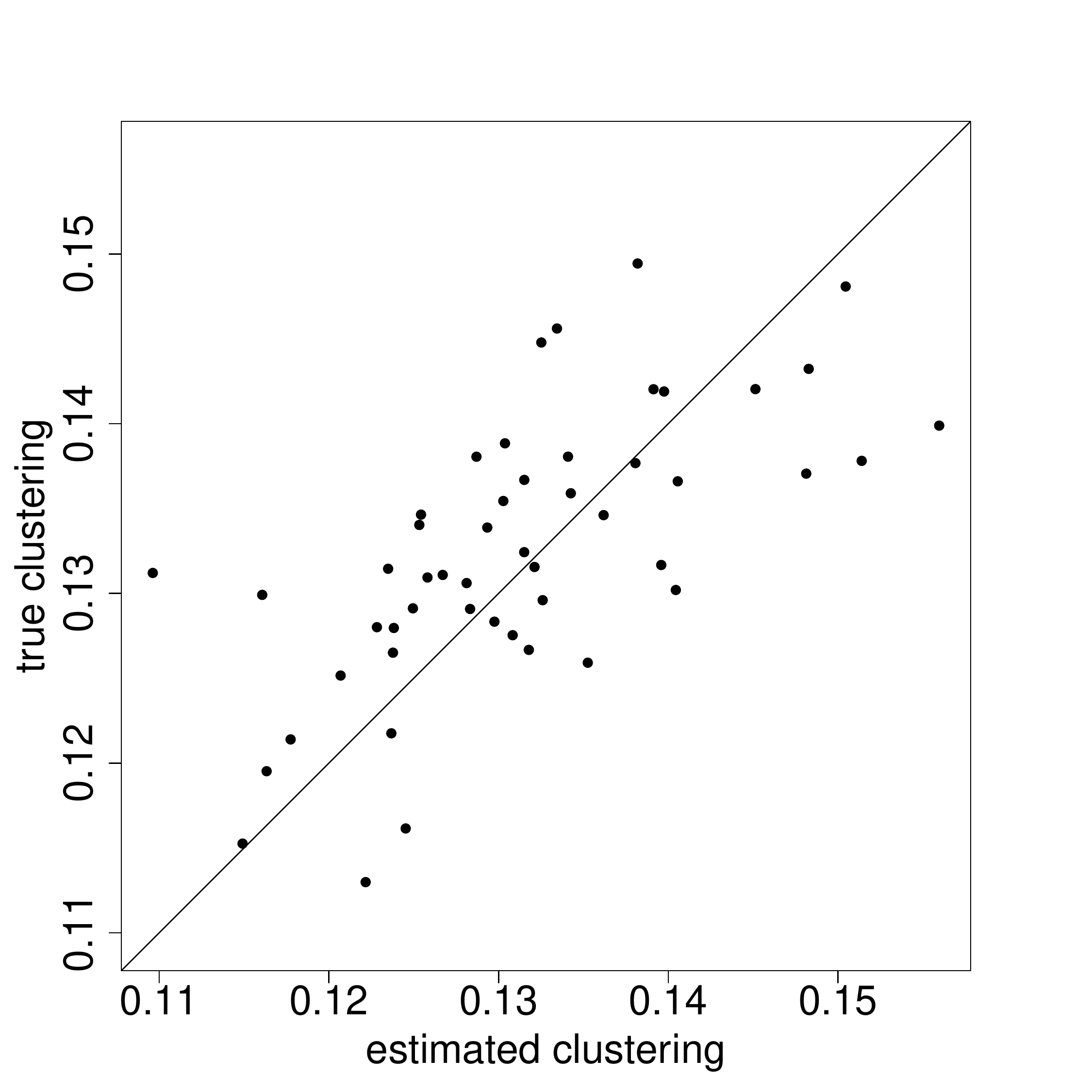}
}
\subfloat{
\includegraphics[width=.3\textwidth]{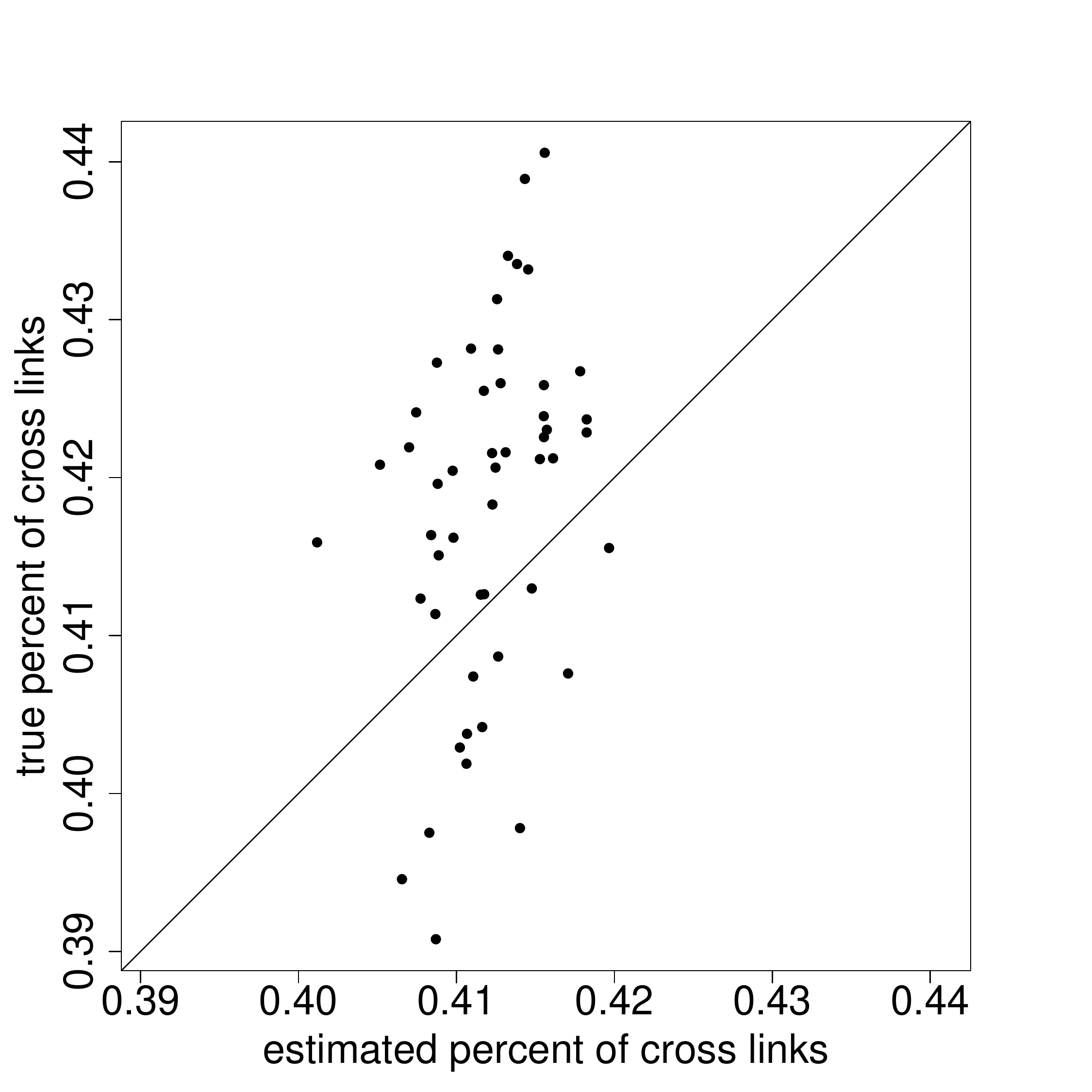}
}

\caption{Network and individual level measures estimation for 50 simulations at core simulation set-up, with the twist of simulating non-uniform feature centers. These plots show scatterplots of estimated measure on the x-axis and true measure on the y-axis. There is a strong correlation between estimated statistic and statistic obtained from the true underlying graph, with the exception of eigenvector cut. The results are similar to ones when simulating 12 feature centers uniformly.}
\label{fig:nonuniformcenters}
\end{figure}

\clearpage
\section{Scatterplots on additional measures for core simulation}
\label{sec:additional_plots_core}
\setcounter{figure}{0}
\renewcommand{\thefigure}{H.\arabic{figure}}

\begin{figure}[!h]
\centering
\subfloat[Closeness]{
\includegraphics[width=.3\textwidth]{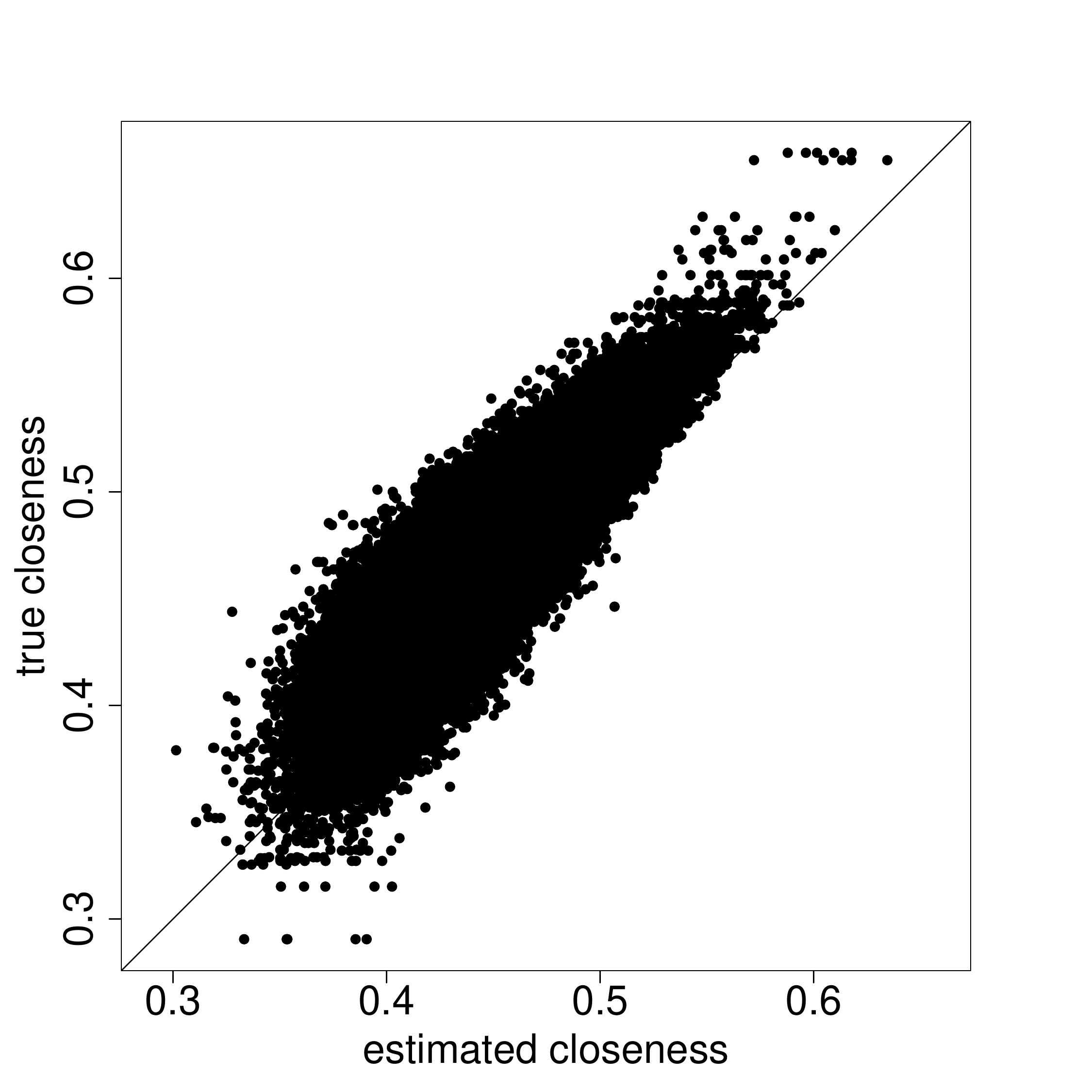}}
\subfloat[Betweenness]{
\includegraphics[width=.3\textwidth]{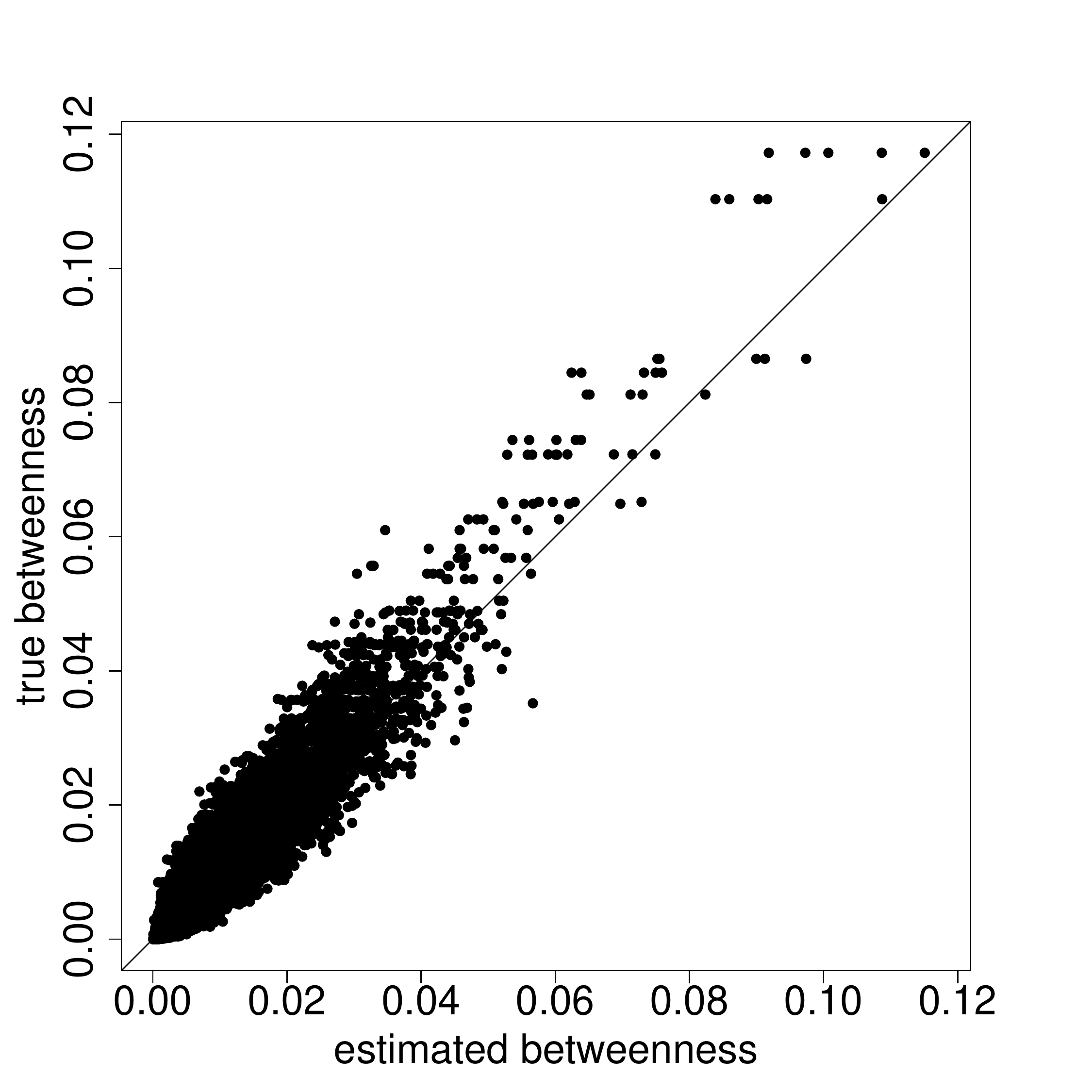}
}
\subfloat[Support]{
\includegraphics[width=.3\textwidth]{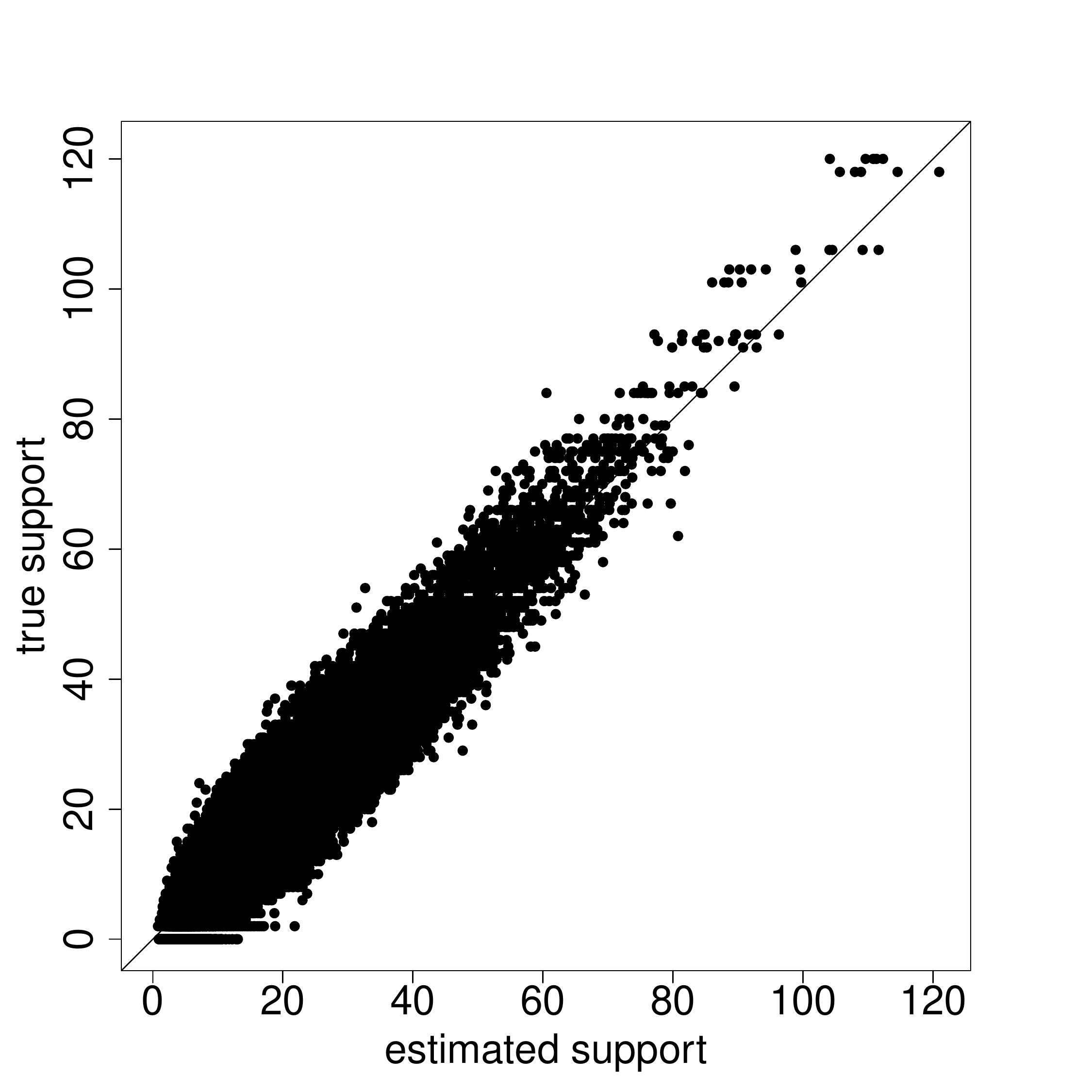}
}

\smallskip
\subfloat[Distance from seed]{
\includegraphics[width=.3\textwidth]{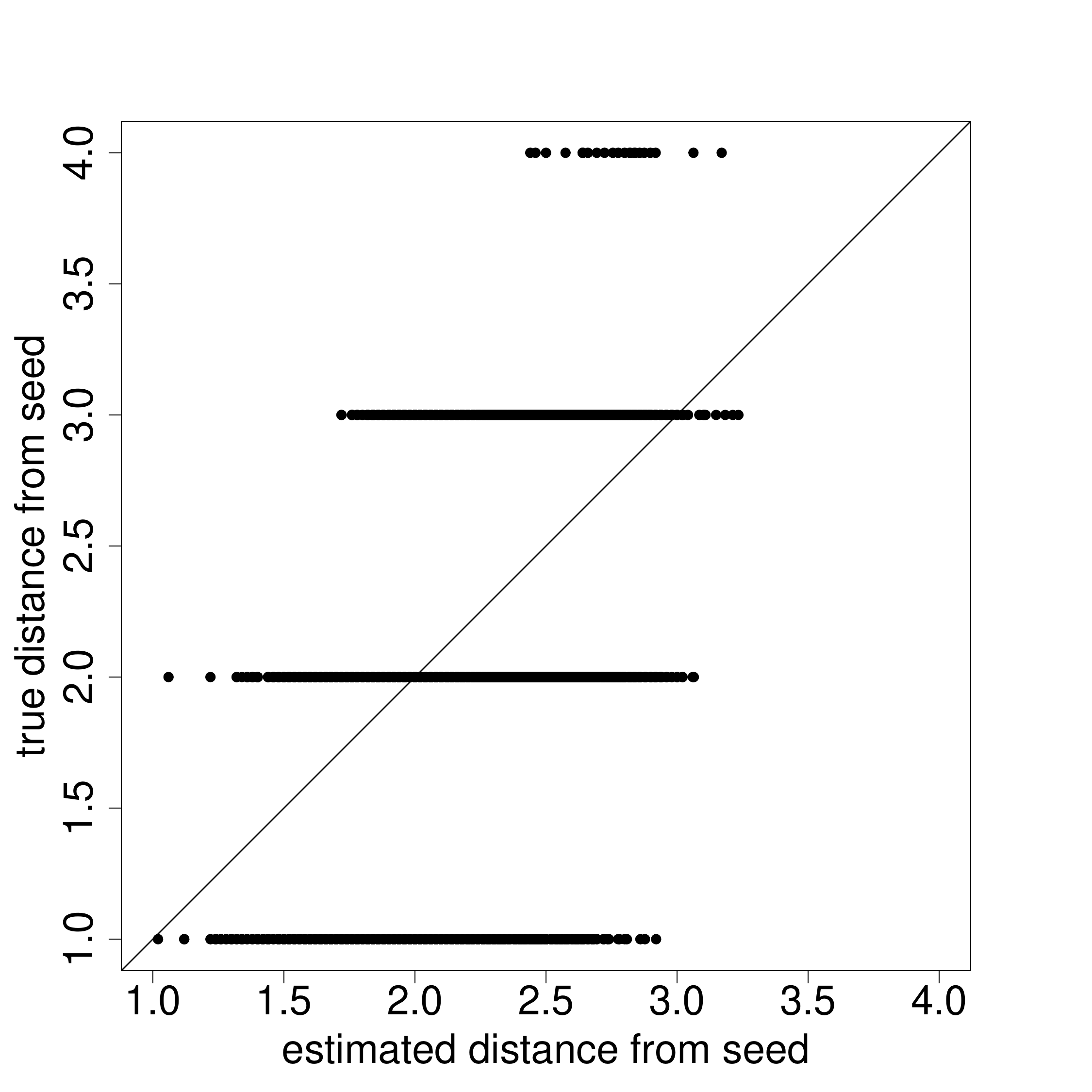}}
\subfloat[Node level clustering]{
\includegraphics[width=.3\textwidth]{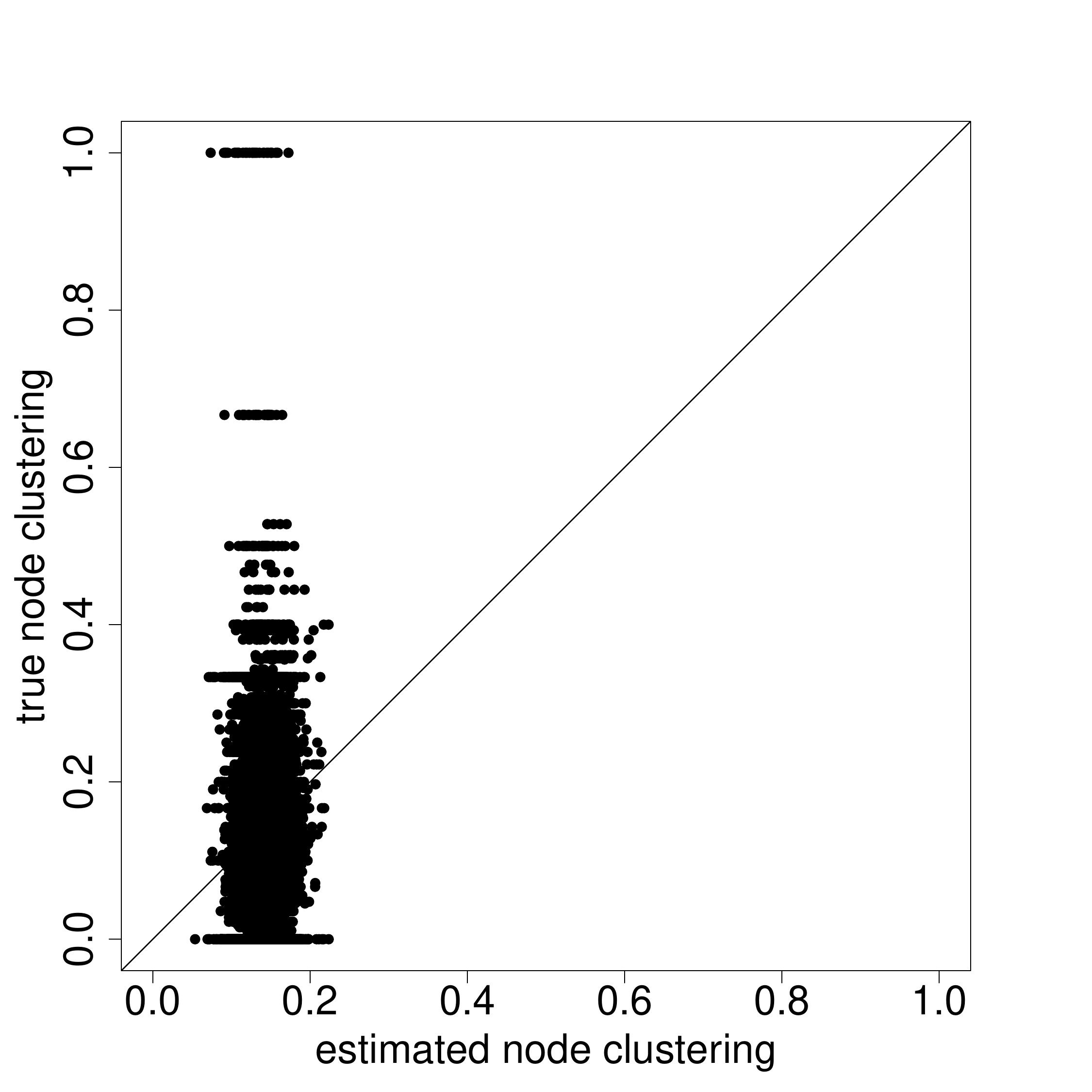}}
\subfloat[Treated neighborhood share]{
\includegraphics[width=.3\textwidth]{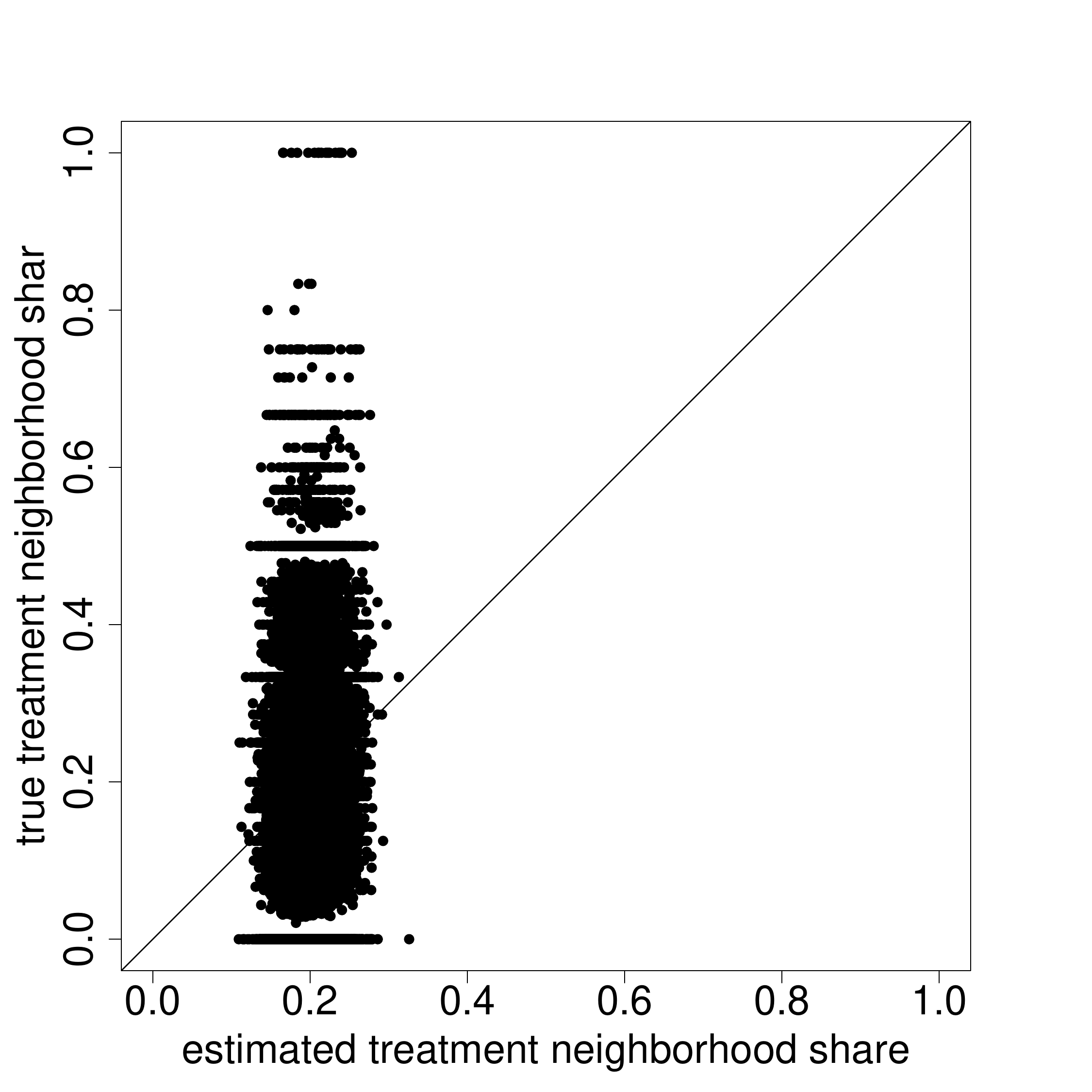}}

\caption{Node level measures estimation for 250 simulations at core simulation set-up. These plots show scatterplots of estimated measure on the x-axis and true measure on the y-axis. We remove the nodes where the simulated true graph is not fully connected. For betweenness, closeness, and support, there is a strong correlation between estimated statistic and statistic obtained from the true underlying graph. The weak correlation in the node-level clustering measure is an artifact of weak clustering in underlying ``true'' model. }
\label{fig:core_sims_node_appendix}
\end{figure}

\begin{figure}[!h]
\centering
\subfloat[Number of components]{
\includegraphics[width=.3\textwidth]{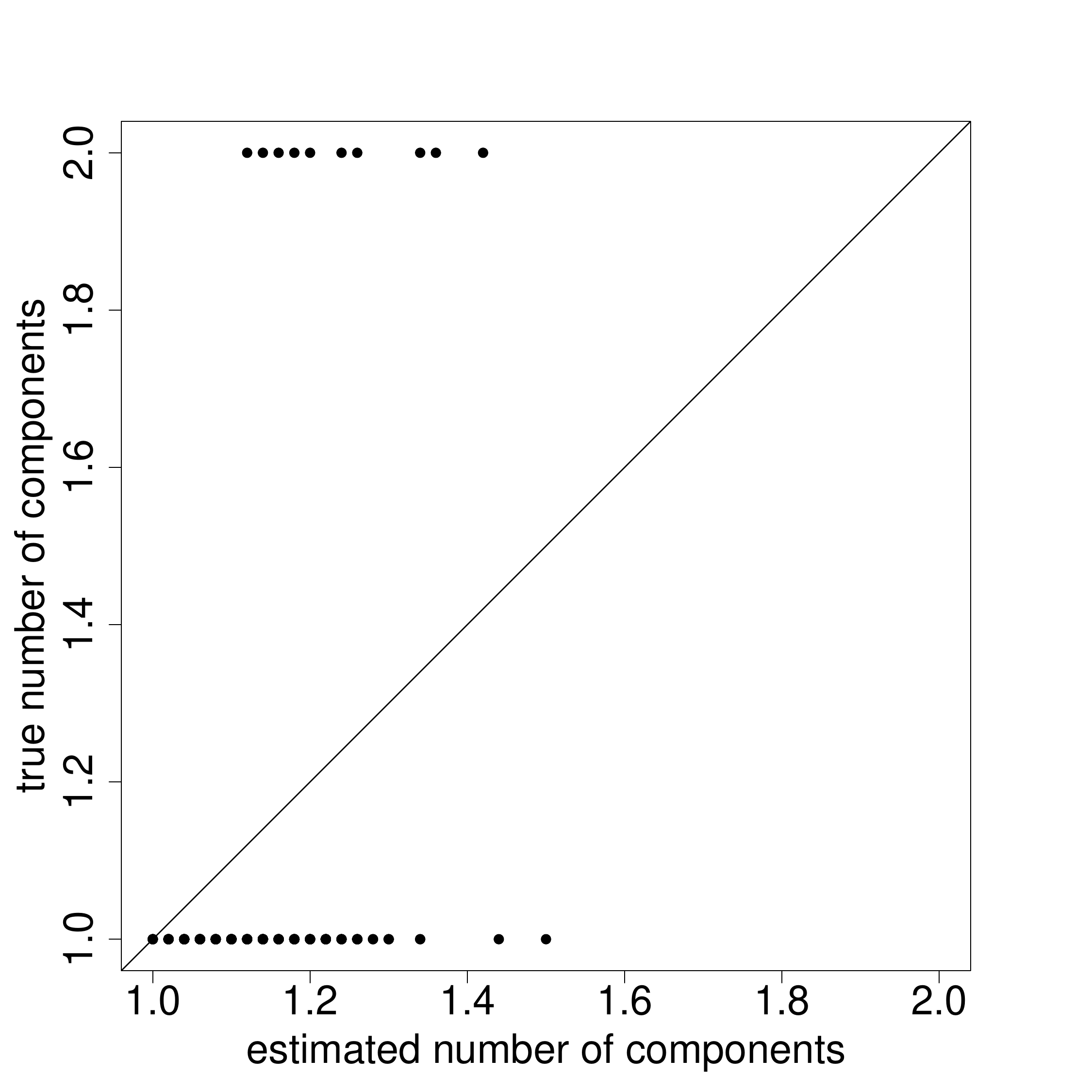}}
\subfloat[Average path length]{
\includegraphics[width=.3\textwidth]{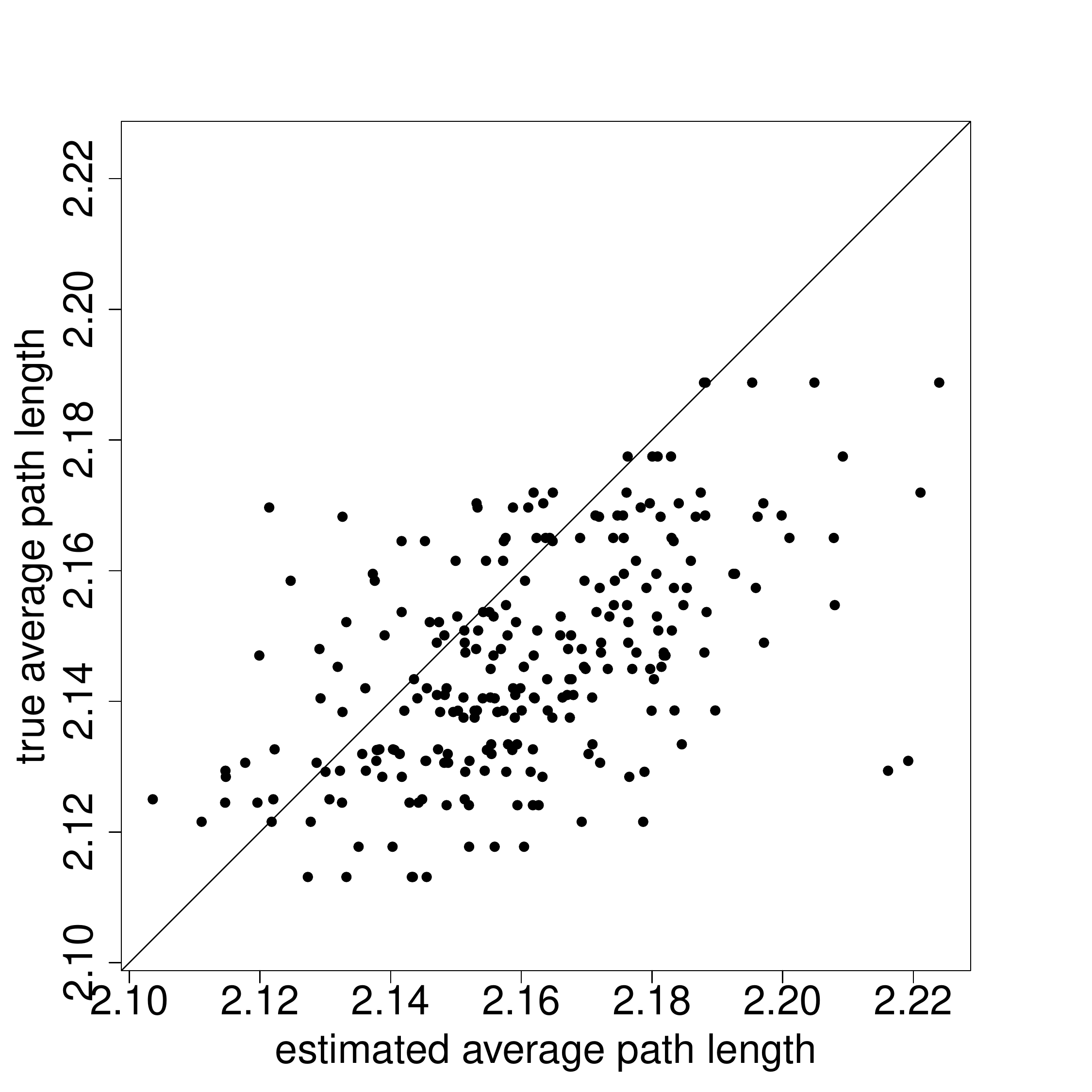}}

\smallskip
\subfloat[Diameter]{
\includegraphics[width=.3\textwidth]{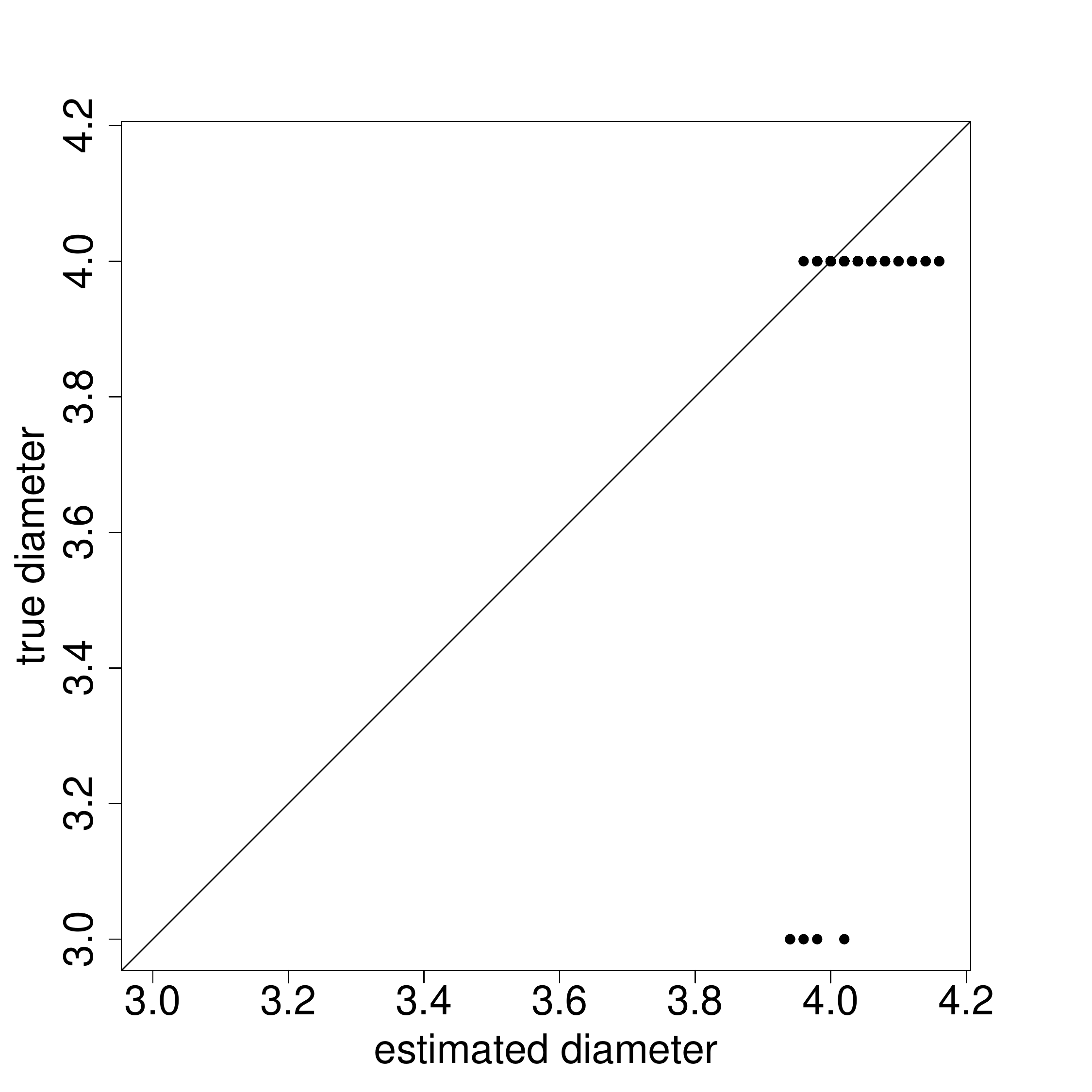}}
\subfloat[Fraction of giant component]{
\includegraphics[width=.3\textwidth]{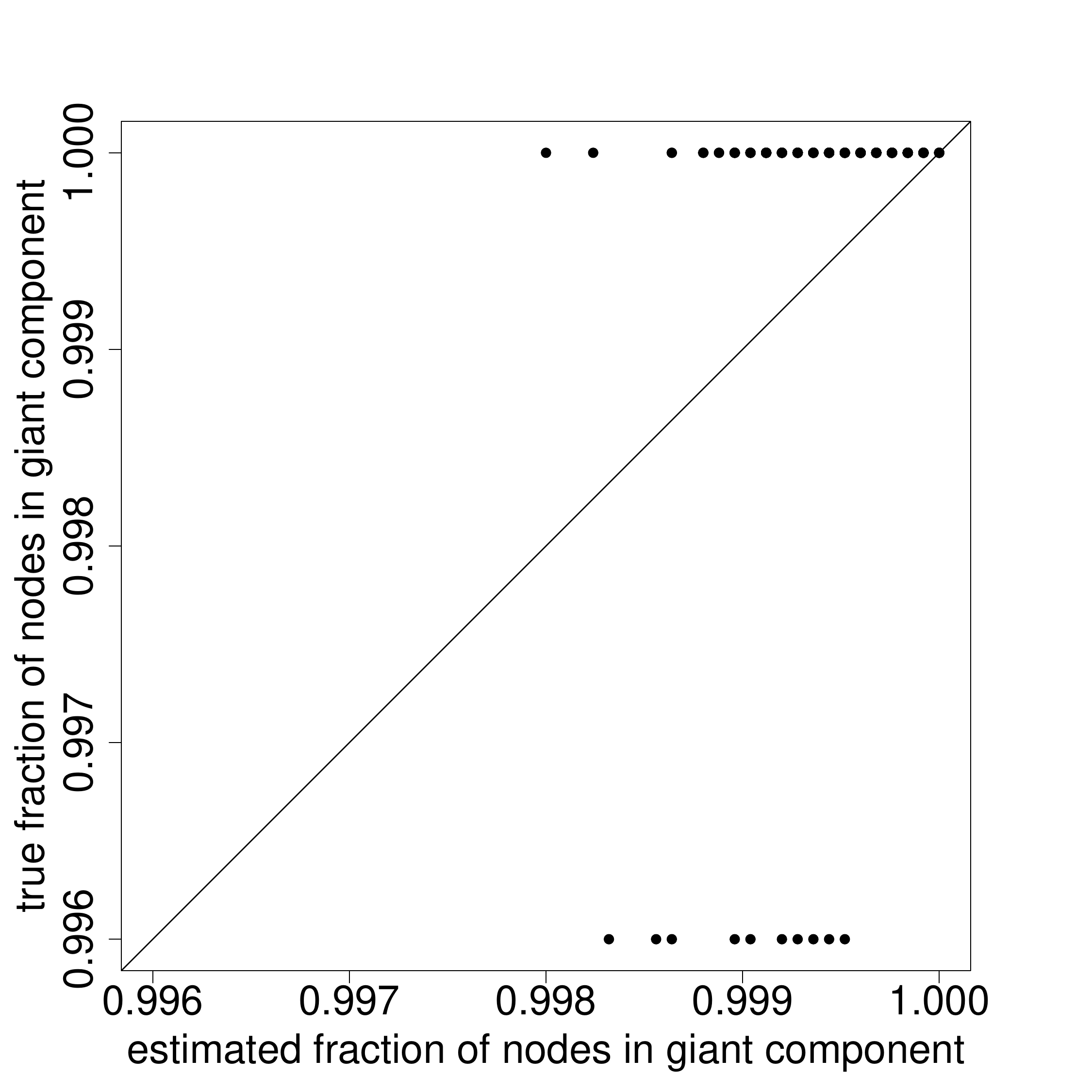}}

\caption{Network level measures estimation for 250 simulations at core simulation set-up. These plots show scatterplots of estimated measure on the x-axis and true measure on the y-axis. For not fully connected graph, diameter is the diameter of the giant component, and average path length is taken over all finite path lengths. For average path length, there is a strong correlation between estimated statistic and statistic obtained from the true underlying graph. For all other measures, the weakness comes from the fact that there is not much variation in the true measure based on our sampling strategy.}
\label{fig:core_sims_network_appendix}
\end{figure}

\clearpage
\section{Scatterplots on additional measures for Karnataka villages}
\label{sec:additional_plots_karnataka}
\setcounter{figure}{0}
\renewcommand{\thefigure}{I.\arabic{figure}}

\begin{figure}[!h]
\centering
\subfloat[Closeness]{
\includegraphics[width=.3\textwidth]{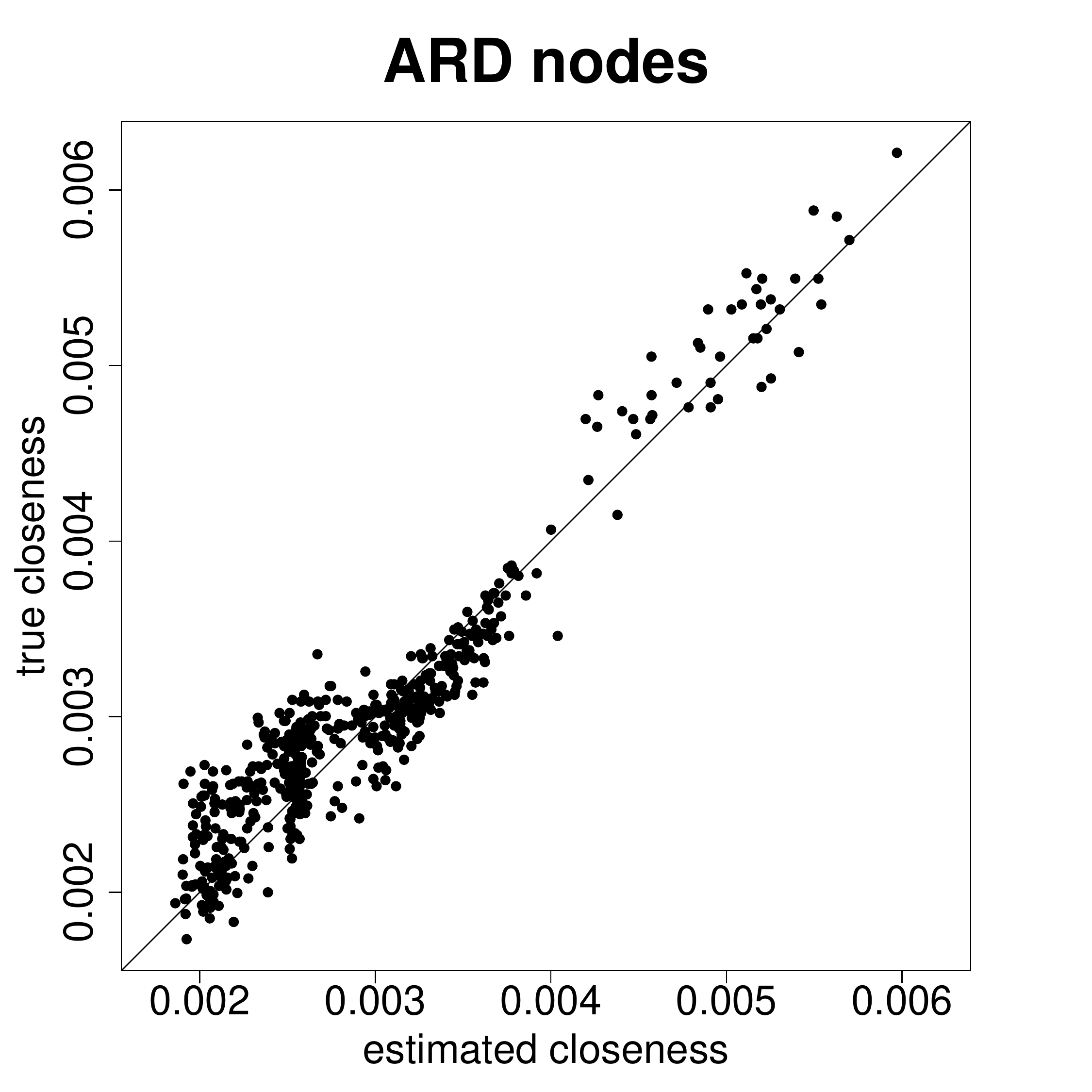}}
\subfloat[Betweenness]{
\includegraphics[width=.3\textwidth]{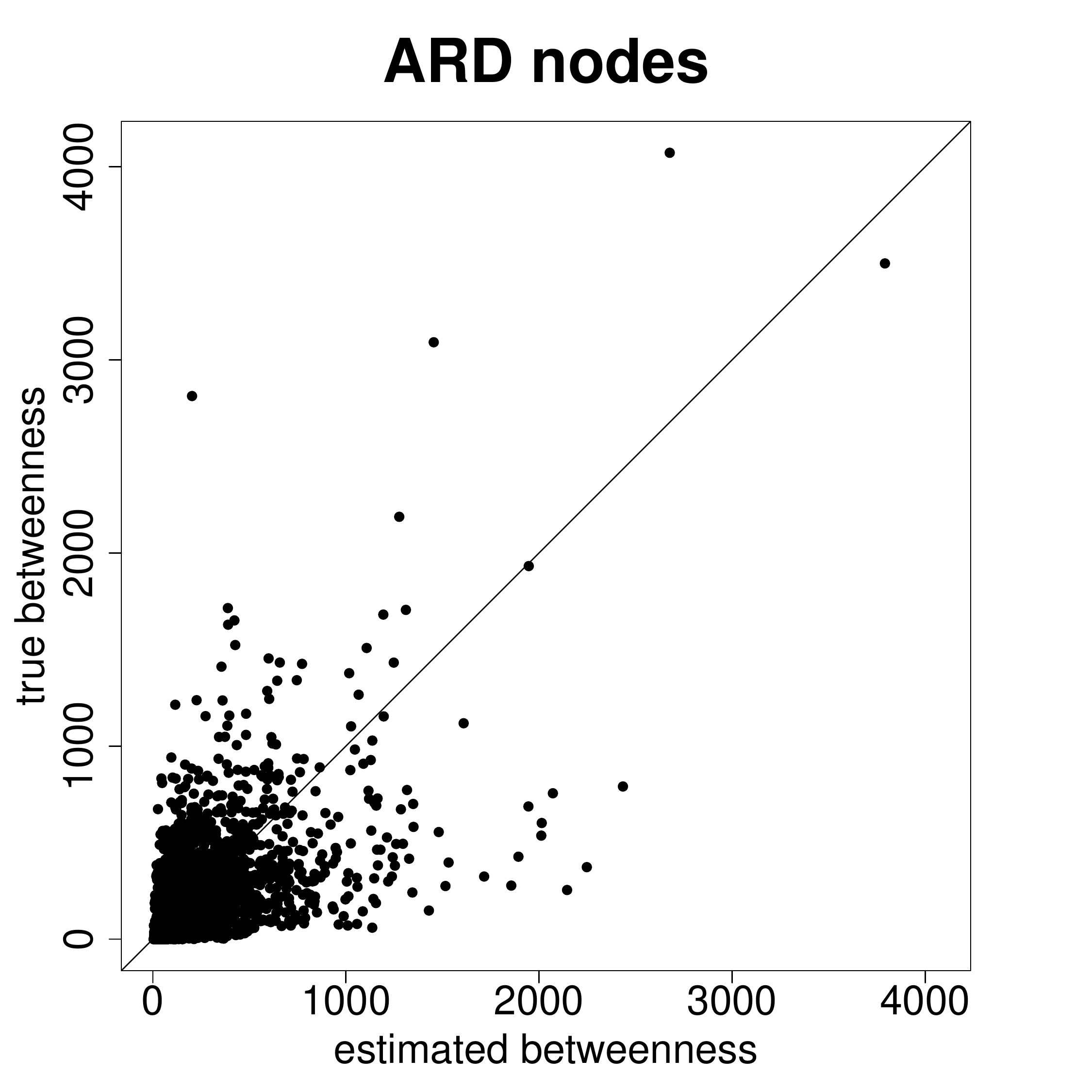}
}
\subfloat[Support]{
\includegraphics[width=.3\textwidth]{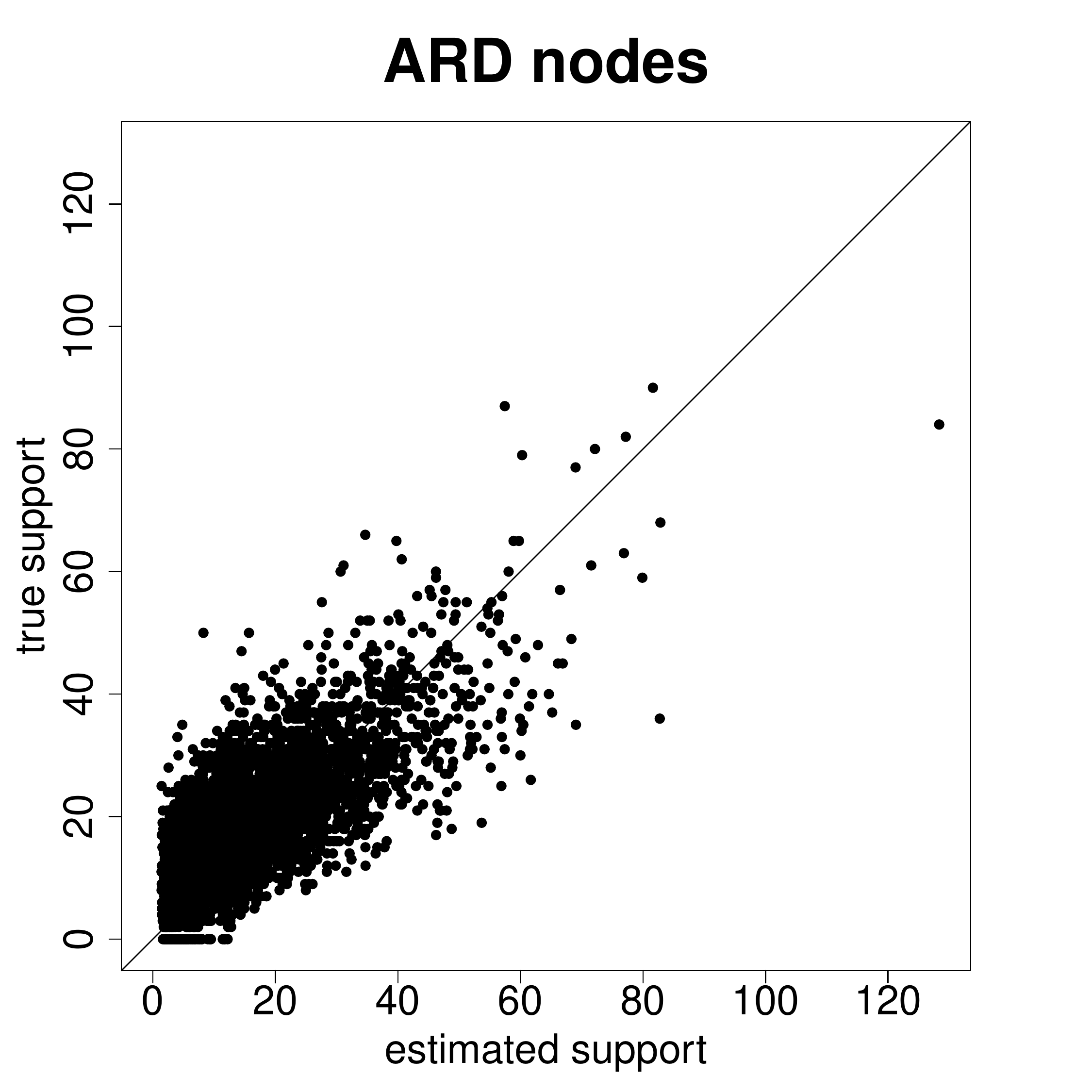}
}

\smallskip
\subfloat[Distance from seed]{
\includegraphics[width=.3\textwidth]{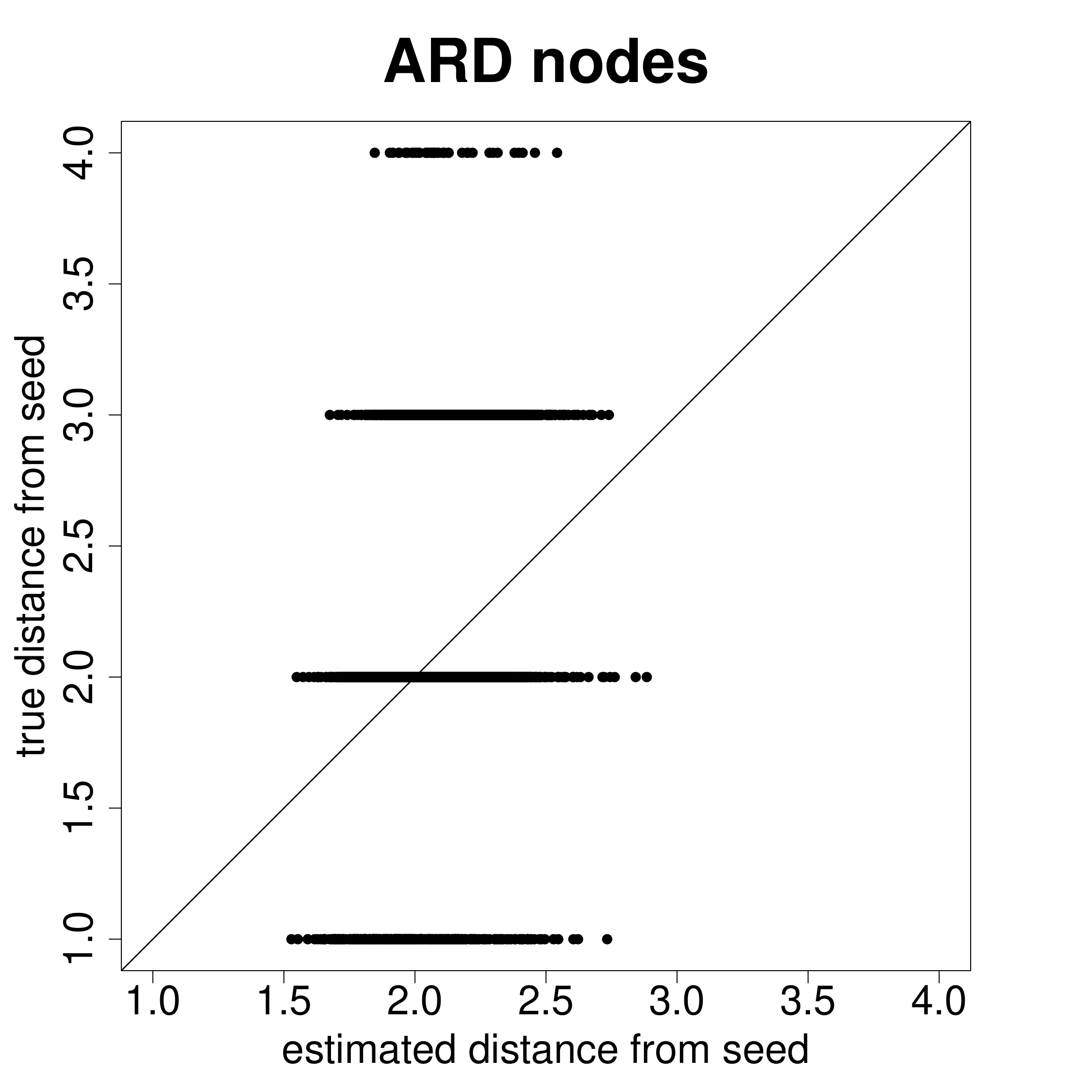}}
\subfloat[Node level clustering]{
\includegraphics[width=.3\textwidth]{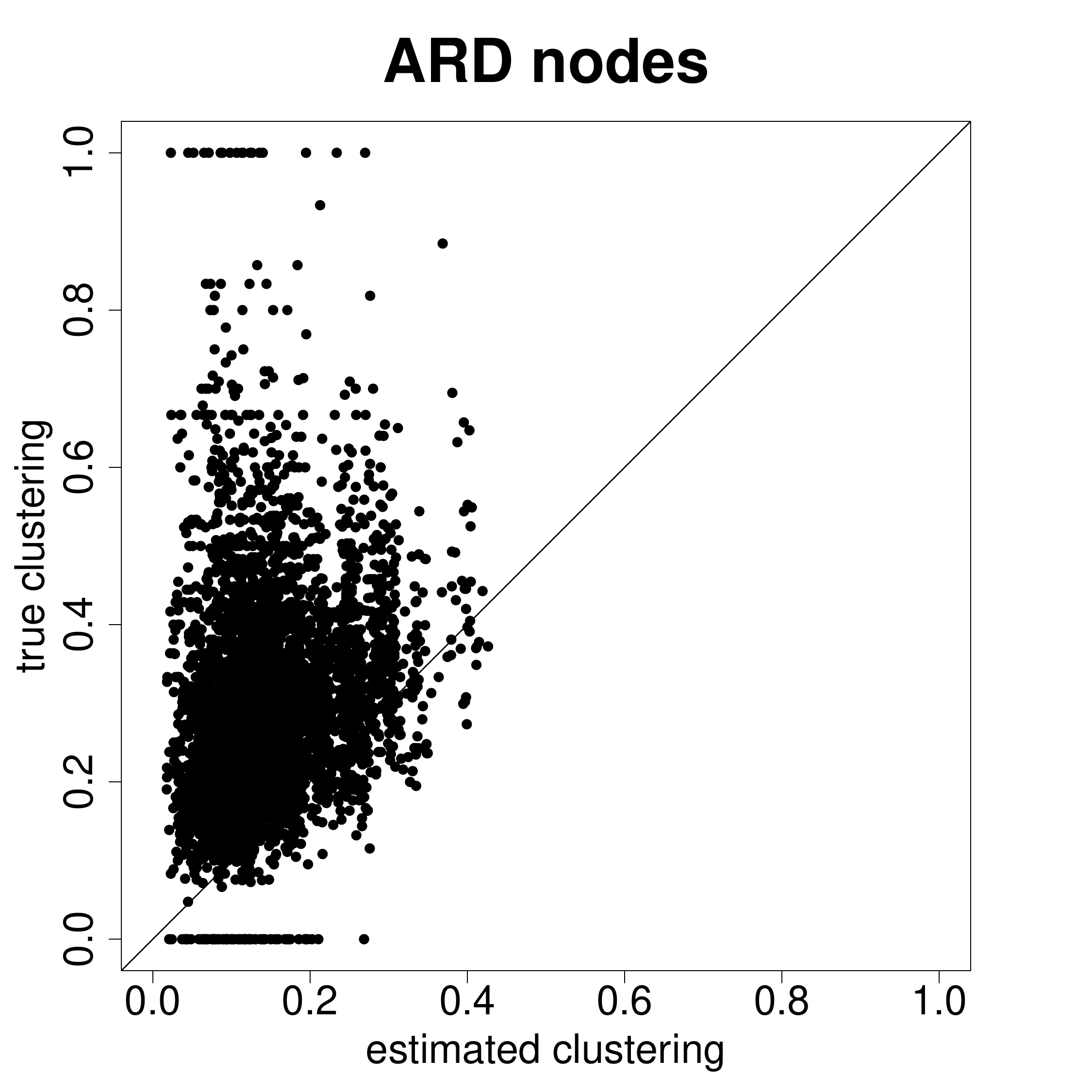}}
\subfloat[Treated neighborhood share]{
\includegraphics[width=.3\textwidth]{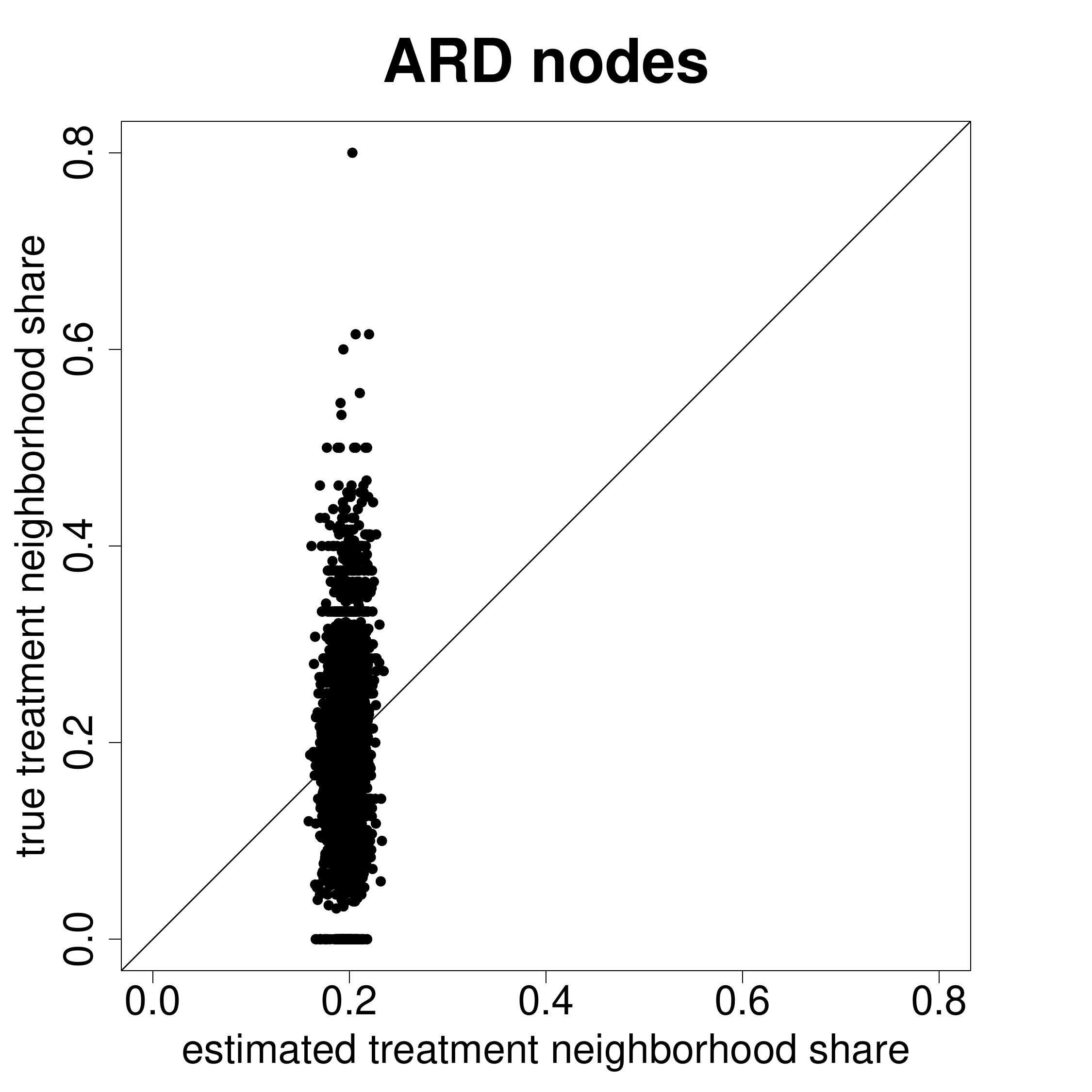}}

\caption{Node level measures estimation for households with ARD response in villages in Karnataka.  These plots show scatterplots across all villages with the estimated node level measure on the x-axis and the measure from the true underlying graph on the y-axis.}
\label{fig:karnataka_ARD_node_appendix}
\end{figure}

\begin{figure}[!h]
\centering
\subfloat[Closeness]{
\includegraphics[width=.3\textwidth]{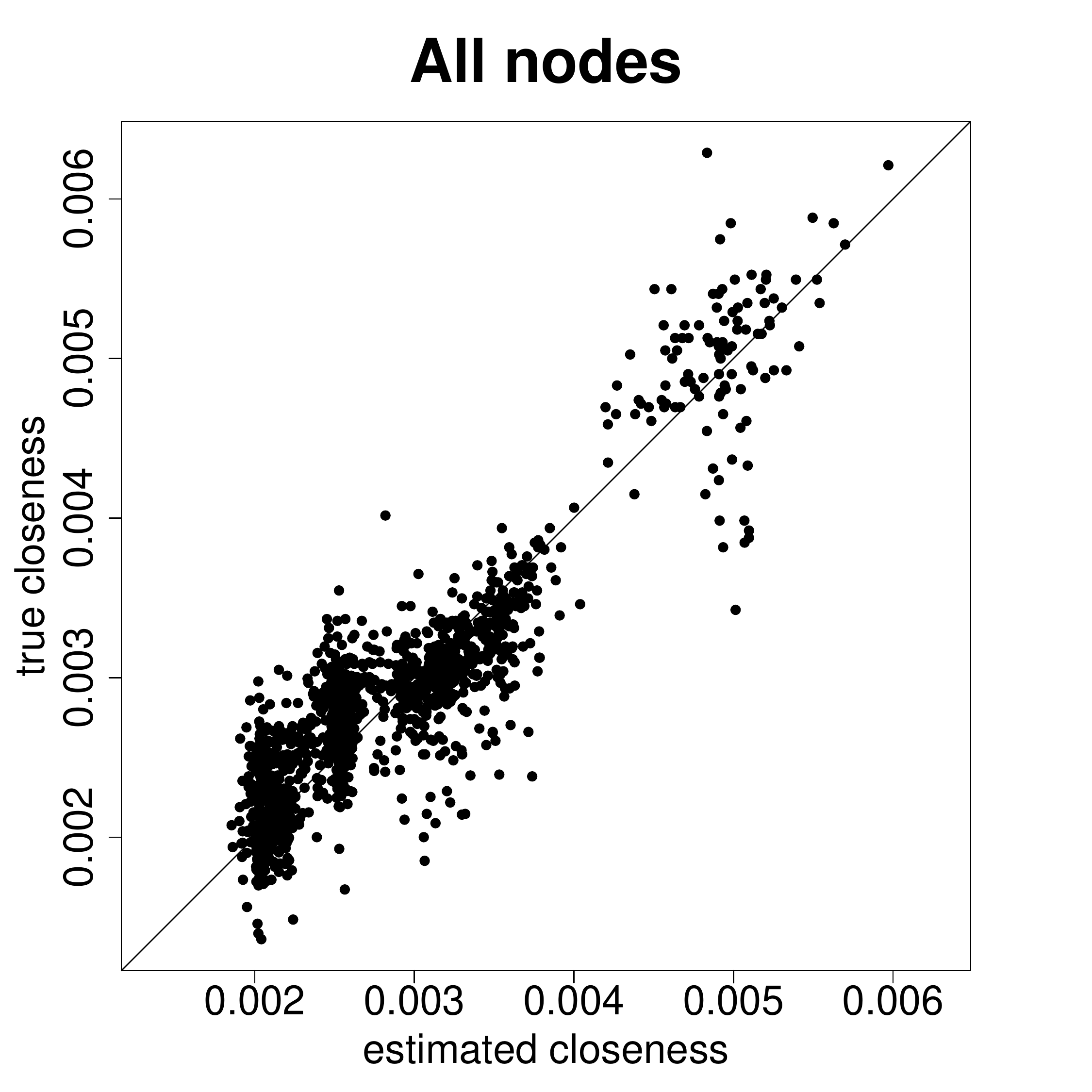}}
\subfloat[Betweenness]{
\includegraphics[width=.3\textwidth]{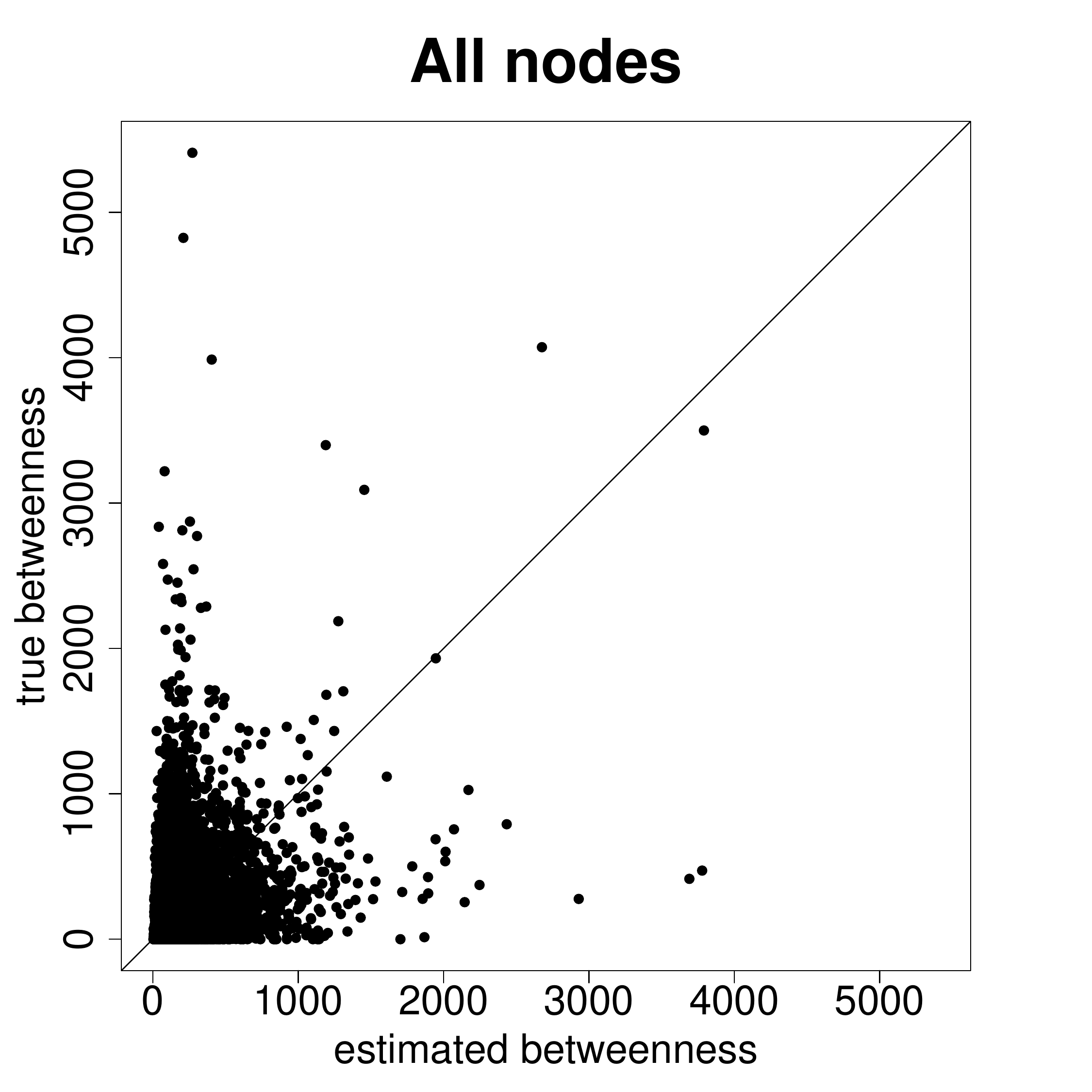}
}
\subfloat[Support]{
\includegraphics[width=.3\textwidth]{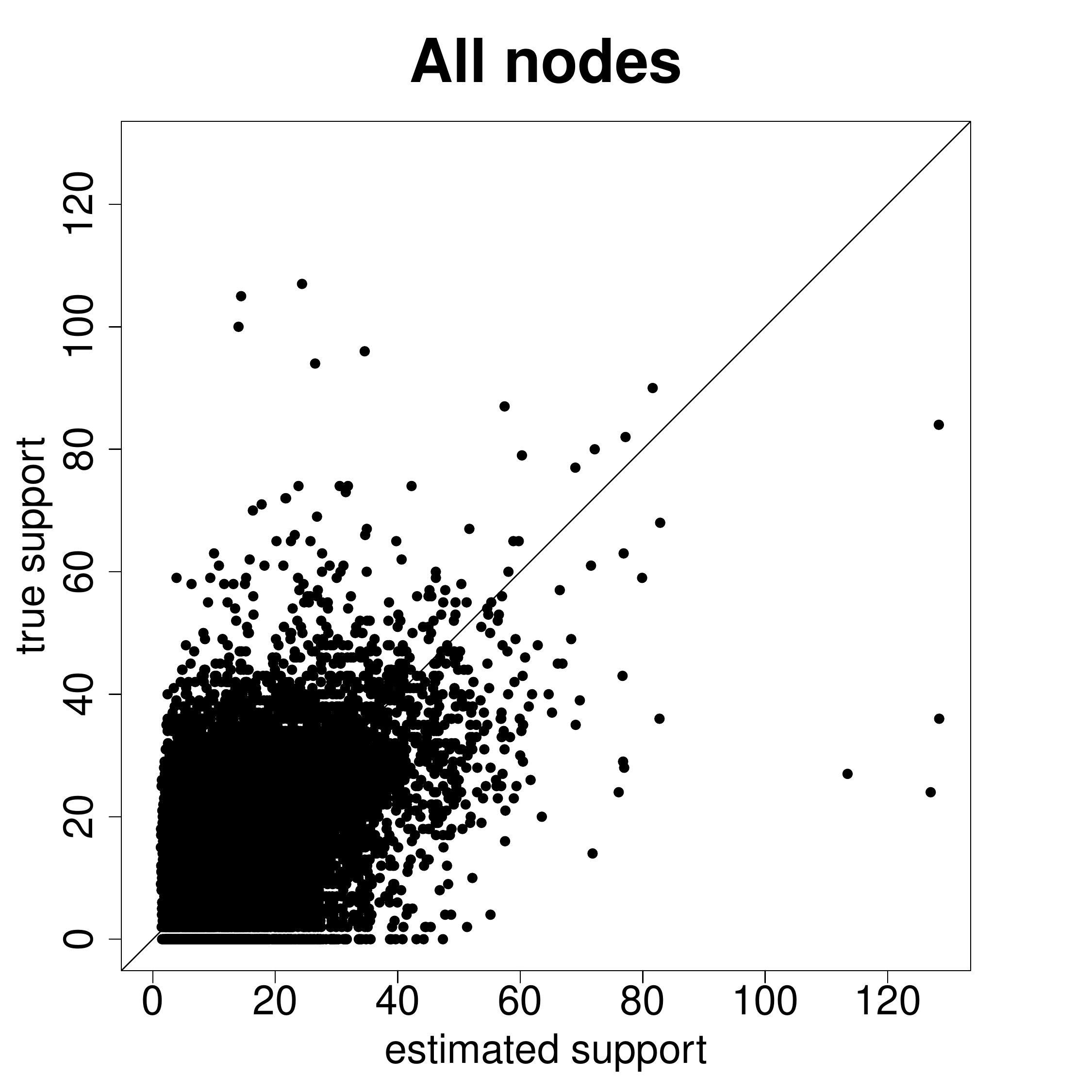}
}

\smallskip
\subfloat[Distance from seed]{
\includegraphics[width=.3\textwidth]{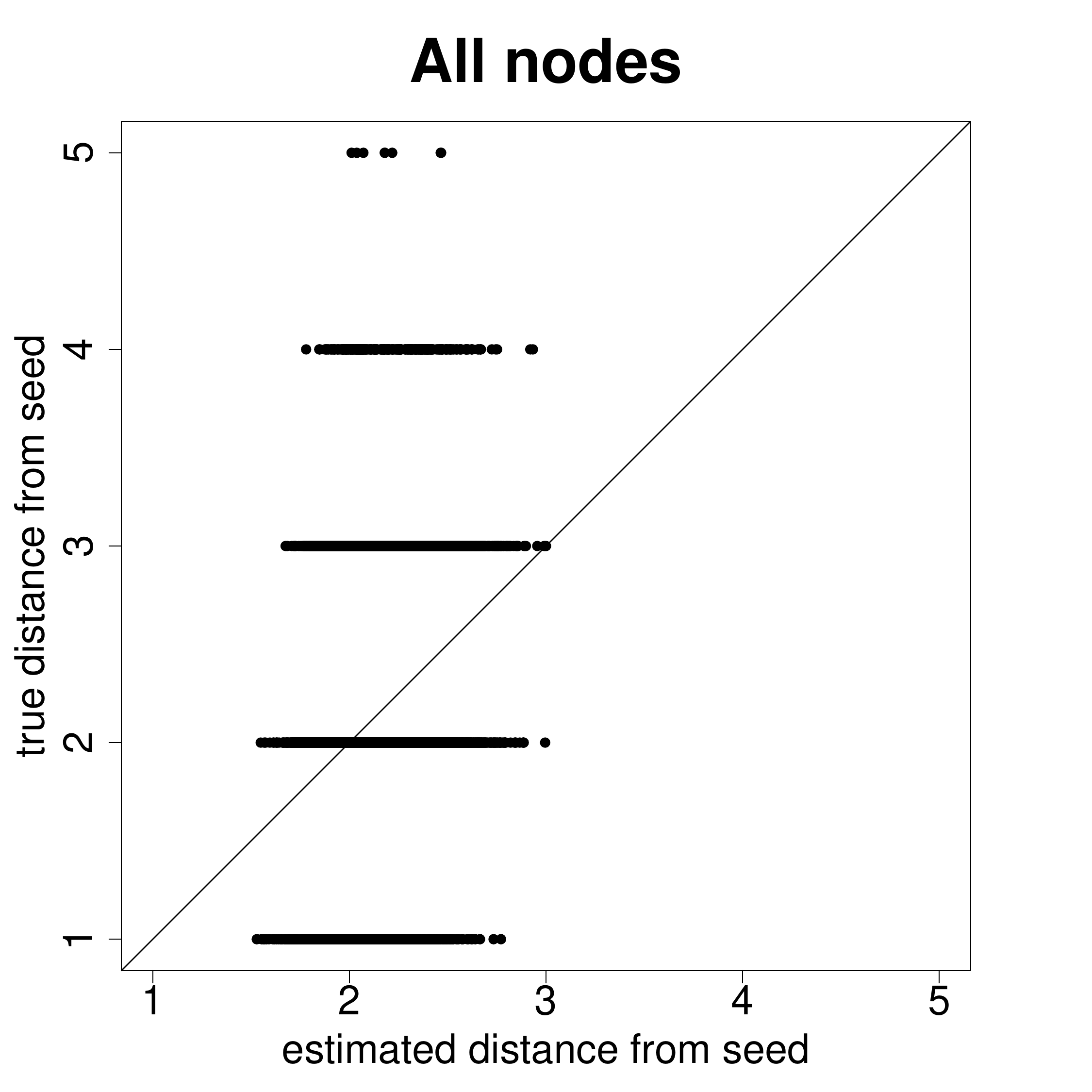}}
\subfloat[Node level clustering]{
\includegraphics[width=.3\textwidth]{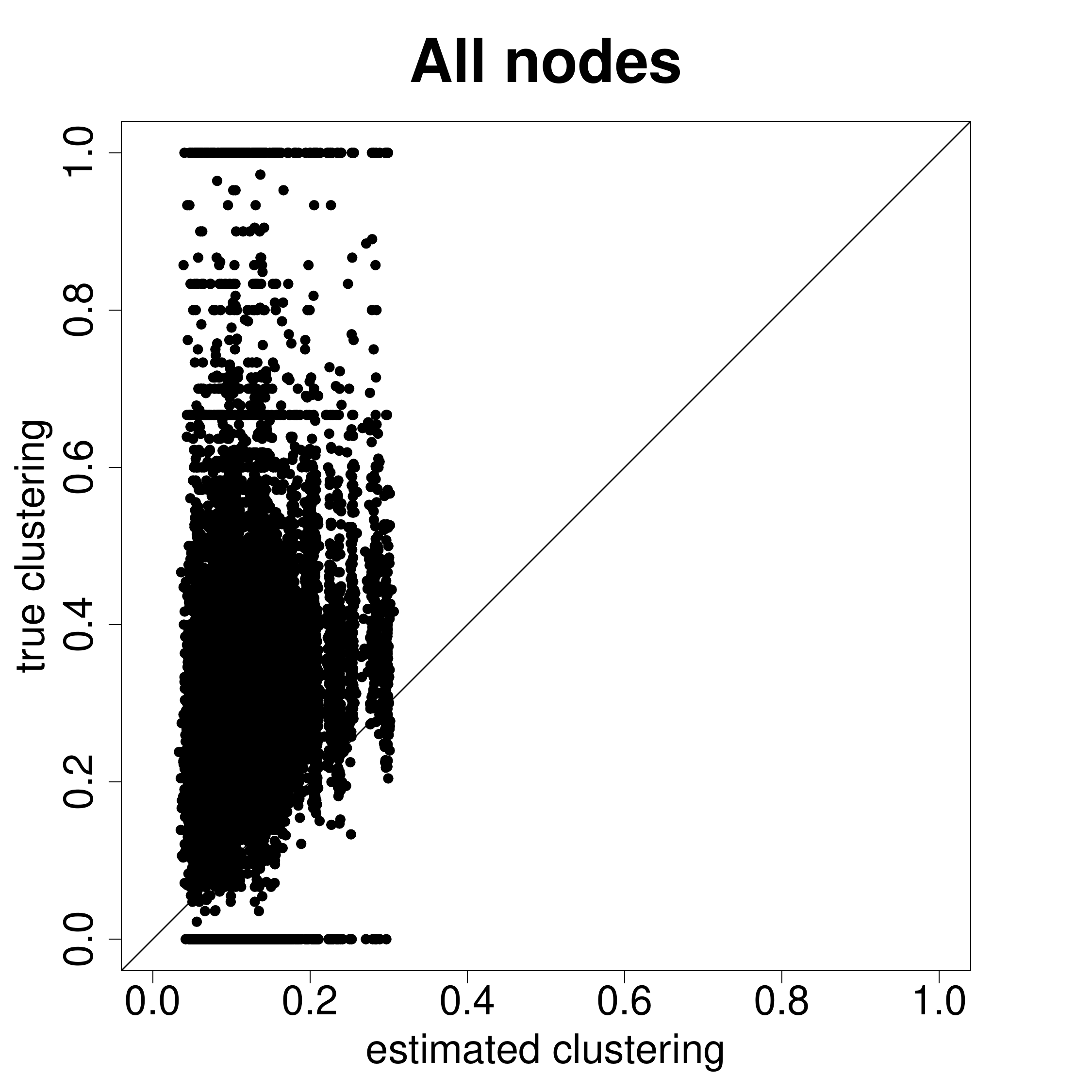}}
\subfloat[Treated neighborhood share]{
\includegraphics[width=.3\textwidth]{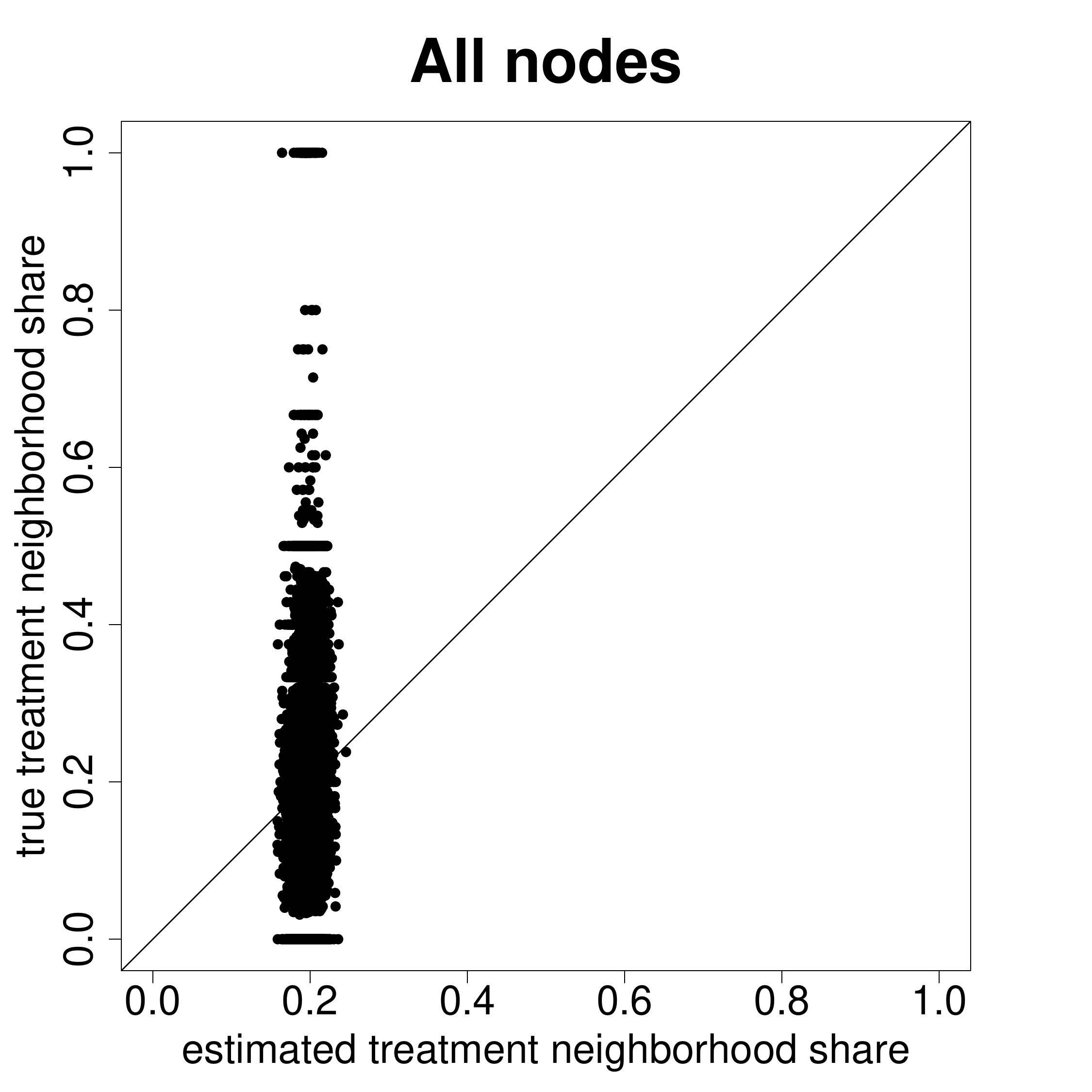}}

\caption{Node level measures estimation for all households in villages in Karnataka.  These plots show scatterplots across all villages with the estimated node level measure on the x-axis and the measure from the true underlying graph on the y-axis.}
\label{fig:karnataka_all_node_appendix}
\end{figure}

\begin{figure}[!h]
\centering
\subfloat[Number of components]{
\includegraphics[width=.3\textwidth]{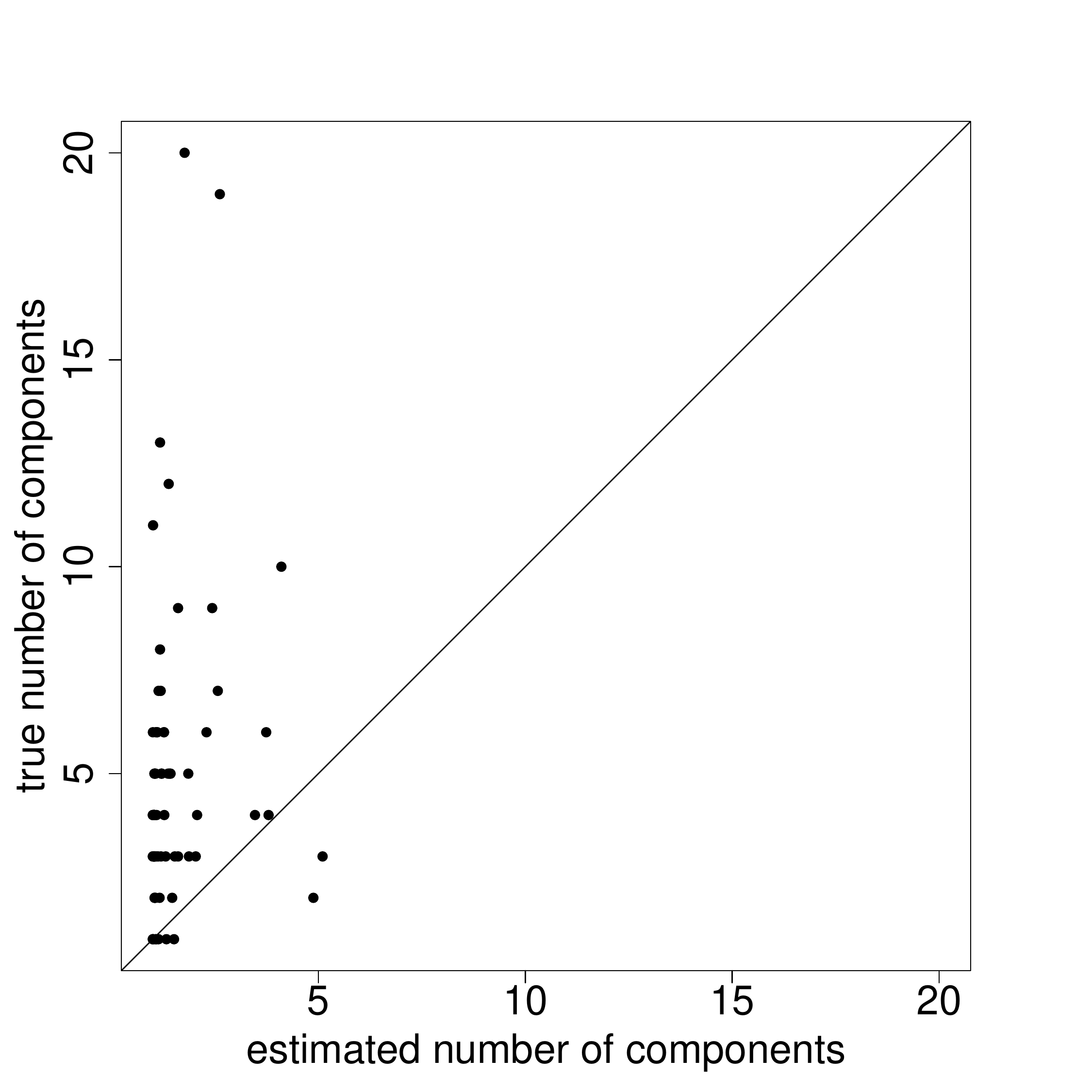}}
\subfloat[Average path length]{
\includegraphics[width=.3\textwidth]{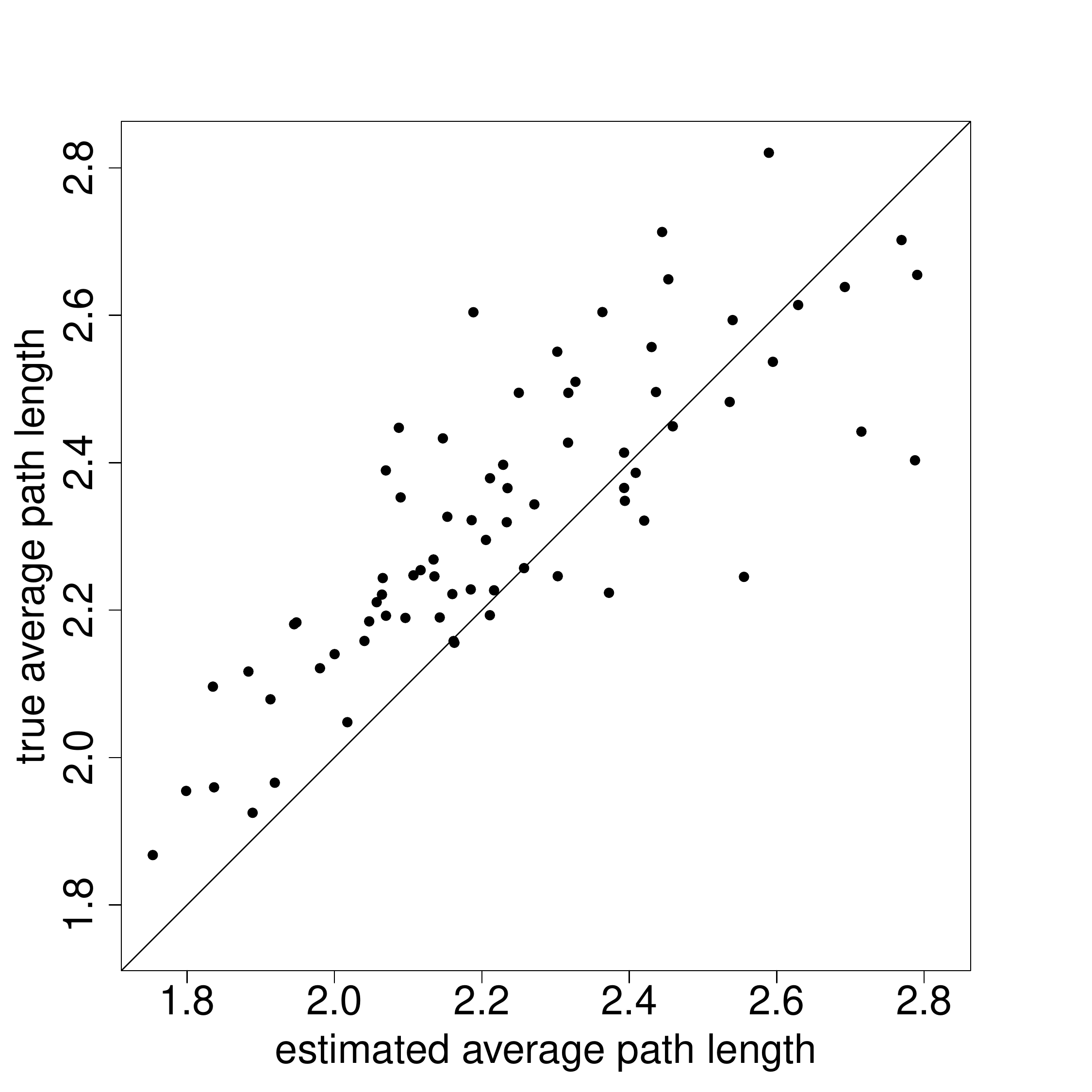}}

\smallskip
\subfloat[Diameter]{
\includegraphics[width=.3\textwidth]{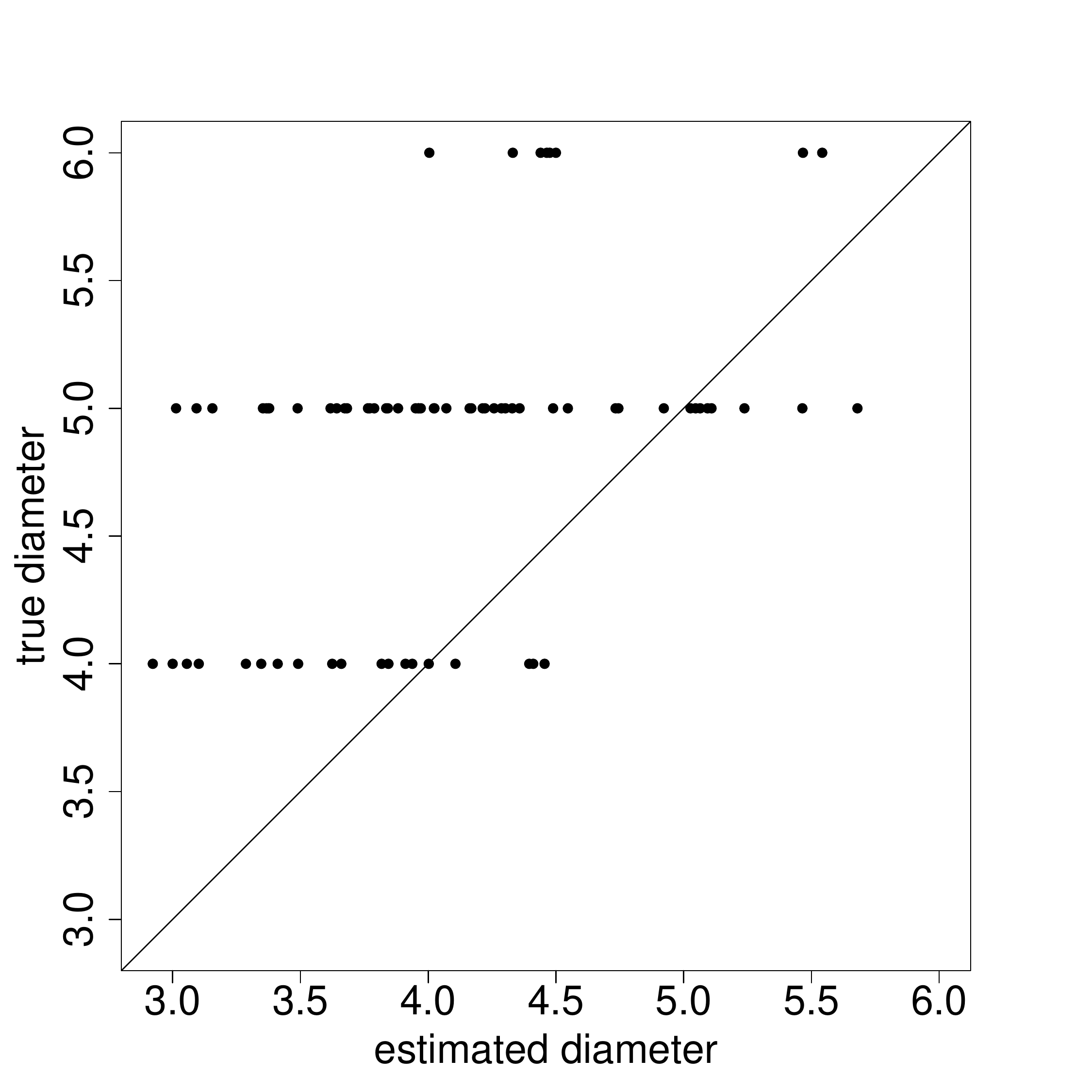}}
\subfloat[Fraction of giant component]{
\includegraphics[width=.3\textwidth]{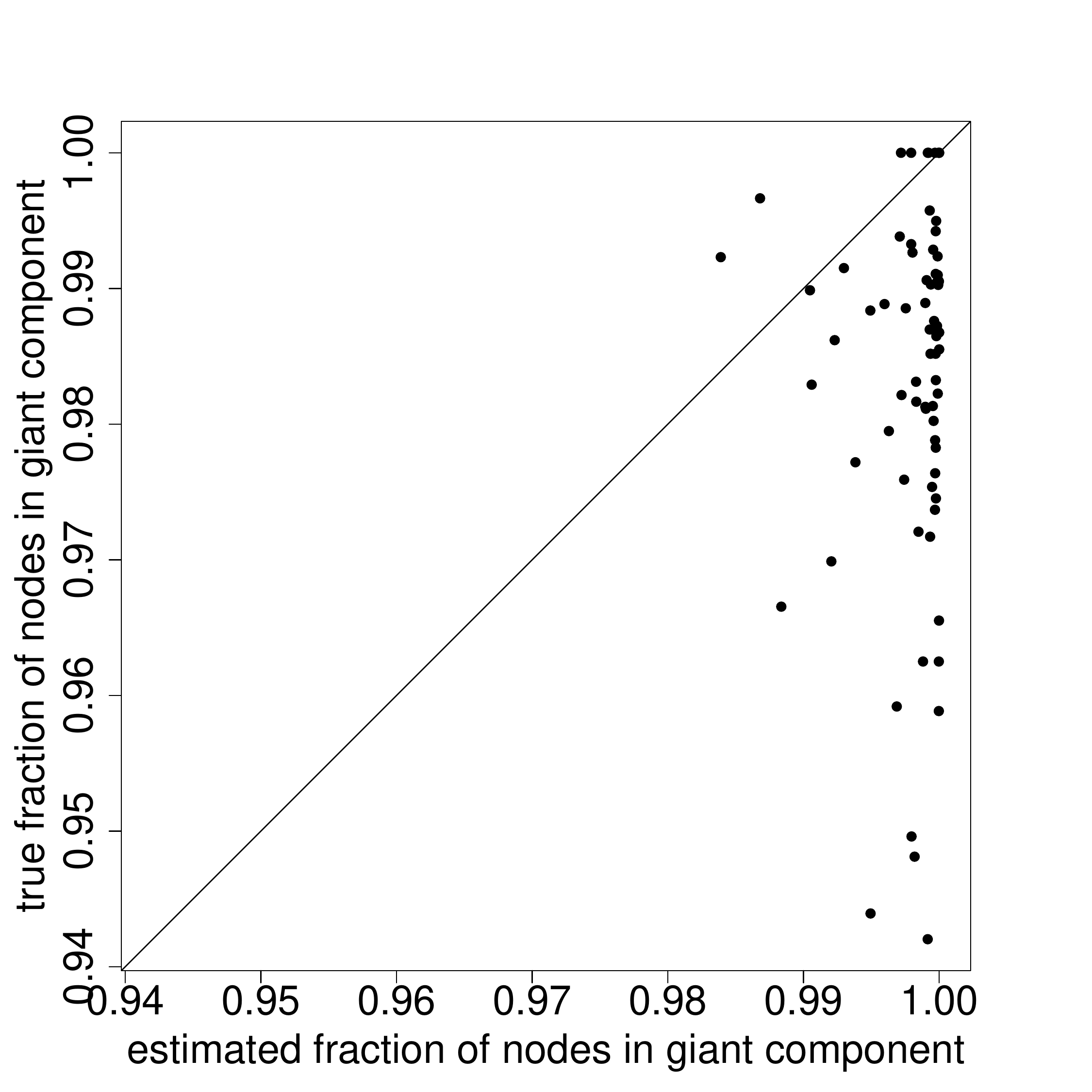}}

\caption{Network level measures estimation for households in villages in Karnataka.  These plots show scatterplots across all villages with the estimated network level measure on the x-axis and the measure from the true underlying graph on the y-axis.}
\label{fig:karnataka_network_appendix}
\end{figure}

\clearpage
\section{Scatterplots for Karnataka villages when households' latent space positions are on the surface of a 4 dimensional hypersphere}
\label{sec:p_3}

\setcounter{figure}{0}
\renewcommand{\thefigure}{J.\arabic{figure}}

\vspace{-4mm}
\begin{figure}[!h]
\centering
\subfloat[Degree]{
\includegraphics[width=.28\textwidth]{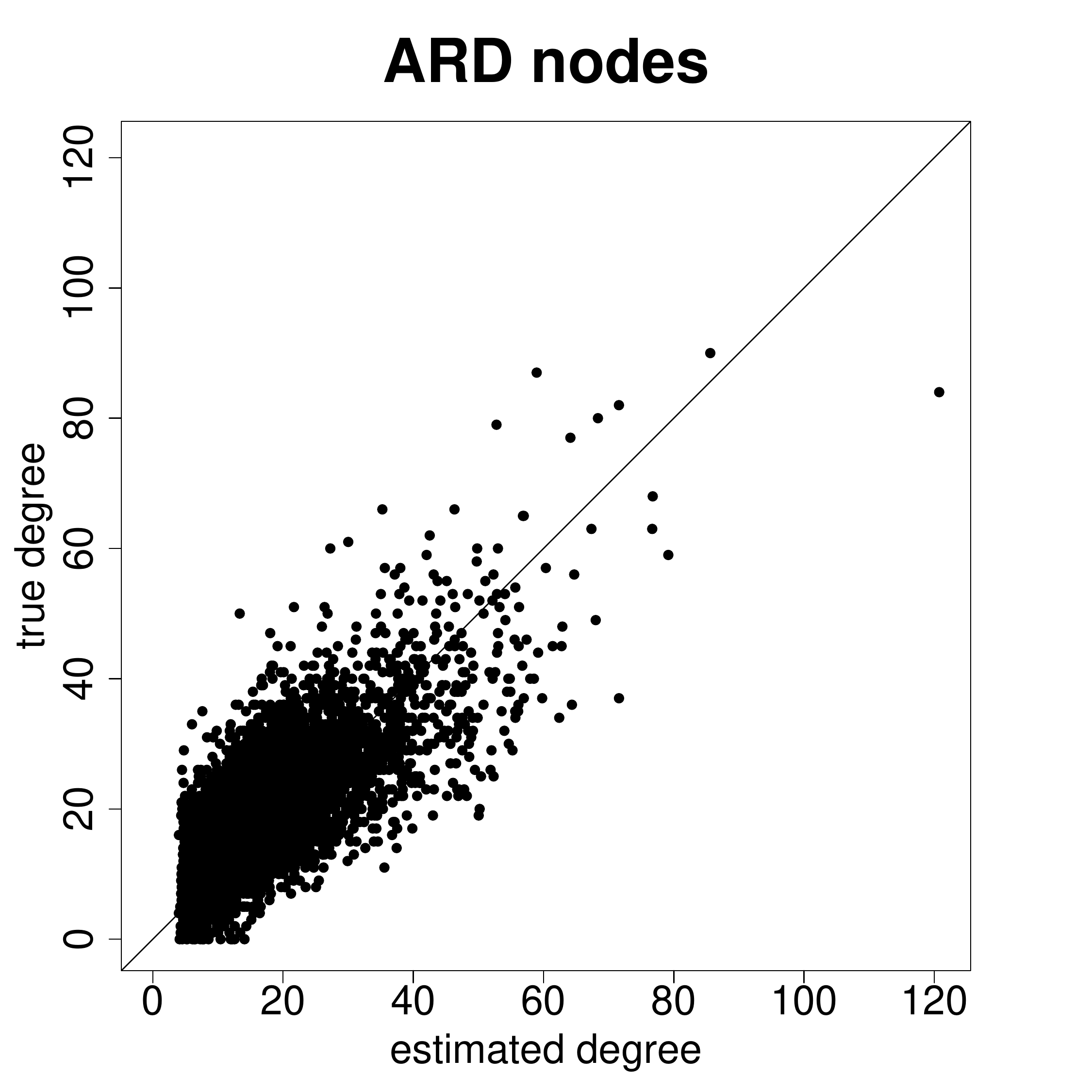}}
\subfloat[Eigenvector Centrality]{
\includegraphics[width=.28\textwidth]{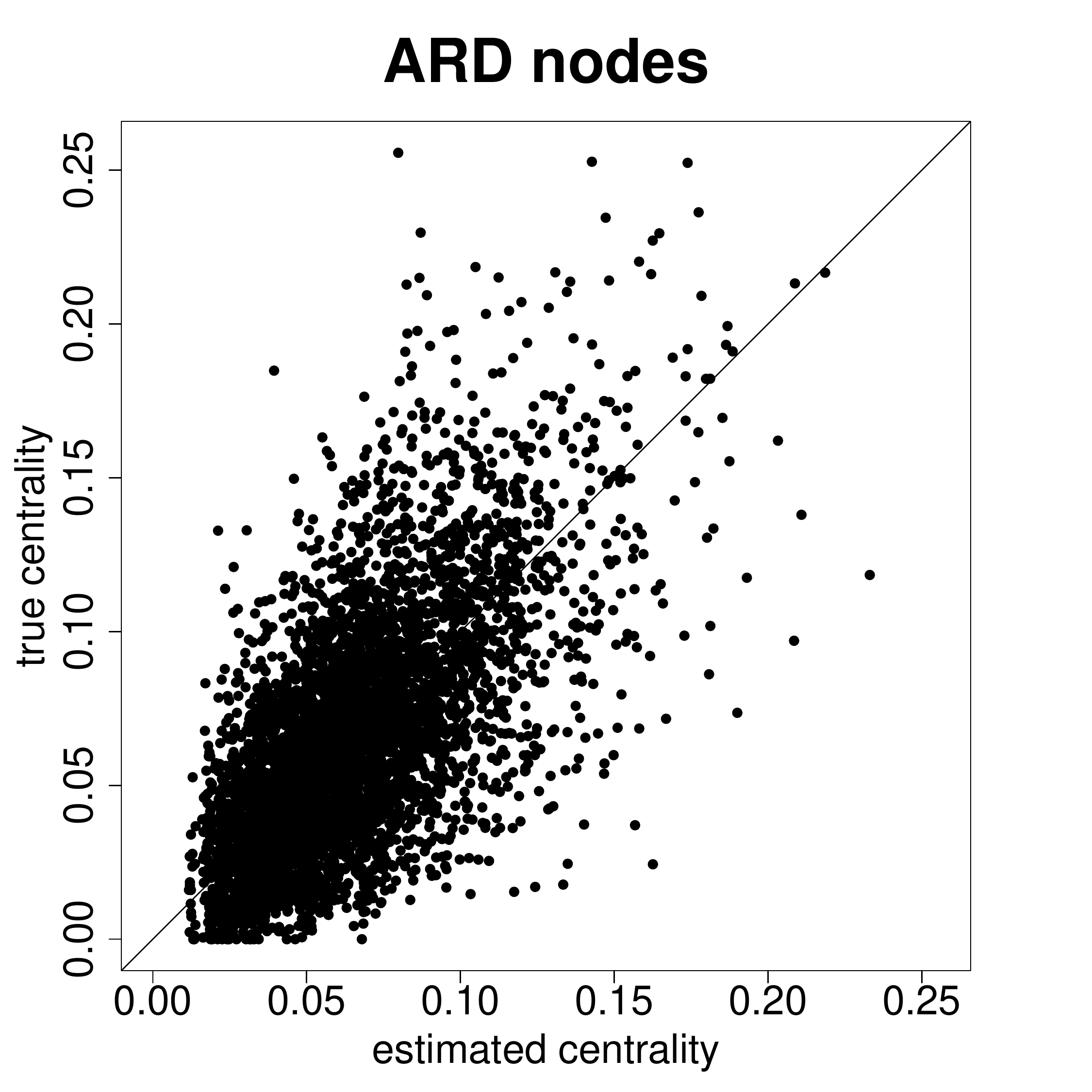}}
\subfloat[Clustering]{
\includegraphics[width=.28\textwidth]{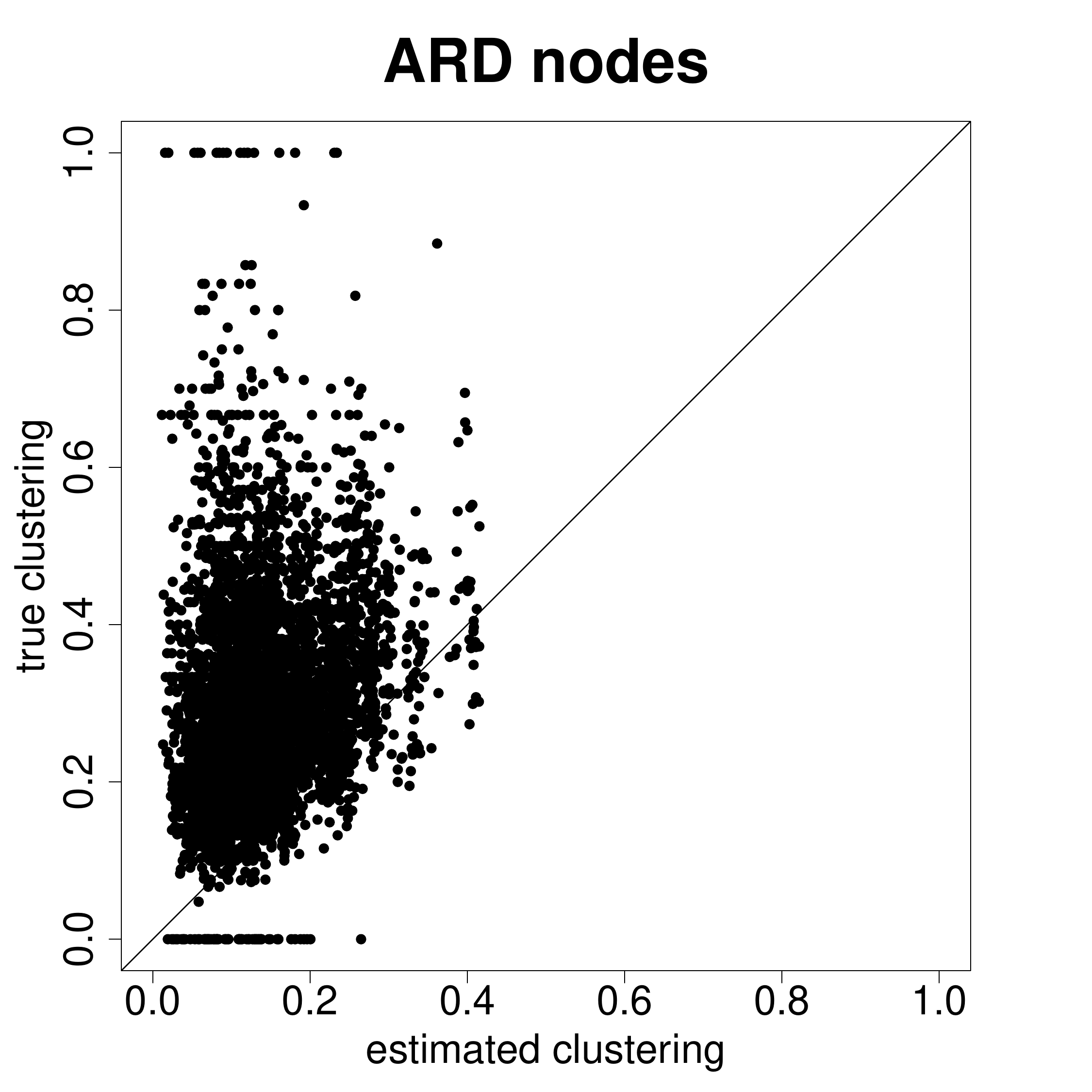}}

\smallskip
\subfloat[Closeness]{
\includegraphics[width=.28\textwidth]{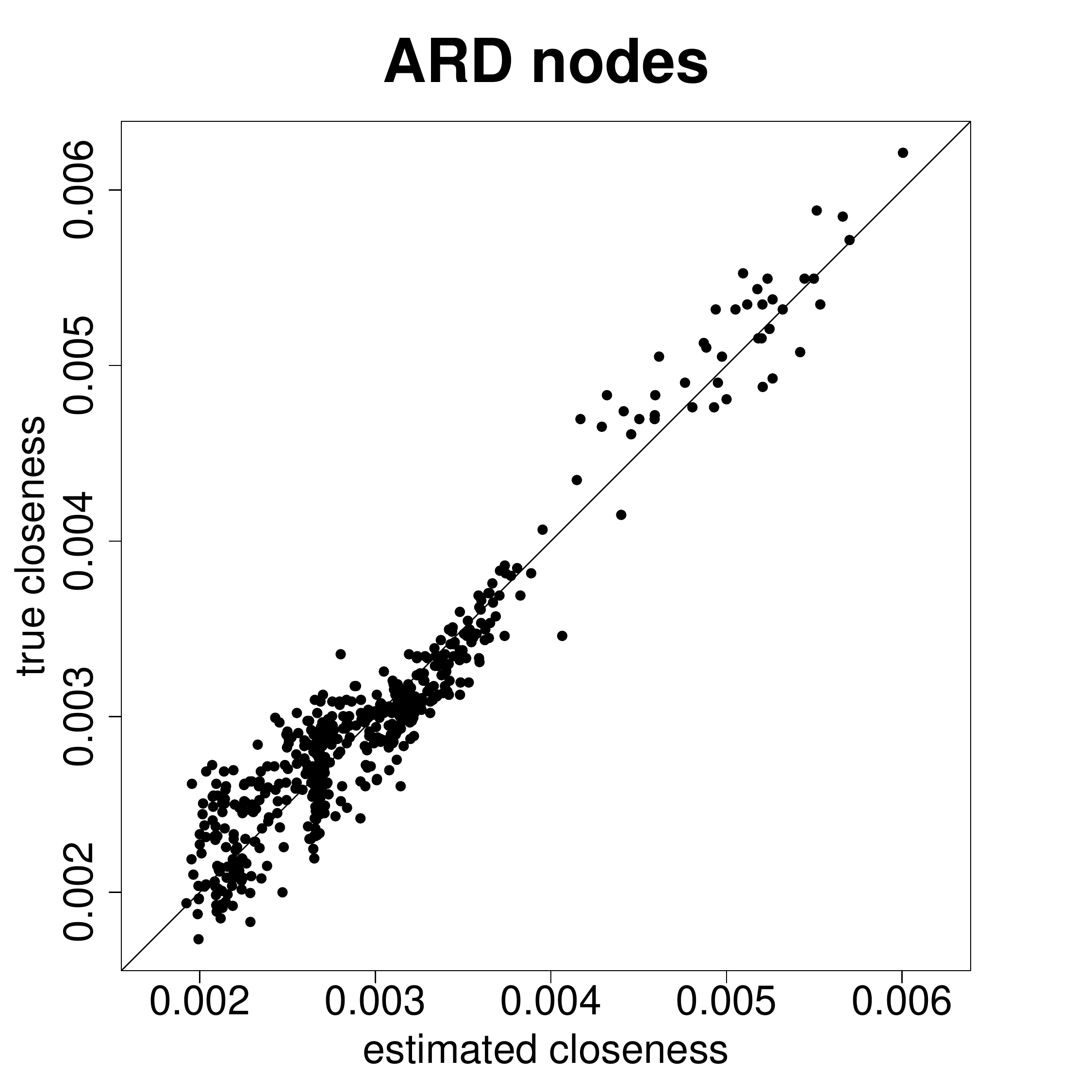}}
\subfloat[Betweenness]{
\includegraphics[width=.28\textwidth]{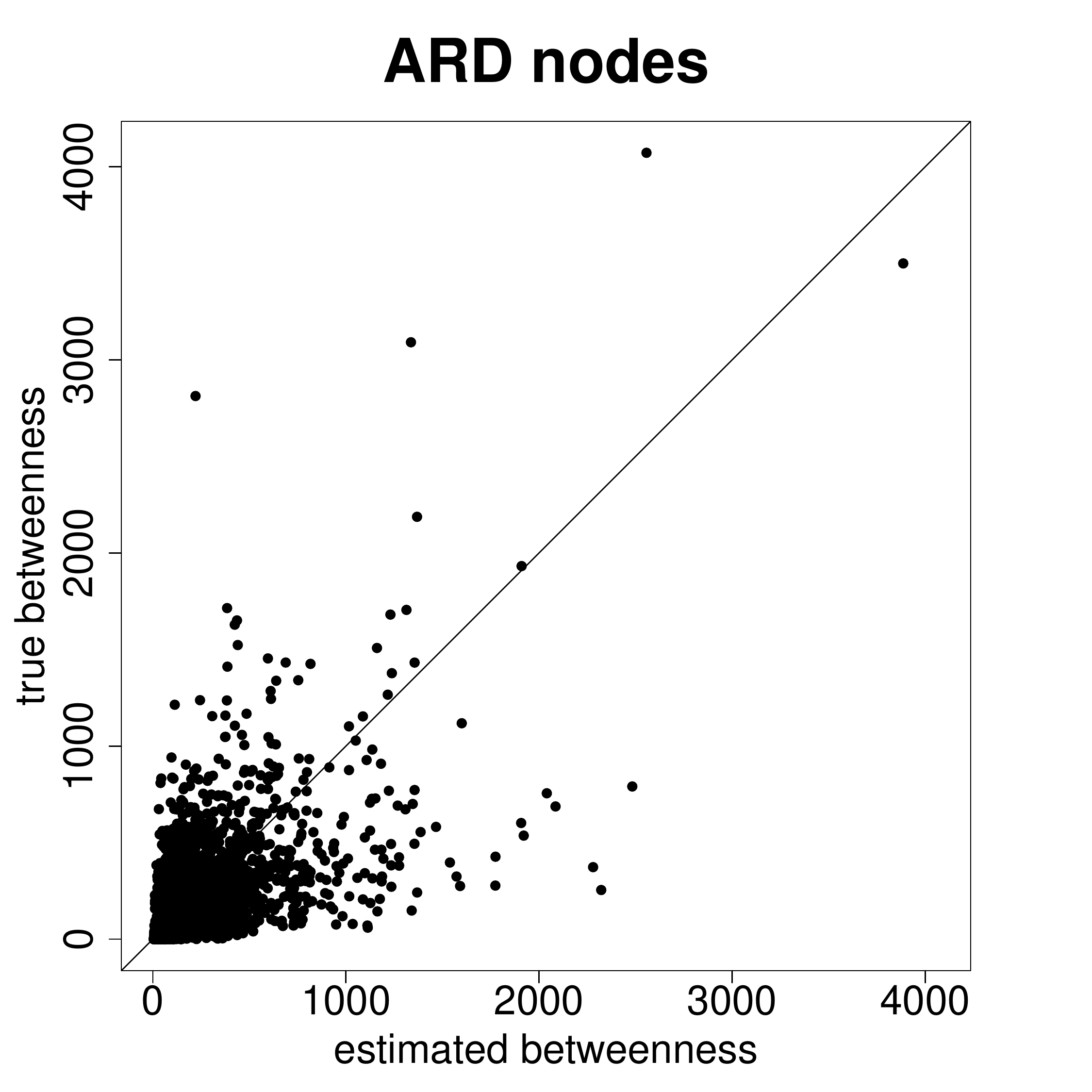}
}
\subfloat[Support]{
\includegraphics[width=.28\textwidth]{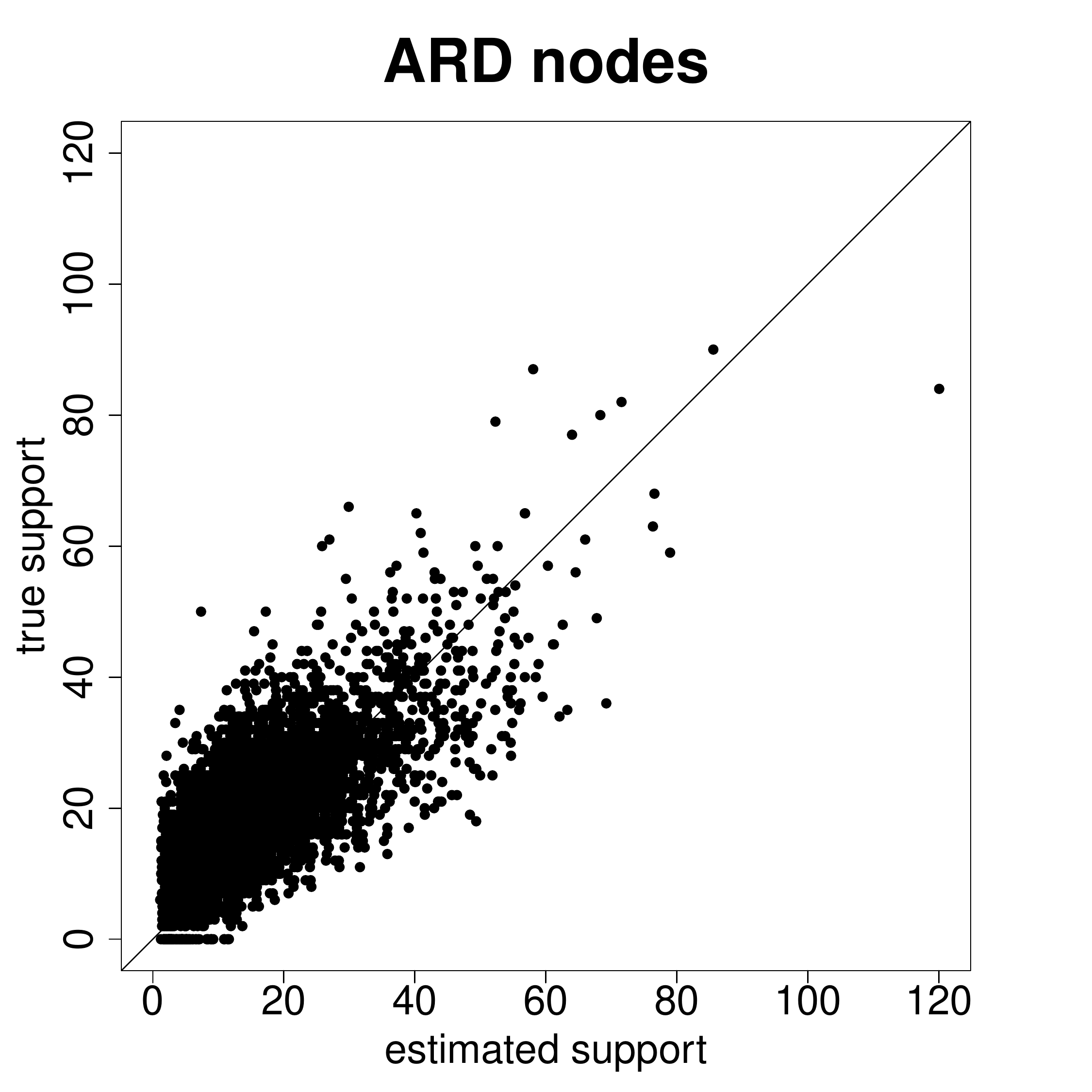}
}

\smallskip
\subfloat[Distance from seed]{
\includegraphics[width=.28\textwidth]{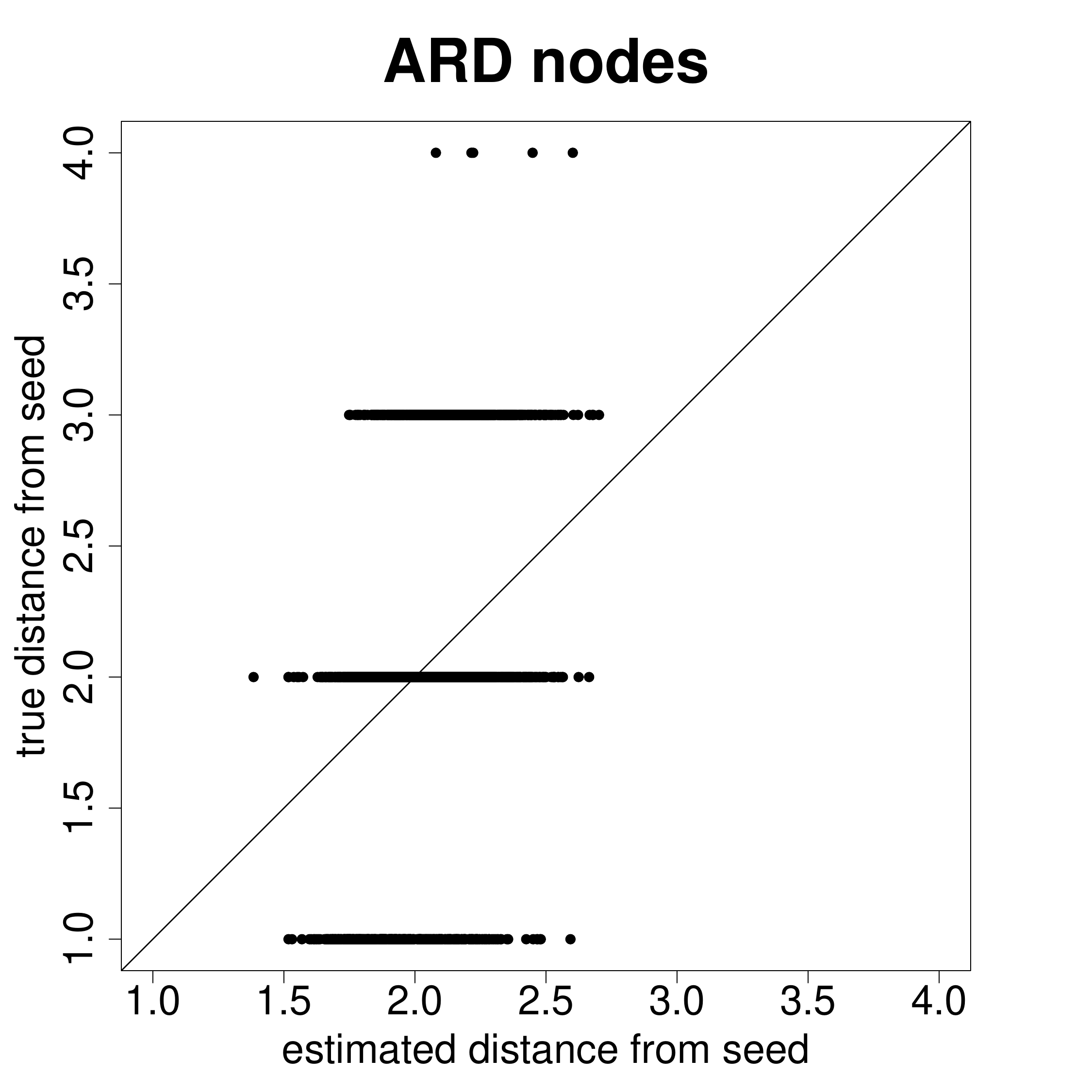}}
\subfloat[Node level clustering]{
\includegraphics[width=.28\textwidth]{plots/p_3/comparison_clustering_ARDnodes.pdf}}
\subfloat[Treated neighborhood share]{
\includegraphics[width=.28\textwidth]{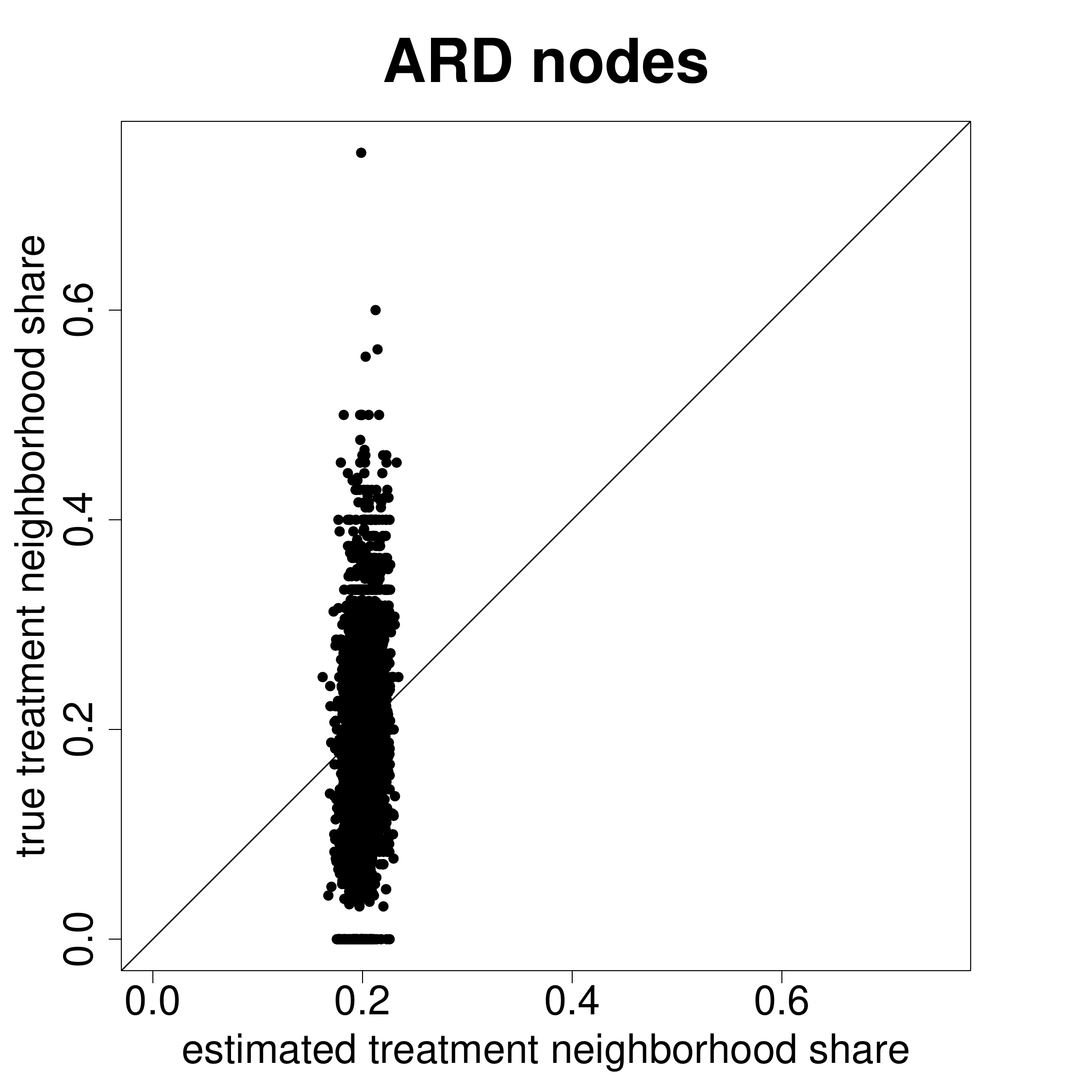}}

\caption{Node level measures estimation for households with ARD response in villages in Karnataka.  These plots show scatterplots across all villages with the estimated node level measure on the x-axis and the measure from the true underlying graph on the y-axis.}
\label{fig:p_3_karnataka_ARD_node_appendix}
\end{figure}

\begin{figure}[!h]
\centering
\subfloat[Degree]{
\includegraphics[width=.28\textwidth]{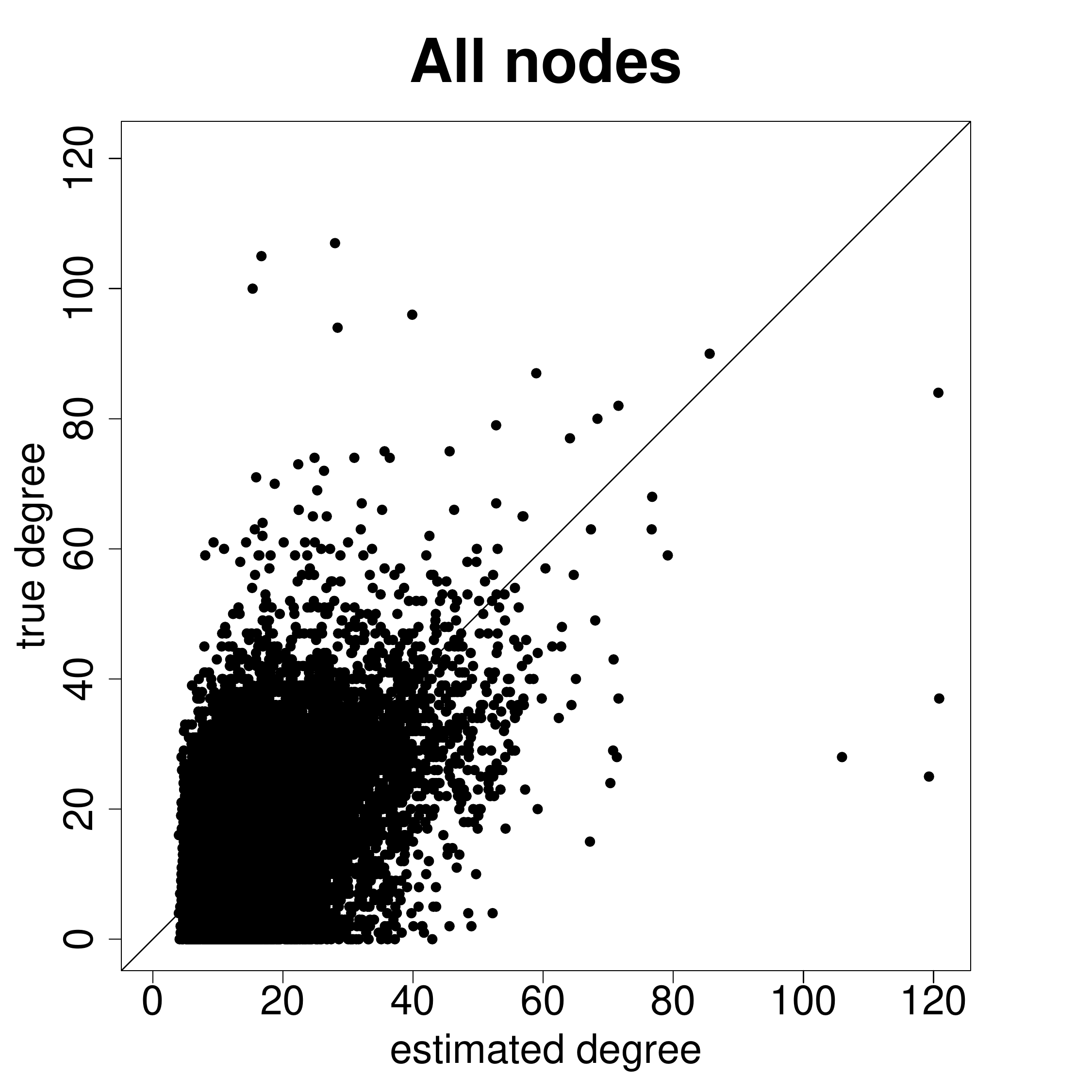}}
\subfloat[Eigenvector Centrality]{
\includegraphics[width=.28\textwidth]{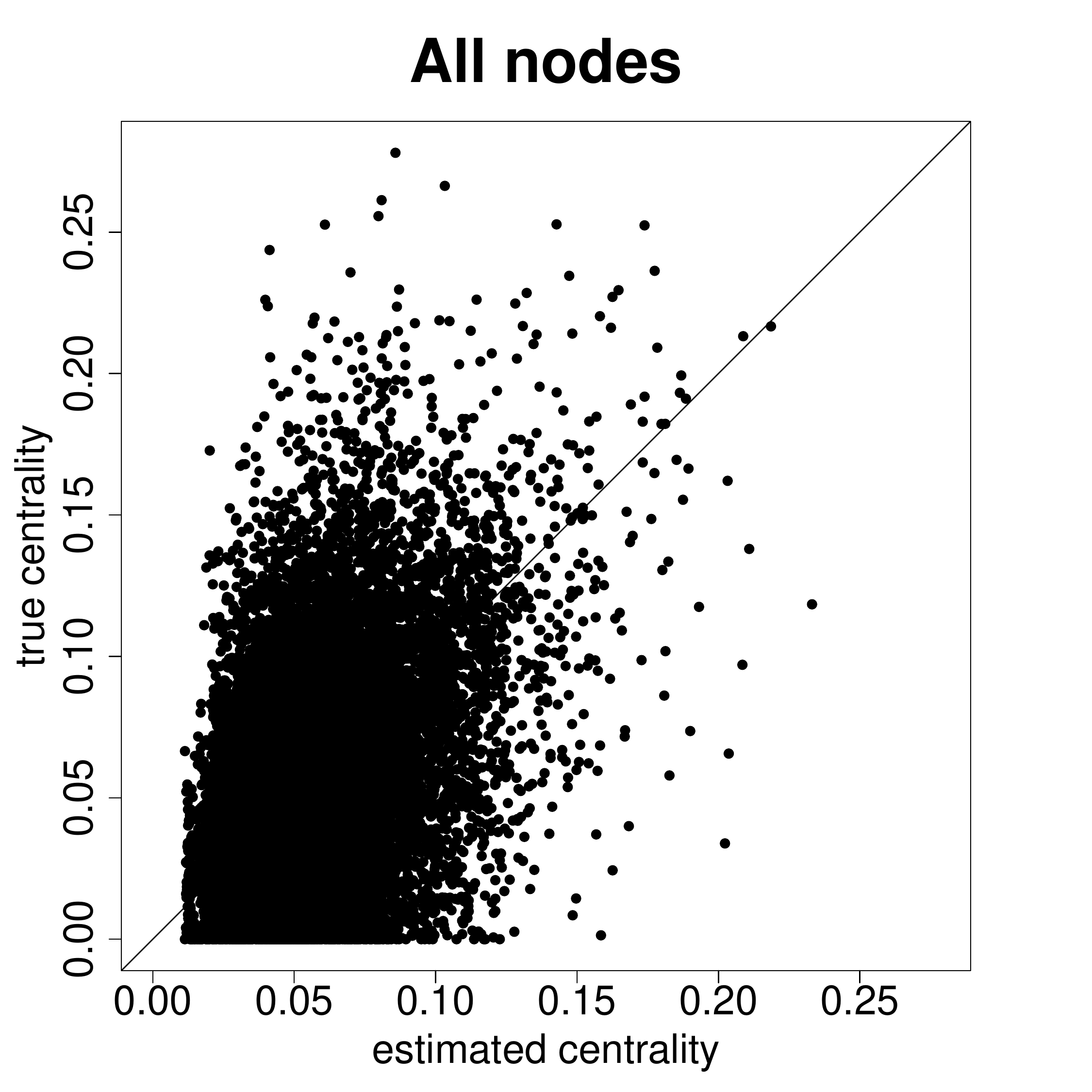}}
\subfloat[Clustering]{
\includegraphics[width=.28\textwidth]{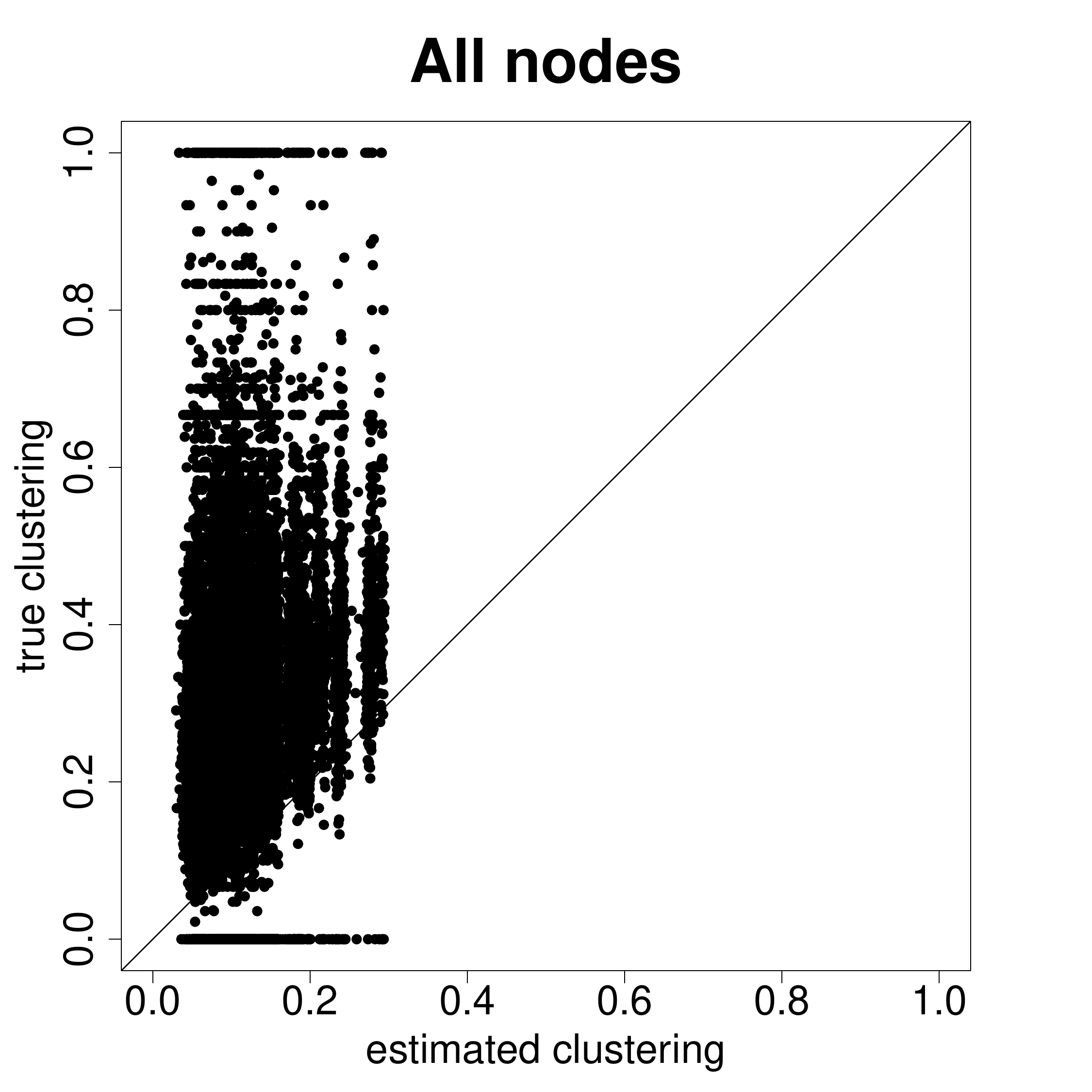}}

\smallskip
\subfloat[Closeness]{
\includegraphics[width=.28\textwidth]{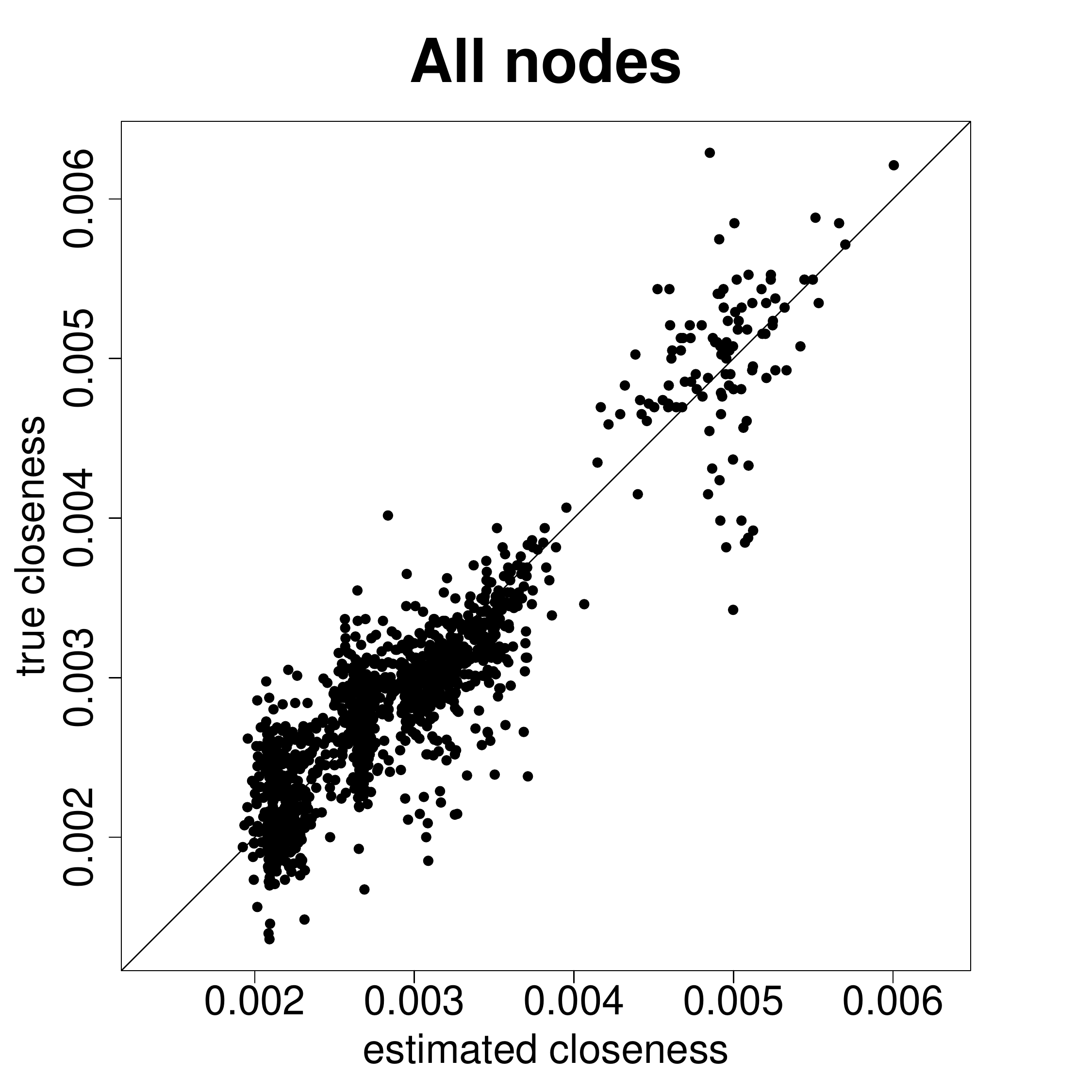}}
\subfloat[Betweenness]{
\includegraphics[width=.28\textwidth]{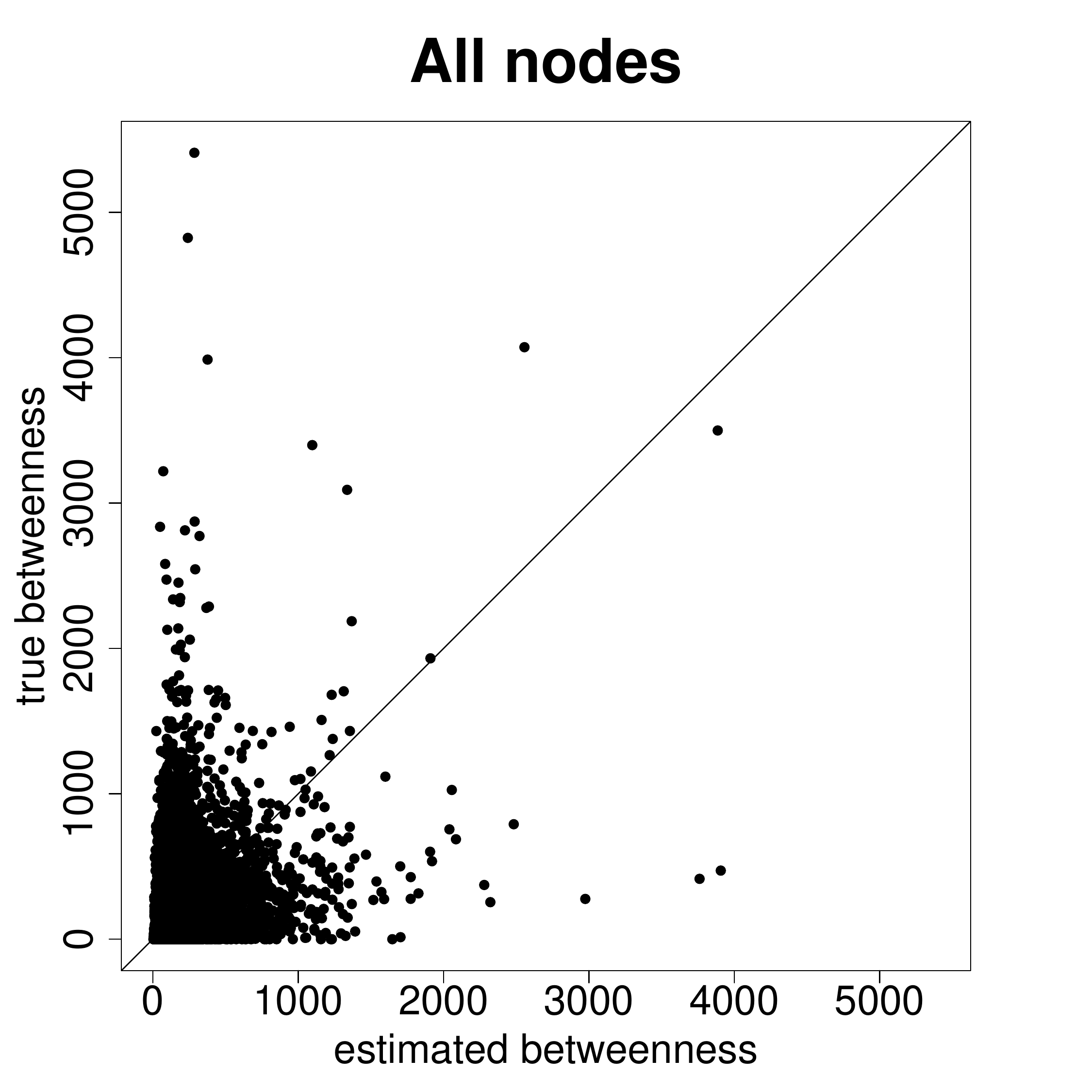}
}
\subfloat[Support]{
\includegraphics[width=.28\textwidth]{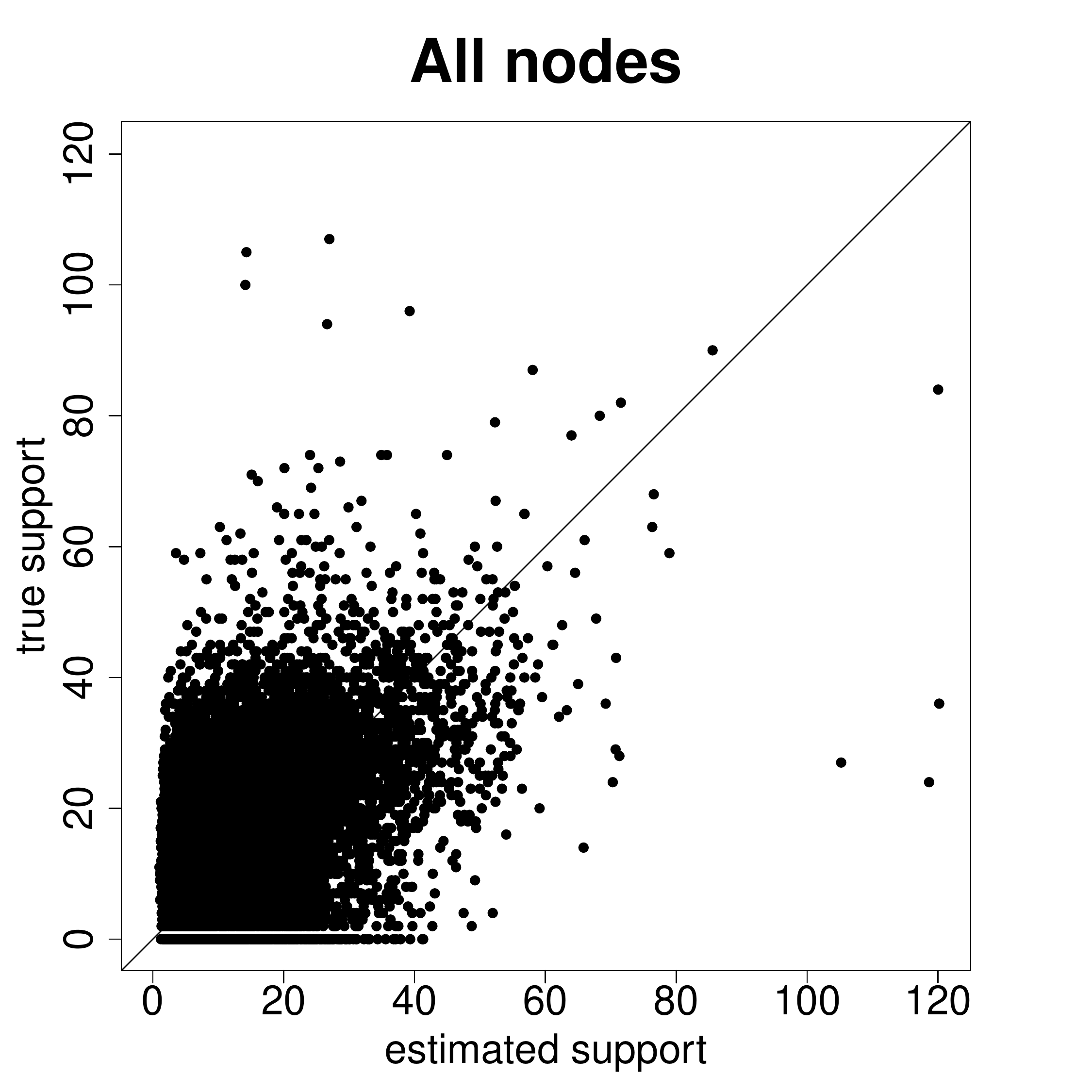}
}

\smallskip
\subfloat[Distance from seed]{
\includegraphics[width=.28\textwidth]{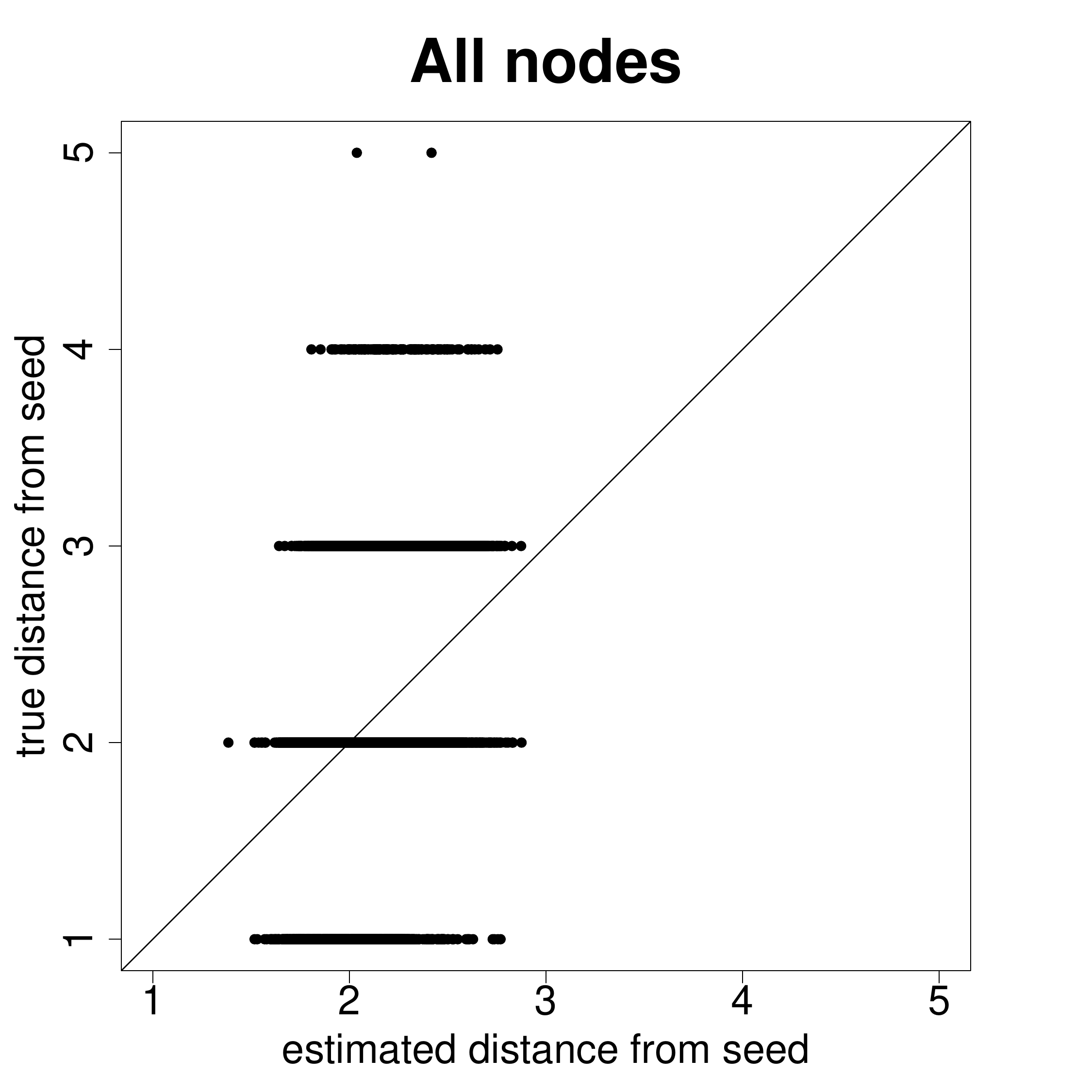}}
\subfloat[Node level clustering]{
\includegraphics[width=.28\textwidth]{plots/p_3/comparison_clustering_allnodes.pdf}}
\subfloat[Treated neighborhood share]{
\includegraphics[width=.28\textwidth]{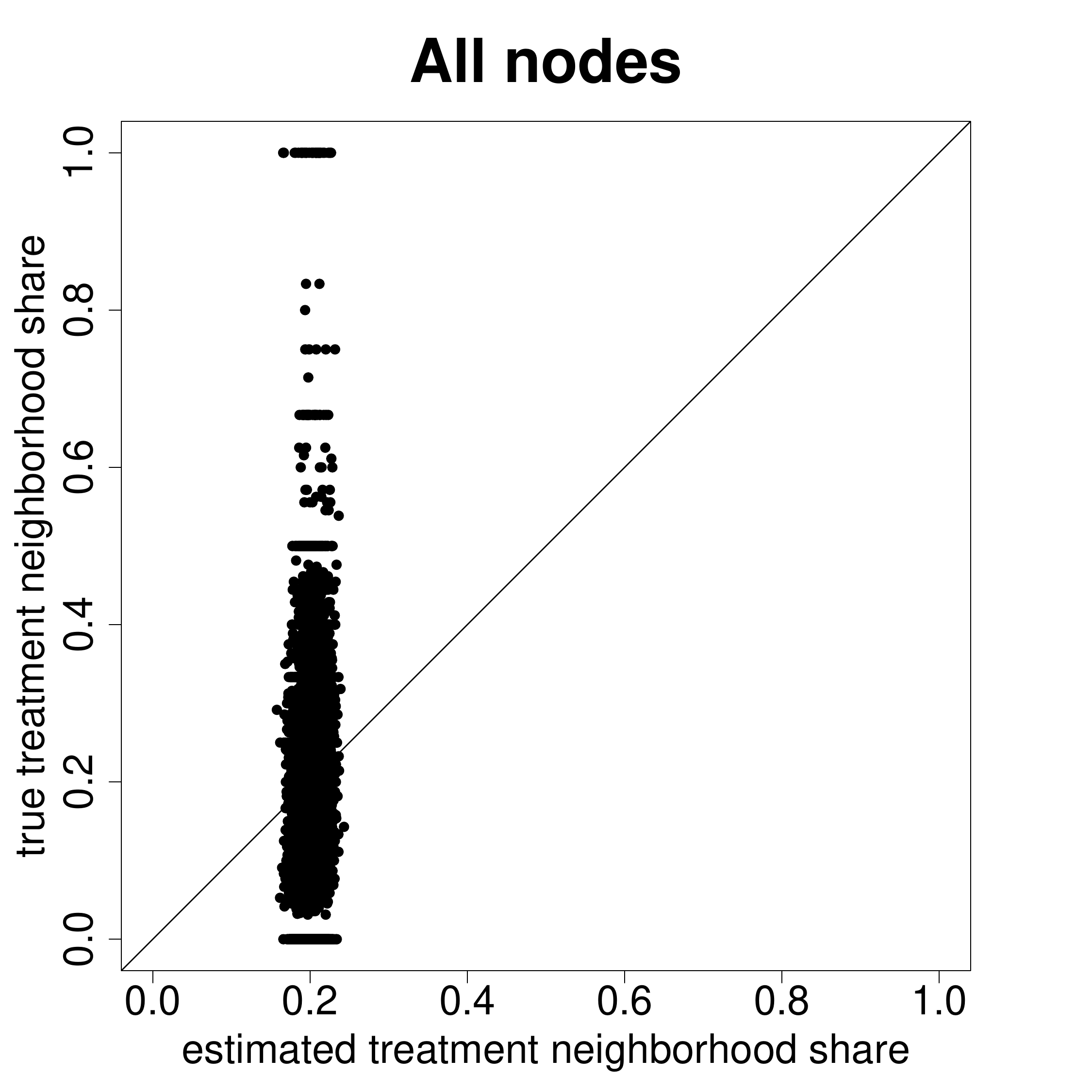}}

\caption{Node level measures estimation for households with all response in villages in Karnataka.  These plots show scatterplots across all villages with the estimated node level measure on the x-axis and the measure from the true underlying graph on the y-axis.}
\label{fig:p_3_karnataka_all_node_appendix}
\end{figure}

\begin{figure}[!h]
\centering
\subfloat[$\lambda_1(g)$]{
\includegraphics[width=.28\textwidth]{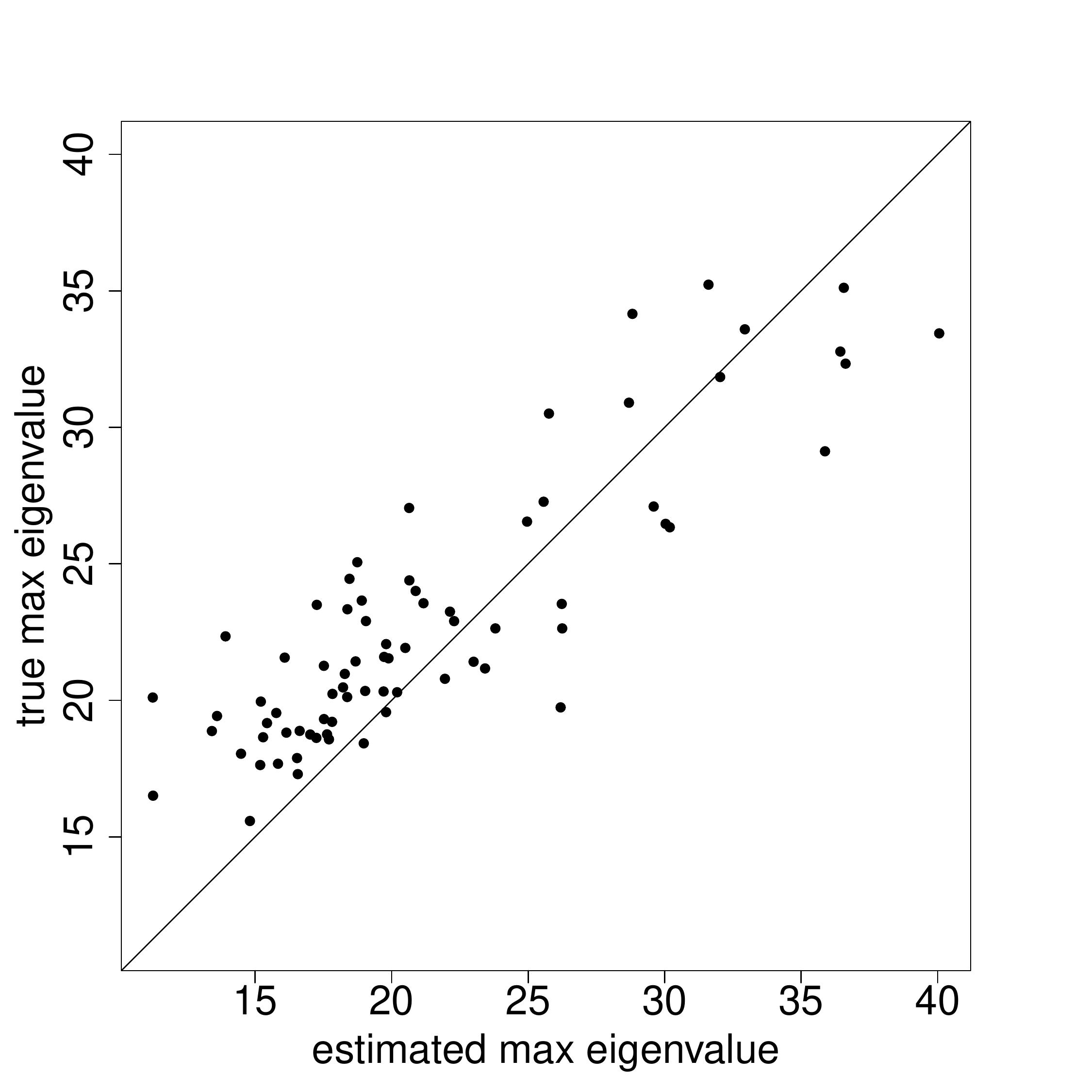}\qquad
}
\subfloat[Social Proximity]{
\includegraphics[width=.28\textwidth]{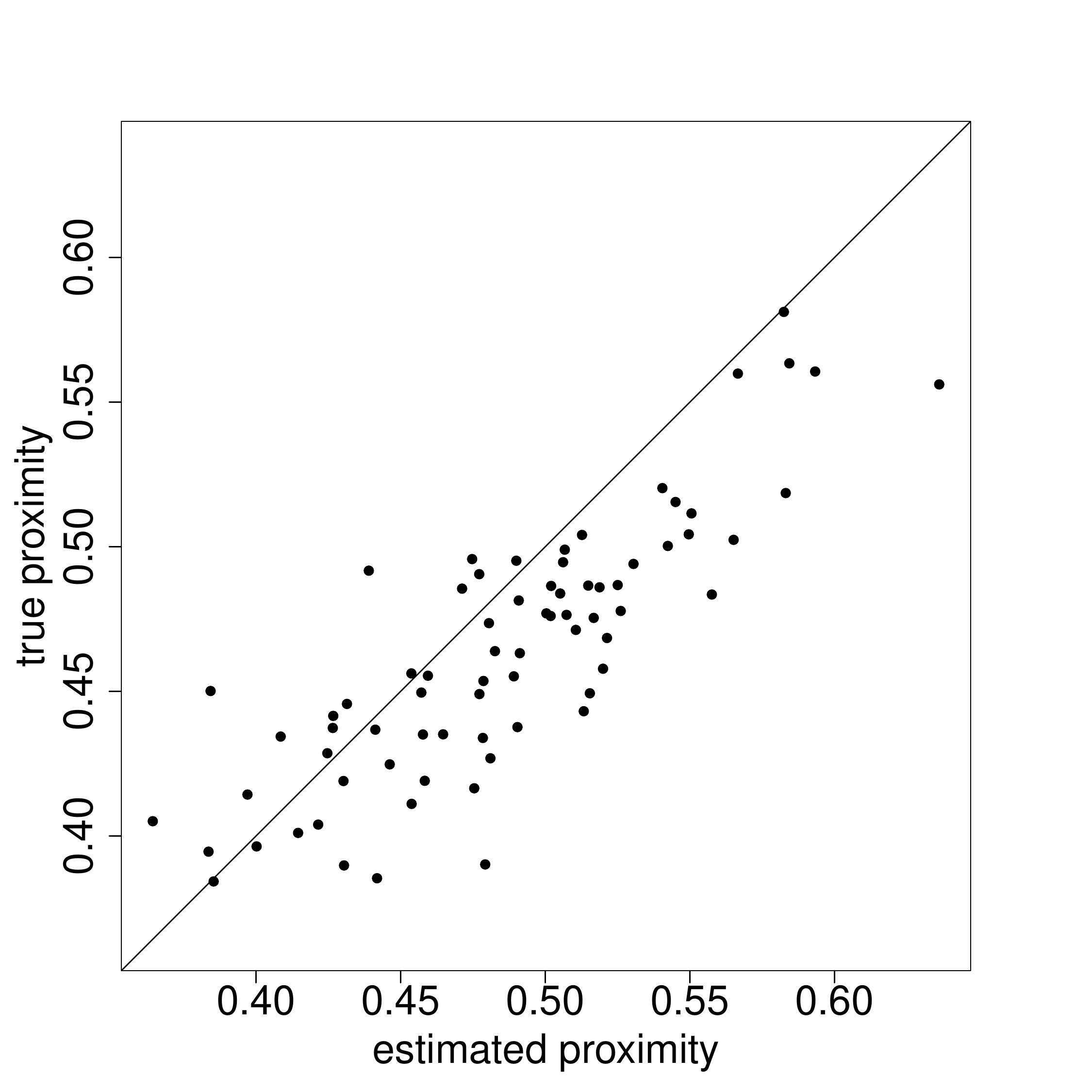}
}
\subfloat[Clustering]{
\includegraphics[width=.28\textwidth]{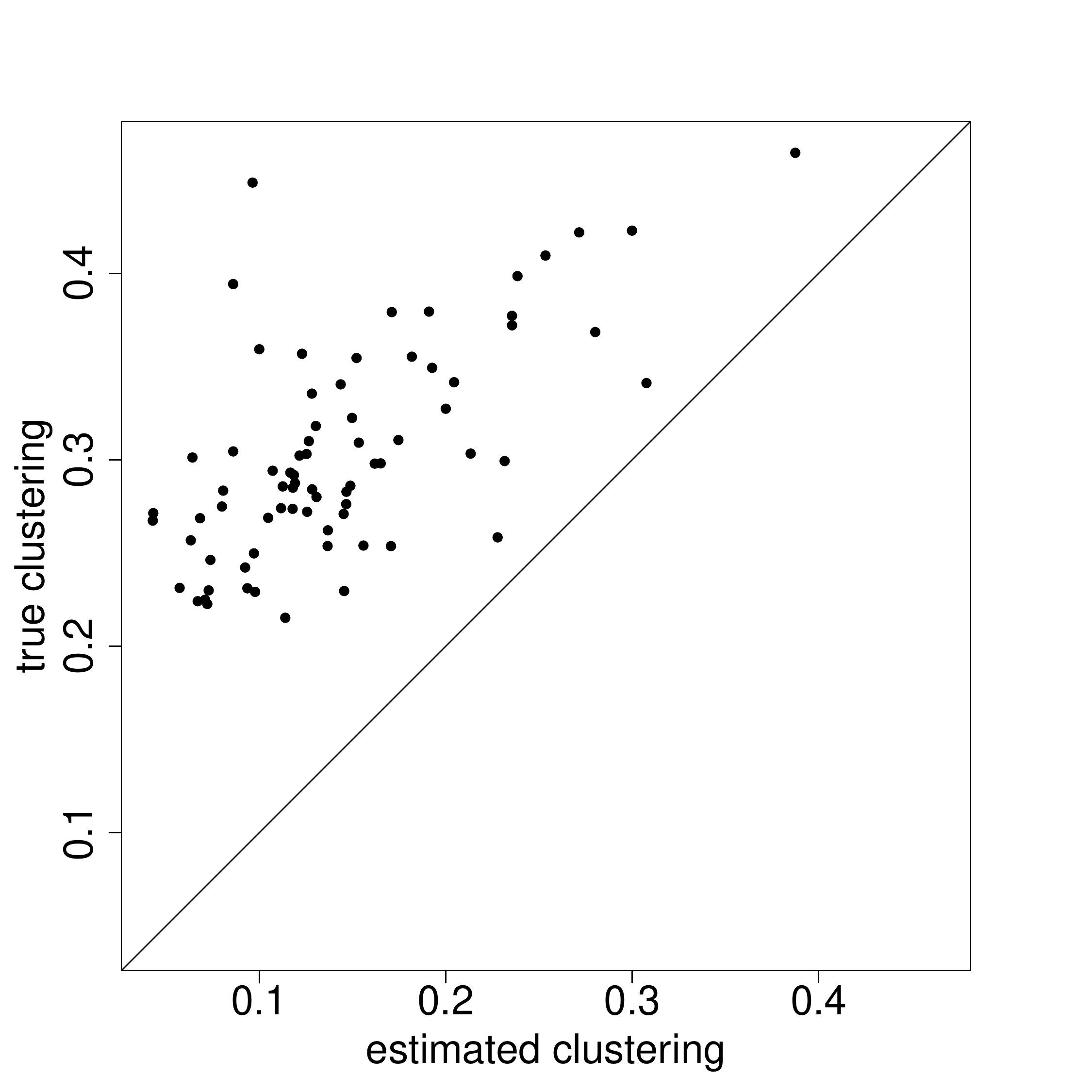}
}

\smallskip
\subfloat[Eigenvector Cut]{
\includegraphics[width=.28\textwidth]{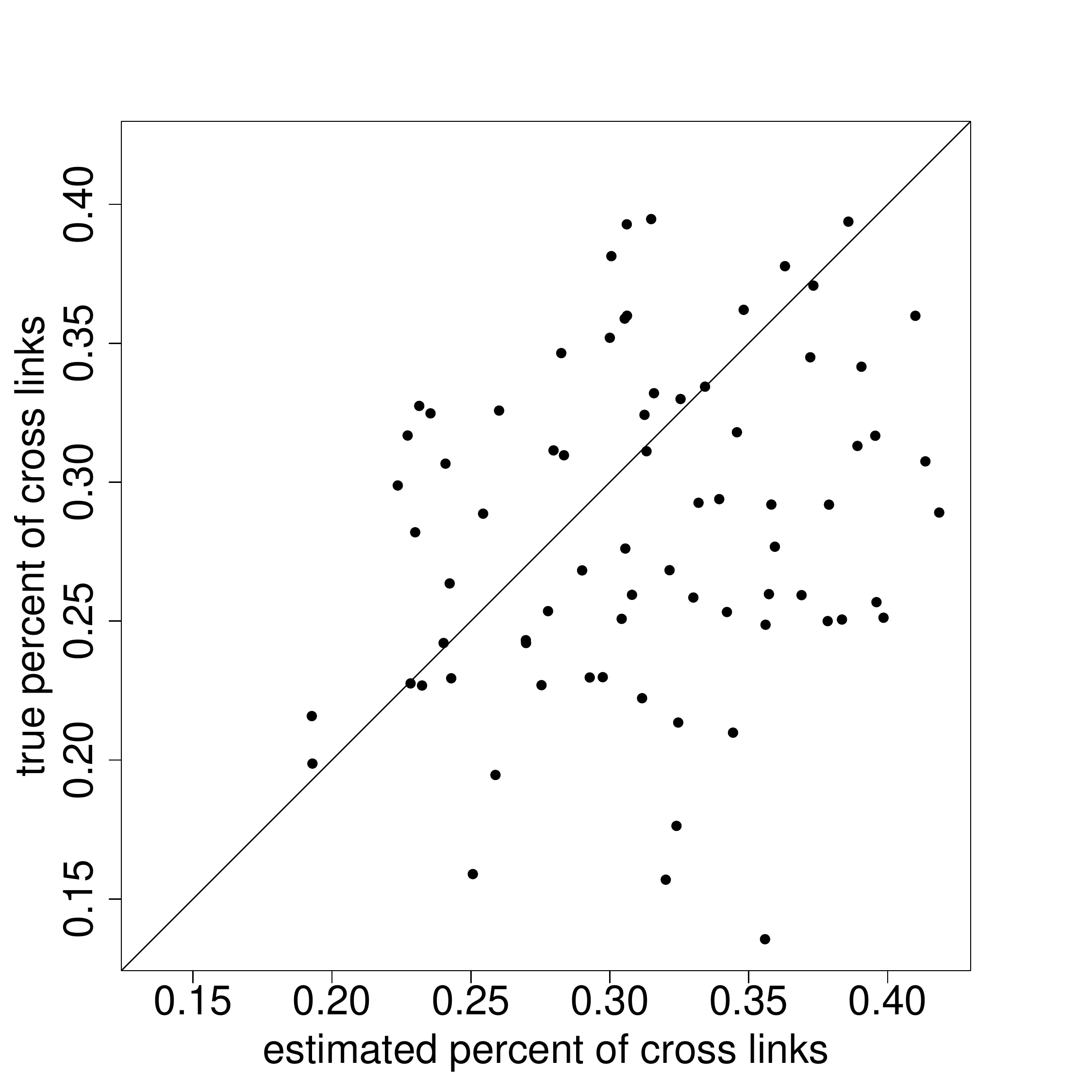}
}
\subfloat[Number of components]{
\includegraphics[width=.28\textwidth]{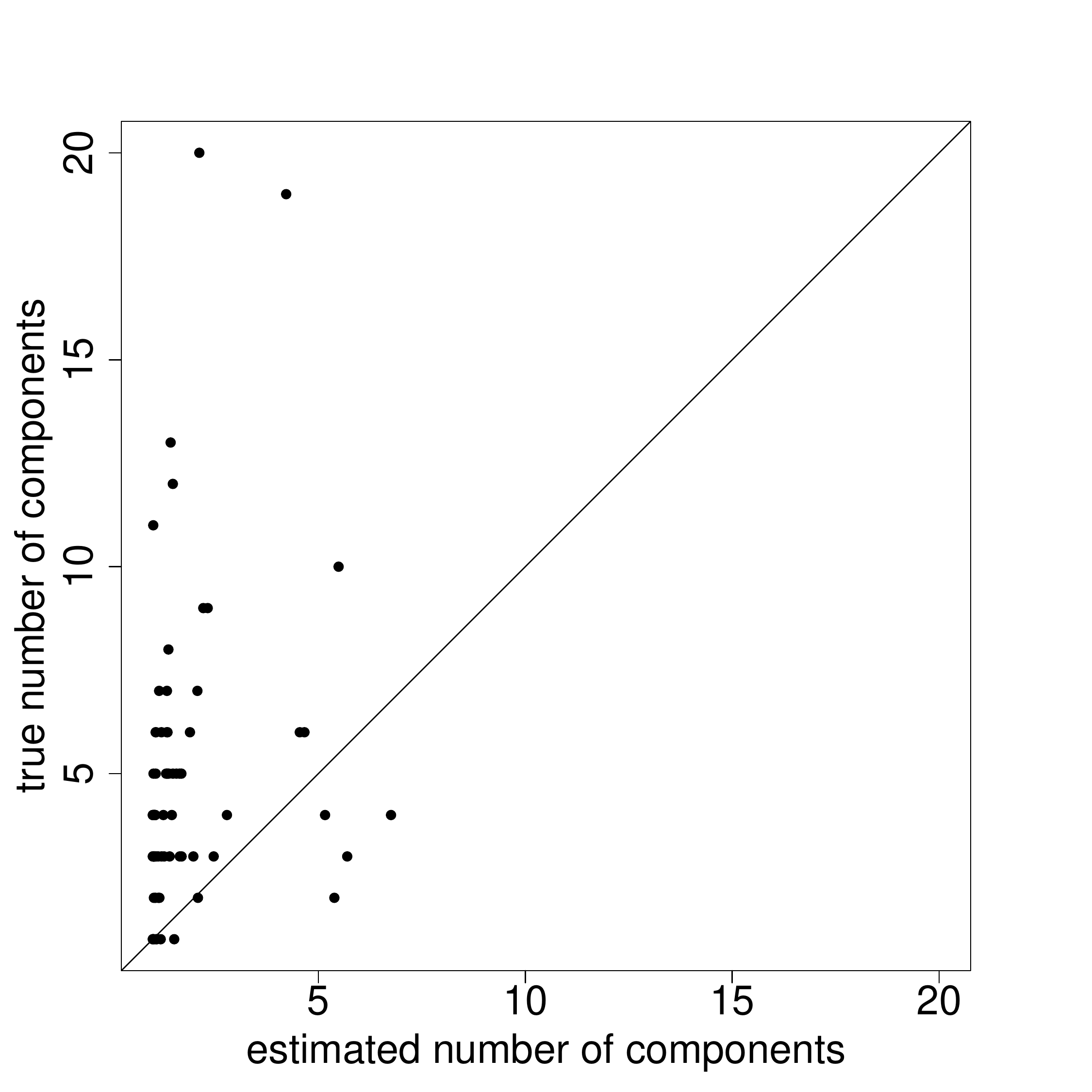}}
\subfloat[Average path length]{
\includegraphics[width=.28\textwidth]{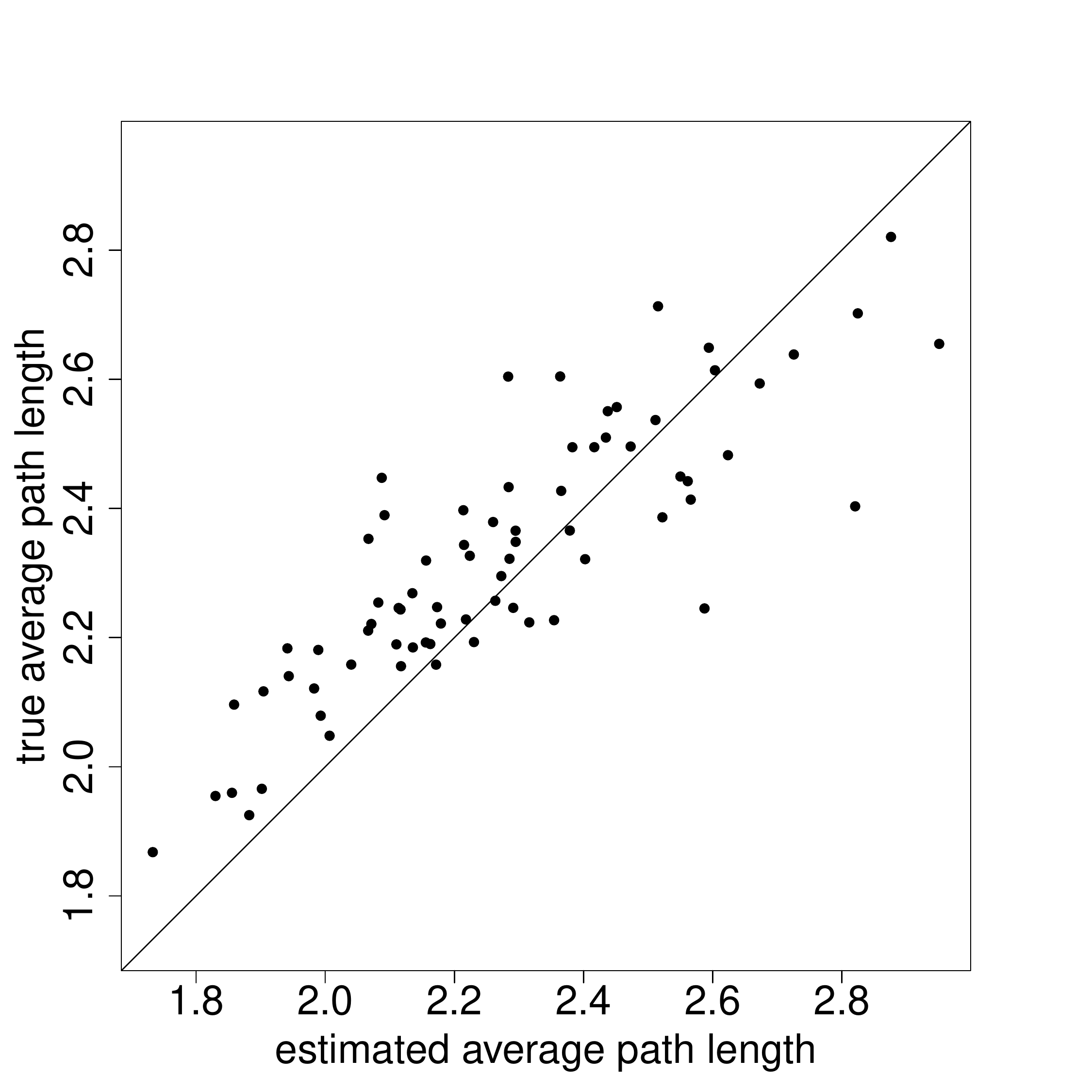}}

\smallskip
\subfloat[Diameter]{
\includegraphics[width=.28\textwidth]{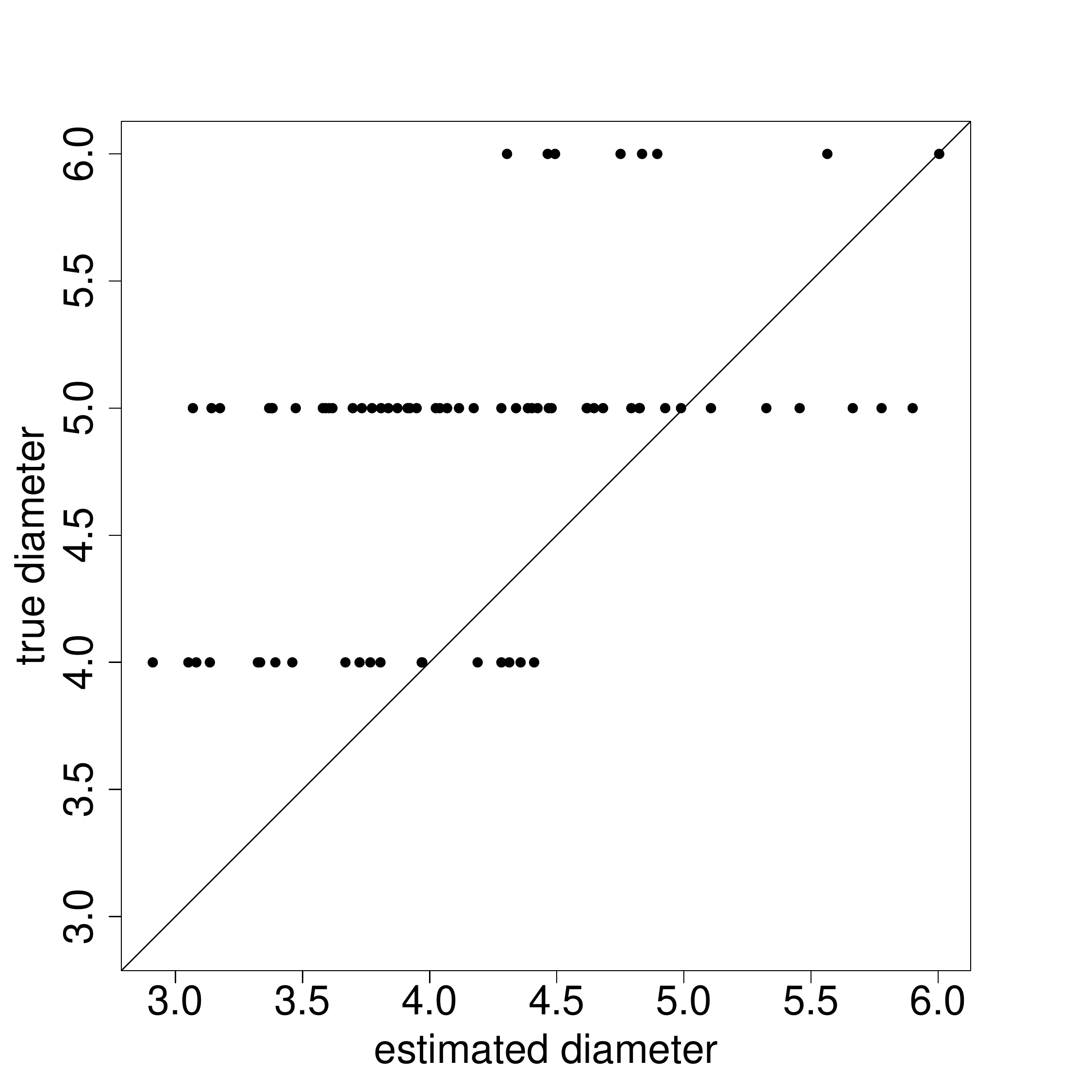}}
\subfloat[Fraction of giant component]{
\includegraphics[width=.28\textwidth]{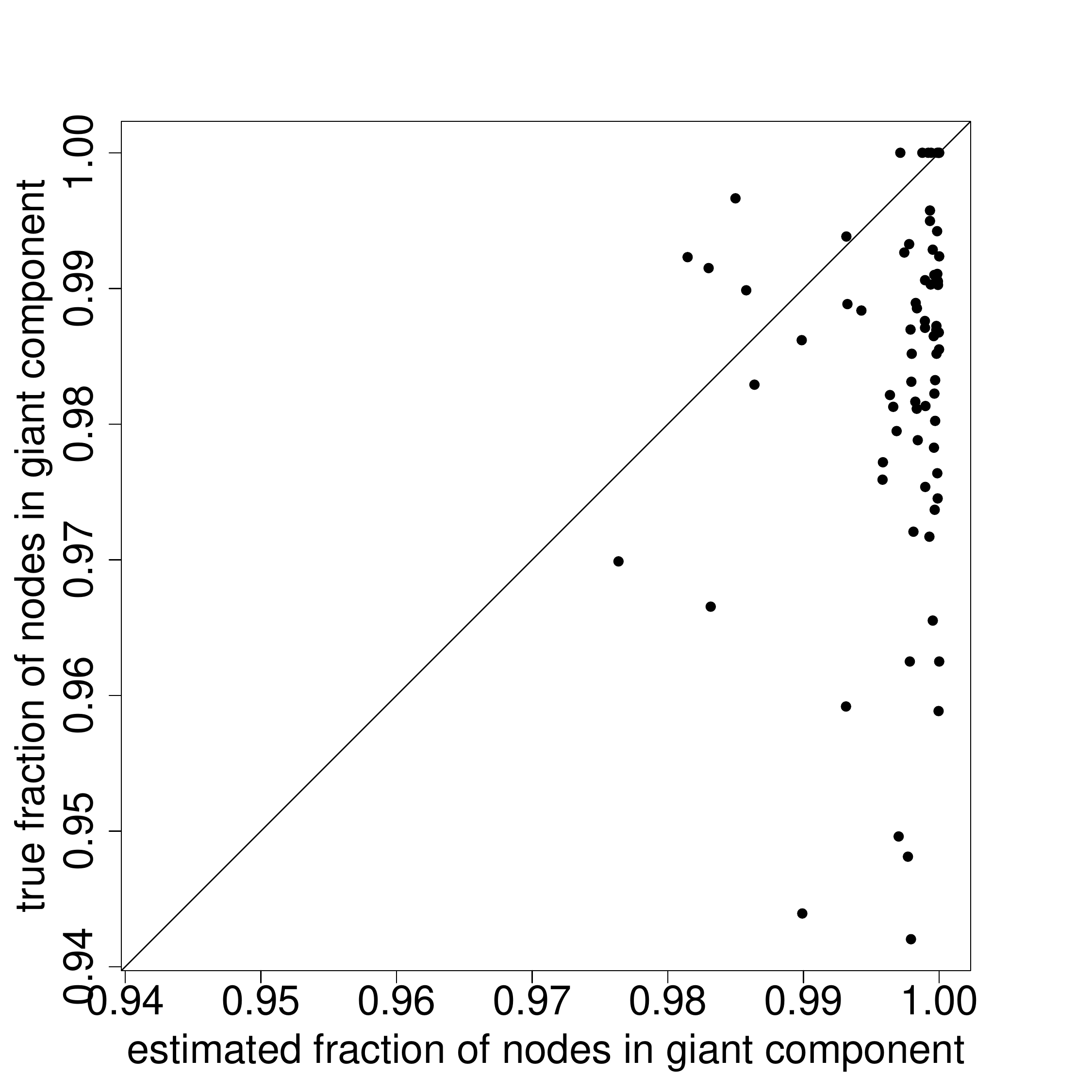}}

\caption{Network level measures estimation for households in villages in Karnataka.  These plots show scatterplots across all villages with the estimated network level measure on the x-axis and the measure from the true underlying graph on the y-axis.}
\label{fig:p_3_karnataka_network_appendix}
\end{figure}

\clearpage
\section{Scatterplots for Karnataka villages when households' latent space positions are on the surface of a 5 dimensional hypersphere}
\label{sec:p_4}

\setcounter{figure}{0}
\renewcommand{\thefigure}{K.\arabic{figure}}

\vspace{-4mm}
\begin{figure}[!h]
\centering
\subfloat[Degree]{
\includegraphics[width=.28\textwidth]{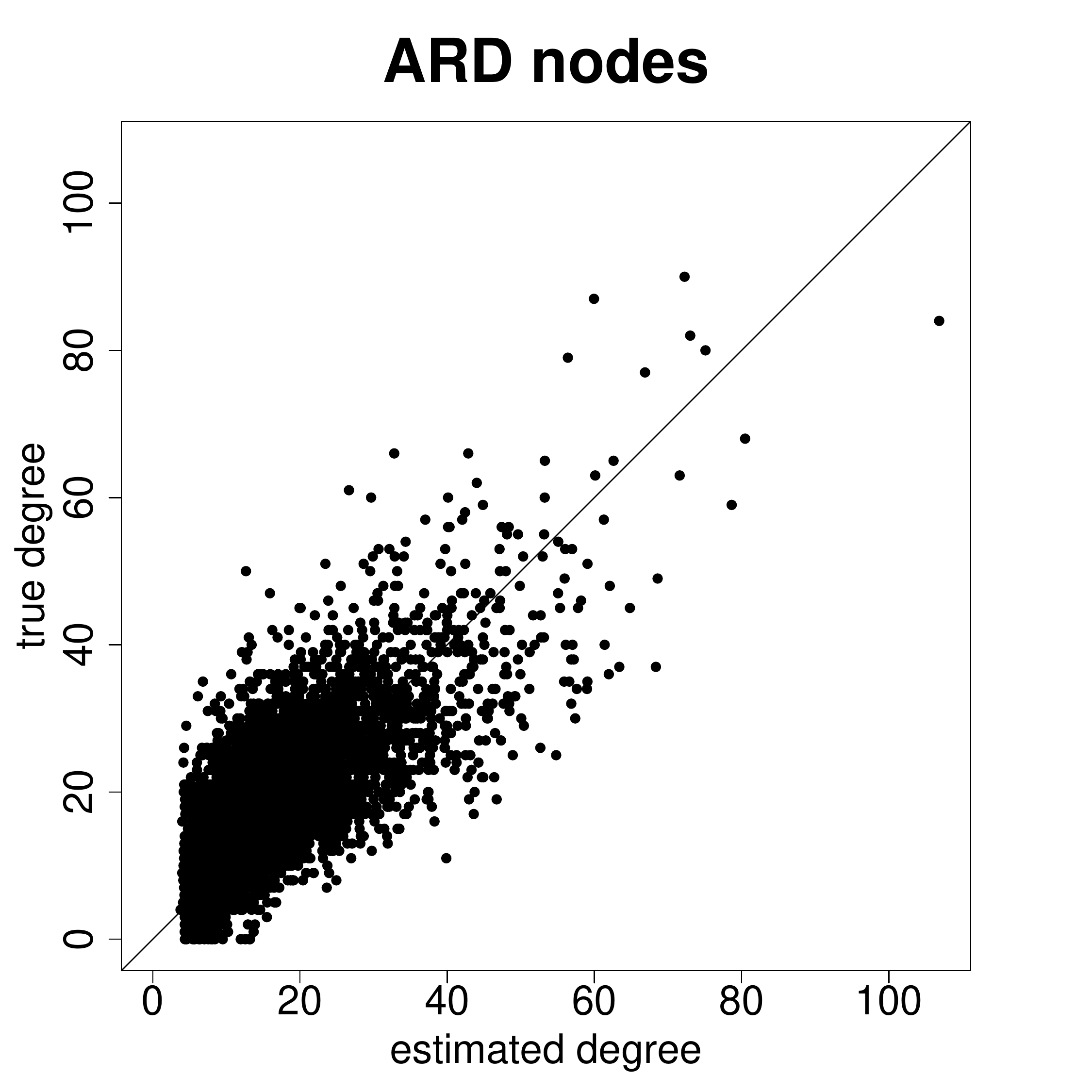}}
\subfloat[Eigenvector Centrality]{
\includegraphics[width=.28\textwidth]{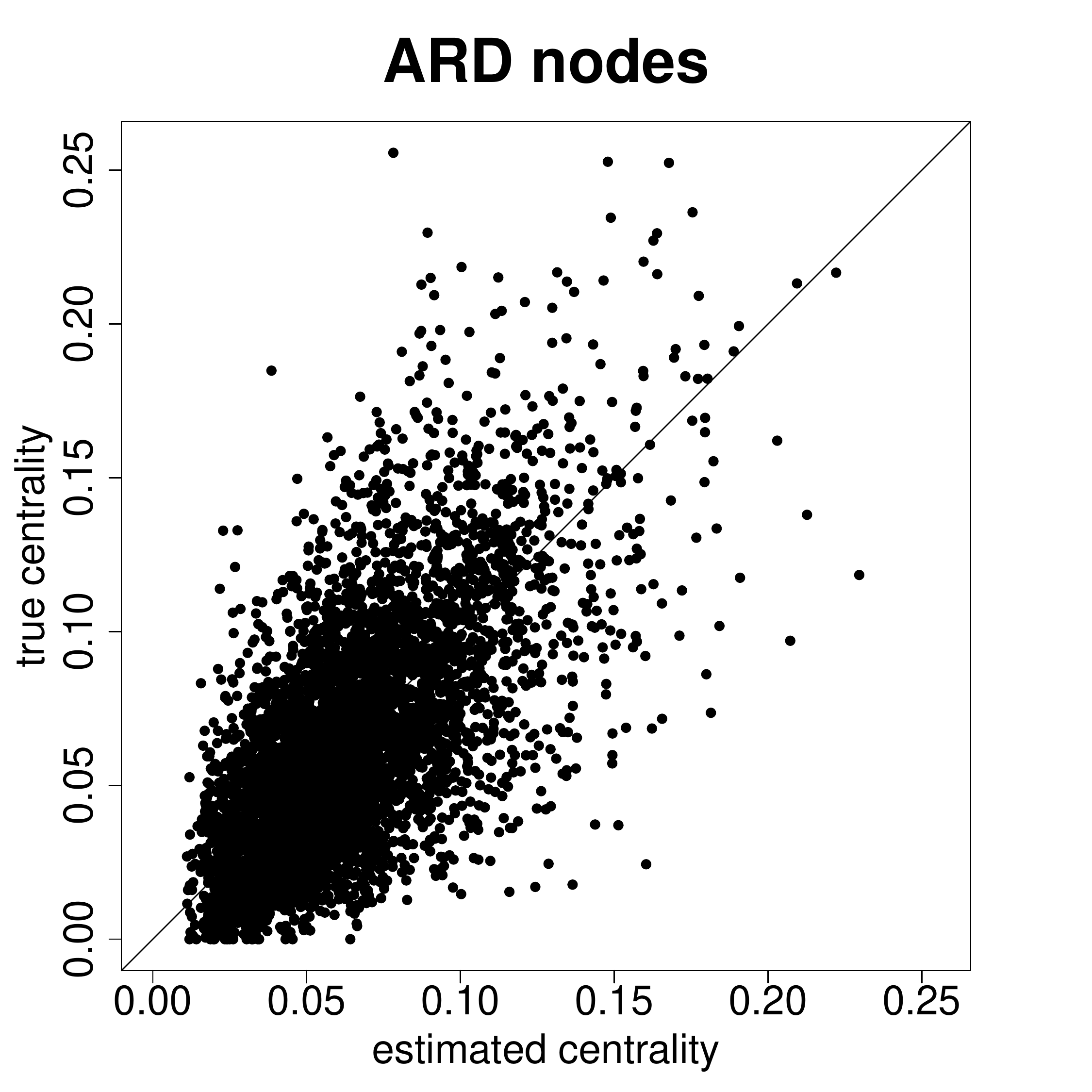}}
\subfloat[Clustering]{
\includegraphics[width=.28\textwidth]{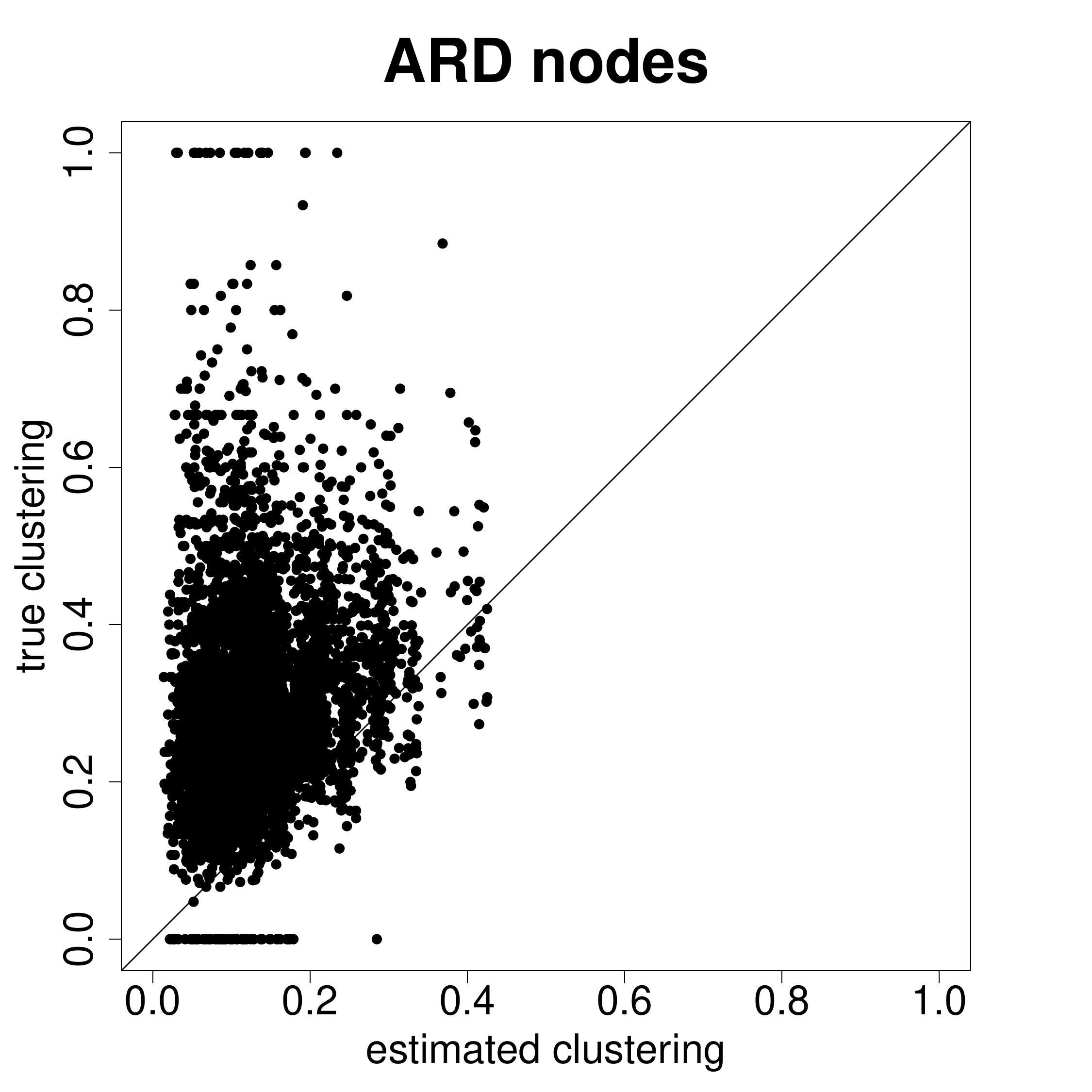}}

\smallskip
\subfloat[Closeness]{
\includegraphics[width=.28\textwidth]{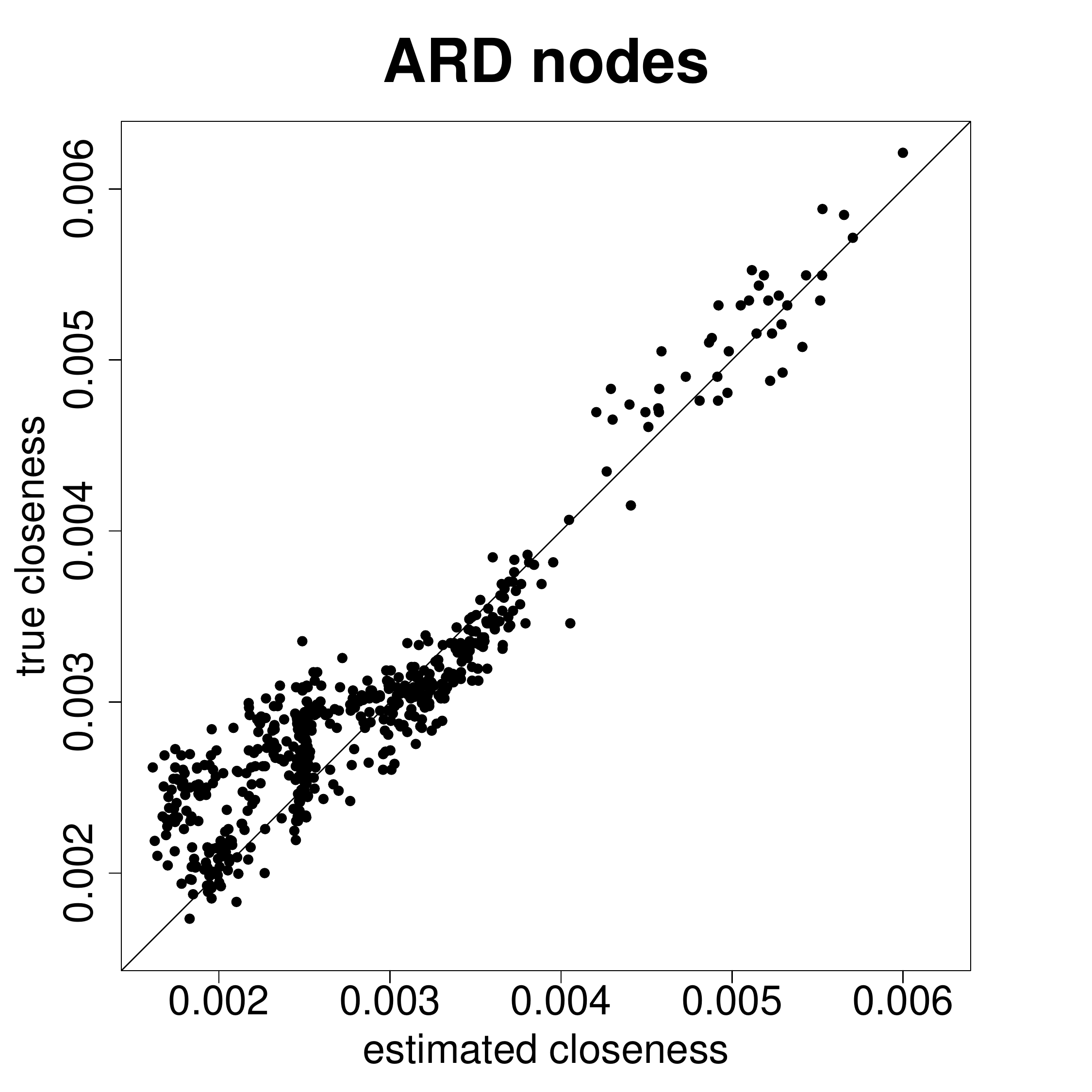}}
\subfloat[Betweenness]{
\includegraphics[width=.28\textwidth]{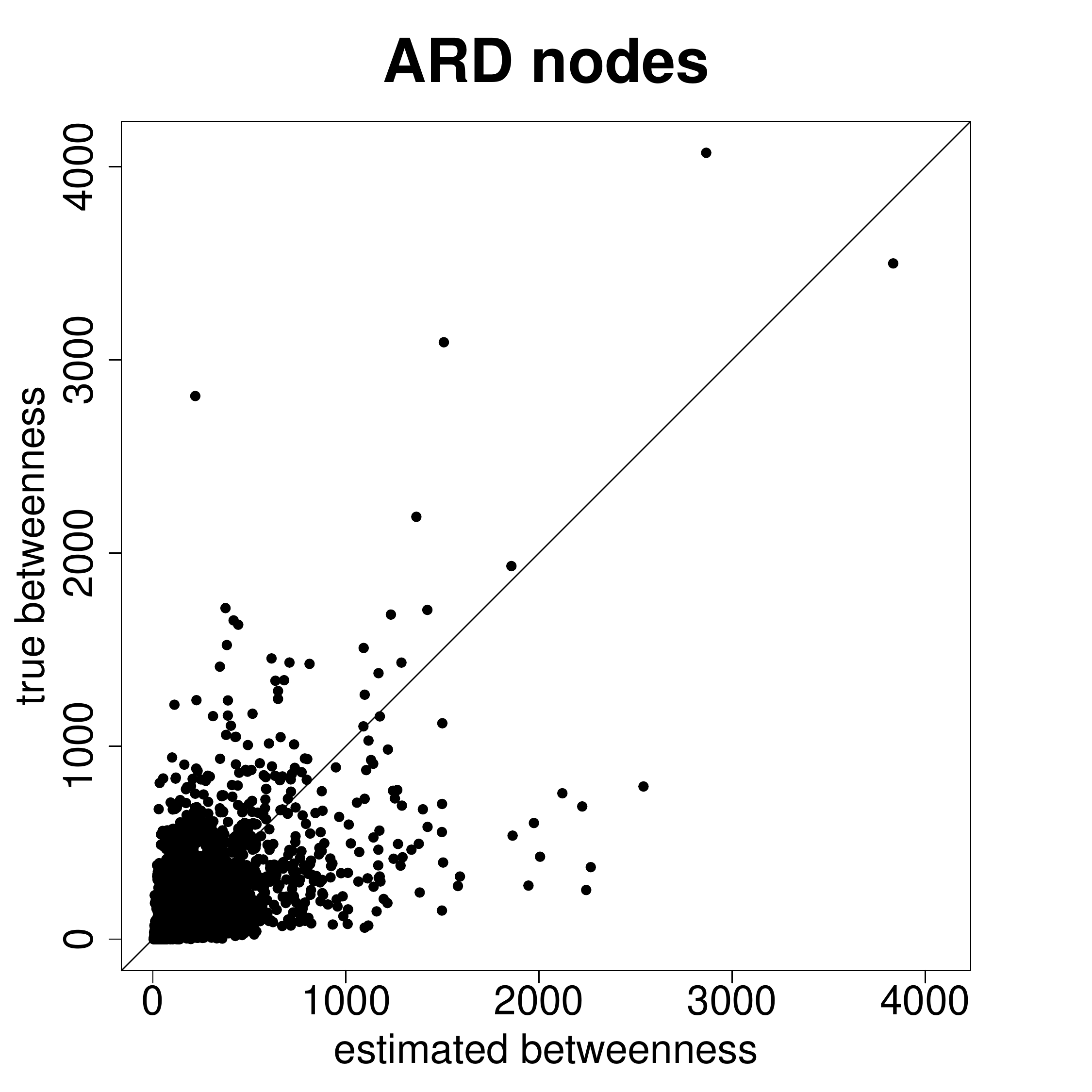}
}
\subfloat[Support]{
\includegraphics[width=.28\textwidth]{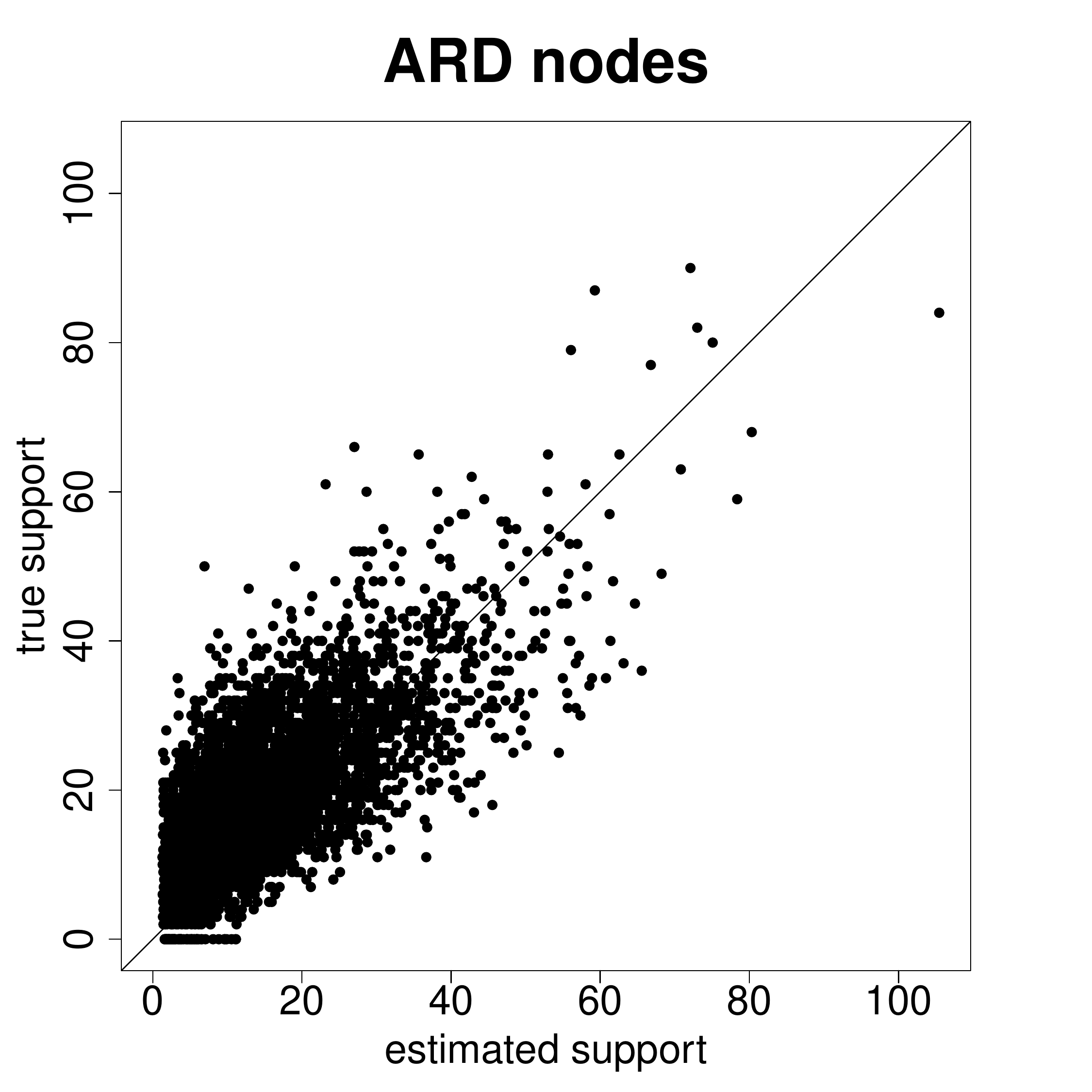}
}

\smallskip
\subfloat[Distance from seed]{
\includegraphics[width=.28\textwidth]{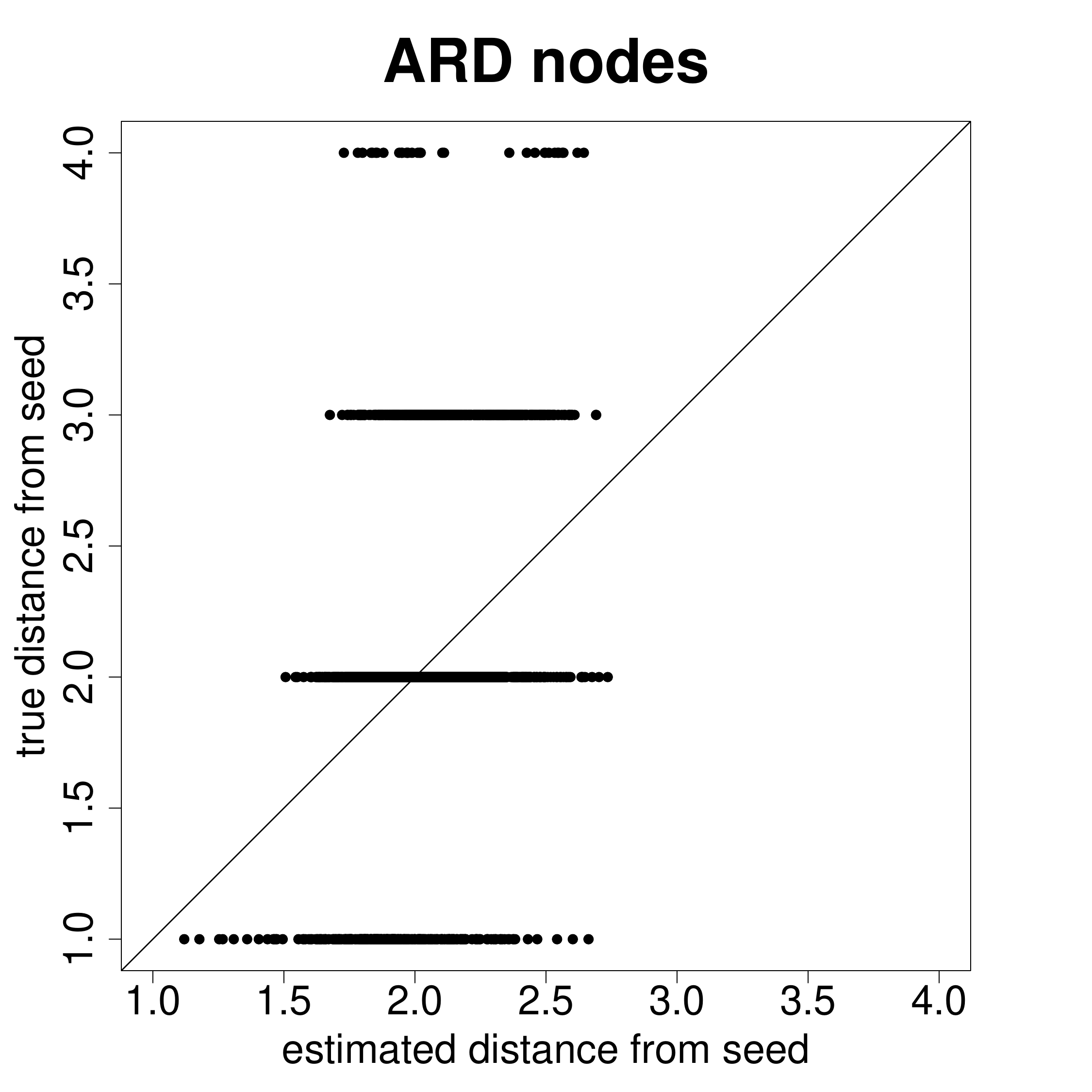}}
\subfloat[Node level clustering]{
\includegraphics[width=.28\textwidth]{plots/p_4/comparison_clustering_ARDnodes.pdf}}
\subfloat[Treated neighborhood share]{
\includegraphics[width=.28\textwidth]{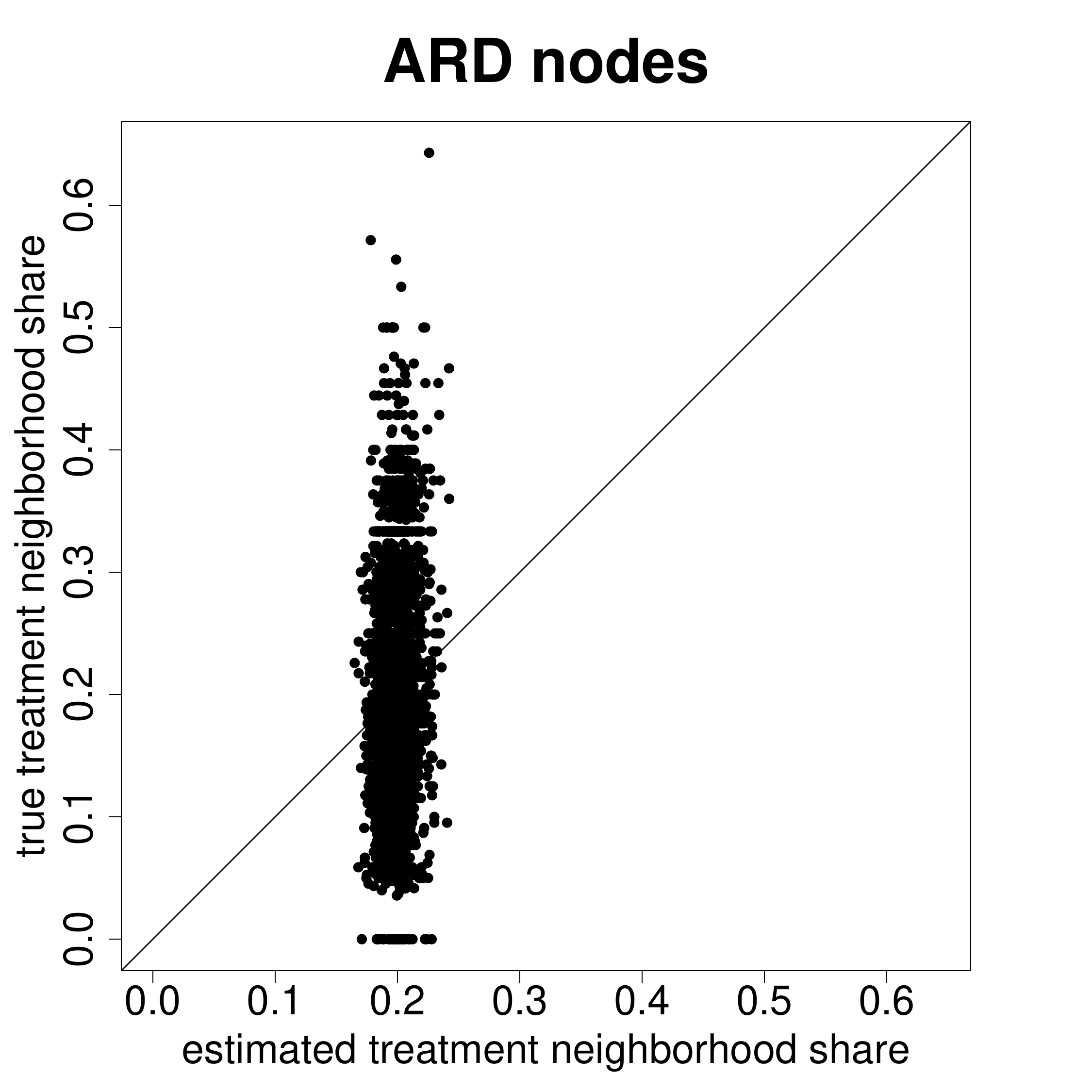}}

\caption{Node level measures estimation for households with ARD response in villages in Karnataka.  These plots show scatterplots across all villages with the estimated node level measure on the x-axis and the measure from the true underlying graph on the y-axis.}
\label{fig:p_4_karnataka_ARD_node_appendix}
\end{figure}

\begin{figure}[!h]
\centering
\subfloat[Degree]{
\includegraphics[width=.28\textwidth]{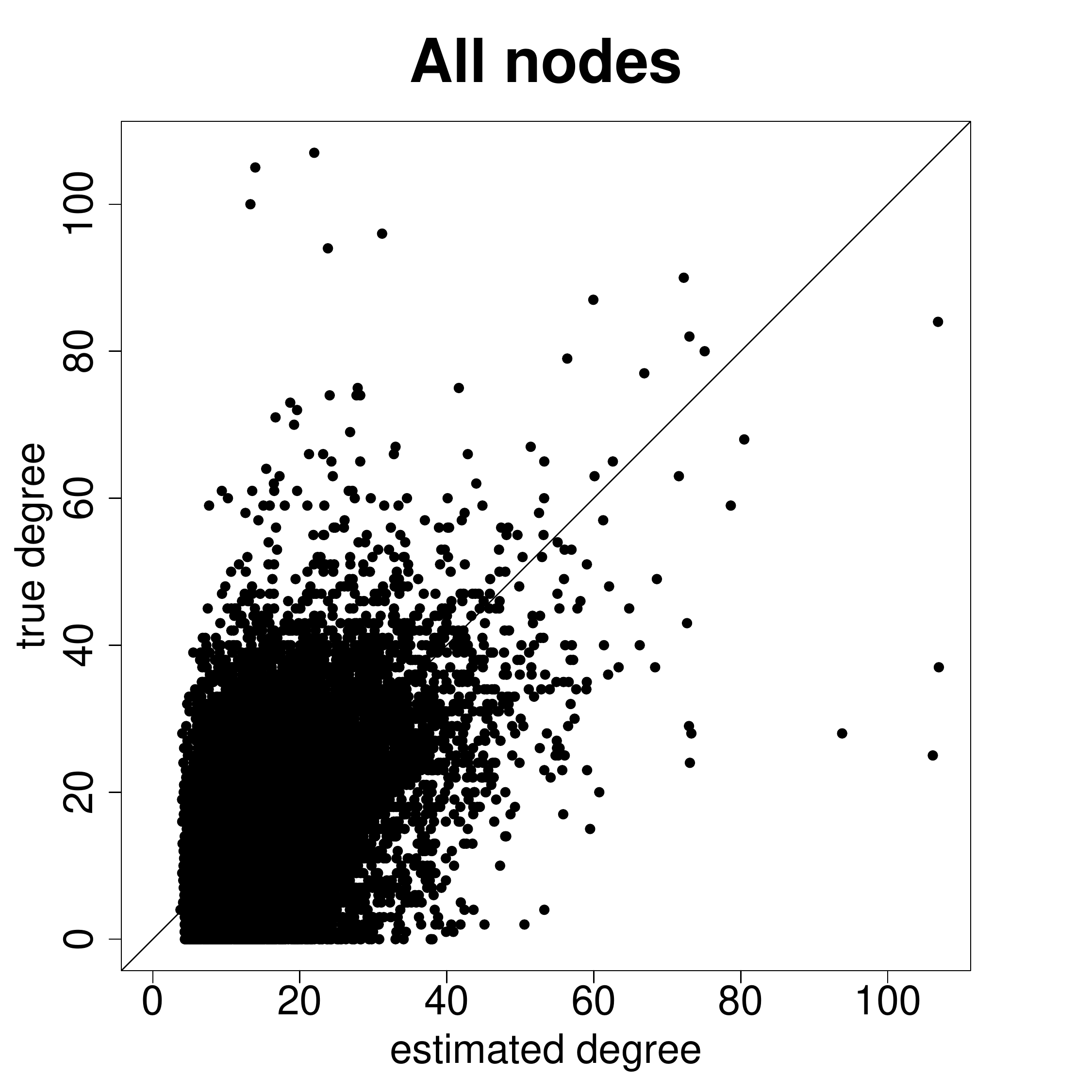}}
\subfloat[Eigenvector Centrality]{
\includegraphics[width=.28\textwidth]{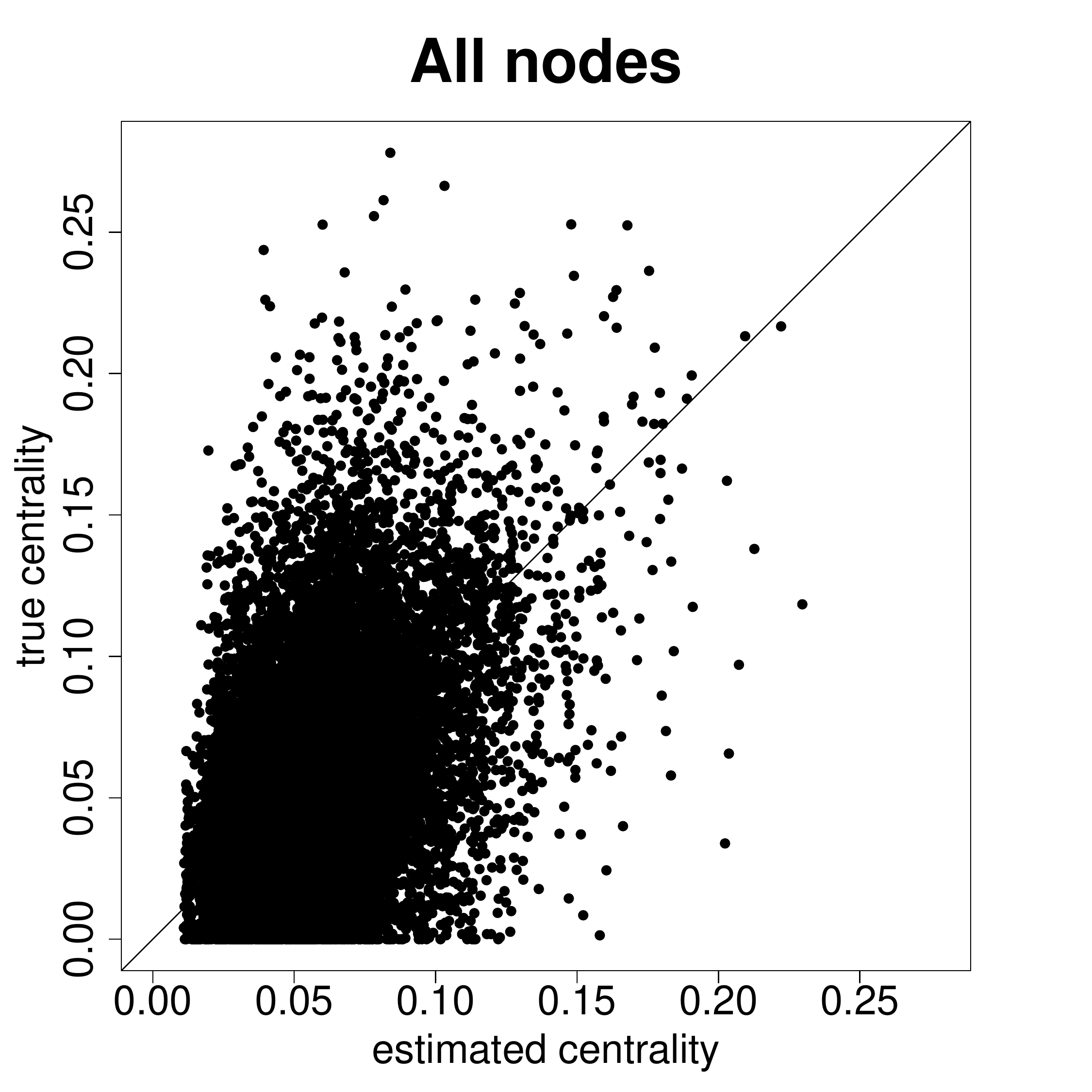}}
\subfloat[Clustering]{
\includegraphics[width=.28\textwidth]{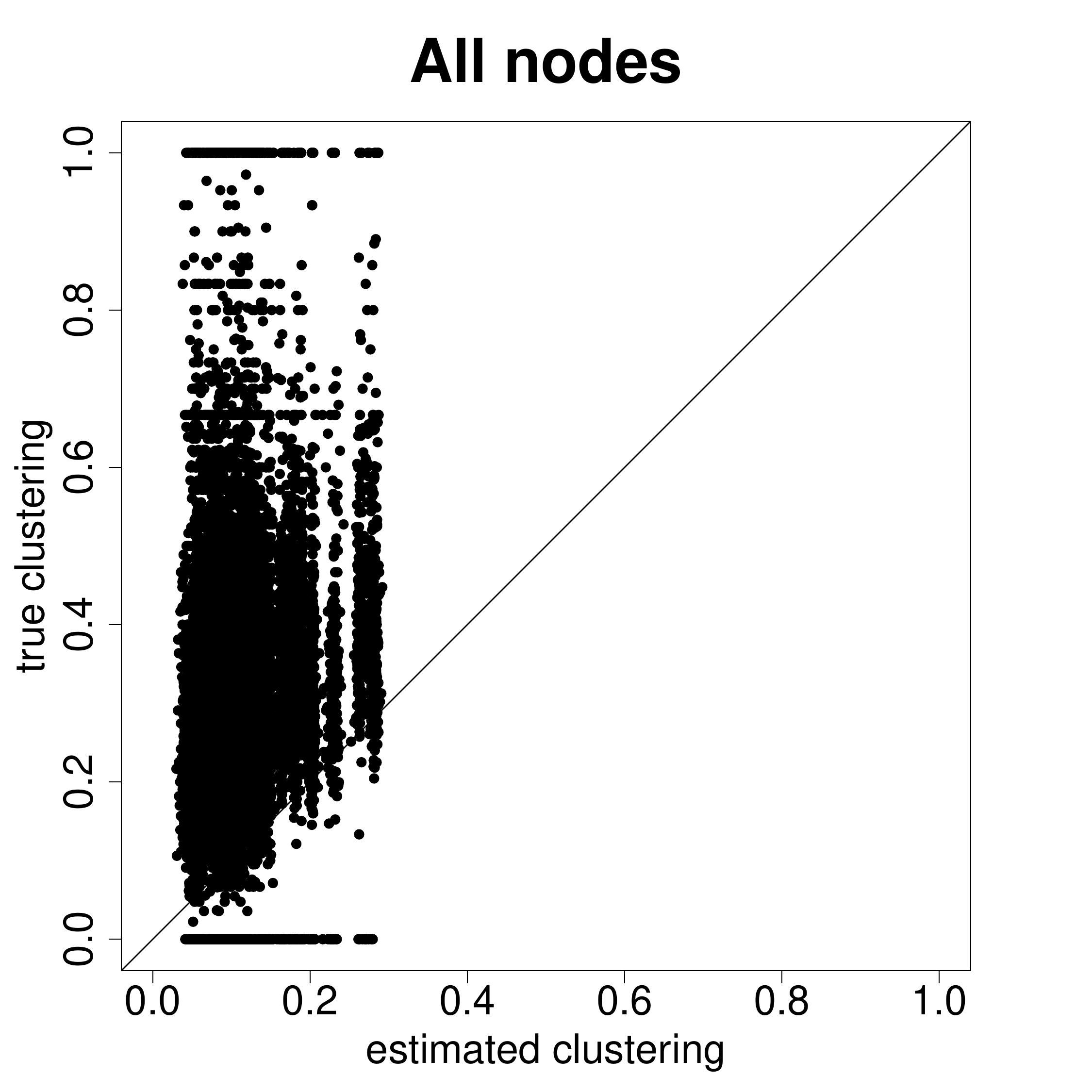}}

\smallskip
\subfloat[Closeness]{
\includegraphics[width=.28\textwidth]{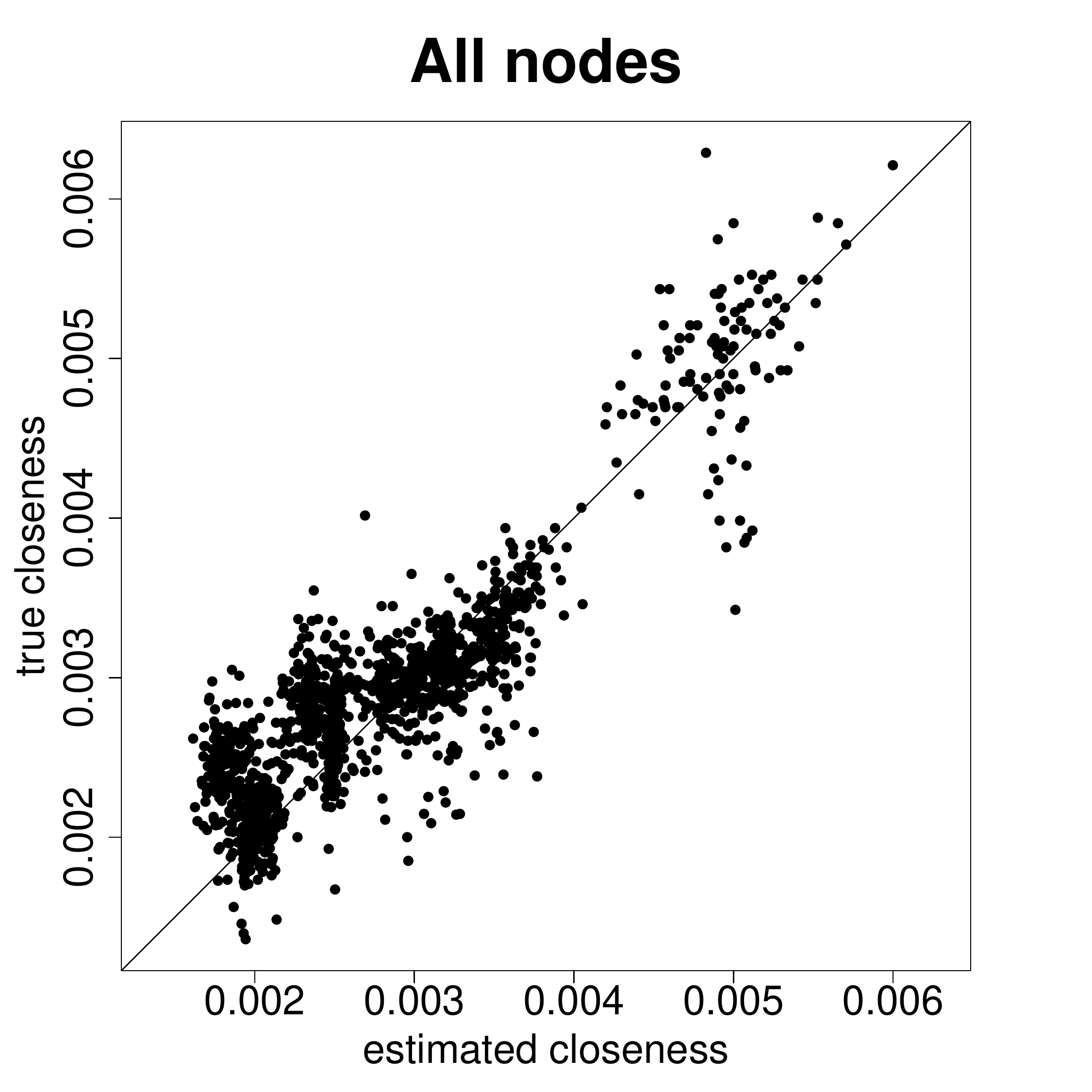}}
\subfloat[Betweenness]{
\includegraphics[width=.28\textwidth]{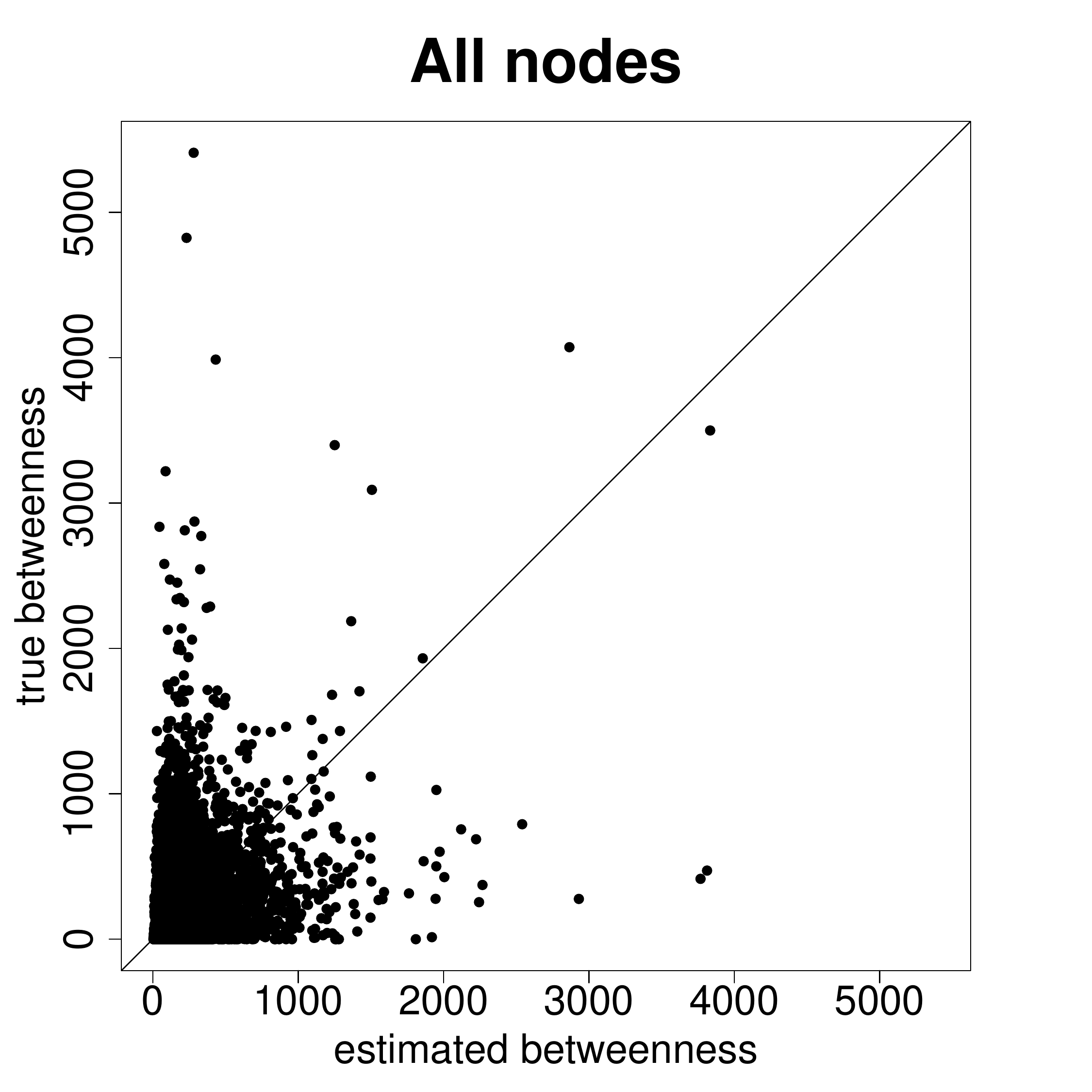}
}
\subfloat[Support]{
\includegraphics[width=.28\textwidth]{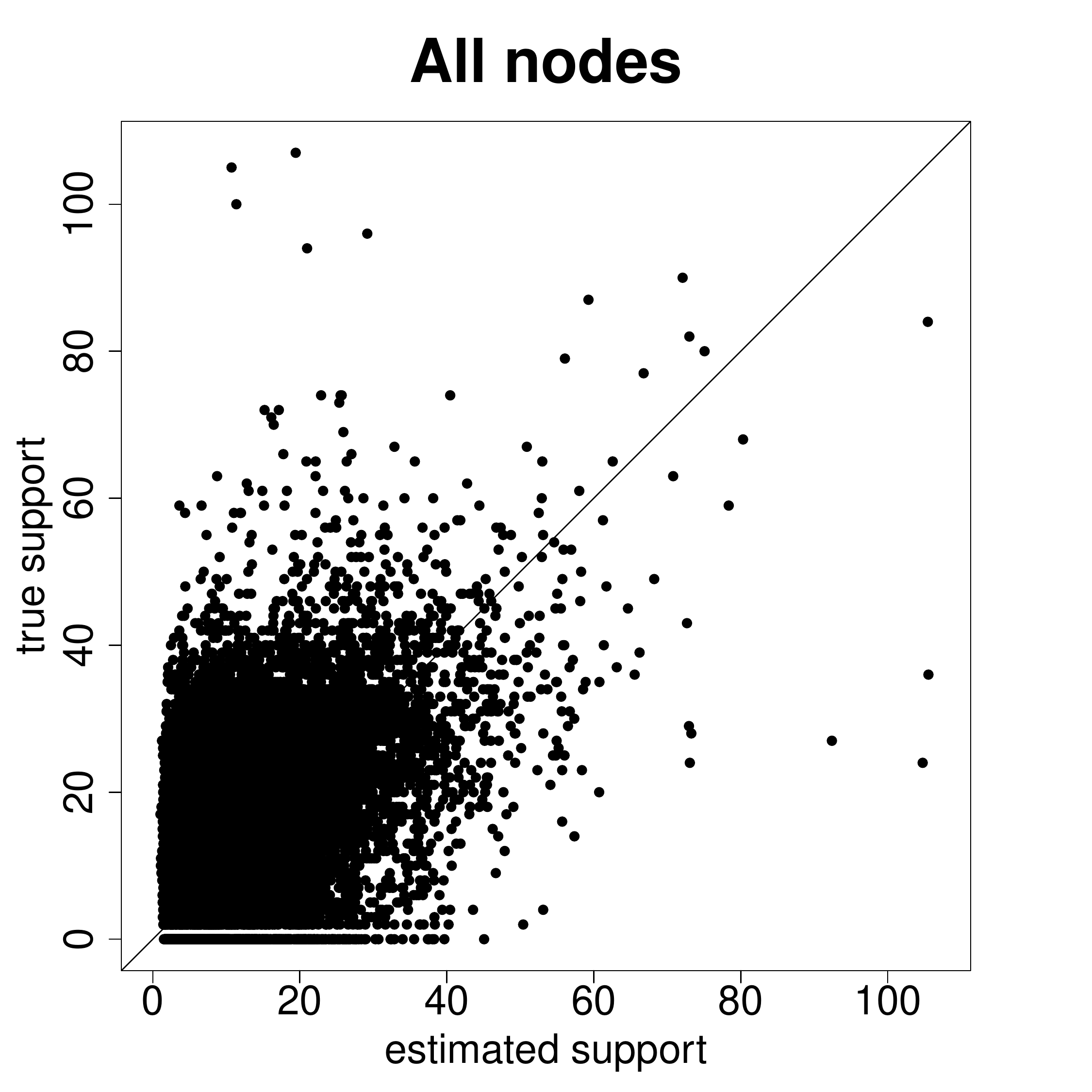}
}

\smallskip
\subfloat[Distance from seed]{
\includegraphics[width=.28\textwidth]{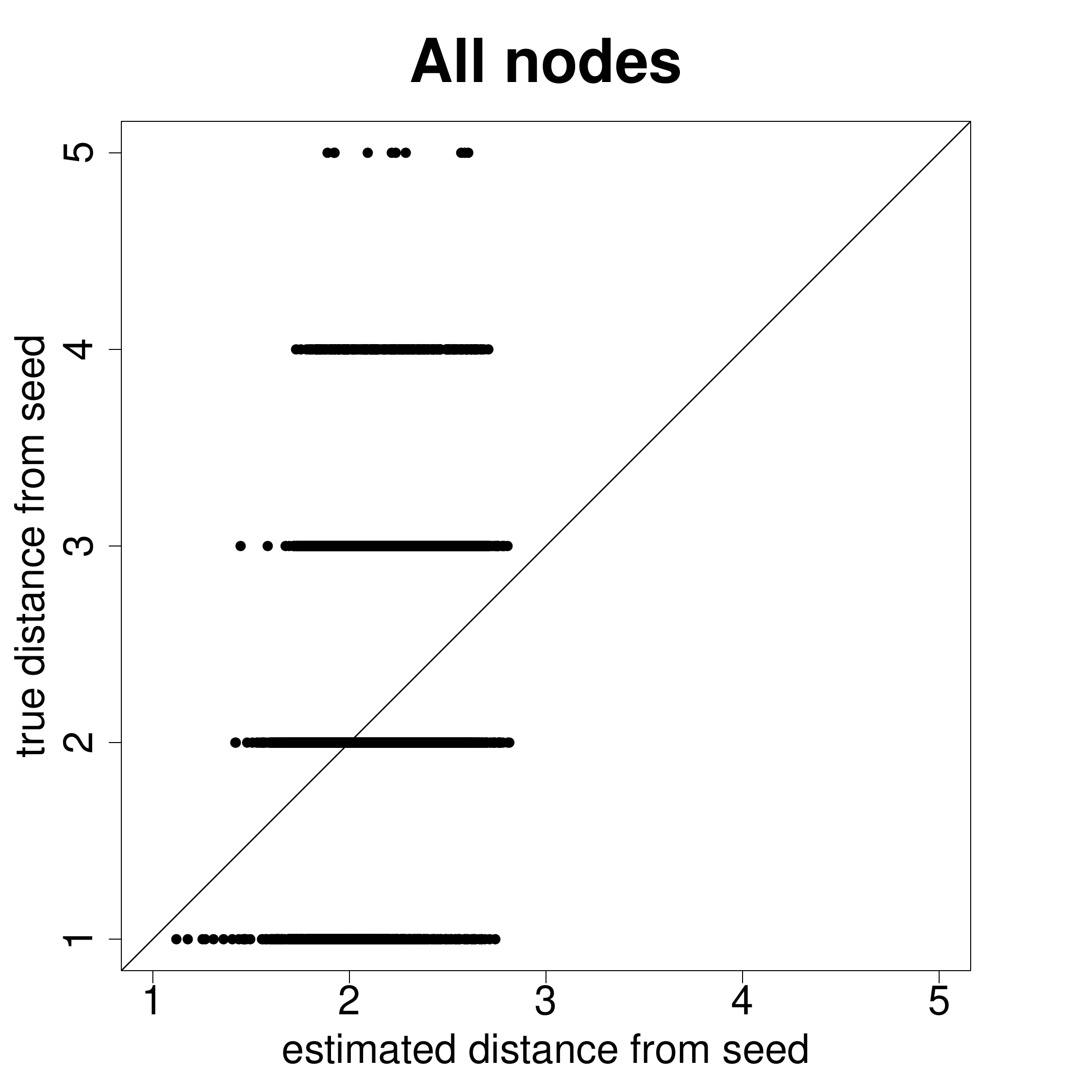}}
\subfloat[Node level clustering]{
\includegraphics[width=.28\textwidth]{plots/p_4/comparison_clustering_allnodes.pdf}}
\subfloat[Treated neighborhood share]{
\includegraphics[width=.28\textwidth]{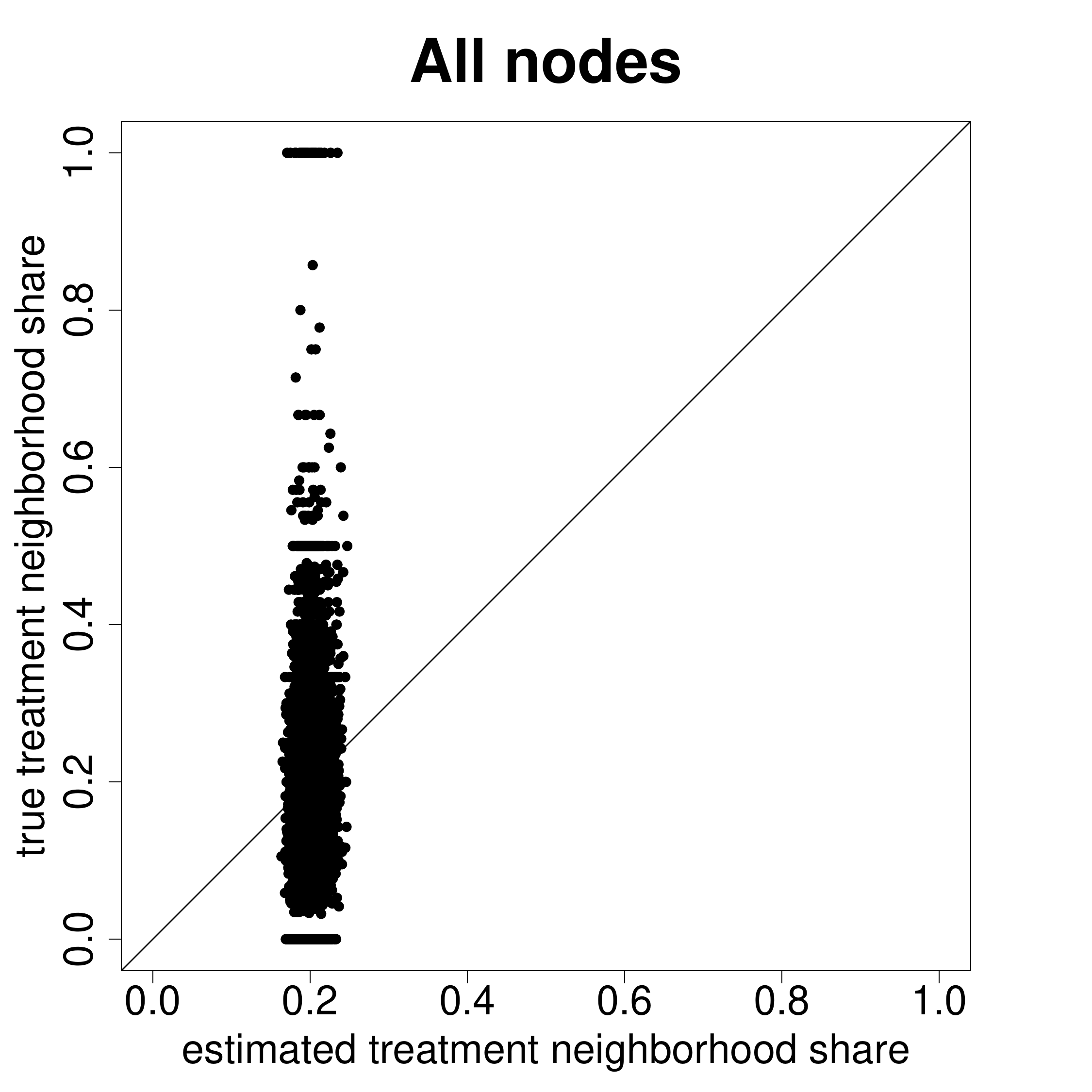}}

\caption{Node level measures estimation for households with all response in villages in Karnataka.  These plots show scatterplots across all villages with the estimated node level measure on the x-axis and the measure from the true underlying graph on the y-axis.}
\label{fig:p_4_karnataka_all_node_appendix}
\end{figure}

\begin{figure}[!h]
\centering
\subfloat[$\lambda_1(g)$]{
\includegraphics[width=.28\textwidth]{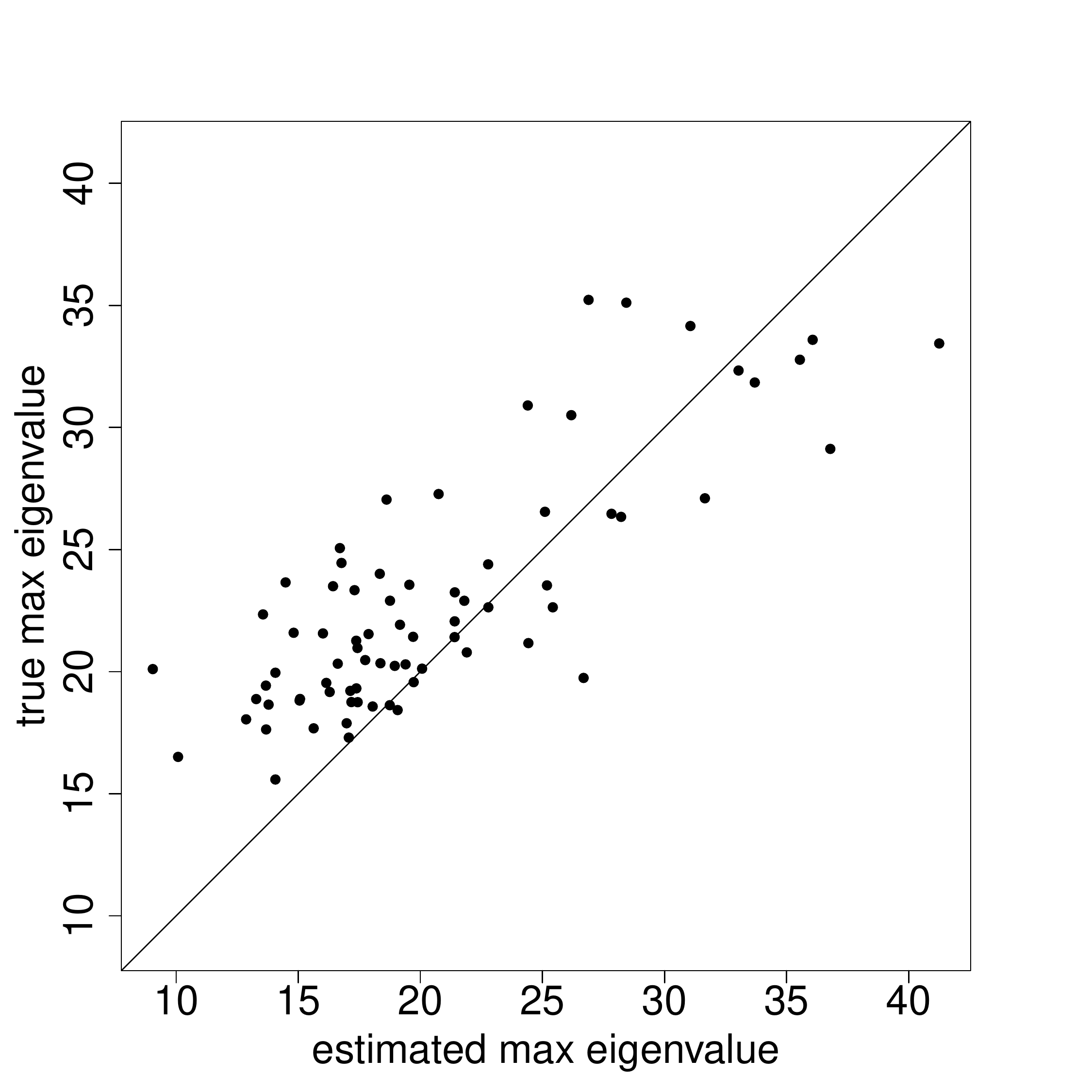}\qquad
}
\subfloat[Social Proximity]{
\includegraphics[width=.28\textwidth]{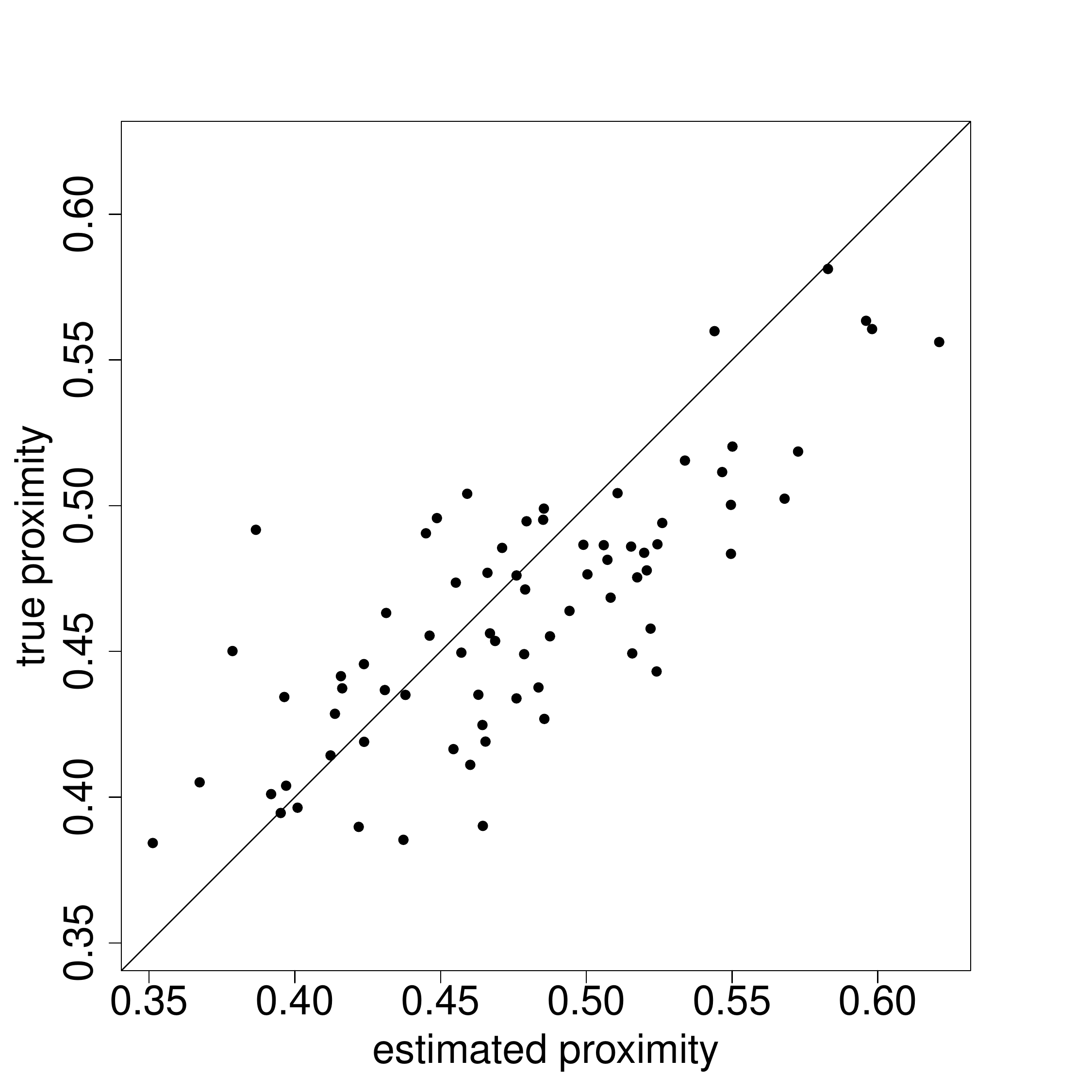}
}
\subfloat[Clustering]{
\includegraphics[width=.28\textwidth]{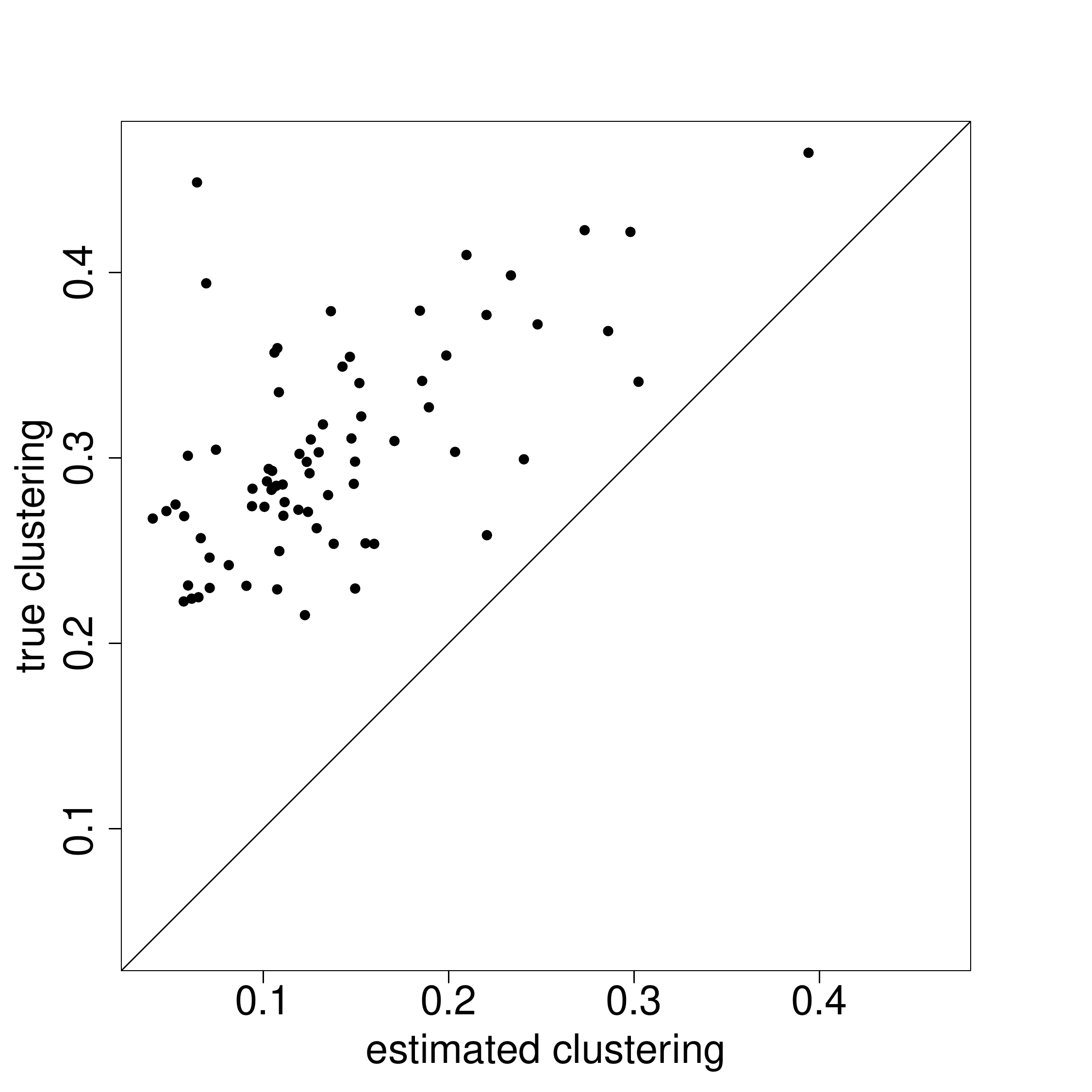}
}

\smallskip
\subfloat[Eigenvector Cut]{
\includegraphics[width=.28\textwidth]{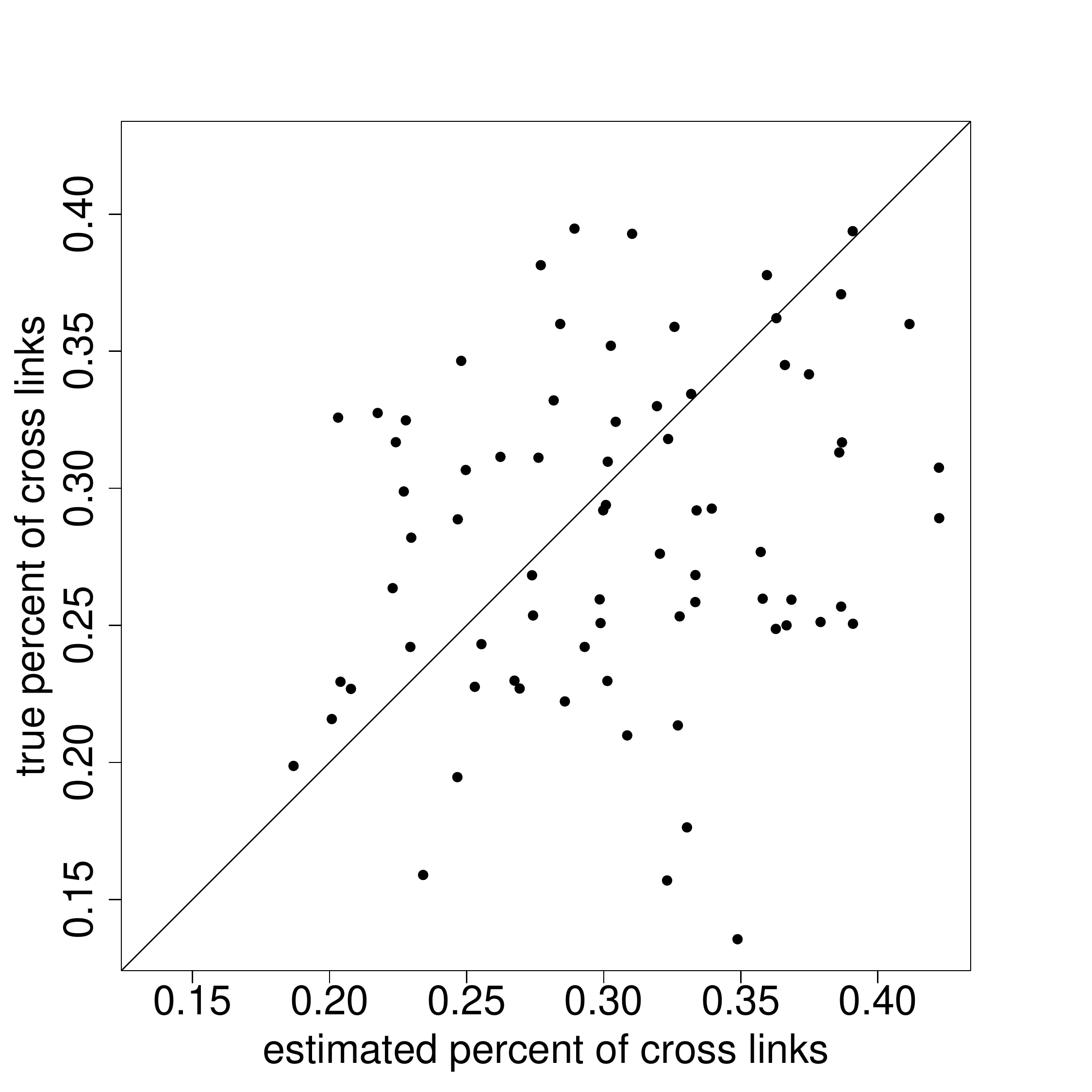}
}
\subfloat[Number of components]{
\includegraphics[width=.28\textwidth]{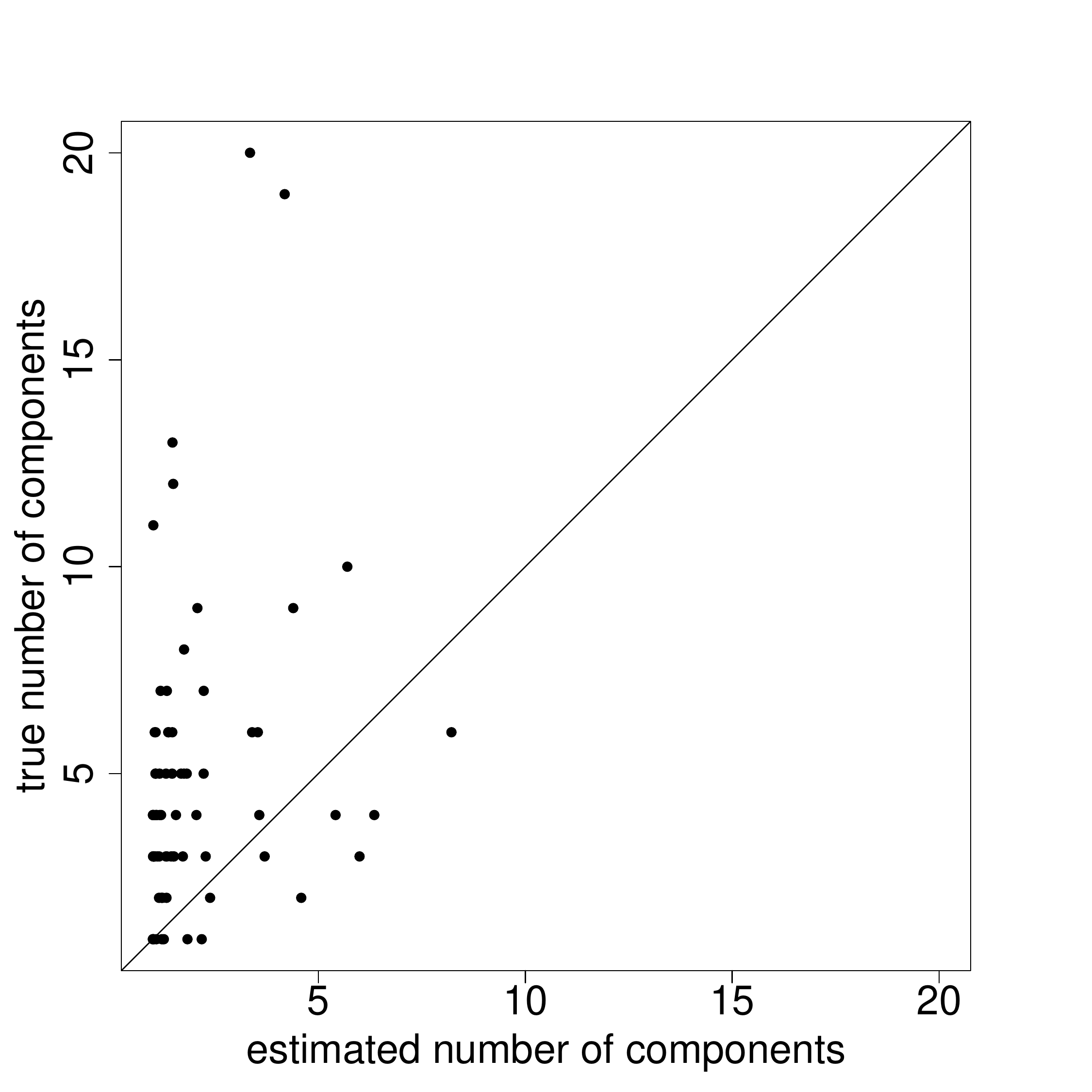}}
\subfloat[Average path length]{
\includegraphics[width=.28\textwidth]{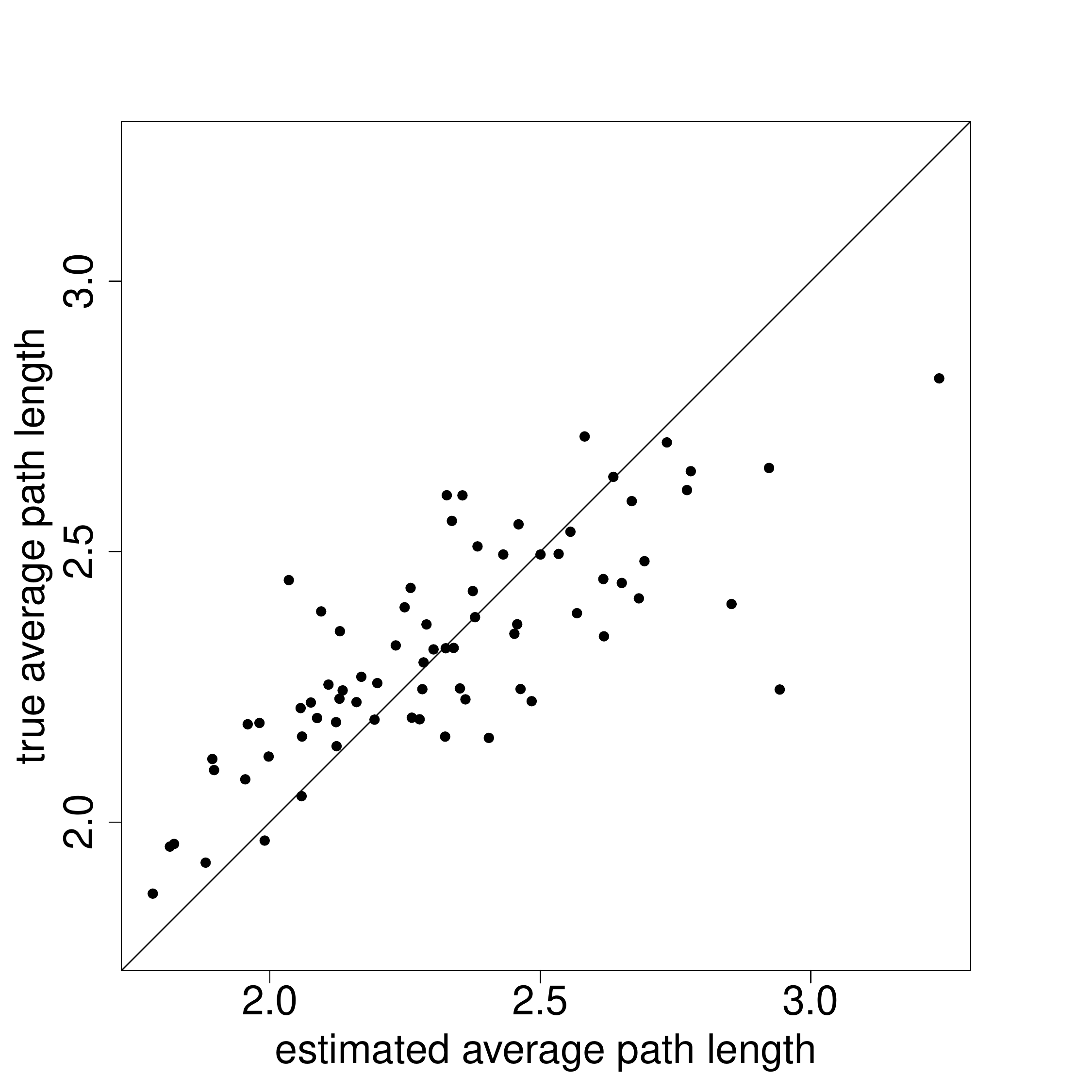}}

\smallskip
\subfloat[Diameter]{
\includegraphics[width=.28\textwidth]{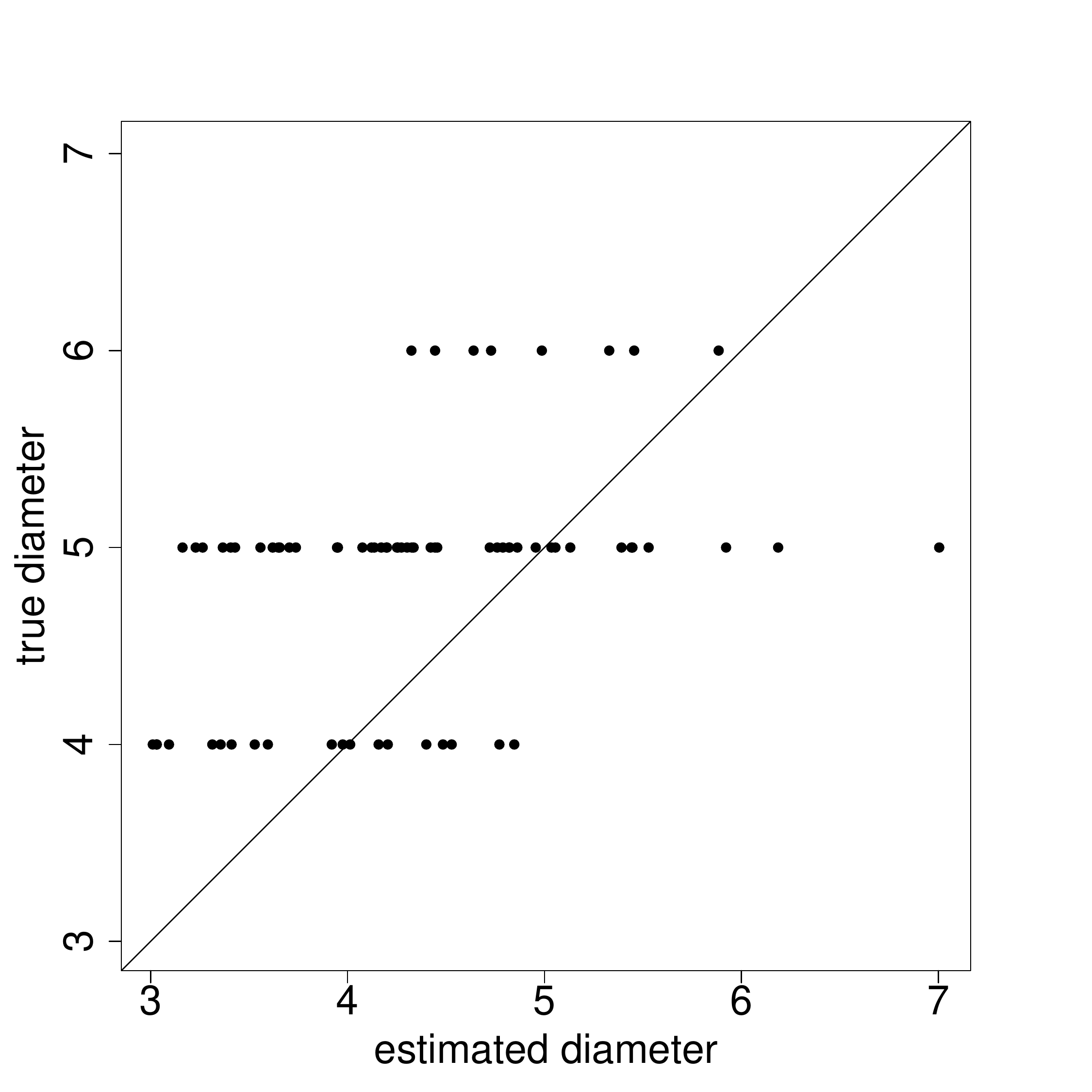}}
\subfloat[Fraction of giant component]{
\includegraphics[width=.28\textwidth]{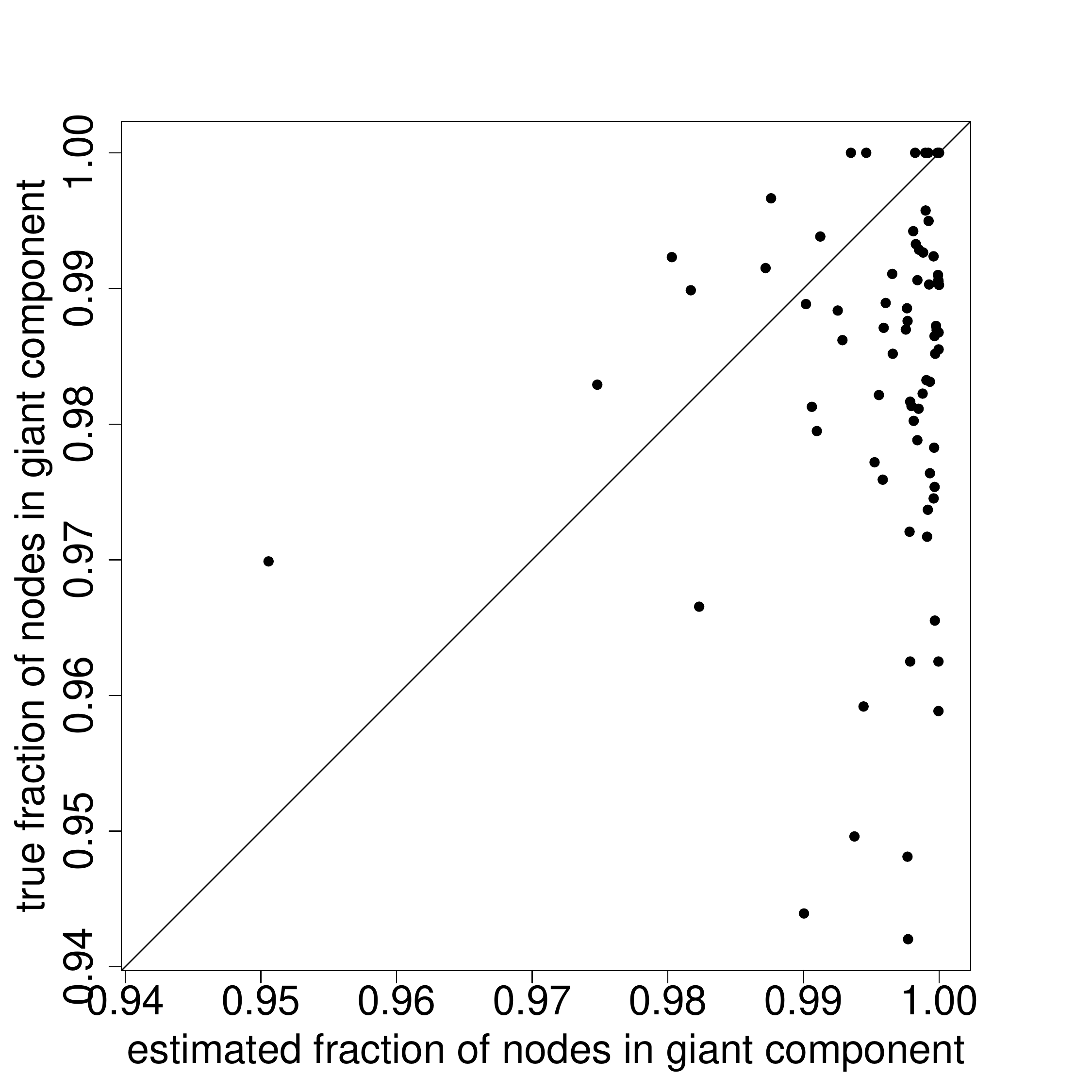}}

\caption{Network level measures estimation for households in villages in Karnataka.  These plots show scatterplots across all villages with the estimated network level measure on the x-axis and the measure from the true underlying graph on the y-axis.}
\label{fig:p_4_karnataka_network_appendix}
\end{figure}

\end{document}